\documentclass[a4paper,11pt]{amsart} 
\usepackage{amsmath,amsxtra,amssymb,latexsym, amscd,amsthm}
\usepackage[mathscr]{eucal}
\usepackage{mathrsfs}
\usepackage{bbm}
\usepackage{enumerate}
\usepackage{pict2e}
\usepackage{tikz}
\usetikzlibrary{arrows.meta, positioning, calc}
\usepackage{graphicx}
\usepackage[a4paper]{geometry}
\usepackage{color}
\usepackage{hyperref}
\numberwithin{equation}{section}

\theoremstyle{plain}
\newtheorem{thm}{Theorem}[section]

\newtheorem{prp}[thm]{Proposition}
\newtheorem{cor}[thm]{Corollary}
\newtheorem{lem}[thm]{Lemma}
\newtheorem*{ego*}{Egorov's Theorem}
\theoremstyle{definition}
\newtheorem{defa}[thm]{Definition}

\newtheorem{rem}[thm]{Remark}

\newtheorem{exa}[thm]{Example}
\newtheorem*{rem*}{Remark}

\newtheorem{claim}{Claim}
\newtheorem*{claim*}{Claim}
\newtheorem{case}{Case}
\newcommand{\dd}{\mathrm{d}}
\newcommand{\ee}{\mathrm{e}}
\newcommand{\ii}{\mathrm{i}}
\renewcommand{\Im}{\operatorname{Im}}
\renewcommand{\Re}{\operatorname{Re}}

\newcommand{\N}{\mathbb{N}}
\newcommand{\Z}{\mathbb{Z}}
\newcommand{\Q}{\mathbb{Q}}
\newcommand{\R}{\mathbb{R}}
\newcommand{\C}{\mathbb{C}}

\newcommand{\T}{\mathbb{T}}

\newcommand{\LL}{\mathbb{L}}

\newcommand{\cC}{\mathcal{C}}

\newcommand{\cO}{\mathcal{O}}
\newcommand{\cP}{\mathcal{P}}

\newcommand{\cX}{\mathcal{X}}

\newcommand{\cS}{\mathcal{S}}

\newcommand{\br}{\mathbf{r}}
\newcommand{\bm}{\mathbf{m}}
\newcommand{\bk}{\mathbf{k}}
\newcommand{\bp}{\mathbf{p}}

\newcommand{\bv}{\mathbf{v}}
\newcommand{\bx}{\mathbf{x}}

\newcommand{\btheta}{\boldsymbol{\theta}}
\newcommand{\balpha}{\boldsymbol{\alpha}}
\newcommand{\Utheta}{\widehat{U}(\theta)}

\DeclareMathOperator{\prob}{\mathbb{P}}

\DeclareMathOperator{\supp}{supp}

\DeclareMathOperator{\opn}{Op_N}

\DeclareMathOperator{\Cay}{Cay}
\DeclareSymbolFont{extraup}{U}{zavm}{m}{n}
\DeclareMathSymbol{\varheart}{\mathalpha}{extraup}{86}
\DeclareMathSymbol{\vardiamond}{\mathalpha}{extraup}{87}

\let\nto\nrightarrow

\makeatletter
\newcommand*{\ovF}[1]{%
  $\m@th\overline{\raisebox{0pt}[\dimexpr\height+0.3mm\relax]{#1}}$%
	}
\makeatother
\title{Ergodicity in discrete-time quantum walks}
\author{Kiran Kumar and Mostafa Sabri}
\address{Science Division, New York University Abu Dhabi, Saadiyat Island, Abu Dhabi, UAE.}
\email{kas10285@nyu.edu}
\email{mostafa.sabri@nyu.edu}
\subjclass[2010]{Primary 58J51. Secondary 47B93.}
\keywords{Quantum ergodicity, quantum walks, quantum dynamics, absolutely continuous spectrum, delocalization.}
\usepackage{calc}
\usepackage{graphicx}

\makeatletter
\newlength{\temp@wc@width}
\newlength{\temp@wc@height}
\newcommand{\widecheck}[1]{%
  \setlength{\temp@wc@width}{\widthof{$#1$}}%
  \setlength{\temp@wc@height}{\heightof{$#1$}}%
  #1\hspace{-\temp@wc@width}%
  \raisebox{\temp@wc@height+2pt}[\heightof{$\widehat{#1}$}]%
     {\rotatebox[origin=c]{180}{\vbox to 0pt{\hbox{$\widehat{\hphantom{#1}}$}}}}%
}
\makeatother

\begin{document}

\begin{abstract}
We undertake a detailed analysis of ergodicity for homogeneous discrete-time quantum walks on the integer lattice. The most significant result of our paper holds in dimension one, and gives a complete equivalence between the absolutely continuous spectrum of the unitary operator encoding the walk, and the equidistribution  of its dynamics in position space, which appears for the first time in the context of large-volume quantum ergodicity. In higher dimensions, we give a criterion for full and partial ergodicity in terms of a finer property of the spectrum which we dub ``No Repeating Graphs'', and we distinguish how strongly the equidistribution is taking place (weak convergence vs total variation). Many examples are included to illustrate the criterion and to distinguish between the types of ergodicity.  
\end{abstract}

\maketitle

\tableofcontents

\section{Introduction}

In this paper we consider homogeneous quantum walks over $\Z^d$ with a finite number of spins. We prove that under the action of the walk, localized initial states tend to equidistribute over the lattice as time grows large, provided certain spectral assumptions are met. 
This is illustrated by several examples and non-examples. 
Our treatment is especially complete in one dimension, where we identify all semiclassical measures by a careful analysis and prove a complete equivalence between the absolutely continuous spectrum and a form of ergodicity in position space. Let us give some background.

The problem of equidistribution of the quantum dynamics on large graphs is very active and has an interesting history. Some of the earliest papers on quantum walks already considered this question \cite{Aha,NayakVishwanath2000,Bednarska2003}. In these papers, one starts with an initial state $\psi$ of compact support (e.g. a qubit), considers a unitary operator $U$ encoding the walk, and studies whether the vector $U^k\psi$ becomes equidistributed in space as time goes on. Technically, it is necessary to consider the time-averaged evolution when working on a finite graph, otherwise the distribution will not converge \cite{Aha}. 

In more detail, given a walk with $\nu$ spins on a finite graph $G$, denote $\|\phi(r)\|_{\C^{\nu}}^2 = \sum_{i=1}^\nu |\phi_i(r)|^2$ for $r\in G$ and consider $\mu_{T,\psi}^G(r) = \frac{1}{T} \sum_{k=0}^{T-1} \| (U^k \psi)(r)\|_{\C^{\nu}}^2$. If the initial state $\psi$ is normalized, $\mu_{T,\psi}^G(r)$ gives (an averaged) probability that the walk is at position $r$ at time $T$. It is easy to see that this quantity converges as $T\to\infty$ to $\mu_\psi^G(r)=\sum_{i\le m} \|(P_{\lambda_i} \psi)(r)\|_{\C^\nu}^2$, where $P_{\lambda_i}$ are the orthogonal projections onto the $m$ distinct eigenvalues of $U$ (for this simply write $U^k\psi = \sum_i \lambda_i^kP_{\lambda_i}\psi$ and expand the square norm, noting that $|\lambda_i|=1$, the diagonal contribution is $\mu_\psi^G(r)$, the off-diagonal part vanishes in the limit as a geometric sum divided by $T$). 

The question is then whether $\mu_\psi^G$ is the uniform measure over $G$. This measure is difficult to analyze unless the walk is very simple, as one needs to understand the structure of $P_{\lambda_i}$, which is generally problematic for large graphs. The paper \cite{Aha} shows that $\mu_\psi^G$ is uniform on $G$ if $G$ is the Cayley graph of an abelian group and if all the eigenvalues of $U$ are distinct. The eigenvalue assumption is restrictive (and in general unnecessary, Proposition~\ref{prp:nrgstro}), in fact the result is only applied to Hadamard walks on odd cycles in \cite{Aha}. The paper \cite{Bednarska2003} later showed that $\mu_\psi^G$ is not uniform on even cycles. The discrepancy between $\mu_\psi^G$ and the uniform measure vanishes however as the size of the cycle goes to infinity. See Remark~\ref{rem:hadcyc} for further discussion on this particular walk.

This framework where the size of the graph grows large has been explored in much greater generality in the realm of quantum chaos. In that context, given a sequence of graphs $G_N$ of size $N$, one looks instead at the orthonormal eigenvectors $(\psi_j^{(N)})$ of the adjacency matrix of $G_N$, and studies the limiting behaviour of the probability density $|\psi_j^{(N)}(v)|^2$. Quantum ergodicity is the statement that this density equidistributes over $G_N$ when $N\to\infty$, for most eigenvectors corresponding to a spectral interval $I$. The general picture established in \cite{ALM,BLL,AS,McKSa,BLS} is that this should hold for any orthonormal basis $(\psi_j^{(N)})$ of eigenfunctions, provided the graphs $G_N$ converge to some infinite graph which has a ``strong form'' of absolutely continuous spectrum. One also needs some technical assumptions on $G_N$ for this. The papers \cite{MT,Naor,BL} take a different point of view, wherein there is no need to look at the limiting graph (if it exists), but only certain orthonormal bases $(\psi_j^{(N)})$ can be controlled.

If the above suggests that spectral delocalization implies ergodicity at the spatial level (eigenvectors), one may expect that it implies ergodicity at the dynamical level as well. This is precisely what is being assessed when we speak of ergodicity of quantum walks. The papers \cite{BS,BKS} answer this question positively for \emph{continuous-time quantum walks} on $\Z^d$-periodic graphs (crystals) such as the hexagonal lattice, strips and cylindrical hypercubes. Such walks are induced by the Schr\"odinger evolution $\ee^{-\ii tA_{G}}$ on $G$, the question being whether $\ee^{-\ii tA_{G_N}}\psi$ spreads out uniformly on $G_N$ as the time $t$ and size $N$ grow large. In the present paper, we study the same question for \emph{discrete-time quantum walks} on $\Z^d$, which are encoded by iterations of a unitary operator $U$ which is not a priori associated to any nice Schr\"odinger operator on $G$. Such walks are closer analogs to classical random walks. Our aim is to show a similar link between spectrum and dynamics, and to push the analysis further to discern different types of ergodicity and to obtain converse statements.

For a discrete-time quantum walk on $\Z^d$, the Hilbert space is $\mathscr{H} = \ell^2(\Z^d)\mathop\otimes \C^\nu$, representing motion along $\Z^d$ with $\nu$ possible spins. We first prove a general criterion saying that if the operator $U$ on $\mathscr{H}$ has ``nice'' Floquet eigenvalues, then the walk restricted to cubes $G_N$ converging to $\Z^d$, with periodic boundary conditions, is ergodic. In particular, if $\mu_{T,\psi}^N(\br) = \frac{1}{T}\sum_{k=0}^{T-1}\|(U^k_N\psi)(\br)\|_{\C^\nu}^2$ with $\psi$ of compact support and $\|\psi\|=1$, and if $\mu_\psi^N(\br)=\lim_{T\to\infty}\mu_{T,\psi}^N(\br)$, then $\mu_\psi^N$ approaches the uniform measure $\mu^N(\br) = \frac{1}{N^d}$ on $G_N$, as $N\to\infty$.

``Nice'' Floquet eigenvalues should not be flat (we assume absolutely continuous spectrum) and should not have a ``high frequency'' (no part of the graph of the eigenvalue, as a function of the quasimomentum, should repeat itself on sets of positive measure).

Despite its generality, this abstract criterion is not the main purpose of this paper. Technically speaking, its proof is quite similar to the case of continuous-time quantum walks previously established in \cite{BS}. Our main purpose here is to investigate how far this criterion applies to concrete quantum walks considered in the literature, illustrating examples and non-examples, and to see how far it can be improved. The strongest results we obtain are in one dimension, where we establish for the first time a complete equivalence between the absolutely continuous spectrum of $U$ (without any additional assumptions on the Floquet eigenvalues or resolvent) and the ergodicity of $U_N$ in position space. This is obtained as a consequence of analyzing all semiclassical limits of $\mu_{T,\psi}^N$ that arise when testing against a broader (nonregular) family of observables.

Different notions of ergodicity have also been explored in the very different framework of \emph{open quantum walks} in \cite{LS,KJ}, we refer to \cite{APSS} for a background on that model.

\subsection{Model and definitions}
We consider a quantum walk $U$ on $\Z^d$ with $\nu$ degrees of freedom (i.e. $\nu$ spins). The Hilbert space is thus $\mathscr{H} = \ell^2(\Z^d)\otimes \C^{\nu}$, which we identify with $\ell^2(\Z^d)^{\nu} = \ell^2(\Z^d,\C^\nu)$. To define $U$, let us first introduce the shift operators
\begin{align*}
S_{\bp}:\ell^2(\Z^d)&\to \ell^2(\Z^d)\\
\delta_{\bk} &\mapsto \delta_{\bk+\bp}
\end{align*}
where we use boldface indices for points in $\Z^d,\R^d$. Our quantum walk $U$ on $\ell^2(\Z^d,\C^\nu)$ acts on vector functions $\psi=\begin{pmatrix}\psi_1&\cdots&\psi_\nu\end{pmatrix}^T$ with $\psi_j\in \ell^2(\Z^d)$ and is defined by

\begin{equation}\label{e:ugen}
U = \begin{pmatrix} U_{1,1}&U_{1,2}&\cdots& U_{1,\nu}\\ \vdots& \ddots &&\vdots\\ U_{\nu,1}&\cdots& &U_{\nu,\nu}\end{pmatrix}
\end{equation}
where $U$ is unitary and each $U_{i,j}$ is an operator on $\ell^2(\Z^d)$. We assume the walk is homogeneous and of finite range, i.e. evolves at a finite distance at each step. For coined walks, homogeneity means that the coin is independent of the position. More generally, our assumptions mean that there exists a finite set $F\subset \Z^d$ and $U_{i,j}(\bp)\in \C$ such that
\begin{equation}\label{eqn:homogen_unitary}
	U_{i,j} = \sum_{\bp\in F} U_{i,j}(\bp)S_{\bp} \,.
\end{equation}

If $\psi\in \ell^2(\Z^d,\C^\nu)$ is an initial state, its subsequent time evolution is given by $U^n \psi$.

In bra-ket notation, the walk acts on a basis element $\mid j,\bk\rangle\in \C^\nu \mathop\otimes \ell^2(\Z^d)$ by 
\[
U: \,\, \mid j,\bk\rangle \mapsto \sum_{i=1}^\nu\sum_{\bp\in F} U_{i,j}(\bp) \mid i,\bk+\bp\rangle \,.
\]
This means that a qubit at position $\bk$ and spin $j$ is mapped to a superposition of states at positions $\bk+\bp$ and spins $i$. We do not use this notation much in the article, however we include a detailed translation in Appendix~\ref{app} for readers who prefer it.

Simple examples of walks of the form \eqref{e:ugen}-\eqref{eqn:homogen_unitary} are the Hadamard and Grover walks on $\Z$, given by the operators
\begin{equation}\label{e:uhagro}
U_{\mathrm{Had}} = \frac{1}{\sqrt{2}}\begin{pmatrix} S_{-1}&S_{-1}\\ S_{1}&-S_{1}\end{pmatrix} \qquad \text{and} \qquad U_{\mathrm{Gro}} = \frac{1}{3} \begin{pmatrix} -S_{-1}& 2S_{-1}& 2S_{-1}\\ 2&-1&2\\2S_1&2S_1&-S_1\end{pmatrix}
\end{equation}
i.e. each $U_{i,j}$ reduces to a single shift. See Appendix~\ref{app} for the more common shift-coin definition, which is equivalent to \eqref{e:uhagro}. Our framework is a lot more general and includes homogeneous \emph{split-step} quantum walks, the \emph{shunt decomposition model}, the \emph{arc reversal model}, the higher dimensional \emph{PUTO model} and more, see Appendix~\ref{app} for details. 

The operator $U:\mathscr{H}\to\mathscr{H}$ given in \eqref{e:ugen}-\eqref{eqn:homogen_unitary} is unitarily equivalent to the operator $M_{\widehat{U}}$ on $L^2(\T^d)^{\nu}$ of multiplication by the unitary Floquet matrix function $\widehat{U}(-\btheta)$, where
\begin{equation}\label{e:uhatheta}
\widehat{U}(\btheta) = \begin{pmatrix} \widehat{U}_{1,1}(\btheta)&\cdots &\widehat{U}_{1,\nu}(\btheta)\\ & \ddots\\ \widehat{U}_{\nu,1}(\btheta) & \cdots & \widehat{U}_{\nu,\nu}(\btheta) \end{pmatrix} \quad \text{and}\quad \widehat{U}_{i,j}(\btheta)=\sum_{\bp\in F} U_{i,j}(\bp)\ee^{-2\pi\ii\btheta\cdot \bp} \,.
\end{equation}

Here $\T^d \equiv [0,1)^d$. This result is known; see Lemma~\ref{lem:foufin} and the comment thereafter. Our convention to take $\ee^{-2\pi\ii\btheta\cdot \bp}$ rather than $\ee^{2\pi\ii\btheta\cdot \bp}$ in $\widehat{U}(\btheta)$ is just to simplify notations that will arise later.

Let $\LL_N^d := \{0,1,\dots,N-1\}^d\subset \Z^d$ be a large box and let $U_N$ be the restriction of $U$ to $\ell^2(\LL_N^d,\C^\nu)$ with periodic boundary conditions. We fix an initial state $\psi\in \ell^2(\LL_N^d)^\nu$ of compact support (for example a qubit $ \delta_\bk \mathop\otimes f$) and follow its evolution under $U_N$. 
Assume $\|\psi\|=1$.
As in \cite{Aha,Bednarska2003,Konno}, we define the probability that the quantum walk on $\LL_N^d$, at time $k$, is at location $\br$, by
\[
\prob_N(X_k=\br) = \|(U^k_N\psi)(\br)\|^2_{\C^{\nu}} = \sum_{i=1}^{\nu} |[U^k_N\psi]_i(\br)|^2 
\]
and consider the time-average
\[
\mu_{T,\psi}^N(\br) = \frac{1}{T}\sum_{k=0}^{T-1} \prob_N(X_k=\br)= \frac{1}{T}\sum_{k=0}^{T-1} \sum_{i=1}^{\nu} |[U_N^k\psi]_i(\br)|^2 \,.
\]
This $\mu_{T,\psi}^N$ is a probability measure on $\LL_N^d$ as $\mu_{T,\psi}^N(\LL_N^d) = \frac{1}{T}\sum_{k=1}^T \|U_N^k\psi\|^2= \|\psi\|^2=1$, as $U_N$ is unitary. At time $T=0$, $\mu_{0,\psi}^N =  \psi$ is fully localized (think of $\psi$ as a qubit). Our aim is to show that under some assumptions of spectral delocalization, $\mu_{T,\psi}^N$ approaches the uniform measure on $\LL_N^d$ as $T$ and $N$ get large.
This will be done by comparing the averages of a function $\phi$ on $\LL_N^d$ with respect to $\mu_{T,\psi}^N$ and the uniform measure $\mu^N$. The choice of these observables is important, so let us introduce the following definition.

\begin{defa}\label{def:rego}
We say that $\phi^{(N)}\in \ell^2(\LL_N^d)$ is a \emph{regular observable} if it satisfies one of the following conditions:
\begin{enumerate}[(i)]
\item $\phi^{(N)}(\bk) = f(\bk/N)$ for some fixed $f\in H^s(\T^d)$, with $s>d/2$,
\item or $\phi^{(N)}$ is the restriction to $\LL_N^d$ of a summable function $\phi\in \ell^1(\Z^d)$.
\end{enumerate}
Recall that by the Sobolev embedding theorem, under assumption (i), $f$ can be chosen to be continuous, and we are implicitly making this choice here.

Similarly, we say that $a\in \ell^2(\LL_N^d,\C^\nu)$ is a \emph{regular observable} if each coordinate $a_j\in \ell^2(\LL_N^d)$ is regular.
\end{defa}

Our strongest ergodicity results will hold if the following spectral property is satisfied :

\begin{defa}
Let $\widehat{U}(\btheta)$ be the Floquet matrix \eqref{e:uhatheta} with eigenvalues $\{E_s(\btheta)\}_{s=1}^\nu$. We say that the Floquet eigenvalues have \emph{No Repeating Graphs} if 
\begin{equation}\label{e:flo}
\sup_{\bm\neq \mathbf{0}} \frac{\#\{\br\in \LL_N^d:E_s(\frac{\br+\bm}{N})-E_w(\frac{\mathbf{r}}{N})=0\}}{N^d}\longrightarrow 0 \qquad\qquad \forall\,s,w\le \nu\tag{\textbf{NRG}}
\end{equation}
as $N\to\infty$.
\end{defa}

\eqref{e:flo} also stands for e\textbf{n}e\textbf{rg}y, as it is an assumption on the energies of $\widehat{U}(\btheta)$.

When we ask for \eqref{e:flo} to be satisfied, we mean the full sequence $N=1,2,3,\dots$.

Condition \eqref{e:flo} appeared before in \cite{McKSa,BS} in the context of ergodicity of eigenbases and continuous-time quantum walks. It implies that the spectrum of $U$ is absolutely continuous with no eigenvalues, as $U$ has an eigenvalue iff $E_s(\btheta)=\lambda_0$ is constant \cite{Tate2019}, in which case $E_s(\frac{\br+\bm}{N})=E_s(\frac{\br}{N})$ $\forall \,\br,\bm$ and \eqref{e:flo} is violated.
For quantum walks on $\Z$, we prove in Lemma \ref{lem:1D_subseq_2conditions} that \eqref{e:flo} fails iff $E_s(\theta+\varphi) \equiv E_w(\theta)$ for some rational $0<\varphi<1$. If $s=w$, this means that $E_s$ has a short period (high frequency). For higher-dimensional walks, \eqref{e:flo} seems to be equivalent to the following statement: for any non-zero $ \balpha \in [0,1)^d \cap \Q^d$, the zero set of $E_s(\btheta+\balpha)-E_w(\btheta)$ has Lebesgue measure zero, see \S\,\ref{rem:higher_dim_equiv}. Note that having a Floquet eigenvalue of higher multiplicity, $E_s(\btheta)\equiv E_w(\btheta)$, does not violate \eqref{e:flo}. For example, the rather trivial walk $\begin{pmatrix} S_1&0\\0&S_1\end{pmatrix}$ with Floquet eigenvalues $E_1(\theta)=E_2(\theta)=\ee^{-2\pi\ii\theta}$ satisfies \eqref{e:flo}.

\subsection{A simple example}\label{sec:simpexa}
Before we state our main results, let us explain why an assumption like \eqref{e:flo} may be natural. Consider the toy walk $U = \begin{pmatrix} S_2&0\\0&S_{-2}\end{pmatrix}$ over $\Z$. If $\psi = \begin{pmatrix} \delta_0\\0\end{pmatrix}$, then $U_N^k \psi = \begin{pmatrix} \delta_{2k}\\0\end{pmatrix}$. If $N$ is even, we see that this walk only visits even integers, so it does not equidistribute over $\LL_N$. Here $\LL_N$, due to the periodic conditions, amounts to an $N$-cycle. The issue here is partly caused by the large hoping, which is spectrally translated into the high frequency eigenvalues of the Floquet matrix, namely $\ee^{\pm 4\pi\ii\theta}$, something that \eqref{e:flo} aims to exclude. This walk has purely absolutely continuous spectrum, but lacks perfect ergodicity.

It is also worthwhile to note that if $N=2n+1$ is odd, then the sequence does visit all vertices of the cycle, because it hops from $0$ until reaching $2n$ and then to $1$. This is something that we will see more generally in one-dimensional walks.

\subsection{Main results}
Our first result is the following criterion for ergodicity.

\begin{thm}\label{thm:cri0}
Let $U$ be a quantum walk \eqref{e:ugen}-\eqref{eqn:homogen_unitary} satisfying \eqref{e:flo}. Let $\psi = \sum_{\bk\in \Lambda} \psi(\bk)\delta_{\bk}$ be an initial state of compact support, i.e. $\Lambda\subset \Z^d$ is finite and independent of $N$, and assume $\|\psi\|=1$. 

Then for any regular observable $\phi=\phi^{(N)}\in \ell^2(\LL_N^d)$,
\begin{equation}\label{e:mainlim}
\lim_{N\to\infty}\Big|\lim_{T\to\infty} \langle \phi \rangle_{T,\psi} - \langle \phi\rangle \Big|=0 \,, \tag{\textbf{PQE}}
\end{equation}
where $\langle \phi \rangle_{T,\psi} = \sum_{\br\in \LL_N^d} \phi(\br) \mu_{T,\psi}^N(\br)$, while $\langle \phi\rangle = \frac{1}{N^d} \sum_{\br\in \LL_N^d} \phi(\br)$ is the uniform average.
\end{thm}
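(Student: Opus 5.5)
We diagonalize $U_N$ with the discrete Fourier transform and reduce the statement to a counting problem that \eqref{e:flo} controls.

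\textbf{Diagonalization and the $T\to\infty$ limit.} On $\LL_N^d$ with periodic boundary conditions, $U_N$ is unitarily equivalent, via the discrete Fourier transform, to multiplication by the matrices $\widehat U(\br/N)$, $\br\in\LL_N^d$ (up to the sign convention in \eqref{e:uhatheta}). For each $\br$ we pick an orthonormal eigenbasis $v_s(\br/N)$, $s\le\nu$, of $\widehat U(\br/N)$ with eigenvalues $E_s(\br/N)$. Then
\[
U_N^k\psi=\sum_{\br,s} E_s(\br/N)^k\,\langle \chi_{\br,s},\psi\rangle\,\chi_{\br,s},
\]
where $\chi_{\br,s}(\bx)=N^{-d/2}\ee^{2\pi\ii \br\cdot\bx/N}v_s(\br/N)$.

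Expanding $\langle\phi\rangle_{T,\psi}$ produces a double sum over pairs $(\br,s)$, $(\br',w)$ weighted by $\frac1T\sum_{k<T}(E_s(\br/N)\overline{E_w(\br'/N)})^k$. As $T\to\infty$ this weight tends to the indicator of $E_s(\br/N)=E_w(\br'/N)$. Writing $\br=\br'+\bm$, the matrix element of the multiplication operator $\phi$ between $\chi_{\br,s}$ and $\chi_{\br',w}$ equals
\[
\widehat\phi_N(\bm)\,\langle v_s(\tfrac{\br'+\bm}{N}),v_w(\tfrac{\br'}{N})\rangle,
\qquad \widehat\phi_N(\bm)=N^{-d}\sum_{\bx}\phi(\bx)\ee^{-2\pi\ii\bm\cdot\bx/N}.
\]
Hence
\[
\lim_{T}\langle\phi\rangle_{T,\psi}=\sum_{\bm}\widehat\phi_N(\bm)\sum_{\br',s,w}\mathbf 1_{E_s(\frac{\br'+\bm}{N})=E_w(\frac{\br'}{N})}\,\overline{\langle \chi_{\br'+\bm,s},\psi\rangle}\,\langle\chi_{\br',w},\psi\rangle\,\langle v_s,v_w\rangle .
\]

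\textbf{The $\bm=\mathbf 0$ term.} Here $\widehat\phi_N(\mathbf 0)=\langle\phi\rangle$. The indicator groups $s,w$ inside eigenspaces, so the inner sum equals $\sum_{\br'}\sum_\lambda\|P_\lambda(\br'/N)\widehat\psi(\br')\|^2=\|\psi\|^2=1$ by Parseval. The degenerate case $E_s\equiv E_w$ is handled by working with the spectral projections $P_\lambda(\br'/N)$ rather than individual eigenvectors, which makes the computation basis-independent.

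\textbf{The $\bm\neq\mathbf 0$ terms.} Since $\psi$ has finite support $\Lambda$, $|\langle\chi_{\br,s},\psi\rangle|\le N^{-d/2}\sum_{\Lambda}\|\psi(\bk)\|\le C_\psi N^{-d/2}$, and the inner products $\langle v_s,v_w\rangle$ are bounded by $1$. The inner sum is therefore bounded by
\[
C_\psi^2\,\nu^2\max_{s,w}\frac{\#\{\br':E_s(\frac{\br'+\bm}{N})=E_w(\frac{\br'}{N})\}}{N^d}=:\nu^2C_\psi^2\,\varepsilon_N(\bm),
\]
and \eqref{e:flo} says $\sup_{\bm\ne\mathbf0}\varepsilon_N(\bm)\to0$. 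It remains to control $\sum_{\bm\ne \mathbf 0}|\widehat\phi_N(\bm)|$ uniformly in $N$.

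This uniform bound is the main obstacle, and we treat the two classes of regular observables separately.
\begin{enumerate}[(i)]
\item For $\phi=f(\cdot/N)$ with $f\in H^s(\T^d)$, $s>d/2$, aliasing gives $\widehat\phi_N(\bm)=\sum_{\bm'\equiv\bm}\widehat f(\bm')$. Hence $\sum_{\bm}|\widehat\phi_N(\bm)|\le\|\widehat f\|_{\ell^1(\Z^d)}<\infty$, because $s>d/2$ gives $\widehat f\in\ell^1$ by Cauchy--Schwarz. Note the aliasing identity as stated requires absolute convergence of the Fourier series, which $\widehat f\in\ell^1$ provides.
\item For $\phi\in\ell^1(\Z^d)$ the $\ell^1$ bound on $\widehat\phi_N$ fails, and we use the $N^{-d}$ in $\widehat\phi_N$ instead. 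Directly, $|\widehat\phi_N(\bm)|\le N^{-d}\|\phi\|_1$, so the $\bm\ne\mathbf 0$ contribution is at most
\[
\|\phi\|_1\,N^{-d}\sum_{\br,\br'}|\langle\chi_{\br,\cdot},\psi\rangle||\langle\chi_{\br',\cdot},\psi\rangle|\le \|\phi\|_1 C_\psi^2\nu^2 N^{-d}\to0 .
\]
Also $\langle\phi\rangle\to0$, so in this case both sides of \eqref{e:mainlim} vanish in the limit.
\end{enumerate}
Combining the three steps yields \eqref{e:mainlim}.
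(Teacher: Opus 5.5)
Your route is essentially the paper's: Steps 1–4 specialized to $a_j=\phi$, with an explicit Fourier eigenbasis in place of spectral projections. The Ces\`aro limit gives the indicator of coinciding Floquet eigenvalues, the $\bm=\mathbf 0$ mode gives $\langle\phi\rangle$ by Parseval, and the $\bm\neq\mathbf 0$ modes are bounded by \eqref{e:flo} times $\sum_{\bm}|\widehat\phi_N(\bm)|$. Case (i) is correct. Your aliasing formula $\widehat\phi_N(\bm)=\sum_{\bm'\equiv\bm}\widehat f(\bm')$ is the precise version of the identity used in Step 4.

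The gap is case (ii). Your estimate
\[
\|\phi\|_1 N^{-d}\sum_{\br,\br'}|\langle\chi_{\br,\cdot},\psi\rangle|\,|\langle\chi_{\br',\cdot},\psi\rangle|\le \|\phi\|_1C_\psi^2\nu^2N^{-d}
\]
is off by a factor $N^d$. Once you drop the indicator, $\br$ and $\br'$ range independently over $N^{2d}$ pairs, and each product is only $O(N^{-d})$. Cauchy--Schwarz does not help: for $\psi=\delta_{\mathbf 0}\otimes f$ with $\|f\|=1$, one has $\sum_{\br,s}|\langle\chi_{\br,s},\psi\rangle|\ge N^{-d/2}\cdot N^d=N^{d/2}$. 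So your right-hand side is of order $\|\phi\|_1$, not $o(1)$.

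No argument of this form can work, because it never uses \eqref{e:flo}. Its conclusion is false for walks with a flat band: by Proposition~\ref{prp:flanoqe}, a compactly supported eigenvector $\psi$ and $\phi=\mathbf 1_\Lambda\in\ell^1$ give $\langle\phi\rangle_{T,\psi}=1$ for all $T,N$.

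The repair is that the premise of (ii) is false: the $\ell^1$ bound on $\widehat\phi_N$ does not fail. With your normalization, $|\widehat\phi_N(\bm)|\le N^{-d}\|\phi\|_1$. Summing over the $N^d$ frequencies gives $\sum_{\bm\in\LL_N^d}|\widehat\phi_N(\bm)|\le\|\phi\|_1$, uniformly in $N$. Hence $\ell^1$ observables fall under exactly the same \eqref{e:flo}-based estimate as in case (i), and with this change your proof is complete. This is how the paper treats both classes of regular observables at once, through the bound \eqref{e:regood}.
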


Here \eqref{e:mainlim} stands for \emph{position quantum ergodicity}. Theorem~\ref{thm:cri0} is a direct consequence of Theorem~\ref{thm:pergra}, which will yield a more precise \emph{full quantum ergodicity} \eqref{eqn:ergodic_general} in both position and spin spaces. Before we introduce it, let us explain \eqref{e:mainlim} further.

Theorem~\ref{thm:cri0} shows that $\mu_{T,\psi}^N(\br) \approx \frac{1}{N^d}$ in the sense of measures on $\LL_N^d$. As the space changes with $N$, we cannot directly formulate this as a weak convergence. This can be somehow fixed. First recall that quite trivially, for fixed $N$ and any $\br\in \LL_N^d$,
\begin{equation}\label{e:limtonly}
\mu_{T,\psi}^N(\br) \xrightarrow{T\to\infty} \mu_\psi^N(\br) \,,
\end{equation}
where $\mu_{\psi}^N(\br) = \sum_{i\le m}\|(P_{\lambda_i^{(N)}}\psi)(\br)\|_{\C^\nu}^2$ (see the introduction). This is equivalent to the weak convergence $\mu_{T,\psi}^N\xrightarrow{w} \mu_\psi^N$. Now recall that we are considering periodic boundary conditions; in one dimension this amounts to studying the walk on a large cycle. To fix the space, we may embed this cycle into the unit circle. More generally, we define the measure $\widetilde{\mu}_{T,\psi}^N$ on the torus $\T^d$ by
\[
\widetilde{\mu}_{\psi}^N := \sum_{\bk\in\LL_N^d}  \mu_{\psi}^N(\bk) \delta_{\frac{\bk}{N}}
\]
so that $\widetilde{\mu}_{\psi}^N$ is concentrated on $\{\frac{\bk}{N}:\bk\in\LL_N^d\}$, with $\widetilde{\mu}_{\psi}^N(\frac{\bk}{N})=\mu_{\psi}^N(\bk)$. Then we have:

\begin{prp}\label{prp:wea}
Let $\psi = \sum_{\bk\in \Lambda} \psi(\bk)\delta_{\bk}$ be an initial state of finite support independent of $N$, and assume $\|\psi\|=1$. Then \eqref{e:mainlim} for regular observables implies the weak convergence
\begin{equation}\label{e:weauni}
\widetilde{\mu}_{\psi}^N \xrightarrow{w} \mu
\end{equation}
where $\dd \mu(\mathbf{x})=\dd \mathbf{x}$ is the uniform measure on $\T^d$. In particular, \eqref{e:weauni} holds for any quantum walk \eqref{e:ugen}-\eqref{eqn:homogen_unitary} satisfying \eqref{e:flo}.
    \end{prp}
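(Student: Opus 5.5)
To prove the weak convergence \eqref{e:weauni}, it suffices to show $\int_{\T^d} f\,\dd\widetilde{\mu}_\psi^N \to \int_{\T^d} f\,\dd\mathbf{x}$ for every continuous $f$ on $\T^d$. Since all the measures involved are probability measures, we may further reduce to a dense subclass of $C(\T^d)$. Indeed, $\widetilde{\mu}_\psi^N(\T^d)=\mu_\psi^N(\LL_N^d)=\lim_T \mu_{T,\psi}^N(\LL_N^d)=1$, since $\LL_N^d$ is finite and the pointwise limit \eqref{e:limtonly} can be exchanged with the finite sum. Hence if $\|f-g\|_\infty<\varepsilon$, both $\int f\,\dd\widetilde{\mu}_\psi^N$ and $\int f\,\dd\mathbf{x}$ change by at most $\varepsilon$ when $f$ is replaced by $g$. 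As the dense class I take trigonometric polynomials, or more generally $C^\infty(\T^d)$. These lie in $H^s(\T^d)$ for every $s$, in particular for some $s>d/2$, and they are dense in $C(\T^d)$ by Stone--Weierstrass or Fej\'er.

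For such $f$, set $\phi^{(N)}(\bk)=f(\bk/N)$, which is a regular observable of type (i) in Definition~\ref{def:rego}. Then
\[
\int_{\T^d} f\,\dd\widetilde{\mu}_\psi^N=\sum_{\bk\in\LL_N^d}f(\tfrac{\bk}{N})\mu_\psi^N(\bk)=\lim_{T\to\infty}\sum_{\bk}\phi^{(N)}(\bk)\mu_{T,\psi}^N(\bk)=\lim_{T\to\infty}\langle\phi\rangle_{T,\psi},
\]
where again the limit in $T$ passes through the finite sum by \eqref{e:limtonly}. Applying \eqref{e:mainlim}, the difference between this quantity and $\langle\phi\rangle=N^{-d}\sum_{\bk}f(\bk/N)$ tends to $0$. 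Finally, $N^{-d}\sum_{\bk\in\LL_N^d}f(\bk/N)$ is a Riemann sum for the continuous function $f$ on the grid $\frac1N\LL_N^d\subset[0,1)^d$, so it converges to $\int_{\T^d}f\,\dd\mathbf{x}$ by uniform continuity of $f$. Chaining these three facts with the $\varepsilon$-approximation gives $\limsup_N\bigl|\int f\,\dd\widetilde{\mu}_\psi^N-\int f\,\dd\mathbf{x}\bigr|\le2\varepsilon$ for every $\varepsilon>0$, which proves \eqref{e:weauni}.

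The final sentence of the proposition follows by combining this with Theorem~\ref{thm:cri0}, which provides \eqref{e:mainlim} for all regular observables whenever \eqref{e:flo} holds. There is no real obstacle in this argument. The only point needing care is the density reduction: it relies on the uniform mass bound $\widetilde{\mu}_\psi^N(\T^d)=1$ so that the approximation error is controlled uniformly in $N$. This is the reason for working with smooth $f$ rather than merely continuous $f$, because continuous $f$ need not lie in $H^s$ for $s>d/2$, so \eqref{e:mainlim} cannot be applied to it directly.
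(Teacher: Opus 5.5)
Your proposal is correct and follows essentially the same route as the paper: test against smooth $f$ through the regular observable $\phi^{(N)}(\bk)=f(\bk/N)$, apply \eqref{e:mainlim}, and use that the Riemann sum $\langle\phi\rangle$ converges to $\int_{\T^d} f\,\dd\mathbf{x}$. The only difference is in extending from smooth to arbitrary continuous test functions: you do it explicitly with a sup-norm density argument based on $\widetilde{\mu}_\psi^N(\T^d)=1$, whereas the paper cites a result from Klenke.
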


Recall that assumption \eqref{e:flo} implies that $U$ has no flat bands, i.e. $U$ has no eigenvalue. 
It is natural to exclude flat bands. In fact,

\begin{prp}\label{prp:flanoqe}
If $U$ has a flat band, there exists some initial state $\psi$ of compact support, and a regular observable $\phi=\phi_N$, such that $\langle \phi\rangle_{T,\psi}= 1$ and $\langle \phi\rangle=  \frac{c}{N^d}$ for all $T$ and $N$. In particular, \eqref{e:mainlim} fails.
\end{prp}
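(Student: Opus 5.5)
Our plan is to exhibit a compactly supported eigenvector of $U$ and to observe that, if an eigenvector has compact support, then the walk started from it never moves. Suppose $U$ has a flat band, i.e. some Floquet eigenvalue $E_s(\btheta)\equiv\lambda_0$ is constant; equivalently (by \cite{Tate2019}, as recalled above) $\lambda_0$ is an eigenvalue of $U$ of infinite multiplicity. The first step is to produce an eigenvector of finite support. Since $\widehat{U}(\btheta)$ is a matrix of trigonometric polynomials, $\widehat{U}(\btheta)-\lambda_0 I$ is a trigonometric polynomial matrix which is singular for every $\btheta$. Hence its adjugate (or, if the rank is lower, a suitable nonzero minor construction: choose a maximal nonvanishing minor of $\widehat U-\lambda_0$ and use Cramer's rule) gives a nonzero vector $\widehat{v}(\btheta)$ whose entries are trigonometric polynomials and which satisfies $(\widehat{U}(\btheta)-\lambda_0)\widehat{v}(\btheta)=0$ identically. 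Inverse Fourier transforming yields $\psi_0\in\ell^2(\Z^d,\C^\nu)$ of finite support, with $U\psi_0=\lambda_0\psi_0$. Here we use the sign convention of \eqref{e:uhatheta}, replacing $\btheta$ by $-\btheta$ if needed. We then normalize $\psi=\psi_0/\|\psi_0\|$. This algebraic step is where I expect the main care to be required, namely ensuring that the kernel vector is a nonzero polynomial vector and not merely a measurable one.

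The second step is to pass to the box. For $N$ larger than the diameter of $\supp\psi$ plus the range of $F$, the periodization of $\psi$ coincides with $\psi$ placed in $\LL_N^d$ (after translating its support inside the box). Moreover, $U_N$ acts on it exactly as $U$ does, since no wrapping interferes with the local computation. Thus $U_N\psi=\lambda_0\psi$. Consequently $(U_N^k\psi)(\br)=\lambda_0^k\psi(\br)$ with $|\lambda_0|=1$, so $\mu_{T,\psi}^N(\br)=\|\psi(\br)\|^2_{\C^\nu}$ for all $T$.

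The final step is to choose the observable. Let $\phi=\mathbf 1_{\Lambda}$, where $\Lambda=\supp\psi$ is a fixed finite set. This $\phi$ is the restriction of an $\ell^1(\Z^d)$ function, hence it is regular in the sense of Definition~\ref{def:rego}(ii). Then $\langle\phi\rangle_{T,\psi}=\sum_{\br\in\Lambda}\|\psi(\br)\|^2=1$ for all $T$, while $\langle\phi\rangle=\#\Lambda/N^d=c/N^d$ with $c=\#\Lambda$. Therefore $|\lim_T\langle\phi\rangle_{T,\psi}-\langle\phi\rangle|\to1$, and \eqref{e:mainlim} fails. For small $N$, where the finite-support construction might wrap around, one can either accept the statement for $N$ large or note that the same computation holds for the periodized vector whenever it is still an eigenvector.
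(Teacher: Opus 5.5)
Your proposal is correct and matches the paper's proof: the paper also builds a compactly supported eigenvector for the flat band (Lemma~\ref{lem:flaco}), places it in $\LL_N^d$ for $N$ large so that $U_N\psi=\lambda_0\psi$, and tests against $\phi=\mathbf{1}_{\Lambda}$. The only difference is that the paper cites an argument from the literature for choosing the Floquet eigenvector with trigonometric-polynomial entries, which your maximal-minor and Cramer's-rule construction makes explicit; like yours, the paper's argument really only covers large $N$, which is all that is needed for \eqref{e:mainlim} to fail.
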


In specific examples, we can take the initial state $\psi$ to be a qubit.

\begin{lem}\label{lem:inui}	
The Grover walk violates \eqref{e:mainlim} for the initial state $\psi=\delta_0\otimes f$ and the observable $\phi=\delta_0$, for any $f\in \C^3$, $\|f\|=1$, such that $f\notin \mathrm{span}(\frac{1}{\sqrt{6}},\frac{-2}{\sqrt{6}},\frac{1}{\sqrt{6}})$.
\end{lem}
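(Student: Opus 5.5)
The plan is to use the flat band of the Grover walk and work at the level of $\mu_\psi^N = \lim_{T\to\infty}\mu_{T,\psi}^N$, following the mechanism behind Proposition~\ref{prp:flanoqe}. The observable $\phi=\delta_0$ is regular in the sense of Definition~\ref{def:rego}(ii). Its uniform average is $\langle\phi\rangle = N^{-1}\to 0$. By \eqref{e:limtonly},
\[
\lim_{T\to\infty}\langle\phi\rangle_{T,\psi} = \mu_\psi^N(0) = \sum_i \|(P_{\lambda_i^{(N)}}\psi)(0)\|^2_{\C^3}.
\]
It therefore suffices to find $c>0$, independent of $N$, with $\mu_\psi^N(0)\ge c$ for all large $N$. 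Since every term in the sum is nonnegative, I will keep only the term coming from the flat-band eigenvalue $\lambda_0$ of $U_N$.

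\textbf{Step 1: the flat band and its localized eigenvectors.} From \eqref{e:uhagro} and \eqref{e:uhatheta}, writing $z=\ee^{-2\pi\ii\theta}$, I will compute
\[
\widehat U_{\mathrm{Gro}}(\theta)=\tfrac13\begin{pmatrix}-\bar z&2\bar z&2\bar z\\2&-1&2\\2z&2z&-z\end{pmatrix}
\]
(with the appropriate sign convention on $\theta$). The characteristic polynomial should factor as $(\lambda-\lambda_0)\,q_\theta(\lambda)$ with $\lambda_0$ constant; I expect $\lambda_0=-1$. Next I will solve $(\widehat U(\theta)-\lambda_0)v(\theta)=0$ and choose $v(\theta)$ to be a trigonometric polynomial in $z$. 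Fourier inversion of $v$ then gives a finitely supported $\chi\in\ell^2(\Z,\C^3)$ with $U\chi=\lambda_0\chi$. Its support should be a few adjacent sites, with entries proportional to patterns like $(1,-2,1)$ split across sites. For $N$ larger than the support length, every translate $\chi_k=\chi(\cdot-k)$ is also an eigenvector of $U_N$ with eigenvalue $\lambda_0$. Hence, with $W=\mathrm{span}\{\chi_k: k\in K\}$ for a fixed finite set $K$ of shifts whose supports meet $0$, we have $W\subset\mathrm{Ran}\,P_{\lambda_0^{(N)}}$.

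\textbf{Step 2: the lower bound.} Set $P=P_{\lambda_0^{(N)}}$. Then
\[
\|(P\psi)(0)\|\ \ge\ |\langle (P\psi)(0),f\rangle| = |\langle P\psi,\delta_0\otimes f\rangle| = \|P\psi\|^2 \ge \|P_W\psi\|^2 .
\]
Because the vectors $\chi_k$ and $\psi$ are fixed finitely supported vectors, $\|P_W\psi\|^2$ is determined by a fixed Gram matrix and the inner products $\langle \delta_0\otimes f,\chi_k\rangle = \langle f,\chi(-k)\rangle$. None of these depend on $N$ once $N$ exceeds the total support length. So $\mu_\psi^N(0)\ge \|P_W\psi\|^4 =: c$, and $c>0$ as soon as $\langle f,\chi(-k)\rangle\neq0$ for some $k\in K$.

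\textbf{Step 3: the main obstacle.} The hardest step is to match this nondegeneracy with the stated hypothesis. Concretely, I must show that $\{f: \langle f,\chi(-k)\rangle=0\ \forall k\}$ is exactly $\mathrm{span}\,(1,-2,1)$. Equivalently, the values $\chi(-k)\in\C^3$ at site $0$ over all translates span the plane orthogonal to $(1,-2,1)$ (suitably conjugated). This requires the explicit form of $\chi$ from Step 1. I will first compute $v(\theta)$ directly and check that the coefficient vectors of its Fourier modes span a two-dimensional space annihilated by $(1,-2,1)$. If a single $\chi$ does not give the full plane, I will also use the other localized eigenvectors, obtained by multiplying $v$ by powers of $z$ combined with different components. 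Once the plane is established, the hypothesis $f\notin\mathrm{span}(\frac1{\sqrt6},\frac{-2}{\sqrt6},\frac1{\sqrt6})$ gives $c>0$. Then $\lim_T\langle\phi\rangle_{T,\psi}-\langle\phi\rangle\ge c-N^{-1}$ does not tend to zero, so \eqref{e:mainlim} fails.
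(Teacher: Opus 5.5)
Your argument is correct, but it takes a different route from the paper.

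The paper's proof is essentially a citation. It quotes from \cite[Section 3]{IKS} the exact value $\lim_{N}\lim_{T}\mu^N_{T,\psi}(0)=(5-2\sqrt6)\bigl(1+|\alpha+\beta|^2+|\beta+\gamma|^2-2|\beta|^2\bigr)$ for $f=(\alpha,\beta,\gamma)$. It then notes that this value vanishes only on $\mathrm{span}(1,-2,1)$.

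You instead derive a self-contained lower bound from the compactly supported flat-band eigenvectors. Your chain $\|(P\psi)(0)\|\ge|\langle P\psi,\psi\rangle|=\|P\psi\|^2\ge\|P_W\psi\|^2$ is valid because $\psi=\delta_0\otimes f$ with $\|f\|=1$ and $W\subset\mathrm{Ran}\,P$.

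Two of your guesses in Step~1 are off, though harmlessly:
\begin{itemize}
\item The flat band of the Grover walk is $\lambda_0=1$, not $-1$.
\item A localized eigenvector is $\chi$ with $\chi(0)=(2,1,0)$, $\chi(1)=(0,1,2)$ and zero elsewhere. Its entries are orthogonal to $(1,-2,1)$, not proportional to it.
\end{itemize}
Using the translates $\chi$ and $\chi(\cdot+1)$, the site-$0$ values $(2,1,0)$ and $(0,1,2)$ span exactly $(1,-2,1)^\perp$. So Step~3 closes. It even gives the explicit bound $\mu^N_\psi(0)\ge\max\bigl(|2\alpha+\beta|,|\beta+2\gamma|\bigr)^4/100$ for all $N\ge3$.

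The two routes buy different things:
\begin{itemize}
\item The paper's route yields the exact limit. This also shows the hypothesis on $f$ is sharp, since the limit is $0$ when $f\propto(1,-2,1)$. The cost is reliance on an external explicit computation, with the positivity of the resulting quadratic form left to the reader.
\item Your route gives only a lower bound. But it is elementary and holds uniformly for every large finite $N$. It also applies verbatim to any walk with a flat band, via Lemma~\ref{lem:flaco}. That upgrades Proposition~\ref{prp:flanoqe} from eigenvector initial states to generic qubit initial states $\delta_0\otimes f$.
\end{itemize}
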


Let us now discuss full ergodicity:

\begin{thm}\label{thm:pergra}
Let $U$ be a quantum walk \eqref{e:ugen}-\eqref{eqn:homogen_unitary} satisfying \eqref{e:flo}.
Suppose the observable $a=a^{(N)}\in \ell^2(\LL_N^d)^{\nu}$ is regular. 
Then for any initial state $\psi$ of compact support,
\begin{equation}\label{eqn:ergodic_general}
 \lim_{N\to\infty}\Big|\lim_{T\to\infty}\frac{1}{T}\sum_{k=0}^{T-1} \langle U_N^k\psi, a U_N^k\psi\rangle
- \langle a\rangle_\psi\Big|=0\,, \tag{\textbf{FQE}}  
\end{equation}
where, denoting $\langle a_j\rangle:= \frac{1}{N^d} \sum_{\bk\in \LL_N^d} a_j(\bk)$ and $P_{E_s}(\btheta)$ the spectral projections corresponding to the $\nu'\le \nu$ distinct eigenvalues $E_s(\btheta)$ of $\widehat{U}(\btheta)$,
\begin{equation}\label{e:avp}
\langle a\rangle_\psi = \sum_{j=1}^\nu \langle a_j\rangle \sum_{\br\in\LL_N^d} \sum_{s=1}^{\nu'} \Big|\Big[P_{E_s}\Big(\frac{\br}{N}\Big)\widehat{\psi}(\br)\Big]_j\Big|^2\,.
\end{equation}
\end{thm}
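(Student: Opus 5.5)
We work throughout on the Fourier side of the finite torus. The discrete Fourier transform on $\LL_N^d$ turns $U_N$ into multiplication by the Floquet matrices $\widehat{U}(\br/N)$, $\br\in\LL_N^d$. (The precise sign convention is fixed as in Lemma~\ref{lem:foufin}; with the convention $\ee^{-2\pi\ii\btheta\cdot\bp}$ we get exactly $\widehat{U}(\br/N)$ acting on $\widehat{\psi}(\br)$.) The spectral decomposition at each $\br$ is
\[
\widehat{U}(\br/N)=\sum_{s}E_s(\br/N)P_{E_s}(\br/N).
\]
Hence
\[
\widehat{U_N^k\psi}(\br)=\sum_s E_s(\br/N)^kP_{E_s}(\br/N)\widehat{\psi}(\br).
\]
Here $\widehat{\psi}(\br)=N^{-d/2}\sum_{\bk\in\Lambda}\psi(\bk)\ee^{-2\pi\ii \bk\cdot\br/N}$, so $|\widehat{\psi}(\br)|\le C N^{-d/2}$ uniformly in $\br$.

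Multiplication by $a_j$ becomes convolution by $\widehat{a_j}$. Concretely,
\[
\langle U_N^k\psi, aU_N^k\psi\rangle=N^{-d/2}\sum_j\sum_{\bm}\widehat{a_j}(\bm)\sum_{\br}\overline{[\widehat{U_N^k\psi}(\br+\bm)]_j}\,[\widehat{U_N^k\psi}(\br)]_j,
\]
with indices taken mod $N$ and the appropriate sign of $\bm$. Inserting the spectral expansion and averaging in $k$, the factor $\overline{E_s(\tfrac{\br+\bm}{N})}^kE_w(\tfrac{\br}{N})^k$ averages as $T\to\infty$ to $1$ if $E_s(\tfrac{\br+\bm}{N})=E_w(\tfrac{\br}{N})$ and to $0$ otherwise. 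So the $T$-limit exists, and it equals a finite sum restricted to the resonant set
\[
R_{s,w}(\bm)=\{\br: E_s(\tfrac{\br+\bm}{N})=E_w(\tfrac{\br}{N})\}.
\]

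Next we split into $\bm=\mathbf 0$ and $\bm\neq\mathbf0$. For $\bm=\mathbf 0$ the resonance condition forces $E_s=E_w$ at the same point. Since the projections are onto distinct eigenvalues, only $s=w$ contributes. Because $N^{-d/2}\widehat{a_j}(\mathbf 0)=\langle a_j\rangle$, this term gives exactly $\langle a\rangle_\psi$ in \eqref{e:avp}.

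For $\bm\ne\mathbf 0$ we bound each summand by $|\widehat{a_j}(\bm)|\cdot\|\widehat\psi\|_\infty^2\le C|\widehat{a_j}(\bm)|N^{-d}$, using $\|P\|\le1$. Summing over $\br$ then gives
\[
\Big|\sum_{\bm\ne 0}\cdots\Big|\le C\nu^2\sup_{\bm\ne0}\frac{\#R_{s,w}(\bm)}{N^d}\cdot N^{-d/2}\sum_{\bm}|\widehat{a_j}(\bm)|.
\]
By \eqref{e:flo} the supremum tends to $0$. It therefore suffices to know that $N^{-d/2}\|\widehat{a_j}\|_{\ell^1}$ stays bounded uniformly in $N$, and that step is where the regularity assumption enters.

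The main obstacle is this uniform $\ell^1$ bound for regular observables.
\begin{itemize}
\item In case (ii), $N^{-d/2}\widehat{a_j}(\bm)$ is the Fourier series of the summable function evaluated at a grid point. Its $\ell^1$ norm is not obviously bounded, so instead we avoid it: we approximate $a_j$ by finitely supported functions, for which the bound is trivial, and control the error by $\|a_j\|_{\ell^\infty}$-type estimates. Such an estimate is valid since the quadratic form is bounded by $\sup|a|\cdot\|\psi\|^2$ and $\langle a\rangle_\psi$ is similarly controlled. A cleaner route is to truncate $\phi$ to $\{|\bk|\le K\}$: the truncated part satisfies $\sum_{\bm}|\widehat{\cdot}|\le CK^{d}N^{d/2}$, and the tail has small $\ell^1$, hence small $\ell^\infty$.
\item In case (i), $N^{-d/2}\widehat{a_j}(\bm)$ is an aliased sum of Fourier coefficients $\widehat f(\bm+N\mathbf{n})$. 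Its $\ell^1$ norm is at most $\sum_{\mathbf n}|\widehat f(\mathbf n)|\le C\|f\|_{H^s}$ by Cauchy--Schwarz, since $s>d/2$.
\end{itemize}
Combining the $\bm=\mathbf0$ identity with the vanishing of the $\bm\neq\mathbf0$ terms yields \eqref{eqn:ergodic_general}.
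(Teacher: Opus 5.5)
Your proposal is correct and follows essentially the paper's proof. You diagonalize $U_N$ in Fourier space, let the time average kill the non-resonant pairs, read off $\langle a\rangle_\psi$ from the $\bm=\mathbf 0$ term, and kill the $\bm\neq\mathbf 0$ terms using \eqref{e:flo} together with a uniform bound on $N^{-d/2}\sum_{\bm}|\widehat{a_j}(\bm)|$. Your aliasing formula in case (i) is in fact more precise than the paper's identity $\widehat a_j(\bm)=N^{d/2}\hat f_j(\bm)$.

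The truncation detour in case (ii) is valid but unnecessary, and your remark that the $\ell^1$ bound is ``not obviously bounded'' is mistaken. The estimate $|\widehat{a_j}(\bm)|\le N^{-d/2}\|a_j\|_{\ell^1}$ holds for every $\bm$, so summing over the $N^d$ values of $\bm$ gives $N^{-d/2}\sum_{\bm}|\widehat{a_j}(\bm)|\le\|a_j\|_{\ell^1}$ directly. This is exactly the paper's argument, and it is the same bound you already use for your truncated piece.
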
 

This theorem is more precise than Theorem~\ref{thm:cri0} because it shows how the mass spreads out both in position space $\LL_N^d$ and in spin space $\C^\nu$. This spreading is generally not uniform in spin space. The reason that we get the uniform measure in Theorem~\ref{thm:cri0} is that we sum over the spin contributions to assess the probability of being at a position $\br\in\LL_N^d$.

Still, the presence of $\langle a_j\rangle$ implies uniformity in position space. To make this more precise, we may embed the walk in the torus as before by defining
\[
\widetilde{\mu}_{\psi,j}^N = \sum_{\bk\in\LL_N^d} \mu_{\psi,j}^N(\bk)\delta_{\frac{\bk}{N}}
\]
where $\mu_{\psi,j}^N(\br) = \lim_{T\to\infty} \mu_{T,\psi,j}^N(\br)$ and $\mu_{T,\psi,j}^N(\br) = \frac{1}{T}\sum_{k=0}^{T-1}|[U_N^k\psi]_j(\br)|^2$. Then we have:

\begin{prp}\label{prp:weaj}
Let $\psi = \sum_{\bk\in \Lambda} \psi(\bk)\delta_{\bk}$ be an initial state of finite support $\Lambda\subset \Z^d$ independent of $N$, and assume $\|\psi\|=1$. Then \eqref{eqn:ergodic_general} for regular observables implies the weak convergence
\begin{equation}\label{e:wea}
\widetilde{\mu}_{\psi,j}^N \xrightarrow{w} c_{\psi,j}\mu
\end{equation}
where $\dd \mu(\mathbf{x})=\dd \mathbf{x}$ is the uniform measure on $\T^d$ and $c_{\psi,j}$ is the explicit constant
\[
c_{\psi,j}= \int_{\T^d} \sum_{s=1}^{\nu'} \bigg|\sum_{\bm\in\Lambda} [P_{E_s}(\btheta)\psi(\bm)]_j\ee^{-2\pi\ii \bm\cdot\btheta}\bigg|^2\,\dd\btheta \,.
\]
In particular, \eqref{e:wea} holds for any quantum walk \eqref{e:ugen}-\eqref{eqn:homogen_unitary} satisfying \eqref{e:flo}.
    \end{prp}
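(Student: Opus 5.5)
The plan is to test $\widetilde{\mu}_{\psi,j}^N$ against continuous functions $f\in C(\T^d)$ and to use observables of the form $a^{(N)}$ with $a_j(\bk)=f(\bk/N)$ and $a_i\equiv 0$ for $i\neq j$. This reduces the weak convergence to \eqref{eqn:ergodic_general} together with two approximation arguments.

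\textbf{Step 1: trigonometric polynomials.} First take $f$ a trigonometric polynomial, so that $f\in H^s(\T^d)$ for every $s$ and $a$ is regular in the sense of Definition~\ref{def:rego}(i). Then
\[
\int f\,\dd\widetilde{\mu}_{\psi,j}^N=\sum_{\bk}f(\bk/N)\mu_{\psi,j}^N(\bk)=\lim_{T\to\infty}\frac1T\sum_{k<T}\langle U_N^k\psi,aU_N^k\psi\rangle .
\]
By \eqref{eqn:ergodic_general}, this differs from $\langle a\rangle_\psi=\langle f(\cdot/N)\rangle\, c^N_{\psi,j}$ by $o(1)$, where
\[
c^N_{\psi,j}=\sum_{\br}\sum_s\big|[P_{E_s}(\br/N)\widehat\psi(\br)]_j\big|^2 .
\]
For the Riemann sum, $\langle f(\cdot/N)\rangle\to\int_{\T^d}f$, and in fact it equals $\int f$ exactly once $N$ exceeds the degree of $f$.

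\textbf{Step 2: the constants.} Next show $c^N_{\psi,j}\to c_{\psi,j}$. This needs the paper's normalization of $\widehat\psi$ on $\LL_N^d$, which I expect to be $\widehat\psi(\br)=N^{-d/2}\sum_{\bm\in\Lambda}\psi(\bm)\ee^{-2\pi\ii\bm\cdot\br/N}$, possibly up to a sign convention. With that normalization, $c^N_{\psi,j}$ is the Riemann sum $N^{-d}\sum_{\br}g(\br/N)$ of the function
\[
g(\btheta)=\sum_s\Big|\sum_{\bm}[P_{E_s}(\btheta)\psi(\bm)]_j\ee^{-2\pi\ii\bm\cdot\btheta}\Big|^2 .
\]
This step is the main obstacle, because $P_{E_s}(\btheta)$ is not continuous where eigenvalues cross and the number $\nu'$ of distinct eigenvalues may jump. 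I would argue as follows. The entries of $\widehat U$ are trigonometric polynomials, so the discriminant locus where eigenvalue multiplicities change is a real-analytic set; unless it is the whole torus it has measure zero, and after relabelling it can be handled locally. Off this set the projections are continuous, and $g$ is bounded (by $\nu\max|\widehat\psi|^2$, uniformly). A bounded function that is continuous off a closed null set is Riemann integrable, so the Riemann sums converge to $\int g=c_{\psi,j}$. If the degeneracy pattern is generic, one instead works on the open dense set where it is locally constant. An alternative is to apply \eqref{eqn:ergodic_general} with the constant observable $f=1$, which shows that $c^N_{\psi,j}-\mu^N_{\psi,j}(\LL_N^d)\to0$; combined with a direct computation this also gives the limit.

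\textbf{Step 3: density.} Finally, extend to all $f\in C(\T^d)$. The total masses $\widetilde\mu^N_{\psi,j}(\T^d)\le\|\psi\|^2=1$ are uniformly bounded. Given $\varepsilon>0$, choose a trigonometric polynomial $p$ with $\|f-p\|_\infty<\varepsilon$, for instance by Stone--Weierstrass or Fejér means. Then $\big|\int(f-p)\,\dd\widetilde\mu^N_{\psi,j}\big|\le\varepsilon$ and $\big|c_{\psi,j}\int(f-p)\big|\le\varepsilon$. By Step 1, $\limsup_N\big|\int f\,\dd\widetilde\mu^N_{\psi,j}-c_{\psi,j}\int f\big|\le2\varepsilon$, which gives \eqref{e:wea}.

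The final assertion follows because Theorem~\ref{thm:pergra} yields \eqref{eqn:ergodic_general} under \eqref{e:flo}.
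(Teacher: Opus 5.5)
Your proposal is correct and follows essentially the paper's argument: test against the regular observables $a_j=f(\cdot/N)$, $a_i=0$ for $i\neq j$, identify $c^N_{\psi,j}$ as a Riemann sum of a bounded function that is continuous off a closed null set, and pass to weak convergence. The one difference is the last step, which you do by hand, approximating $f$ uniformly by trigonometric polynomials and using $\widetilde\mu^N_{\psi,j}(\T^d)\le 1$ and $c_{\psi,j}\le 1$, whereas the paper cites Klenke.

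The only weak spot is your ``alternative'' in Step~2, which is circular. Since the constant observable has no nonzero Fourier modes, $c^N_{\psi,j}=\mu^N_{\psi,j}(\LL_N^d)$ holds exactly for every $N$, so that route gives no information about $\lim_N c^N_{\psi,j}$. Your main Riemann-integrability argument does not depend on it.
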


    If the initial state $\psi=\delta_{\mathbf{0}}\mathop\otimes \delta_\ell$ is a qubit, we get $c_{\psi,j} = \int_{\T^d} \sum_{s=1}^{\nu'} | P_{E_s}(\btheta)(j,\ell)|^2\,\dd\btheta$.

    So far we showed that for general quantum walks \eqref{e:ugen}-\eqref{eqn:homogen_unitary}, \eqref{e:flo}$\implies$\eqref{eqn:ergodic_general}$\implies$\eqref{e:mainlim} for regular observables. We now specialize to $d=1$ and study homogeneous quantum walks over $\Z$ of finite range, where we push the machinery to the limit and obtain a lot more precise results summarized in the following.
    
\begin{thm}[One-dimensional walks]\label{thm:main1d}
Let $U$ be a quantum walk \eqref{e:ugen}-\eqref{eqn:homogen_unitary} over $\Z$ with an initial state $\psi$ of compact support, $\|\psi\|=1$. 
\begin{enumerate}[\rm(1)]
    \item If \eqref{e:flo} holds, then \eqref{eqn:ergodic_general} holds for all observables $a=a^{(N)}$ with $\|a^{(N)}\|_\infty\le 1$.
    \item If $U$ has no flat bands, then there is always a subsequence $N_n$ such that \eqref{e:flo} holds on $\LL_{N_n}$ and \eqref{eqn:ergodic_general} holds for all $a^{(N_n)}$, $\|a^{(N_n)}\|_\infty\le 1$.
    \item \eqref{e:flo} fails iff $\limsup_N\sup_{m\neq 0}\#\{r\in \LL_N:E_s(\frac{r+m}{N})-E_w(\frac{r}{N})\}= \infty$.
    \item If $U$ has no flat bands, \eqref{e:mainlim} can fail for observables $\phi^{(N)}$ with $\|\phi^{(N)}\|_\infty\le 1$. Still, we find all ``semiclassical measures'' in this case, and exhibit a phenomenon of equidistribution on subsets
    \item $U$ has no flat bands iff \eqref{e:mainlim} holds for regular observables, for the full sequence.
    \item Coined walks $U=\cS\cC$ with a $2\times 2$ coin and shifts to the right and left by distances $\alpha,\beta\in\N^\ast$ exhibit all kinds of dynamical behaviors, and we fully characterize them in terms of $\cC$, $\alpha$ and $\beta$. Split-step walks are also treated.
    \item There exist walks satisfying \eqref{e:mainlim} for regular observables, but not for arbitrary $\|\phi^{(N)}\|_\infty\le 1$. There exist walks satisfying \eqref{e:mainlim} for $\|\phi^{(N)}\|_\infty\le 1$, but not \eqref{eqn:ergodic_general} for arbitrary $\|a^{(N)}\|_\infty \le 1$. A walk may satisfy \eqref{eqn:ergodic_general} and not \eqref{e:flo}.
\end{enumerate}
\end{thm}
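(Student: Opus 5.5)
Theorem~\ref{thm:main1d} bundles several statements, and the plan is to prove the core items (1), (2), (3) and (5) in full, with (4), (6), (7) worked out later as detailed analyses and examples. The common starting point is the spectral formula for the time limit. Diagonalize $U_N$ by the discrete Fourier transform on $\LL_N$: it is the multiplication by $\widehat U(r/N)$ on the fibres $r\in\LL_N$. Hence $\lim_T\frac1T\sum_k\langle U_N^k\psi,aU_N^k\psi\rangle$ equals
\[
\sum_{s,w}\ \sum_{(r,r')\,:\,E_s(r'/N)=E_w(r/N)}\big\langle P_{E_s}(\tfrac{r'}{N})\widehat\psi(r'),\,\widehat a(r'-r)\,P_{E_w}(\tfrac rN)\widehat\psi(r)\big\rangle,
\]
up to normalization. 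The diagonal terms $r=r'$ only see $\widehat a(0)=\langle a\rangle$, and they produce $\langle a\rangle_\psi$ exactly. Since $\psi$ has compact support, $\widehat\psi(r)=\sum_{m\in\Lambda}\psi(m)\ee^{-2\pi\ii mr/N}$ is uniformly bounded, with $|\widehat\psi|^2\lesssim1$ after the $N^{-1}$ normalization of each fibre. So everything reduces to controlling the off-diagonal sum.

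\textbf{Item (1).} Apply Cauchy--Schwarz to the off-diagonal sum, grouping by $m=r'-r$. Plancherel gives $\sum_m|\widehat a(m)|^2=\frac1N\|a\|_2^2\le1$, with no regularity of $a$ needed. Cauchy--Schwarz in $m$ then bounds the off-diagonal part by $\big(\sup_{m\ne0}N^{-1}\#\{r:E_s(\frac{r+m}N)=E_w(\frac rN)\}\big)^{1/2}$ times a constant, and this tends to zero by \eqref{e:flo}. One has to be careful with the $r$-sum inside, where the bound on $\widehat\psi$ and $\|P\|\le1$ are used. Such a Hilbert--Schmidt type estimate is exactly what goes beyond regular observables in $d=1$.

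\textbf{Item (3).} The key one-dimensional input is that each $E_s$ is the restriction to the circle of a branch of an algebraic function. This is because $\det(\widehat U(\theta)-z)$ is a polynomial in $z$ and $\ee^{\pm2\pi\ii\theta}$. Consequently, $E_s(\theta+\varphi)-E_w(\theta)$ either vanishes identically or has at most $C$ zeros, with $C$ depending only on degrees and not on $\varphi$. A Bezout/resultant bound gives this uniformity. It follows that the counting function in \eqref{e:flo} is either bounded, in which case it is $o(N)$, or unbounded, which happens precisely when $E_s(\cdot+m/N)\equiv E_w$ for some $m\ne0$. This proves (3), and it is consistent with Lemma~\ref{lem:1D_subseq_2conditions}: an identity $E_s(\theta+\varphi)\equiv E_w(\theta)$ forces $\varphi$ to be rational, since otherwise $E_s$ would be constant by iteration and density.

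\textbf{Item (2).} Without flat bands, the set of $\varphi$ with $E_s(\cdot+\varphi)\equiv E_w$ is a finite set of rationals $p/q$. Choosing $N_n$ prime and larger than all such $q$ guarantees that $m/N_n$ never hits them, so \eqref{e:flo} holds along $N_n$ and (1) applies.

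\textbf{Item (5).} The direction ``flat band $\Rightarrow$ failure'' is Proposition~\ref{prp:flanoqe}. For the converse, take a regular observable $\phi$. Its Fourier coefficients $\widehat\phi(m)$ are summable and concentrated at bounded $|m|$ (case (i), via Sobolev decay) or small (case (ii)). We therefore only need the off-diagonal contribution for each fixed $m\ne0$ to vanish. For fixed $m$, the shift $m/N\to0$ is not one of the finitely many rational periods $p/q$ once $N$ is large, except when $m/N$ equals $p/q$ exactly. That exceptional case needs $m\ge N/q$, which is not bounded. Hence for bounded $m$ the zero count is at most $C$, giving an $O(1/N)$ contribution, and the tail in $m$ is handled by summability.

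\textbf{Main obstacle.} I expect the hard step to be the uniform finiteness claim in (3), namely that the number of zeros of $E_s(\theta+\varphi)-E_w(\theta)$ is bounded independently of $\varphi$. Eigenvalue branches can cross, so the $E_s$ are not individually analytic. I would handle this by working with the symmetric resultant $\mathrm{Res}_z\big(\det(\widehat U(\theta+\varphi)-z),\det(\widehat U(\theta)-z)\big)$, which is a trigonometric polynomial in $\theta$ of bounded degree. Its zeros dominate all coincidences, and it vanishes identically only in the repeating-graph case.
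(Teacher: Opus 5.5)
Your proposal leaves items (4), (6) and (7) unproved. Saying they are ``worked out later'' supplies no argument, and they are not corollaries of what you do prove.

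The second half of (4) is a genuine theorem (Theorem~\ref{thm:equidistribution}). When \eqref{e:flo} fails, one must work along $N_n=nM+k$ and keep the exceptional frequencies $m=\rho N_n/\gcd(M,k)$, for which $m/N_n$ is a repeating shift $\varphi$. Their contributions are Riemann sums converging to $\int_0^1 P_{E_s}(\theta+\varphi)P_{E_w}(\theta)(i_1,i_2)\ee^{2\pi\ii(k_1-k_2)\theta}\,\dd\theta$, and this is what produces the weights $c^{(k)}_{u,\psi}$ on the classes $B_u=M\Z+u$. Your treatment of (5) discards exactly these frequencies. That is legitimate there, since $\widehat\phi$ is tiny at them for regular $\phi$, but it means your argument cannot give (4).

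Likewise, (6) needs the case analysis of $\cS\cC$: nonzero entries with $\gcd(\alpha,\beta)=1$ implying \eqref{e:flo}, the diagonal coin, the antidiagonal coin split by $|\alpha-\beta|$, and the split-step walk. Item (7) needs explicit witnesses with their computations, e.g.\ $\operatorname{diag}(S_1,S_{-2})$, the antidiagonal coin with $|\alpha-\beta|=2$ and $\alpha,\beta$ odd, and $\operatorname{diag}(S_{-1},-S_{-1})$.

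Within the items you do treat there are two concrete holes, plus a false side remark.

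\emph{Item (1).} Set $A_m=\{(r,s,w):E_s(\frac{r+m}{N})=E_w(\frac rN)\}$. Cauchy--Schwarz in $m$ gives $\sum_{m\neq0}|\widehat a(m)|\,\#A_m\le\|\widehat a\|_2\,(\sup_{m\neq0}\#A_m)^{1/2}(\sum_m\#A_m)^{1/2}$. So your bound by $(\sup_{m\neq0}\#A_m/N)^{1/2}$ needs $\sum_m\#A_m=O(N)$; the trivial bound $\nu^2N^2$ yields nothing. The needed bound is true but must be stated. Since \eqref{e:flo} excludes flat bands, for every $c$ the trigonometric polynomial $\theta\mapsto\det(\widehat U(\theta)-c)$ is nonzero, with degree independent of $c$. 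Hence each value is attained at boundedly many $r'/N$. The same count is needed for the $\ell^1$ case of (5). There $\widehat\phi$ is uniformly small but not concentrated at bounded $m$, so your low-frequency/tail split does not apply.

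\emph{Item (3).} The full resultant $\mathrm{Res}_z\big(p(\theta+\varphi,z),p(\theta,z)\big)$ vanishes identically as soon as \emph{some} pair repeats at shift $\varphi$. It then says nothing about the pair $(s,w)$ being counted. For $\operatorname{diag}(S_1,S_{-2})$ and $\varphi=\frac12$, it vanishes because $E_2$ has period $\frac12$, although $E_1(\theta+\frac12)=E_2(\theta)$ has only three solutions. So you get the version of (3) summed over $s,w$, which is enough for your (2) and (5). You do not get the per-pair equivalence of (3) and Lemma~\ref{lem:1D_subseq_2conditions}. To get it:
\begin{enumerate}
\item factor $w^K\det(\widehat U-z)$ into irreducibles $q_a(w,z)\in\C[w,z]$;
\item use $\mathrm{Res}_z\big(q_a(\zeta w,z),q_b(w,z)\big)$;
\item when these two are proportional, bound crossings of distinct branches by the zeros of the discriminant of $q_b$.
\end{enumerate}
All degrees involved are uniform in $\zeta$.

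\emph{The rationality remark.} The claim that ``$E_s(\theta+\varphi)\equiv E_w(\theta)$ forces $\varphi\in\Q$'' is false for $s\neq w$. For example, $\operatorname{diag}(S_1,\ee^{-2\pi\ii\varphi}S_1)$ with $\varphi\notin\Q$ has $E_1(\cdot+\varphi)\equiv E_2$. This is harmless in (2), because $m/N_n$ is rational.

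With these repairs, your routes to (1), (2) and (5) are valid alternatives to the paper's:
\begin{itemize}
\item for (1), an $\ell^2$ estimate using the $o(N)$ in \eqref{e:flo} directly, where the paper uses the bounded count of Lemma~\ref{lem:1D_subseq_2conditions} plus $\sum_m|\widehat a(m)|\le N$;
\item for (2), primes instead of $N_n=nM+1$;
\item for (5), a direct frequency split, where the paper goes through Theorem~\ref{thm:equidistribution};
\item for the uniform zero bound, resultants in place of the paper's Hurwitz arguments.
\end{itemize}
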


In (1),(2), we can actually take observables $a^{(N)}$ with sup norm growing not too fast. We summarize most of these findings in Figure~\ref{fig:sum}. 

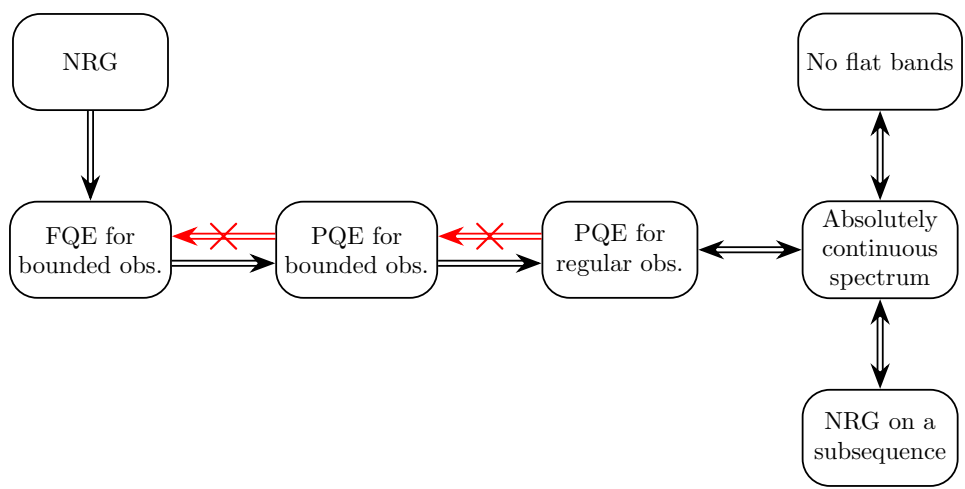
\begin{figure}[h!]
\scalebox{0.85}{ 
\begin{tikzpicture}[
    box/.style={
        draw,
        rounded corners=12pt,
        minimum width=2.4cm,
        minimum height=1.5cm,
        align=center,
        thick
    },
    impl/.style={double distance=1.5pt, -{Stealth[length=4mm]}, thick},
    implrev/.style={double distance=1.5pt, {Stealth[length=4mm]}-, thick},
    iff/.style={double distance=1.5pt, {Stealth[length=4mm]}-{Stealth[length=4mm]}, thick},
]

\node[box] (NRG) {NRG};
\node[box, below=1.4cm of NRG] (FQE) {FQE for\\ bounded obs.};

\node[box, right=1.6cm of FQE] (PQE1) {PQE for\\ bounded obs.};
\node[box, right=1.6cm of PQE1] (PQE2) {PQE for\\ regular obs.};

\node[box, right=1.6cm of PQE2] (ACS) {Absolutely\\ continuous\\ spectrum};
\node[box, above=1.4cm of ACS] (NoFlat) {No flat bands};
\node[box, below=1.4cm of ACS] (NRGsub) {NRG on a\\ subsequence};

\draw[impl] ([xshift=-9pt]NRG) -- ([xshift=-9pt]FQE);
\draw[implrev,red]
    ([xshift=9pt]NRG.south) -- ([xshift=9pt]FQE.north)
    node[midway]{\textcolor{red}{\Huge $\times$}};

\draw[impl] ([yshift=-6pt]FQE.east) -- ([yshift=-6pt]PQE1.west);
\draw[implrev,red]
    ([yshift=6pt]FQE.east) -- ([yshift=6pt]PQE1.west)
    node[midway]{\textcolor{red}{\Huge $\times$}};

\draw[impl] ([yshift=-6pt]PQE1.east) -- ([yshift=-6pt]PQE2.west);
\draw[implrev,red]
    ([yshift=6pt]PQE1.east) -- ([yshift=6pt]PQE2.west)
    node[midway]{\textcolor{red}{\Huge $\times$}};

\draw[iff] (PQE2) -- (ACS);
\draw[iff] (ACS) -- (NoFlat);
\draw[iff] (ACS) -- (NRGsub);
\end{tikzpicture}
}
\caption{Summary of results for walks on $\Z$.}\label{fig:sum}
\end{figure}

The equivalence between absolutely continuous spectrum and ergodicity of quantum dynamics in position space for regular observables is significant. To our knowledge, it appears for the first time in the literature on large-graph quantum chaos. This is obtained after a detailed analysis of the semiclassical measures in that context.

The figure also suggests that \eqref{e:flo} is markedly stronger than absolutely continuous spectrum in general, so the fact that the latter always guarantees a subsequence satisfying \eqref{e:flo} seems quite strong.

On the other hand, the equivalence between absolutely continuous spectrum and the absence of flat bands (eigenvalues) is classical, as singularly continuous spectrum cannot exist in periodic settings of finite range, see e.g. \cite{Tate2019}. The fact that \eqref{e:flo} on a subsequence implies absolutely continuous spectrum is also immediate; as we noted previously, the presence of a flat band would lead to $E_s(\frac{\br+\bm}{N_n})=E_s(\frac{\br}{N_n})$ $\forall\,\br,\bm$.

To better appreciate the relevance of assessing the  ergodicity of walks using  bounded observables, recall the simple example \S\,\ref{sec:simpexa}. There, if $\phi$ is regular, then $\sum_{r\in\LL_N} \phi(r)\mu_{\psi}^N(r)$ and $\frac{1}{N}\sum_{r\in\LL_N}\phi(r)$ share the same limit (either a Riemann integral or zero). In other words, such $\phi$ cannot distinguish between $\mu_{\psi}^N$, which is uniformly spread on the even numbers of $\LL_{2n}$, and $\mu^N$, which is uniform on all $\LL_{2n}$. In some sense, such observables are not fine enough to detect that the ergodicity is imperfect. In contrast, if we take $\phi$ to be the indicator function of even integers in $\LL_{2n}$, then these two sums approach $1$ and $\frac{1}{2}$, respectively, showing that bounded observables can detect this subtlety.

In general, going from regular to bounded observables improves the comparison from weak convergence to convergence in total variation:

\begin{prp}[Convergence in total variation]\label{prp:tvd}
For walks \eqref{e:ugen}-\eqref{eqn:homogen_unitary} on $\Z^d$, \eqref{e:mainlim} holds for all observables $\|\phi^{(N)}\|_\infty \le 1$ if and only if $\delta(\mu_\psi^N,\mu^N)\to 0$, where $\delta$ is the total variation distance and $\mu^N(\br)=\frac{1}{N^d}$ is the uniform measure on $\LL_N^d$.

Similarly, \eqref{eqn:ergodic_general} holds for all observables $a=a^{(N)}$ with $\|a^{(N)}\|_\infty\le 1$ if and only if $\delta(\mu_{\psi,j}^N,c_{N,\psi,j}\mu^N)\to 0$, where $c_{N,\psi,j}\mu^N$ is a constant multiple of $\mu^N$.
\end{prp}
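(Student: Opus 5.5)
The plan is to reduce both statements to the standard duality formula for total variation on a finite set. For probability measures $\mu,\nu$ on $\LL_N^d$ we have
\[
2\delta(\mu,\nu)=\sum_{\br}|\mu(\br)-\nu(\br)| = \sup_{\|\phi\|_\infty\le 1}\Big|\sum_{\br}\phi(\br)(\mu(\br)-\nu(\br))\Big|,
\]
and the same holds for finite signed measures if we drop the normalization. The supremum is attained at $\phi=\overline{\mathrm{sgn}}(\mu-\nu)$ (or at the real sign, if only real observables are used). The first step is to note that the inner limit in \eqref{e:mainlim} exists and equals $\sum_\br\phi(\br)\mu_\psi^N(\br)$ by \eqref{e:limtonly}; this is a finite sum, so the limit commutes with it. Hence \eqref{e:mainlim} for $\phi$ reads $\big|\sum_\br \phi(\br)(\mu_\psi^N(\br)-\mu^N(\br))\big|\to0$.

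For ``$\Leftarrow$'', bound this quantity by $\|\phi\|_\infty\cdot 2\delta(\mu^N_\psi,\mu^N)\le 2\delta(\mu^N_\psi,\mu^N)\to0$. For ``$\Rightarrow$'', choose for each $N$ the extremal observable $\phi^{(N)}(\br)=\mathrm{sgn}(\mu_\psi^N(\br)-\mu^N(\br))$. It satisfies $\|\phi^{(N)}\|_\infty\le1$, and \eqref{e:mainlim} applied to this sequence gives $2\delta(\mu^N_\psi,\mu^N)\to0$. The only point needing care is that the hypothesis is understood for arbitrary $N$-dependent sequences of bounded observables, which is how ``all observables $\|\phi^{(N)}\|_\infty\le 1$'' is meant. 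An alternative is to argue by contradiction: if $\delta\not\to0$ along a subsequence, define $\phi^{(N)}$ as the sign function along that subsequence and $0$ elsewhere.

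The spin version follows the same pattern, one coordinate at a time. First take $a$ supported on a single spin coordinate $j$, i.e. $a=(0,\dots,a_j,\dots,0)$. Then $\lim_T\frac1T\sum_k\langle U_N^k\psi,aU_N^k\psi\rangle=\sum_\br a_j(\br)\mu^N_{\psi,j}(\br)$, again by the finite-dimensional limit, up to the conjugation convention in the inner product. Also $\langle a\rangle_\psi=\langle a_j\rangle\, C_{N,j}$, where $C_{N,j}:=\sum_\br\sum_s|[P_{E_s}(\br/N)\widehat\psi(\br)]_j|^2$. So we set $c_{N,\psi,j}:=C_{N,j}$, and then $\langle a\rangle_\psi=\sum_\br a_j(\br)c_{N,\psi,j}\mu^N(\br)$. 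With this identification, \eqref{eqn:ergodic_general} for $a$ supported in coordinate $j$ is exactly the signed total variation statement for $\mu^N_{\psi,j}-c_{N,\psi,j}\mu^N$, and the duality argument above applies verbatim. For a general bounded $a$, linearity in $a$ splits the expression into $\nu$ coordinate pieces. Each piece is bounded by $2\delta(\mu^N_{\psi,j},c_{N,\psi,j}\mu^N)$, so the sum is at most $\nu$ times the maximum over $j$, which tends to $0$ since $\nu$ is fixed.

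The main obstacle is bookkeeping rather than analysis. One must check that the off-diagonal spin terms do not appear: since $a$ acts as a multiplication operator diagonal in spin, $\langle U^k\psi,aU^k\psi\rangle=\sum_j\sum_\br a_j(\br)|[U^k\psi]_j(\br)|^2$. One must also verify that $C_{N,j}$ is a finite, $N$-dependent constant. It is bounded by $\|\widehat\psi\|^2=\|\psi\|^2$, and is therefore a legitimate multiple of $\mu^N$. Finally, since $c_{N,\psi,j}\mu^N$ need not be a probability measure, total variation is understood as half the $\ell^1$ norm of the difference.
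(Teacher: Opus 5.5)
Your proposal is correct and is essentially the paper's proof. The reverse direction is the bound $\bigl|\sum_{\br}\phi^{(N)}(\br)(\mu_\psi^N(\br)-N^{-d})\bigr|\le 2\delta(\mu_\psi^N,\mu^N)$, and the forward direction tests the hypothesis against the $N$-dependent observable $\phi^{(N)}=\mathrm{sgn}(\mu_\psi^N-\mu^N)$. For the spin statement, which the paper only asserts by analogy, your coordinatewise reduction with $c_{N,\psi,j}=\sum_{\br}\sum_{s}|[P_{E_s}(\br/N)\widehat{\psi}(\br)]_j|^2$ (the constant of \S\,\ref{sec:weak}), together with the bound $2\nu\max_j\delta(\mu_{\psi,j}^N,c_{N,\psi,j}\mu^N)$ for general $a$, is exactly the intended filling-in.
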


One may naturally ask if results like Theorem~\ref{thm:main1d} and Figure~\ref{fig:sum} hold in higher dimensions. Quite interestingly, we show that certain implications fail:

\begin{prp}\label{prp:failhi}
There exists a quantum walk \eqref{e:ugen}-\eqref{eqn:homogen_unitary} over $\Z^2$ which has purely absolutely continuous spectrum, but has no subsequence satisfying \eqref{e:flo}.

The same walk also violates \eqref{e:mainlim} for regular observables.
\end{prp}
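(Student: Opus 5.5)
The plan is to use the simplest scalar walk ($\nu=1$) on $\Z^2$ whose single Floquet eigenvalue depends on only one coordinate of the quasimomentum. Concretely, take
\[
U = S_{(1,0)}\quad\text{on }\ell^2(\Z^2),\qquad \widehat{U}(\btheta)=E(\btheta)=\ee^{-2\pi\ii\theta_1}.
\]
This is a homogeneous finite-range walk of the form \eqref{e:ugen}--\eqref{eqn:homogen_unitary}, with $F=\{(1,0)\}$.

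\emph{Step 1: absolutely continuous spectrum.} By Lemma~\ref{lem:foufin}, $U$ is unitarily equivalent to multiplication by $\ee^{2\pi\ii\theta_1}$ on $L^2(\T^2)$. The function $\theta_1\mapsto \ee^{2\pi\ii\theta_1}$ is nonconstant, so $U$ has no flat band. By the classical fact recalled after Figure~\ref{fig:sum} (no singular continuous spectrum in the periodic finite-range setting, \cite{Tate2019}), the spectrum is purely absolutely continuous. One can also see this directly: every level set $\{\ee^{2\pi\ii\theta_1}=\lambda\}$ is a line in $\T^2$, hence Lebesgue-null, so the spectral measures have no atoms. The pushforward of Lebesgue measure under $\btheta\mapsto\ee^{2\pi\ii\theta_1}$ is arc-length measure on the circle, which is absolutely continuous.

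\emph{Step 2: \eqref{e:flo} fails on every subsequence.} Fix any $N$ and take $\bm=(0,1)\neq\mathbf 0$. Then
\[
E\Big(\tfrac{\br+\bm}{N}\Big)=\ee^{-2\pi\ii r_1/N}=E\Big(\tfrac{\br}{N}\Big)\qquad\text{for all }\br\in\LL_N^2 .
\]
Hence the supremum in \eqref{e:flo} equals $1$ for every $N$, and no subsequence $N_n$ can make it tend to $0$. This is the higher-dimensional phenomenon absent in $d=1$: a nonconstant eigenvalue can still be exactly invariant under a nonzero rational shift in a direction it does not depend on.

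\emph{Step 3: failure of \eqref{e:mainlim} for a regular observable.} Take $\psi=\delta_{\mathbf 0}$. With periodic boundary conditions, $U_N^k\psi=\delta_{(k \bmod N,\,0)}$. For fixed $N$, the Cesàro average over $k<T$ therefore gives
\[
\mu_\psi^N=\lim_{T\to\infty}\mu_{T,\psi}^N=\frac1N\sum_{r_1=0}^{N-1}\delta_{(r_1,0)},
\]
i.e.\ the uniform measure on the bottom row. The error at finite $T$ is $O(N/T)$, which vanishes as $T\to\infty$.

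Now choose a smooth function $g$ on $\T$ with $g(0)=1$ and $\int_\T g=0$, for instance $g(x)=\cos 2\pi x$. Set $f(x_1,x_2)=g(x_2)\in C^\infty(\T^2)\subset H^s(\T^2)$ and $\phi^{(N)}(\bk)=f(\bk/N)$; this is regular in the sense of Definition~\ref{def:rego}(i). Then:
\begin{itemize}
\item $\lim_T\langle\phi\rangle_{T,\psi}=\frac1N\sum_{r_1}g(0)=1$;
\item $\langle\phi\rangle=\frac1N\sum_{r_2}g(r_2/N)\to\int_\T g=0$, since it is a Riemann sum.
\end{itemize}
So the difference in \eqref{e:mainlim} tends to $1$, not $0$, and the walk violates \eqref{e:mainlim} for regular observables.

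There is no real obstacle in this argument. The only point requiring care is the justification of pure absolute continuity in Step 1, which is settled either by the cited no-flat-band criterion or by the direct pushforward computation.
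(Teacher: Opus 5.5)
Your proof is correct and uses the same mechanism as the paper. A Floquet eigenvalue that does not depend on $\theta_2$ makes \eqref{e:flo} fail for $\bm=(0,1)$ at every $N$, and a state confined to a horizontal line is detected by a regular observable $g(k_2/N)$ with $g(0)\neq\int g$. The only difference is the example: the paper takes the direct sum of two Hadamard walks on $\Z^2$ (a genuine coined walk with $\nu=4$, using $\psi=\delta_{\mathbf 0}\otimes\delta_1$ and $b(y)=\frac12(1+\cos 2\pi y)$), whereas your scalar shift $S_{(1,0)}$ with $\nu=1$ is more elementary and makes both the absolute continuity and the dynamics completely explicit.
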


Finally, our results in one dimension also have interesting applications for continuous-time quantum walks and eigenvectors ergodicity of 1d periodic Schr\"odinger operators. We collect these in \S\,\ref{sec:schro} to avoid overcrowding the present section. 

\subsection{Open questions}
In view of Figure~\ref{fig:sum} and Proposition~\ref{prp:failhi}, a natural question is whether \eqref{e:flo} implies \eqref{eqn:ergodic_general} for all bounded observables $\|a^{(N)}\|_\infty\le 1$ in dimension $d>1$, not only regular observables. We have no conjecture about this.

Another question concerns expressing \eqref{e:flo} in terms of the Lebesgue measure of the zero set of $E_s(\btheta+\balpha)-E_w(\btheta)$, without referring to a sequence of approximating graphs $\LL_N^d$. We almost succeed in establishing this characterization in \S\,\ref{rem:higher_dim_equiv} using arguments from \cite{Wen1}, except for a subtlety concerning the rationality of $\balpha$. Settling this would be desirable.

As stated in Theorem~\ref{thm:main1d}, ergodicity by itself does not imply the spectral property of no repeating graphs. The counterexample we give consists of a very trivial walk; it might be that \eqref{e:flo} is a characterization of full ergodicity in more specific walks. Towards this question, we show in Theorem~\ref{thm:fqenrg} that full ergodicity essentially implies, either \eqref{e:flo}, which is a statement on the \emph{difference} between  eigenvalues at different quasimomenta, or the vanishing of the corresponding \emph{products} of Floquet eigenprojectors. 

There is also interest in exploring ergodicity beyond the present homogeneous setting. In particular, walks on $\Z$ with non-constant periodic coins, or with periodic electric fields \cite{AbSt,ACSW,CRWAGW} are physically interesting. We expect a general criterion like Theorem~\ref{thm:pergra} to hold in such periodic frameworks, however in Theorem~\ref{thm:cri0}, one does not expect $\mu_{T,\psi}^N$ 
 to be perfectly uniform anymore, as its profile will be affected by the field or the structure of the graph, if one goes beyond $\Z^d$ and studies general crystals. 

\subsection{Organization of the paper} We start by proving our general criteria in Section~\ref{sec:mainproof}, namely the implications of \eqref{e:flo} on ergodicity in general dimension, the implications on weak convergence and convergence in total variation distance, and the fact that flat bands preclude \eqref{e:mainlim}. In the following Section~\ref{sec:proof1d}, we specialize to dimension one and prove most of the results collected as Theorem~\ref{thm:main1d}. This is, in our view, the most important contribution of the paper. Section~\ref{sec:2state} specializes further to coined and split-step walks, as they are quite popular models in the literature, and we bring some generalizations to these models, where we play with the step sizes of the walks and notice interesting phenomena. In Section~\ref{sec:higher_dim_QW} we explore various models of quantum walks in higher dimension, some ergodic, some not. In the case of the Fourier coin, we prove ergodicity by combining our criterion with the irreducibility theory of Bloch varieties. Section~\ref{sec:conclu} concludes with a discussion on how far \eqref{e:flo} characterizes ergodicity. Finally, in Appendix~\ref{app} we review notations used in the literature for the reader's convenience, and some background on models studied here, while in Appendix~\ref{app:rage}, we adapt the RAGE theorem to our quantum walks. We include this here for completeness, as a different more basic criterion linking the spectrum and dynamics, which is significantly easier to prove.

\subsection*{Acknowledgments} 
We warmly thank Houssam Abdul-Rahman, Wencai Liu and Armin Rainer for interesting discussions and clarifications on their respective works.

\section{Proof of the general criterion}\label{sec:mainproof}

In this section, after some basic Fourier analysis in \S\,\ref{sec:basic}, we prove Theorem~\ref{thm:pergra} in 4 steps in \S\,\ref{sec:step1}--\S\,\ref{subsec:step4}, and show that it implies Theorem~\ref{thm:cri0} as a special case in \S\,\ref{sec:fp}. We then prove Propositions~\ref{prp:wea}, \ref{prp:weaj}, \ref{prp:tvd} and \ref{prp:flanoqe} in \S\,\ref{sec:weak}--\ref{sec:compact}.

\subsection{Basic Fourier Analysis.}\label{sec:basic} 
Consider a cube $\LL_N^d\subset \Z^d$ and restrict $U$ with periodic boundary conditions: $S_{\bp} \delta_{\bk} = \delta_{\bk+\bp}$ with $k_i+p_i$ $\mathrm{mod}\ N$. The restricted operator is denoted by $U_N$. We have
\begin{equation}\label{e:mainscal}
\frac{1}{T}\sum_{k=0}^{T-1}\langle U_N^k\psi, aU_N^k\psi\rangle = \langle \psi,A_T\psi\rangle = \sum_{i=1}^\nu \sum_{\bv\in \LL_N^d} \overline{\psi_i(\bv)}(A_T\psi)_i(\bv)
\end{equation}
for $A_T = \frac{1}{T}\sum_{k=0}^{T-1} U_N^{\ast\,k} aU_N^k$. Throughout the section, we focus on finding the asymptotics of $(A_T\psi)_i(\bv)$ by moving to Fourier space. The starting point is to expand the function $(A_T\psi)_i\in\ell^2(\LL_N^d)$ as a Fourier series
\begin{equation}\label{e:fseries}
(A_T\psi)_i(\bv) = \sum_{\br\in\LL_N^d} (\widehat{A_T\psi})_i(\br)e_{\br}^{(N)}(\bv)
\end{equation}
with $e_{\br}^{(N)}(\bv) := \frac{1}{N^{d/2}}\ee^{2\pi\ii \br\cdot\bv/N}$, the basis of $\ell^2(\LL_N^d)$ with Fourier coefficients $\widehat{\varphi}(\br) = \frac{1}{N^{d/2}}\sum_{\bk\in \LL_N^d} \ee^{-2\pi\ii \bk\cdot \br/N} \varphi(\bk)$.
We define
\begin{align*}
\mathscr{F}\otimes \mathrm{id}:\,&\ell^2(\LL_N^d)^{\nu}\to \ell^2(\LL_N^d)^{\nu}\\
& \begin{pmatrix} \psi_1\\\vdots\\\psi_{\nu}\end{pmatrix} \mapsto \begin{pmatrix} \widehat{\psi_1}\\\vdots\\\widehat{\psi_{\nu}}\end{pmatrix}\,,
\end{align*}

It follows from direct computation that for $\varphi,\psi \in \ell^2(\LL_N^d)^{\nu}$
\begin{equation}\label{e:convpro}
	\widehat{\varphi \cdot \psi}=\frac{1}{N^{d/2}}\,\widehat{\varphi} * \widehat{\psi},
\end{equation}
where $*$ denotes convolution of functions, $\widehat{\varphi} * \widehat{\psi}(\br)= \sum_{\bm
\in \LL_N^d}\widehat{\varphi}(\bm)\widehat{\psi}(\br-\bm)$.

\begin{lem}\label{lem:foufin}
The operator $U_N$ on $\ell^2(\LL_N^d)^{\nu}$ is unitarily equivalent to 
\[
\mathop\oplus_{\br\in \LL_N^d} \widehat{U}\Big(\frac{-\br}{N}\Big)
\]
on $\ell^2(\LL_N^d)^{\nu}$, where $\widehat{U}(\btheta)$ is the matrix \eqref{e:uhatheta}.
\end{lem}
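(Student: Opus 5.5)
The plan is to conjugate $U_N$ by the discrete Fourier transform $\mathscr{F}\otimes\mathrm{id}$ and check directly that the result acts fibrewise, i.e.\ as multiplication by a matrix on each frequency $\br\in\LL_N^d$. Since $U_N=\sum_{\bp\in F}\big(U_{i,j}(\bp)S_{\bp}\big)_{i,j}$ with $S_{\bp}$ now the periodic shift on $\LL_N^d$, linearity reduces everything to computing the conjugate of a single periodic shift. The unitary implementing the equivalence will be $\mathscr{F}\otimes\mathrm{id}$, followed by regrouping the $\nu$ coordinates at each frequency $\br$ into one vector of $\C^\nu$. This identifies $\ell^2(\LL_N^d)^\nu\cong\bigoplus_{\br}\C^\nu$.

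\textbf{Step 1: the Fourier transform of a shift.} For $\varphi\in\ell^2(\LL_N^d)$ we have $(S_{\bp}\varphi)(\bk)=\varphi(\bk-\bp)$, with indices taken mod $N$. Substituting $\bk'=\bk-\bp$, which is a bijection of $\LL_N^d$ because of the periodic boundary conditions, gives
\[
\widehat{S_{\bp}\varphi}(\br)=\frac{1}{N^{d/2}}\sum_{\bk}\ee^{-2\pi\ii\bk\cdot\br/N}\varphi(\bk-\bp)=\ee^{-2\pi\ii\bp\cdot\br/N}\,\widehat{\varphi}(\br).
\]
So $\mathscr{F}S_{\bp}\mathscr{F}^{-1}$ is multiplication by $\ee^{-2\pi\ii\bp\cdot\br/N}$.

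\textbf{Step 2: assemble the matrix.} Applying Step 1 entrywise gives
\[
\big(\widehat{U_N\psi}\big)_i(\br)=\sum_{j}\sum_{\bp\in F}U_{i,j}(\bp)\,\ee^{-2\pi\ii\bp\cdot\br/N}\,\widehat{\psi_j}(\br)=\sum_j \widehat{U}_{i,j}\Big(\frac{\br}{N}\Big)\widehat{\psi_j}(\br),
\]
using the definition \eqref{e:uhatheta}. Hence $(\mathscr{F}\otimes\mathrm{id})\,U_N\,(\mathscr{F}\otimes\mathrm{id})^{-1}=\bigoplus_{\br}\widehat{U}(\br/N)$.

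\textbf{Step 3: match the sign convention.} The statement has $\widehat{U}(-\br/N)$, so the sign must be fixed. There are two options. One is to use $\mathscr{F}^{-1}$ (equivalently $\mathscr{F}$ composed with the reflection $\br\mapsto-\br$ mod $N$) in place of $\mathscr{F}$; this turns the phase into $\ee^{+2\pi\ii\bp\cdot\br/N}$ and hence gives $\widehat{U}(-\br/N)$. The other is to note that the reflection $\br\mapsto -\br$ is a permutation of $\LL_N^d$, so the two direct sums $\bigoplus_{\br}\widehat{U}(\br/N)$ and $\bigoplus_{\br}\widehat{U}(-\br/N)$ are unitarily equivalent through that permutation. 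Either way we obtain the stated form.

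Finally, unitarity of the DFT matrix $e^{(N)}_{\br}$ makes $\mathscr{F}\otimes\mathrm{id}$ unitary, which completes the argument. There is no real obstacle here. The only points needing care are that the change of variables in Step 1 uses periodicity (which is why the boundary conditions matter), and the sign bookkeeping in Step 3.
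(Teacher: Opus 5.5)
Your proof is correct and is essentially the paper's argument: conjugate by the discrete Fourier transform and use that a periodic shift becomes multiplication by a phase. The paper conjugates in the order $(\mathscr{F}^{-1}\otimes\mathrm{id})\,U_N\,(\mathscr{F}\otimes\mathrm{id})$, via the identity $S_{\bp}\widehat{\varphi}_j=\widehat{\epsilon_{\bp}\varphi_j}$ with $\epsilon_{\bp}(\bk)=\ee^{2\pi\ii\bp\cdot\bk/N}$; this is your first option in Step 3, and it gives $\widehat{U}(-\br/N)$ directly without a separate sign adjustment.
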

\begin{proof}
We observe that for $\varphi\in\ell^2(\LL_N^d)^\nu$,
\[
U_N(\mathscr{F}\otimes \mathrm{id})\varphi =\begin{pmatrix} \sum_{j=1}^{\nu} \sum_{\bp} U_{1,j}(\bp)S_{\bp} \widehat{\varphi}_j\\ \vdots \\ \sum_{j=1}^{\nu} \sum_{\bp} U_{\nu,j}(\bp)S_{\bp} \widehat{\varphi}_j\end{pmatrix}
\]
and due to periodic boundary conditions,
\begin{equation} \label{e:sphat}
S_{\bp} \widehat{\varphi}_j = S_{\bp} \sum_{\bk\in \LL_N^d} \widehat{\varphi}_j(\bk) \delta_{\bk} = \sum_{\bk\in \LL_N^d} \widehat{\varphi}_j(\bk) \delta_{\bk+\bp} = \sum_{\bk\in \LL_N^d} \widehat{\varphi}_j(\bk-\bp)\delta_{\bk} \,.
\end{equation}

Let $\epsilon_{\bp}(\bk)=\ee^{2\pi\ii \bp\cdot \bk/N}=N^{d/2}e_{\bp}^{(N)}(\bk)$. Since 
\[
\widehat{\varphi}(\br-\bp)=\frac{1}{N^{d/2}} \sum_{\bk\in\LL_N^d} \ee^{-2\pi\ii(\br-\bp)\cdot \bk/N}\varphi(\bk) = (\widehat{\epsilon_{\bp}\varphi})(\br) \,,
\]
we get $S_{\bp} \widehat{\varphi}_j = \widehat{\epsilon_{\bp}\varphi_j}$, and thus
\[
[(\mathscr{F}^{-1}\otimes \mathrm{id})U_N(\mathscr{F}\otimes \mathrm{id})\varphi](\br) = \begin{pmatrix} \sum_j\sum_{\bp} U_{1,j}(\bp)\ee^{2\pi\ii \br\cdot \bp/N} \varphi_j(\br)\\ \vdots\\ \sum_j\sum_{\bp} U_{\nu,j}(\bp)\ee^{2\pi\ii \br\cdot \bp/N}\varphi_j(\br)\end{pmatrix} = \widehat{U}\Big(\frac{-\br}{N}\Big)\varphi(\br) \,. \qedhere
\]
\end{proof}

The same proof shows that the full operator $U$ on $\ell^2(\Z^d)^\nu$ is unitarily equivalent to multiplication by the matrix function $\widehat{U}(-\btheta)$ on $L^2(\T^d)^\nu$. Here we use the Fourier transform $\mathscr{F}\otimes \mathrm{id} :L^2(\T^d)\otimes \C^\nu \to \ell^2(\Z^d)\otimes \C^\nu$, taking $f\otimes \delta_i \mapsto \widehat{f} \otimes \delta_i$, $\widehat{f}(\bk) = \int_{\T^d} \ee^{-2\pi\ii\btheta\cdot \bk}f(\btheta)\dd\btheta$, with inverse $(a_{\bk})\otimes \delta_i \mapsto a\otimes \delta_i$, $a(\btheta) = \sum_{\bk} a_{\bk} \ee^{2\pi\ii \bk\cdot \btheta}$.

For example, for the Hadamard and Grover walks in \eqref{e:uhagro}, this gives
\[
\widehat{U}_{\mathrm{Had}}(\theta) = \frac{1}{\sqrt{2}}\begin{pmatrix} \ee^{2\pi \ii\theta}&\ee^{2\pi \ii\theta}\\ \ee^{-2\pi \ii\theta}&-\ee^{-2\pi \ii\theta}\end{pmatrix} \qquad \text{and} \qquad \widehat{U}_{\mathrm{Gro}}(\theta) = \frac{1}{3} \begin{pmatrix} -\ee^{2\pi\ii\theta} &2\ee^{2\pi\ii\theta}& 2\ee^{2\pi\ii\theta}\\ 2&-1&2\\ 2\ee^{-2\pi\ii\theta}& 2\ee^{-2\pi\ii\theta}& -\ee^{-2\pi\ii\theta}\end{pmatrix}.
\]
	
\subsection{Step 1. From the Heisenberg picture to a phase space operator.}\label{sec:step1} 
Back to \eqref{e:mainscal}, given an observable $a\in \ell^2(\LL_N^d)^{\nu}$, we need to find for $k\in\N$,
\[
(U^{\ast k}_N a U_N^k\psi)(\br) \,.
\]

Recall \eqref{e:ugen}. We have
\begin{align*}
		(U^k)_{i,j}&=\sum_{j_1,j_2,\ldots,j_{k-1}}U_{i,j_1}U_{j_1,j_2}\cdots U_{j_{k-1},j} \\
		&=\sum_{j_1,j_2,\ldots,j_{k-1}}\sum_{\bp_1,\bp_2,\ldots,\bp_k}U_{i,j_1}(\bp_1)U_{j_1,j_2}(
        \bp_2)\cdots U_{j_{k-1},j}(\bp_k)S_{\bp_1+\bp_2+\cdots+\bp_k}.
	\end{align*}
	
	In \eqref{e:fseries} we are interested in Fourier coefficients. Similarly to \eqref{e:sphat},
\[
\widehat{S_{\bp} g}(\br)=\sum_{\bk} g(\bk)\widehat{S_{\bp}\delta_{\bk}}(\br)=\sum_{\bk} g(\bk)\widehat{\delta_{\bk+\bp}}(\br)=\sum_{\bk}g(\bk)\frac{\ee^{\frac{-2\pi\ii (\bk+\bp)\cdot \br}{N}}}{N^{d/2}}= \ee^{-\frac{2\pi\ii \bp\cdot \br}{N}}\widehat{g}(\br)
\]
	
Therefore,
\begin{align*}
(\widehat{U_N^k\psi})_i(\br) &= \sum_{j=1}^{\nu} \widehat{U_{i,j}^k\psi_j}(\br) \\
&= \sum_{j=1}^{\nu}\sum_{j_1,\dots,j_{k-1}}\sum_{\bp_1,\dots,\bp_k} U_{i,j_1}(\bp_1)U_{j_1,j_2}(\bp_2)\cdots U_{j_{k-1},j}(\bp_k)\ee^{-2\pi\ii \br\cdot (\bp_1+\dots+\bp_k)/N} \widehat{\psi}_j(\br) \\
&=\sum_{j=1}^{\nu}\sum_{j_1,\dots,j_{k-1}} \widehat{U}\Big(\frac{\br}{N}\Big)_{i,j_1}\widehat{U}\Big(\frac{\br}{N}\Big)_{j_1,j_2}\cdots \widehat{U}\Big(\frac{\br}{N}\Big)_{j_{k-1},j} \widehat{\psi}_j(\br)= \Big[\widehat{U}\Big(\frac{\br}{N}\Big)^k\widehat{\psi}(\br)\Big]_i \,.
\end{align*}

On the other hand, by \eqref{e:convpro},
\begin{align*}
\widehat{(aU_N^k\psi)}_i(\br)&=\frac{1}{N^{d/2}}[\widehat{a}*\widehat{U_N^k\psi}]_i(\br)=	\frac{1}{N^{d/2}}\sum_{\bm \in \LL_N^d }\widehat{a}_i(\bm)(\widehat{U_N^k\psi})_i(\br-\bm)\\
&= \frac{1}{N^{d/2}}\sum_{\bm \in \LL_N^d }\widehat{a}_i(\bm)\Big[\widehat{U}\Big(\frac{\br-\bm}{N}\Big)^k\widehat{\psi}(\br-\bm)\Big]_i
\end{align*}
where $\br-\bm$ is taken $\mathrm{mod}\ N$.

Now $(U^\ast)_{i,j}=\sum_{\bp} \overline{U_{j,i}(\bp)}S_{-\bp}$ and $(\widehat{U}(\btheta)^\ast)_{i,j} =\sum_{\bp} \overline{U_{j,i}(\bp)}\ee^{2\pi\ii\btheta\cdot \bp}$. So we similarly get $(\widehat{U^{\ast k}\psi})_i(\br)=\sum_j\sum_{j_1,\dots,j_{k-1}}\sum_{\bp_1,\dots,\bp_k} \overline{U_{j_1,i}(\bp_1)\dots U_{j,j_{k-1}}(\bp_k)}\ee^{2\pi\ii \br \cdot (\bp_1+\dots+\bp_k)/N}\widehat{\psi}_j(\br)=[\widehat{U}(\frac{\br}{N})^{\ast k}\widehat{\psi}(\br)]_i$. We may thus expand as in \eqref{e:fseries} to get
\begin{align*}
[U_N^{*k}aU_N^k\psi]_i(\bv) &=\sum_{\br\in\LL_N^d} \widehat{[U_N^{\ast k}aU_N^k\psi]}_i(\br)e_{\br}^{(N)}(\bv)=\sum_{\br\in\LL_N^d}\Big[\widehat{U}\Big(\frac{\br}{N}\Big)^{\ast k}(\widehat{aU_N^k\psi})(\br)\Big]_ie_{\br}^{(N)}(\bv)\\
&= \sum_{\br\in\LL_N^d} \sum_{j=1}^{\nu} \widehat{U}\Big(\frac{\br}{N}\Big)^{\ast k}_{i,j}\widehat{(aU_N^k\psi)_j}(\br)e_{\br}^{(N)}(\bv) \\
&= \frac{1}{N^{d/2}}\sum_{\br\in\LL_N^d} \sum_{j=1}^{\nu} \widehat{U}\Big(\frac{\br}{N}\Big)^{\ast k}_{i,j}\sum_{\bm \in \LL_N^d }\widehat{a}_j(\bm)\Big[\widehat{U}\Big(\frac{\br-\bm}{N}\Big)^k\widehat{\psi}(\br-\bm)\Big]_je_{\br}^{(N)}(\bv) \\
&= \frac{1}{N^{d/2}}\sum_{\br,\bm\in\LL_N^d} \sum_{j,\ell=1}^{\nu} \widehat{U}\Big(\frac{\br}{N}\Big)^{\ast k}_{i,j}\widehat{a}_{j}(\bm)\widehat{U}\Big(\frac{\br-\bm}{N}\Big)_{j,\ell}^k\widehat{\psi}_{\ell}(\br-\bm) e_{\br}^{(N)}(\bv)\,.
\end{align*}

Let $E_1(\btheta), \dots, E_{\nu'}(\btheta)$ denote the distinct eigenvalues of $\widehat{U}(\btheta)$, with respective eigenprojections $P_{E_1}(\btheta),\dots P_{E_{\nu'}}(\btheta)$. Note that $\nu'$ may depend on $\theta$, but this will not affect the following calculations. We denote $\nu_1'=\nu'(\frac{\br+\bm}{N})$ and $\nu_2'=\nu'(\frac{\br}{N})$.

Since $\widehat{U}^k(\btheta)=\sum_{s=1}^{\nu^\prime}E_s(\btheta)^k P_s(\btheta)$ and $\widehat{U}(\btheta)^{\ast k}=\sum_{s=1}^{\nu^\prime}\overline{E_s(\btheta)}^k P_s(\btheta)$, 
we get that
\begin{multline*}
	[U_N^{\ast k}aU_N^k\psi]_i(\bv)
	=\frac{1}{N^{d/2}} \sum_{\br,\bm\in\LL_N^d} \sum_{j,\ell=1}^{\nu} \sum_{s,t}\left(\overline{E_s\Big(\frac{\br}{N}\Big)}E_t\Big(\frac{\br-\bm}{N}\Big)\right)^k P_{E_s}\Big(\frac{\br}{N}\Big)(i,j)\\
	\cdot \widehat{a}_{j}(\bm) P_{E_t}\Big(\frac{\br-\bm}{N}\Big)(j,\ell)\widehat{\psi}_{\ell}(\br-\bm) e_{\br}^{(N)}(\bv)\\
	=\sum_{\br,\bm\in\LL_N^d} \sum_{j,\ell=1}^{\nu} \sum_{s=1}^{\nu^\prime_1}\sum_{t=1}^{\nu_2'}\left(\overline{E_s\Big(\frac{\br+\bm}{N}\Big)}E_t\Big(\frac{\br}{N}\Big)\right)^k P_{E_s}\Big(\frac{\br+\bm}{N}\Big)(i,j)\\
    \cdot \widehat{a}_{j}(\bm) P_{E_t}\Big(\frac{\br}{N}\Big)(j,\ell)\widehat{\psi}_{\ell}(\br) \frac{e_{\br+\bm}^{(N)}(\bv)}{N^{d/2}}.
\end{multline*}
Define 
\begin{multline*}
	F_T(\bv,\br;i,\ell)
	= \sum_{\bm\in\LL_N^d} \sum_{j=1}^{\nu} \sum_{s=1}^{\nu_1'}\sum_{t=1}^{\nu^\prime_2} \frac{1}{T}\sum_{k=0}^{T-1} \left(\overline{E_s\Big(\frac{\br+\bm}{N}\Big)}E_t\Big(\frac{\br}{N}\Big)\right)^k \\ \cdot  P_{E_s}\Big(\frac{\br+\bm}{N}\Big)(i,j)\widehat{a}_{j}(\bm) P_{E_t}\Big(\frac{\br}{N}\Big)(j,\ell)e_{\bm}^{(N)}(\bv),
\end{multline*}
and
\begin{equation}\label{e:opf}
	[\text{Op}_N(F)\psi]_i(\bv)=\sum_{\br \in \LL_N^d}\sum_{\ell=1}^\nu F(\bv,\br;i,\ell)\widehat{\psi}_{\ell}(\br) e_{\br}^{(N)}(\bv)
\end{equation}
Then using $\frac{1}{N^{d/2}}e_{\bm+\br}^{(N)}=e_{\bm}^{(N)}e_{\br}^{(N)}$, we get 
\[
	\frac{1}{T}\sum_{k=0}^{T-1} [U_N^{*k}aU_N^k\psi]_i(\bv)=\sum_{\br \in \LL_N^d} \sum_{\ell=1}^\nu F_T(\bv,\br;i,\ell)\widehat{\psi}_{\ell}(\br) e_{\br}^{(N)}(\bv)=	[\text{Op}_N(F_T)\psi]_i(\bv) \,,
\]
and so, our main quantity \eqref{e:mainscal} takes the form
\begin{equation}\label{e:pseudocalc}
\frac{1}{T}\sum_{k=0}^{T-1}\langle U_N^k\psi,aU_N^k\psi\rangle=\sum_{i=1}^{\nu}\sum_{\bv\in\LL_N^d} \overline{\psi_i(\bv)}[\text{Op}_N(F_T)\psi]_i(\bv) = \langle \psi,\text{Op}_N(F_T)\psi\rangle \,.
\end{equation}

\subsection{Step 2. The time limit in a finite spatial box.}

\begin{lem}\label{lem:limpse}
For any fixed $N$, we have
\[
\lim_{T\to\infty} \langle \psi, \opn(F_T)\psi\rangle = \langle \psi,\opn(b)\psi\rangle 
\]
for
\[
b(\bv,\br,i,\ell)=\sum_{\bm\in\LL_N^d} \sum_{j=1}^{\nu} \sum_{s=1}^{\nu^\prime_1}\sum_{t=1}^{\nu_2'} \mathbf{1}_{S_{\br}}(\bm,s,t) P_{E_s}\Big(\frac{\br+\bm}{N}\Big)(i,j)\widehat{a}_{j}(\bm) P_{E_t}\Big(\frac{\br}{N}\Big)(j,\ell)e_{\bm}^{(N)}(\bv)\, ,
\]
and $S_{\br}=\{(\bm,s,t):E_s(\frac{\br+\bm}{N})=E_t(\frac{\br}{N})\}$.
\end{lem}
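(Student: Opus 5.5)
Since $N$ is fixed, the quantity $\langle \psi,\opn(F_T)\psi\rangle$ is a finite sum. Indeed, $\br,\bm$ range over $\LL_N^d$, the indices $i,j,\ell$ over $\{1,\dots,\nu\}$, and $s,t$ over at most $\nu$ eigenvalues. The only $T$-dependence is in the scalar factor
\[
c_T(\bm,\br,s,t):=\frac{1}{T}\sum_{k=0}^{T-1} z^k,\qquad z:=\overline{E_s\Big(\frac{\br+\bm}{N}\Big)}E_t\Big(\frac{\br}{N}\Big).
\]
My plan is therefore to compute the pointwise limit of $c_T$ for each fixed tuple $(\bm,\br,s,t)$ and then pass the limit through the finite sum. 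Concretely, I first expand
\[
\langle \psi,\opn(F_T)\psi\rangle=\sum_{i,\ell}\sum_{\bv,\br}\overline{\psi_i(\bv)}\,F_T(\bv,\br;i,\ell)\,\widehat{\psi}_\ell(\br)\,e_{\br}^{(N)}(\bv)
\]
using \eqref{e:opf} and \eqref{e:pseudocalc}. Then I substitute the definition of $F_T$, which writes the whole expression as a finite linear combination of the numbers $c_T(\bm,\br,s,t)$. The coefficients are the products $P_{E_s}(\frac{\br+\bm}{N})(i,j)\,\widehat{a}_j(\bm)\,P_{E_t}(\frac{\br}{N})(j,\ell)\,e^{(N)}_{\bm}(\bv)\,\overline{\psi_i(\bv)}\,\widehat\psi_\ell(\br)\,e^{(N)}_{\br}(\bv)$, and none of them depends on $T$.

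The key step is the elementary ergodic average for a unimodular number. The matrix $\widehat U(\btheta)$ is unitary, so its eigenvalues satisfy $|E_s|=1$, and hence $|z|=1$. If $z=1$, which happens exactly when $E_s(\frac{\br+\bm}{N})=E_t(\frac{\br}{N})$, i.e. when $(\bm,s,t)\in S_{\br}$, then $c_T=1$ for every $T$. If $z\neq1$, the geometric sum gives
\[
|c_T|=\frac{|1-z^T|}{T\,|1-z|}\le \frac{2}{T|1-z|}\longrightarrow 0 .
\]
Thus $c_T\to \mathbf 1_{S_{\br}}(\bm,s,t)$ for each tuple. Since there are finitely many terms, the limit of the whole sum is obtained by replacing each $c_T$ by this indicator. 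Reassembling the terms gives exactly $\sum_{i,\ell}\sum_{\bv,\br}\overline{\psi_i(\bv)}\,b(\bv,\br,i,\ell)\,\widehat\psi_\ell(\br)\,e^{(N)}_{\br}(\bv)=\langle\psi,\opn(b)\psi\rangle$.

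There is no genuine obstacle here; the only care needed is bookkeeping. First, the number of distinct eigenvalues $\nu'$ depends on the point. It is fixed for each $\br$ and $\bm$, so the sums over $s\le\nu_1'$ and $t\le\nu_2'$ are well defined and finite. Second, the index $\br+\bm$ is taken mod $N$; this is harmless because $E_s$ and $P_{E_s}$ are $1$-periodic in each coordinate of $\btheta$. It is also worth noting that the convergence is not uniform in $N$, since $|1-z|$ can be tiny. That is why the lemma is stated for fixed $N$, with the $N\to\infty$ analysis deferred to later steps.
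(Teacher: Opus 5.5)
Your proposal is correct and is essentially the paper's argument: on $S_{\br}$ the time average equals $1$, and off $S_{\br}$ it is a unimodular geometric sum of size $O(1/T)$, with only finitely many terms because $N$ is fixed. The paper phrases the same computation as the quantitative bound $|\langle\psi,\opn(F_T-b)\psi\rangle|\le C_{N,a}/T$, using \eqref{e:bdop}, where you instead pass a pointwise limit through the finite sum.
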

\begin{proof}
    Recalling \eqref{e:opf}, if $\psi = \sum_{\bk\in \Lambda} \psi(\bk)\delta_{\bk}$ has compact support $\Lambda\subset \Z^d$ independent of $N$, then $\langle \psi,\opn(F)\psi\rangle = \sum_{i=1}^\nu\sum_{\bv\in \Lambda} \overline{\psi_i(\bv)}\sum_{\br\in\LL_N^d}\sum_{\ell=1}^\nu F(\bv,\br;i,\ell)\widehat{\psi}_\ell(\br)e_{\br}^{(N)}(\bv)$. Also, $\widehat{\psi}_\ell(\br) = \sum_{\bk'\in\Lambda} \overline{e_{\br}^{(N)}(\bk')}\psi_\ell(\bk')$. Thus, 
\begin{equation}\label{e:bdop}
|\langle \psi, \opn(F)\psi\rangle|\le C_{\Lambda}\max_{i,\ell,\bv,\bk'} \Big|\frac{1}{N^d}\sum_{\br\in \LL_N^d}F(\bv,\br;i,\ell)\ee^{2\pi\ii \br \cdot (\bv-\bk')/N}\Big|\,.
\end{equation}

We thus examine the difference of symbols. Let $\alpha_{s,t,\br,\bm,N}= (\overline{E_s\big(\frac{\br+\bm}{N}\big)}E_t\big(\frac{\br}{N}\big)$. If $(\bm,s,t)\in S_{\br}$, then $ \frac{1}{T}\sum_{k=0}^{T-1} \alpha_{s,t,\br,\bm,N}^k =1$. Otherwise, the sum is geometric. Thus,
\begin{multline*}
|\langle \psi, \opn(F_T-b)\psi\rangle| \le C_\Lambda \max_{i,\ell,\bv,\bk'} \frac{1}{N^d} \sum_{\br,\bm\in\LL_N^d}\sum_{j=1}^\nu\sum_{s,t}\mathbf{1}_{S^c_r}(\bm,s,t)\\
\cdot \Big|\frac{1-\alpha_{s,t,\br,\bm,N}^{T}}{T[1-\alpha_{s,t,\br,\bm,N}]} P_{E_s}\Big(\frac{\br+\bm}{N}\Big)(i,j)\widehat{a}_j(\bm)P_{E_t}\Big(\frac{\br}{N}\Big)(j,\ell)\Big| \le \frac{C_{N,a}}{T}\,,
\end{multline*}
where $C_{N,a}$ is finite for any $N$ and is independent of $T$. Taking $T\to \infty$ yields the result.

\end{proof}

\subsection{Step 3. The zero order symbol yields a weighted average.} We deduce from \eqref{e:pseudocalc} and Lemma~\ref{lem:limpse} that
\begin{equation}\label{e:limpieces}
\lim_{T\to\infty} \frac{1}{T} \sum_{k=0}^{T-1}\langle U_N^k\psi,aU_N^k\psi\rangle = \langle \psi, \opn(b)\psi\rangle =  \langle \psi, \opn(b_0)\psi\rangle+ \langle \psi, \opn(b')\psi\rangle \,,
\end{equation}
where $b_0$ is the contribution of $\bm=\mathbf{0}$, i.e.
\[
b_0(\bv,\br,i,\ell)=\sum_{j=1}^{\nu} \sum_{s,t=1}^{\nu^\prime_2}\mathbf{1}_{S_{\br}}(\mathbf{0},s,t) P_{E_s}\Big(\frac{\br}{N}\Big)(i,j)\widehat{a}_{j}(\mathbf{0}) P_{E_t}\Big(\frac{\br}{N}\Big)(j,\ell)e_{\mathbf{0}}^{(N)}(\bv)\, ,
\]
\[
b'(\bv,\br,i,\ell)=\sum_{\bm\neq\mathbf{0}} \sum_{j=1}^{\nu} \sum_{s=1}^{\nu^\prime_1}\sum_{t=1}^{\nu_2'} \mathbf{1}_{S_{\br}}(\bm,s,t) P_{E_s}\Big(\frac{\br+\bm}{N}\Big)(i,j)\widehat{a}_{j}(\bm) P_{E_t}\Big(\frac{\br}{N}\Big)(j,\ell)e_{\bm}^{(N)}(\bv)\, .
\]

Let $\nu':=\nu_2'$. Since $\widehat{a}_{j}(\mathbf{0}) = \frac{1}{N^{d/2}} \sum_{\bk\in\LL_N^d} a_{j}(\bk) = N^{d/2}\langle a_{j}\rangle$ and $e_{\mathbf{0}}^{(N)}(\bv) = N^{-d/2}$,
\[
b_0(\bv,\br,i,\ell)=\sum_{j=1}^{\nu} \sum_{s,t=1}^{\nu^\prime} \mathbf{1}_{S_{\br}}(\mathbf{0},s,t) P_{E_s}\Big(\frac{\br}{N}\Big)(i,j)\langle a_{j}\rangle P_{E_t}\Big(\frac{\br}{N}\Big)(j,\ell)\, ,
\]

By definition \eqref{e:opf}, we get
\begin{align*}
[\opn(b_0)\psi]_i(\bv)&=\sum_{\br\in\LL_N^d}\sum_{\ell=1}^\nu\sum_{\substack{s,t=1\\  E_s(\frac{\br}{N})=E_t(\frac{\br}{N})}}^{\nu^\prime}\sum_{\substack{j=1}}^{\nu}   P_{E_s}\Big(\frac{\br}{N}\Big)(i,j)\langle a_{j}\rangle P_{E_t}\Big(\frac{\br}{N}\Big)(j,\ell) \widehat{\psi}_{\ell}(\br)e_{\br}^{(N)}(\bv) \\
 &=\sum_{j=1}^\nu\langle a_{j}\rangle\sum_{\br\in\LL_N^d}\sum_{\ell=1}^{\nu}\sum_{s=1}^{\nu'} P_{E_s}\Big(\frac{\br}{N}\Big)(i,j)P_{E_s}\Big(\frac{\br}{N}\Big)(j,\ell)\widehat{\psi}_{\ell}(\br)e_{\br}^{(N)}(\bv)\\
&=\sum_{j=1}^{\nu}\langle a_{j}\rangle\sum_{\br\in\LL_N^d}\sum_{s=1}^{\nu'} P_{E_s}\Big(\frac{\br}{N}\Big)(i,j)\Big[P_{E_s}\Big(\frac{\br}{N}\Big)\widehat{\psi}(\br)\Big]_{j}e_{\br}^{(N)}(\bv)\,.
\end{align*}
where $\nu'$ is the number of distinct eigenvalues of $\widehat{U}(\frac{\br}{N})$. Thus, we showed that

\begin{align}\label{eqn:u=0_simpli}
\langle \psi,\opn(b_0)\psi\rangle &= \sum_{i,j=1}^{\nu} \langle a_{j}\rangle \sum_{\br\in\LL_N^d} \sum_{s=1}^{\nu'} \sum_{\bv\in\LL_N^d} \overline{P_{E_s}\Big(\frac{\br}{N}\Big)(j,i)\psi_i(\bv)}e_{\br}^{(N)}(\bv)\Big[P_{E_s}\Big(\frac{\br}{N}\Big)\widehat{\psi}(\br)\Big]_{j} \nonumber\\
&=\sum_{j=1}^{\nu} \langle a_{j}\rangle \sum_{\br\in\LL_N^d} \sum_{s=1}^{\nu'}  \overline{\Big[P_{E_s}\Big(\frac{\br}{N}\Big)\widehat{\psi}(\br)\Big]_{j}}\Big[P_{E_s}\Big(\frac{\br}{N}\Big)\widehat{\psi}(\br)\Big]_{j} = \langle a\rangle_\psi
\end{align}    
as given in \eqref{e:avp}, where we used that $\widehat{f}(\br) = \sum_{\bv} f(\bv) \overline{e_{\br}^{(N)}(\bv)}$. 

\begin{rem}\label{rem:step3}
    Observe that in Steps 1-3 we have not used the regularity of the observable, in particular, the previous results hold for any $a^{(N)}$ with $\|a^{(N)}\|_\infty \le 1$.
\end{rem}

\subsection{Step 4. The higher order modes vanish asymptotically.}\label{subsec:step4}
We now show that $\opn(b')$ is asymptotically negligible. Recalling \eqref{e:bdop}, taking $F=b'$, it suffices to show that the term in the modulus goes to zero. This term is
\[
\frac{1}{N^d}\sum_{\br\in\LL_N^d} \sum_{\bm\neq \mathbf{0}}\sum_{j,s,w=1}^\nu \mathbf{1}_{S_r}(\bm,s,w)P_s\Big(\frac{\br+\bm}{N}\Big)(i,j) \widehat{a}_j(\bm) P_w\Big(\frac{\br}{N}\Big)(j,\ell)e_{\bm}^{(N)}(\bv)\ee^{2\pi\ii \br \cdot (\bv-\bk')/N}\,.
\]
Here, if $(u_s(\btheta),E_s(\btheta))$ is an orthonormal eigensystem of $\widehat{U}(\btheta)$, then $P_s(\btheta)f=\langle u_s(\btheta),f\rangle u_s(\btheta)$. In other words, $P_{E_t} = \sum P_w$, where the sum runs over the $u_w(\btheta)$ with eigenvalue $E_t(\btheta)$.

Let $A_{\bm} = \{(\br,s,w):E_s(\frac{\br+\bm}{N})-E_w(\frac{\br}{N})=0\}$. Then $(\bm,s,w)\in S_{\br}\iff (\br,s,w)\in A_{\bm}$ so the above is
\[
\frac{1}{N^d} \sum_{\bm\neq \mathbf{0}}\sum_{j=1}^\nu \widehat{a}_j(\bm)e_{\bm}^{(N)}(\bv) \sum_{\br\in\LL_N^d}\sum_{s,w=1}^{
\nu} \mathbf{1}_{A_{\bm}}(\br,s,w)P_s\Big(\frac{\br+\bm}{N}\Big)(i,j) P_w\Big(\frac{\br}{N}\Big)(j,\ell)\ee^{2\pi\ii \br \cdot (\bv-\bk')/N}\,.
\]

We will show that for regular $a^{(N)}$, there is $C_a$ independent of $N$ such that \begin{equation}\label{e:regood}
    \sum_{\bm}\sum_{j=1}^\nu |\widehat{a}_j(\bm)e_{\bm}^{(N)}(\bv)|\le C_a \,.
\end{equation} 
Since by \eqref{e:flo}, $\sup_{\bm\neq \mathbf{0}} \frac{|A_{\bm}|}{N^d}\to 0$, then using $|P_s(\theta)(i,j)|\le 1$, we will get $\langle \psi,\opn(b')\psi\rangle\to 0$. In view of \eqref{e:limpieces} and \eqref{eqn:u=0_simpli}, this will complete the proof. 

Let's prove \eqref{e:regood}. If $a_j(\bk) = f_j(\bk/N)$ with $f_j\in H^s(\T^d)$, $s>d/2$, where $\|f\|_{H^s}^2 = \sum_{\bk\in \Z^d} |\hat{f}(\bk)|^2\langle \bk\rangle^{2s}$, $\hat{f}(\bk) = \int_{\T^d} \ee^{-2\pi\ii \bk\cdot \bx}f(\bx)\,\dd \bx$ and $\langle \bk\rangle = \sqrt{1+|\bk|^2}$, then $\|\hat{f}\|_1:=\sum_{\bk} |\hat{f}(\bk)| \le C_s \|f\|_{H^s}$, where $C_s^2=\sum_{\bk}\langle \bk\rangle^{-2s}<\infty$ since $2s>d$. On the other hand, $f = \sum_{\bk} \hat{f}(\bk) \mathfrak{e}_{\bk}$ with $\mathfrak{e}_{\bk}(\bx) = \ee^{2\pi\ii \bk\cdot \bx}$, so $\widehat{a}_j(\bm) = \langle e_{\bm}^{(N)}, f_j(\cdot/N)\rangle_{\ell^2(\LL_N^d)} = \sum_{\bk\in \Z^d} \hat{f}_{j}(\bk) \langle e_{\bm}^{(N)},\mathfrak{e}_{\bk}(\cdot/N)\rangle_{\ell^2(\LL_N^d)} = \hat{f}_j(\bm) N^{d/2}$, since $\mathfrak{e}_{\bk}(\mathbf{n}/N) = N^{d/2} e_{\bk}^{(N)}(\mathbf{n})$.
This implies $\sum_{m}\sum_j |\widehat{a}_j(\bm)e_{\bm}^{(N)}(\bv)|\le \sum_{j} \|\hat{f}_j\|_1$, which is finite.

The second class of $a^{(N)}$ are the restrictions of $a\in\ell^1(\Z^d)^\nu$. Here, $a_j = \sum_{\mathbf{n}\in \Z^d}\sum_{q=1}^\nu c_{\mathbf{n},j}\delta_{\mathbf{n}}$ with $\sum_{n}|c_{\mathbf{n},j}|<\infty$. Then $\widehat{a}_j(\bm) = \sum_{\mathbf{n}\in \Z^d} c_{\mathbf{n},j} \overline{e_{\bm}^{(N)}(\mathbf{n})}$. This implies $|\widehat{a}_j(\bm)e_{\bm}^{(N)}(\bv)|\le \frac{1}{N^d}\|a\|_1$ for all $j$ implying \eqref{e:regood}.

Combining the $4$ steps together, we have finally completed the proof of Theorem~\ref{thm:pergra}.

\subsection{From FQE to PQE}\label{sec:fp}
Take $a\in \ell^2(\LL_N^d)^{\nu}$ such that $a_j = \phi \ \forall j$. Then \eqref{e:avp} becomes
\[
\langle a\rangle_\psi= \langle \phi \rangle \sum_{\br\in \LL_N^d} \sum_{s=1}^{\nu'} \Big\| P_{E_s}\Big(\frac{\br}{N}\Big)\widehat{\psi}(\br)\Big\|^2_{\C^{\nu}} = \langle \phi\rangle \sum_{\br\in \LL_N^d} \|\widehat{\psi}(\br)\|_{\C^{\nu}}^2 = \langle \phi \rangle \|\widehat{\psi}\|^2_{\ell^2(\LL_N^d)^{\nu}} = \langle \phi \rangle \|\psi\|^2 = \langle \phi\rangle
\]
since $\|\psi\|=1$. On the other hand,
\[
\frac{1}{T}\sum_{k=0}^{T-1} \langle U_N^k \psi, a U_N^k\psi\rangle = \frac{1}{T}\sum_{k=0}^{T-1} \sum_{i=1}^{\nu} \sum_{\br\in\LL_N^d} a_i(\br) |[U_N^k\psi]_i(\br)|^2 = \sum_{\br\in \LL_N^d} \phi(\br) \mu_{T,\psi}^N(\br) = \langle \phi\rangle_{T,\psi} \,.
\]
Thus, Theorem~\ref{thm:pergra} implies Theorem~\ref{thm:cri0}.

\subsection{From QE to weak convergence}\label{sec:weak}
Let $a_j=\phi$ and $a_i=0$ for $i\neq j$. Then $\frac{1}{T}\sum_{k=0}^{T-1} \langle U_N^k \psi, aU_N^k\psi\rangle = \sum_{\bm \in \LL_N^d} \phi(\bm) \frac{1}{T}\sum_{k=0}^{T-1}|[(U_N^k\psi)(\bm)]_j|^2\to \sum_{\bm\in\LL_N^d} \phi(\bm)\mu_{\psi,j}^N(\bm)=:\langle\phi\rangle_{\psi,j}$ as $T\to\infty$. Thus, \eqref{eqn:ergodic_general} implies that $|\langle \phi\rangle_{\psi,j}-\langle\phi\rangle c_{N,\psi,j}|\to 0$ as $N\to\infty$, where $c_{N,\psi,j}:= \sum_{\br\in\LL_N^d} \sum_{s=1}^{\nu'} |[P_{E_s}(\frac{\br}{N})\widehat{\psi}(\br)]_j|^2$.

Note that $\widehat{\psi}(\br)=\frac{1}{N^{d/2}}\sum_{\bk \in \Lambda} \ee^{-\frac{2 \pi \ii \bk\cdot \br}{N}}\psi(\bk)$, where $\Lambda$ is the compact support of $\psi$, which is independent of $N$. 
So by linearity,
\[
\Big[P_{E_s}\Big(\frac{\br}{N}\Big)\widehat{\psi}(\br)\Big]_j =\frac{1}{N^{d/2}}\sum_{\bk \in \Lambda}\Big[P_{E_s}\Big(\frac{\br}{N}\Big)\psi(\bk)\Big]_j\ee^{-\frac{2 \pi \ii \bk \cdot \br}{N}} \,.
\]
Hence,
\[
c_{N,\psi,j}=\frac{1}{N^{d}}\sum_{\br\in\LL_N^d}\sum_{s=1}^{\nu'}  \Bigg|\sum_{\bk \in \Lambda}\Big[P_{E_s}\Big(\frac{\br}{N}\Big)\psi(\bk)\Big]_j\ee^{-\frac{2 \pi \ii \bk \cdot \br}{N}}\Bigg|^2\,,
\]
This is a Riemann sum of a Riemann integrable functions (see the proof of \eqref{e:limterm} where this is explained in detail in a more complicated scenario) and therefore,
\[
c_{N\psi,j}\xrightarrow{N\to\infty}  \int_{\T^d}\sum_{s=1}^{\nu'}\Bigg|\sum_{\bk \in \Lambda}\big[P_{E_s}(\btheta)\psi(\bk)\big]_j\ee^{-2 \pi \ii \bk\cdot \btheta}\Bigg|^2\ \dd \btheta = c_{\psi,j} \,.
\]
Now take $\phi(\bm)=f(\bm/N)$ with $f\in H^s(\T^d)$ continuous. Then $\langle \phi\rangle \to \int_{\T^d} f(\bx)\,\dd\bx$. Thus, \eqref{eqn:ergodic_general} implies that
\[
\int_{\T^d} f(\bx)\,\dd\widetilde{\mu}_{\psi,j}^N(\bx)  = \langle \phi\rangle_{\psi,j} - \langle\phi\rangle c_{N,\psi,j} + \langle\phi\rangle c_{N,\psi,j} \to c_{\psi,j}\int_{\T^d}f(\bx)\,\dd \bx 
\]
for any continuous $f\in H^s(\T^d)$, in particular, for any smooth $f$. Combining \cite[Cor 15.3, Thm 13.34]{Klenke}, we deduce that $\widetilde{\mu}_{\psi,j}^N \xrightarrow{w} c_{\psi,j}\mu$ as $N\to\infty$.
This proves Proposition~\ref{prp:weaj}. The proof of Proposition~\ref{prp:wea} is the same, with simpler notations.

\subsection{Convergence in total variation} 
We turn to the proof of Proposition~\ref{prp:tvd}. Recall that $\delta(\mu,\mu')=\frac{1}{2}\sum_x |\mu(x)-\mu'(x)|$.

Suppose \eqref{e:mainlim} holds for all observables $\|\phi^{(N)}\|_\infty \le 1$. This means that $|\sum_{\bk\in \LL_N^d} \phi^{(N)}(\bk)(\mu_\psi^N(\bk)-\frac{1}{N^d})|\to 0$. Choosing $\phi^{(N)}(\bk) = \mathrm{sgn}(\mu_\psi^N(\bk)-\frac{1}{N^d})$, this yields $\sum_{\bk\in\LL_N^d} |\mu_\psi^N(\bk)-\frac{1}{N^d}|\to 0$, and thus $\delta(\mu_\psi^N,\mu^N)\to 0$.

Conversely, if $\delta(\mu_\psi^N,\mu^N)\to 0$, then by the triangle inequality, $|\sum_{\bk} \phi^{(N)}(\bk)(\mu_\psi^{N}(\bk)-\frac{1}{N^d})| \le 2 \delta(\mu_\psi^{(N)},\mu)\to 0$.

\subsection{Compact support of eigenvectors}\label{sec:compact}
We next prove Proposition~\ref{prp:flanoqe} and Lemma~\ref{lem:inui}. We start with the following result.

\begin{lem}\label{lem:flaco}
If the full quantum walk operator $U$ in \eqref{e:ugen}-\eqref{eqn:homogen_unitary}  has an eigenvalue $\lambda_0$, i.e. a flat band, then there exists a corresponding eigenvector for $\lambda_0$ of compact support.
\end{lem}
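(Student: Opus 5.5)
We will work on the Fourier side, using the unitary equivalence (Lemma~\ref{lem:foufin} and the remark after it) between $U$ and multiplication by $\widehat{U}(-\btheta)$ on $L^2(\T^d)^\nu$. The entries of $\widehat{U}(\btheta)$ are trigonometric polynomials, since $F$ is finite. A flat band means $\lambda_0$ is an eigenvalue of $\widehat{U}(\btheta)$ for a positive-measure set of $\btheta$. Because $\det(\widehat{U}(\btheta)-\lambda_0)$ is a real-analytic function of $\btheta$, it then vanishes identically. The plan is to produce a nonzero vector $g(\btheta)\in\C^\nu$ whose coordinates are trigonometric polynomials and with $(\widehat{U}(\btheta)-\lambda_0)g(\btheta)=0$ for all $\btheta$. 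Its inverse Fourier transform is then a finitely supported element of $\ell^2(\Z^d)^\nu$. The eigenvalue equation transfers back through the unitary equivalence, after replacing $\btheta$ by $-\btheta$, which preserves the trigonometric polynomial class.

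To build $g$, consider the matrix $M(\btheta)=\widehat{U}(\btheta)-\lambda_0 I$ and let $\rho<\nu$ be its generic rank, i.e.\ the largest size of a minor that is not identically zero. Pick a $\rho\times\rho$ minor that is not identically zero, with row set $I$ and column set $J$. Pick any column index $c\notin J$ (possible since $\rho<\nu$). Define $g(\btheta)$ by Cramer's rule / cofactor expansion: for $k\in J\cup\{c\}$, let $g_k(\btheta)$ be the signed $\rho\times\rho$ minor of $M$ with rows $I$ and columns $(J\cup\{c\})\setminus\{k\}$, and set the other coordinates to zero.

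Then for any row $i$, the quantity $\sum_k M_{ik}g_k$ is the determinant of the $(\rho+1)\times(\rho+1)$ matrix with rows $I\cup\{i\}$ and columns $J\cup\{c\}$. This vanishes identically: if $i\in I$ it has a repeated row, and otherwise it is a $(\rho+1)$-minor, hence zero by the choice of $\rho$. Also $g_c=\pm\det M_{I,J}\not\equiv 0$, so $g\neq 0$.

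Each $g_k$ is a polynomial in entries of $M$, hence a trigonometric polynomial $\sum_{\bk\in F'}c_{\bk}\ee^{-2\pi\ii\btheta\cdot\bk}$ with $F'$ finite. Multiplying by a monomial is unnecessary. The vector $\psi$ with $\psi(\bk)=c_{\bk}$ (up to the sign convention for $\btheta$) is compactly supported and satisfies $U\psi=\lambda_0\psi$.

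The main step requiring care is the bookkeeping of the Fourier convention: $U$ corresponds to $\widehat{U}(-\btheta)$. To match, we will simply construct $g$ for the matrix $\widehat{U}(-\btheta)-\lambda_0$ instead; this is again a matrix of trigonometric polynomials, so the same construction applies. We will also note that the analyticity argument ensures that a flat band in the sense of an eigenvalue of $U$ (equivalently, $E_s\equiv\lambda_0$ on positive measure, cf.\ \cite{Tate2019}) gives generic rank $<\nu$ everywhere.
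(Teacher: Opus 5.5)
Your proposal is correct and follows the paper's route: pass to the Floquet side, produce a nonzero eigenvector of the Floquet matrix whose entries are trigonometric polynomials, and invert the Fourier transform to obtain a finitely supported eigenvector of $U$. The only difference is that your minor/Cramer's-rule construction proves explicitly the polynomial-eigenvector step that the paper imports from \cite[Lemma 2.5]{SY}. You also handle the $\widehat{U}(-\btheta)$ sign convention explicitly, which the paper's proof glosses over harmlessly.
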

\begin{proof}
If $U$ has a flat band $\lambda_0$, then its Floquet matrix \eqref{e:uhatheta} has a corresponding eigenvector $f(\btheta)$, with $\widehat{U}(\btheta)f(\btheta) = \lambda_0 f(\btheta)$ for all $\btheta$ \cite{Tate2019}.
Since $U$ is of finite range, each entry of $\widehat{U}(\btheta)$ is a trigonometric polynomial. The same proof as \cite[Lemma 2.5]{SY} shows that we can choose $f(\btheta)$ to be a trigonometric polynomial in each entry, say $f_p(\btheta) = \sum_{\bm\in\Lambda_p}\alpha_p(\bm)\ee^{2\pi\ii \bm\cdot \btheta}$ with each $\Lambda_p$ finite. On the other hand, $\widehat{U} = (\mathscr{F}^{-1}\otimes\mathrm{id})U(\mathscr{F}\otimes \mathrm{id})$. Thus, $U(\mathscr{F}\otimes \mathrm{id})f = \lambda_0 (\mathscr{F}\otimes \mathrm{id}) f$, so $(\mathscr{F}\otimes \mathrm{id})f = \begin{pmatrix} \widehat{f}_1\\ \vdots\\ \widehat{f}_\nu\end{pmatrix}$ is an eigenvector of $U$. Here, $\widehat{f}(\bk)=\int_{\T^d} \ee^{-2\pi\ii\btheta\cdot\bk}f(\btheta)\dd\btheta$. So we get $\widehat{f_p}(\bk) = \alpha_p(\bk)$ if $\bk\in \Lambda_p$ and $0$ otherwise. Thus, $(\mathscr{F}\otimes \mathrm{id})f$ is an eigenvector of $U$ of compact support.
\end{proof}

\begin{proof}[Proof of Proposition~\ref{prp:flanoqe}]
By Lemma~\ref{lem:flaco}, if $U$ has a flat band, then $U$ has a corresponding eigenvector $\psi$ of compact support. Say as a vector, $\psi(\bk)=\mathbf{0}$ if $\bk\notin \Lambda$. 
We normalize $\|\psi\|=1$. Taking $N$ big enough, we have $\Lambda\subset \LL_N^d$ and $U_N \psi = U\psi = \lambda_0\psi$. Hence, $\mu_{T,\psi}^N(\br) = \frac{1}{T}\sum_{k=0}^{T-1} \sum_{i=1}^{\nu} |(U_N^k \psi)_i(\br)|^2 = \frac{1}{T}\sum_{k=0}^{T-1}\sum_{i=1}^{\nu} |\psi_i(\br)|^2 = \|\psi(\br)\|^2_{\C^\nu}$, since $|\lambda_0|=1$. 
Choose $\phi=\phi^{(N)}\equiv 1$ on $\Lambda$ and $\phi=0$ on $\LL_N^d\setminus \Lambda$. Then $\langle \phi\rangle_{T,\psi} = \sum_{\br} \phi(\br)\mu_{T,\psi}^N(\br) = \|\psi\|^2=1$. On the other hand, $\langle \phi\rangle = \frac{|\Lambda|}{N^d} = \frac{c}{N^d}$ with $c$ independent of $N$. Here $\phi$ is regular as it has a compact support (so is in $\ell^1$ ).
\end{proof}

\begin{proof}[Proof of Lemma~\ref{lem:inui}]
Consider the initial state $\psi=\delta_0\mathop\otimes f$, where $f=(\alpha,\beta,\gamma)$.
It is shown in \cite[Section 3]{IKS} that 
\[
\lim_{N\to \infty}\lim_{T\to\infty} \frac{1}{T}\sum_{l=1}^3 \sum_{t=0}^{T-1} P_N(0,t;l;\alpha,\beta,\gamma) = (5-2\sqrt{6})(1+|\alpha+\beta|^2+|\beta+\gamma|^2-2|\beta|^2)=:c_{\alpha,\beta,\gamma}
\]
where $P_N(0,t;l;\alpha,\beta,\gamma)$ is the probability of finding the particle with chirality $l$ at position $0$ and time $t$ on the cyclic lattice with $N$ sites. In other words, $\frac{1}{T}\sum_{l=1}^3 \sum_{t=0}^{T-1} P_N(0,t;l;\alpha,\beta,\gamma) = \mu_{T,\psi}^N(0)$. Take the observable $\phi = \delta_0$ on $\LL_N$. Then $\langle \phi\rangle_{T,\psi}=\mu_{T,\psi}^N(0)$ and $\langle \phi\rangle=\frac{1}{N^d}$. Thus, $\lim_{N\to\infty}|\lim_{T\to\infty} \langle \phi\rangle_{T,\psi}-\langle \phi\rangle|=c_{\alpha,\beta,\gamma}>0$ if $(\alpha,\beta,\gamma)\neq (\frac{1}{\sqrt{6}},\frac{-2}{\sqrt{6}},\frac{1}{\sqrt{6}})$ or a constant multiple of it.
\end{proof}

\section{Fine analysis in dimension one}\label{sec:proof1d}

In this section we prove particularly strong ergodicity results which hold for one-dimensional walks, and were collectively stated in Theorem~\ref{thm:main1d}. We start with (1)--(3). We will often use the fact that in dimension one, the Floquet eigenvalue functions are analytic in the full parameter space $\theta\in\R$, see \cite[Proposition 5.3]{Rainer2013}.

Our first statement is that in dimension one, \eqref{e:flo} implies \eqref{eqn:ergodic_general} not only for regular observables, but for bounded observables in general. 

\begin{thm}\label{thm:1D_l_infty}
Let $U$ be a discrete-time quantum walk on $\Z$ as in \eqref{e:ugen}-\eqref{eqn:homogen_unitary}. If $U$ satisfies \eqref{e:flo}, then it satisfies \eqref{eqn:ergodic_general} for any observable $a^{(N)}$ such that $\|a^{(N)}\|_\infty \le 1$. 
 \end{thm}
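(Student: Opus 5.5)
The plan is to reuse Steps 1--3 of the proof of Theorem~\ref{thm:pergra} verbatim. By Remark~\ref{rem:step3}, they hold for any $a^{(N)}$ with $\|a^{(N)}\|_\infty\le1$. So it suffices to redo Step 4, i.e.\ to show $\langle\psi,\opn(b')\psi\rangle\to0$, without the summability bound \eqref{e:regood}, which is unavailable for merely bounded observables. To compensate for the loss of \eqref{e:regood}, I will prove a much stronger counting bound than \eqref{e:flo} itself. This is possible only in dimension one: there exists $C$ independent of $N$ (for $N$ large) such that $|A_m|\le C$ for all $m\neq0$, where $A_m=\{(r,s,w):E_s(\frac{r+m}{N})=E_w(\frac rN)\}$.

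\textbf{Uniform counting bound.} Let $p_\theta(z)=\det(z-\widehat U(\theta))$. Its coefficients are trigonometric polynomials in $\theta$ of degree bounded in terms of $F$ and $\nu$. Consider the resultant
\[
R(\theta,\varphi)=\mathrm{Res}_z\bigl(p_{\theta+\varphi},p_\theta\bigr),
\]
which is a trigonometric polynomial in $(\theta,\varphi)$ of degree at most some $D$ depending only on $U$. Whenever $E_s(\theta+\varphi)=E_w(\theta)$ for some $s,w$, we have $R(\theta,\varphi)=0$. Fix $\varphi=m/N$. Either $R(\cdot,\varphi)\not\equiv0$, in which case it has at most $2D$ zeros in $[0,1)$, so $|A_m|\le 2D\nu^2$. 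Or $R(\cdot,\varphi)\equiv0$. In the second case, for every $\theta$ some pair $(s,w)$ satisfies $E_s(\theta+\varphi)=E_w(\theta)$. Using the analyticity of the Floquet eigenvalue branches in $\theta\in\R$ \cite[Proposition 5.3]{Rainer2013}, some fixed pair then satisfies this on an uncountable set, hence identically. Then $E_s(\frac{r+m}N)=E_w(\frac rN)$ for all $r\in\LL_N$, so the ratio in \eqref{e:flo} equals $1$ for this $N$. Since \eqref{e:flo} holds along the full sequence, this second case can occur only for finitely many $N$. The main obstacle is making this dichotomy rigorous, namely handling the relabeling of analytic branches and checking that the resultant genuinely has bounded degree uniformly in the shift; I would first try to argue it through the resultant as above.

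\textbf{Estimating the remainder.} Starting from \eqref{e:bdop} with $F=b'$ and using $|P_s(\theta)(i,j)|\le1$ and $|e_m^{(N)}(v)|=N^{-1/2}$, the quantity to control is bounded by
\[
\frac{1}{N}\sum_{m\neq0}\sum_{j}|\widehat a_j(m)|\,N^{-1/2}\,|A_m|\le \frac{C}{N^{3/2}}\sum_{j}\sum_{m}|\widehat a_j(m)| .
\]
By Cauchy--Schwarz and Plancherel, $\sum_m|\widehat a_j(m)|\le\sqrt N\,\|\widehat a_j\|_2=\sqrt N\,\|a_j\|_2\le N\|a_j\|_\infty$. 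Hence the remainder is $O(N^{-1/2})\to0$. Combining this with \eqref{e:limpieces} and \eqref{eqn:u=0_simpli} gives \eqref{eqn:ergodic_general}. The same estimate shows the conclusion persists for observables with $\|a^{(N)}\|_\infty=o(\sqrt N)$, consistent with the remark after Theorem~\ref{thm:main1d}.
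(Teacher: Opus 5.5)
Your proposal is correct. Its overall structure coincides with the paper's proof: Steps 1--3 are unchanged, you establish a uniform bound $|A_m|\le C$ for $m\neq 0$, and Cauchy--Schwarz with Plancherel then gives an $O(N^{-1/2})$ remainder.

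The genuine difference is how you obtain the uniform counting bound.

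\emph{The paper's route.} It takes the bound from Lemma~\ref{lem:1D_subseq_2conditions}: since \eqref{e:flo} holds, condition (i) fails, hence (ii) fails. The implication behind this, (ii)$\Rightarrow$(iii)$\Rightarrow$(i), uses several ingredients:
\begin{itemize}
\item analytic eigenvalue branches;
\item an Arzel\`a--Ascoli/Hurwitz compactness argument applied to $g_n(\theta)=E_s(\theta+\frac{m_n}{N_n})-E_w(\theta)$;
\item Lemma~\ref{lem:non-ergodic_condition}, to show that the limiting shift is rational.
\end{itemize}

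\emph{Your route.} Your resultant argument replaces all of this with an algebraic dichotomy. $R(\theta,\varphi)$ is a two-variable trigonometric polynomial with fixed finite frequency support. So for each fixed $\varphi$, either $R(\cdot,\varphi)$ has at most $2D$ zeros, or it vanishes identically. In the latter case, pigeonhole gives a single pair $(s,w)$ with at least $N/\nu^2$ coincidences. This is more elementary and yields an explicit constant $2D\nu^2$ depending only on $F$ and $\nu$.

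The two points you flag as the main obstacle are not real obstacles. Uniformity of the $\theta$-degree in $\varphi$ is automatic from the fixed frequency support. You also need neither relabeling nor analyticity of branches: the pigeonhole bound $N/\nu^2$ already contradicts \eqref{e:flo} for large $N$, whatever labeling of the eigenvalues is used.

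\emph{What the paper's route buys.} In exchange for being less elementary, it gives generality and extra information. Lemma~\ref{lem:1D_subseq_2conditions} applies to analytic normal families whose entries need not be trigonometric polynomials. It also yields the rational-shift characterization (iii), which the paper needs anyway for Theorem~\ref{thm:1D_subseq_general} and for the equidistribution on subsets in Theorem~\ref{thm:equidistribution}.
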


Our second statement is that in the absence of flat bands, there is always a subsequence satisfying \eqref{e:flo}, and over which we may apply the conclusion of Theorem~\ref{thm:1D_l_infty}.

\begin{thm}\label{thm:1D_subseq_general}
Consider a family of normal $\nu \times \nu$ matrices $\{\widehat{U}(\theta)\}_{\theta \in [0,1)}$ with analytic entries. Suppose $\widehat{U}(\theta)$ does not have a flat band. Then \eqref{e:flo} holds on a subsequence. That is, there exists a strictly increasing sequence $(N_n)$ such that 
    \begin{equation}\label{e:flo_subseq}
\sup_{m\neq 0} \frac{\#\{r\in \LL_{N_n}:E_s(\frac{r+m}{N_n})-E_w(\frac{r}{N_n})=0\}}{N_n}\to 0
\end{equation}
for all $s,w$. As a result, 
\eqref{eqn:ergodic_general} holds along $\LL_{N_n}$ for all $a^{(N_n)}$ such that $\|a^{(N_n)}\|_\infty \le 1$.
\end{thm}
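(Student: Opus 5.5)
The plan has two parts. First, show that no flat band forces the bad shifts $\varphi$ to be finitely many, with a uniform bound on the number of coincidences. Second, choose $N_n$ to avoid the denominators of the bad shifts, and run the bounded-observable argument of Theorem~\ref{thm:1D_l_infty} along that sequence.

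\textbf{Setup.} By \cite[Proposition 5.3]{Rainer2013}, the eigenvalues can be labelled as real-analytic functions $E_1,\dots,E_\nu$ on all of $\R$. Since $\widehat{U}(\theta+1)=\widehat{U}(\theta)$, we have $E_s(\theta+1)=E_{\sigma(s)}(\theta)$ for some permutation $\sigma$, so every $E_s$ is $L$-periodic with $L=\nu!$. For a pair $(s,w)$, consider
\[
G_{s,w}(\varphi,\theta)=E_s(\theta+\varphi)-E_w(\theta),
\]
which is real-analytic in $(\varphi,\theta)\in\R^2$. Define
\[
\Phi_{s,w}=\{\varphi\in[0,1]: G_{s,w}(\varphi,\cdot)\equiv 0\}.
\]

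\textbf{Step 1: $\Phi_{s,w}\cap(0,1)$ is finite.} Suppose $\Phi_{s,w}$ were infinite. Then it has an accumulation point $\varphi_0$. For each fixed $\theta$, the map $\varphi\mapsto G_{s,w}(\varphi,\theta)$ is analytic and vanishes on a set accumulating at $\varphi_0$, so it vanishes identically. Hence $E_s(\theta+\varphi)=E_w(\theta)$ for all $\varphi$ and $\theta$, so $E_s$ is constant. That is a flat band, contrary to the hypothesis.

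\textbf{Step 2: choice of $N_n$.} Let $D$ be the largest denominator of a rational element of $\bigcup_{s,w}\Phi_{s,w}\cap(0,1)$. Take $N_n$ to be the primes larger than $D$. For $m\not\equiv 0 \pmod{N_n}$, the shift $m/N_n$ reduces mod $1$ to a point of $(0,1)$ with denominator $N_n>D$. Using the periodicity relation (which only relabels $s$), such a shift never lies in any $\Phi_{s,w}$. So $G_{s,w}(m/N_n,\cdot)\not\equiv 0$.

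\textbf{Step 3: a uniform zero count (the main obstacle).} I must show there is $C$, independent of $\varphi$, such that $\theta\mapsto G_{s,w}(\varphi,\theta)$ has at most $C$ zeros in $[0,1]$ for every $\varphi\in[0,1]\setminus\Phi_{s,w}$. Granting this, the count in \eqref{e:flo_subseq} is at most $C$, so the ratio is at most $C/N_n\to 0$. The approach is local near each $\varphi_0\in[0,1]$, followed by compactness.
\begin{itemize}
\item If $\varphi_0\in\Phi_{s,w}$, write $G_{s,w}(\varphi,\theta)=(\varphi-\varphi_0)^kH(\varphi,\theta)$ with $H$ analytic and $H(\varphi_0,\cdot)\not\equiv 0$.
\item If $\varphi_0\notin\Phi_{s,w}$, take $k=0$ and $H=G_{s,w}$.
\end{itemize}
Extend $H$ holomorphically in $\theta$ to a complex neighbourhood of $[0,1]$. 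Since $H(\varphi_0,\cdot)$ is not identically zero, a Hurwitz/Jensen argument bounds the number of zeros of $H(\varphi,\cdot)$ in $[0,1]$ uniformly for $\varphi$ near $\varphi_0$. For $\varphi\neq\varphi_0$ in that neighbourhood, the zeros of $G_{s,w}(\varphi,\cdot)$ are exactly those of $H(\varphi,\cdot)$. Covering $[0,1]$ by finitely many such neighbourhoods gives the uniform constant $C$. The only care needed here is the analytic division by $(\varphi-\varphi_0)^k$, which holds because $G_{s,w}$ vanishes on the whole line $\{\varphi=\varphi_0\}$.

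\textbf{Step 4: conclusion.} Steps 1--3 of the proof of Theorem~\ref{thm:pergra} hold for bounded observables by Remark~\ref{rem:step3}, and they hold for each fixed $N$. The argument of Theorem~\ref{thm:1D_l_infty} only uses \eqref{e:flo} through the vanishing of $\sup_{m\neq0}|A_m|/N$ along the sizes under consideration. Here this bound is in fact $O(1/N_n)$, uniformly in $m$. So that argument applies verbatim along $\LL_{N_n}$, and \eqref{eqn:ergodic_general} follows for all $a^{(N_n)}$ with $\|a^{(N_n)}\|_\infty\le 1$.
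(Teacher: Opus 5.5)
Your proof is correct, but it is organized differently from the paper's.

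The paper argues sequentially, by contradiction. Lemma~\ref{lem:1D_subseq_2conditions} turns an unbounded coincidence count into a rational exceptional shift $\varphi$. Lemma~\ref{lem:non-ergodic_condition} then forces $m_n/N_n=\varphi+u\alpha$ eventually. It does this by Hurwitz applied first to $E_s(\cdot+m_n/N_n)-E_w$ and then to the difference quotients $f_n\to E'$, with a separate rescaling when the period of $E_s$ is $<1$. Choosing $N_n=nM+1$, with $M$ as in \eqref{e:defnM}, gives a divisibility contradiction.

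You instead prove one uniform statement: $\Phi_{s,w}$ is finite, and off $\Phi_{s,w}$ the zero count of $G_{s,w}(\varphi,\cdot)$ on $[0,1]$ is bounded independently of $\varphi$.

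\begin{itemize}
\item The analytic core is the same. Your local factor $H$ near an exceptional $\varphi_0$ is exactly the paper's difference quotient. In fact $k=1$, and $H(\varphi_0,\theta)=E_s'(\theta+\varphi_0)\not\equiv 0$ by the no-flat-band hypothesis.
\item The Rouch\'e-plus-compactness packaging is what differs. It avoids the period case split and the rescaling, and never uses rationality of periods or of exceptional shifts.
\item It also delivers directly $|A_m|\le C\nu^2$, uniformly in $m\neq 0$ and $n$.
\end{itemize}

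In return, the paper's route yields structural information. Exceptional shifts are rational and form cosets $\varphi+\alpha\Z$, and the good sizes form the positive-density progression $nM+1$ rather than the primes. Theorem~\ref{thm:equidistribution} and the constants $c^{(k)}_{u,\psi}$ rely on this. Your Step 3 would equally allow any $N$ coprime to all denominators of exceptional shifts.

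One correction to Step 4. The argument of Theorem~\ref{thm:1D_l_infty} does not run on $\sup_{m\neq 0}|A_m|/N\to 0$ alone. Its Cauchy--Schwarz bound is of order $\sup_{m\neq 0}|A_m|/\sqrt{N}$, so it needs $\sup_{m\neq 0}|A_m|=o(\sqrt{N})$. This is harmless here, because your Step 3 gives $|A_m|=O(1)$ along $(N_n)$, which is precisely the input that argument uses.
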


In Proposition~\ref{prp:wasrem}, we will show that  Theorem~\ref{thm:1D_subseq_general} is in general not true for higher-dimensional quantum walks.

To prove the above theorems, we state two lemmas. The proofs of these lemmas are pushed to Section \ref{subsec:tech_lemmas}.

We say a function $E: \Omega \to \C$ is \emph{periodic} if $\exists\, \alpha>0$ such that $\{\theta:E(\theta+\alpha)\neq E(\theta)\}$ is a set of measure zero. For an analytic function on a domain $\Omega$, this is equivalent to saying that $E(\theta+\alpha)= E(\theta)$ for all $\theta \in \Omega$. The smallest such $\alpha$ is called the period of $E$.

\begin{lem}\label{lem:non-ergodic_condition}
 For a non-constant analytic function $E:\R \to \C$ with a rational period $\alpha$, define $F(\theta):=E(\theta+\varphi)$ for all $\theta$ and some fixed $0<\varphi\leq \alpha$.

Then $F(\theta)= E(\theta+u \alpha+\varphi)$ for all $u \in \Z$. Moreover, there exist sequences $N_n \to \infty$, $m_n \in \LL_{N_n}$ such that 
\begin{equation}\label{e:nonergodic_contra}
   \#\Big\{r\in \LL_{N_n}:E\Big(\frac{r+m_n}{N_n}\Big)-F\Big(\frac{r}{N_n}\Big)=0\Big\} \to \infty 
\end{equation} 
if and only if for all but finitely many $n$, $\frac{m_n}{N_n}$ is of the form $\varphi+\alpha u_n$ for some non-negative integer  $u_n < \frac{1}{\alpha}$. In particular,  if $\alpha \geq 1$, then $\frac{m_n}{N_n}=\varphi$ for all but finitely many $n$.
\end{lem}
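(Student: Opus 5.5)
The plan rests on the rigidity of analytic functions: a non-constant analytic $E$ takes each value only finitely often on any compact interval, and two analytic functions that agree on an infinite bounded set agree identically. The first claim, $F(\theta)=E(\theta+u\alpha+\varphi)$, is immediate from periodicity: $E(\theta+u\alpha+\varphi)=E(\theta+\varphi)$ for every $u\in\Z$, by induction on $|u|$.

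\emph{Sufficiency.} Suppose $\frac{m_n}{N_n}=\varphi+\alpha u_n$. Then $E(\frac{r+m_n}{N_n})=E(\frac{r}{N_n}+\varphi+u_n\alpha)=F(\frac{r}{N_n})$ for every $r\in\LL_{N_n}$. Hence the set in \eqref{e:nonergodic_contra} is all of $\LL_{N_n}$, and its cardinality $N_n\to\infty$. To realize such sequences, write $\varphi$ and $\alpha$ as rationals (here I use that $\varphi$ is rational, which is the setting of Lemma~\ref{lem:1D_subseq_2conditions}). Take $N_n$ to be multiples of the common denominator, so that $m_n:=N_n(\varphi+\alpha u_n)$ is an integer. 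The constraint $u_n<1/\alpha$, together with reduction modulo $1$ (the function $E$ is $1$-periodic as well, since it comes from trigonometric polynomials), places $m_n$ in $\LL_{N_n}$.

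\emph{Necessity.} Suppose the count tends to infinity. I would pass to a subsequence along which $x_n:=\frac{m_n}{N_n}\to x\in[0,1]$, and argue by contradiction against the claimed form. Set
\[
G_n(\theta):=E(\theta+x_n)-F(\theta)=E(\theta+x_n)-E(\theta+\varphi).
\]
Each $G_n$ is analytic and $1$-periodic, and $G_n\to G(\theta):=E(\theta+x)-E(\theta+\varphi)$ uniformly on a complex neighborhood of $[0,1]$. There are two cases.

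If $G\not\equiv 0$, then $G$ has only finitely many zeros, say $K$, in a complex neighborhood of $[0,1]$. By Hurwitz's theorem (equivalently Rouché), $G_n$ has at most $K$ zeros there for large $n$. This contradicts the assumption that the zero count at the points $\frac{r}{N_n}$ tends to infinity.

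If $G\equiv 0$, then $x-\varphi$ is a period of $E$, so $x=\varphi+\alpha u$ for some integer $u$, by minimality of $\alpha$. The periods of a non-constant analytic function form a discrete group, here $\alpha\Z$. It remains to rule out $x_n\to x$ with $x_n\neq x$. In that case, define $h_n=x_n-x\to 0$ with $h_n\neq 0$, so that $G_n(\theta)=E(\theta+x+h_n)-E(\theta+x)$. Then $G_n/h_n\to E'(\theta+x)$ uniformly, and $E'\not\equiv 0$ because $E$ is non-constant. By Hurwitz again, $G_n/h_n$ has boundedly many zeros on $[0,1]$, which contradicts the count tending to infinity. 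Hence $x_n=x$ eventually.

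The main obstacle is that this subsequence argument does not yet give a uniform conclusion, since the limit $x$ could differ between subsequences. I would handle it as follows. Every subsequence has a further subsequence on which $x_n$ is eventually of the form $\varphi+\alpha u$. Suppose infinitely many $n$ were not of this form; taking the subsequence consisting of those indices gives a contradiction. The range restriction $0\le u_n<1/\alpha$ comes from $m_n\in\LL_{N_n}$, i.e.\ $x_n\in[0,1)$, together with $\varphi\le\alpha$; the boundary case $\varphi=\alpha$ requires a small adjustment of $u_n$ by one. For $\alpha\ge 1$, the only admissible value is $u_n=0$, so $x_n=\varphi$.
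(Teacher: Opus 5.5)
Your argument is correct. For $\alpha\ge 1$ it is essentially the paper's argument: you apply Hurwitz to $E(\cdot+x_n)-F$ when the limit is not identically zero, and you apply Hurwitz to the difference quotient $\bigl(E(\cdot+x+h_n)-E(\cdot+x)\bigr)/h_n\to E'(\cdot+x)$ to exclude $x_n\to x$ with $x_n\neq x$.

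Where you differ is in how the limit point is located, and in the case $\alpha<1$.
\begin{enumerate}
\item The paper first proves $m_n/N_n\to\varphi$, using $\alpha\ge 1$ to force that as the only admissible limit. For $\alpha=p/q<1$ it then rescales to $\tilde E(\theta)=E(\tfrac{p}{q}\theta)$, which has period $1$. It transports the counting problem to $\LL_{pN_n}$ via $r\mapsto (u,\,qr \bmod pN_n)$ and invokes the first case.
\item You allow any subsequential limit $x$. You identify it as an element of $\varphi+\alpha\Z$, because the period group is a closed subgroup of $\R$, hence equal to $\alpha\Z$. This uses that $\alpha$ is the minimal period, which is the paper's convention. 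Since $E(\cdot+\varphi)=E(\cdot+x)$, the difference-quotient argument applies verbatim at $x$, and a sub-subsequence argument finishes.
\end{enumerate}
Your route is uniform, with no case split and no lattice bookkeeping. It also works directly with the hypothesis that the count tends to infinity, whereas the paper's rescaling step is written with a lower bound $\ge cN_n$.

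Your remark about $\varphi=\alpha$ is well taken. There, $m_n=0$ gives a full count, but $0$ is not of the stated form with $u_n\ge 0$. The paper's proof passes over this too: it writes $\tilde\varphi-\varphi\in(-\alpha,1)$, although $-\alpha$ is attainable.

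You should drop two side remarks, since neither is needed:
\begin{enumerate}
\item $E$ is not assumed $1$-periodic, and in the intended application it need not be. Eigenvalues of trigonometric-polynomial matrices are only $k$-periodic for some integer $k$; for example $\sqrt{bc}\,\ee^{\pi\ii\theta}$ has period $2$.
\item Your construction of sequences realizing the form uses rationality of $\varphi$, which is not a hypothesis.
\end{enumerate}
As in the paper, the statement to prove is the equivalence for given sequences $(N_n)$, $(m_n)$. Your sufficiency argument (the count equals $N_n$) together with your necessity argument proves exactly that.
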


\begin{lem}\label{lem:1D_subseq_2conditions}
Consider a family of $\nu \times \nu$ normal matrices $\{\widehat{U}(\theta)\}_{\theta \in \R}$ with analytic entries and eigenvalues $E_1(\theta), \ldots ,E_{\nu}(\theta)$. Fix $1 \leq s,w \leq \nu$. Then the following are equivalent
\begin{enumerate}[\rm(i)]
    \item \eqref{e:flo} fails for $s$ and $w$:    \begin{equation}\label{e:cond_nonergodic}
\sup_{\substack{m\neq 0 \\ m \in \LL_N}} \frac{\#\{r\in \LL_{N}:E_s(\frac{r+m}{N})-E_w(\frac{r}{N})=0\}}{N}\nrightarrow 0.
\end{equation}
\item 
\[
\limsup_{N\to\infty}\sup_{\substack{m\neq 0 \\ m \in \LL_{N}}} \#\Big\{r\in \LL_{N}:E_s\Big(\frac{r+m}{N}\Big)-E_w\Big(\frac{r}{N}\Big)=0\Big\} =\infty \,.
\]
\item There exists a rational $0<\varphi<1$ such that
\[
E_s(\theta+\varphi)=E_w(\theta) \text{ for all } \theta \in \R.
\]
\end{enumerate}

Additionally, if there exists a subsequence $(m_{n_k})$ such that the fraction in \eqref{e:cond_nonergodic} does not converge to zero and $\frac{m_{n_k}}{N_{n_k}} \rightarrow 0$, then $E_s(\theta),E_w(\theta)$ are flat bands and $E_s(\theta)\equiv E_w(\theta)\equiv c$ for some $c\in\C$.
\end{lem}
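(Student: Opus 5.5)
We prove the cycle (iii)$\Rightarrow$(i)$\Rightarrow$(ii)$\Rightarrow$(iii), and then the additional statement, using throughout that $E_s,E_w$ are analytic on $\R$ and $1$-periodic (\cite[Proposition 5.3]{Rainer2013}).

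\emph{(iii)$\Rightarrow$(i).} Write $\varphi=p/q$ in lowest terms with $0<\varphi<1$. Take $N=nq$ and $m=np\in\LL_N\setminus\{0\}$. Then $m/N=\varphi$, so $E_s(\frac{r+m}{N})=E_s(\frac rN+\varphi)=E_w(\frac rN)$ for every $r\in\LL_N$. The fraction in \eqref{e:cond_nonergodic} therefore equals $1$ along $N=nq$, and \eqref{e:flo} fails.

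\emph{(i)$\Rightarrow$(ii).} If the fraction stays above $c>0$ along a subsequence $N_n$, the count is at least $cN_n\to\infty$.

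\emph{(ii)$\Rightarrow$(iii).} This is the core step. Choose $N_n\to\infty$ and $m_n\ne0$ with $\#Z_n\to\infty$, where $Z_n=\{r\in\LL_{N_n}: E_s(\frac{r+m_n}{N_n})=E_w(\frac{r}{N_n})\}$. Passing to a subsequence, assume $\varphi_n:=m_n/N_n\to\varphi\in[0,1]$; by periodicity $\varphi=1$ is identified with $\varphi=0$. Consider the analytic function of two variables $G(\theta,\eta)=E_s(\theta+\eta)-E_w(\theta)$. If $G(\cdot,\varphi)\not\equiv0$, then by compactness and analyticity (Weierstrass preparation / Hurwitz), $G(\cdot,\eta)$ has at most $K$ zeros in $[0,1)$ for all $\eta$ near $\varphi$, with $K$ uniform in $\eta$. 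This contradicts $\#Z_n\to\infty$, so $E_s(\theta+\varphi)\equiv E_w(\theta)$.

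It remains to show $\varphi$ can be taken rational in $(0,1)$. This is the main obstacle, and we would use arithmetic rather than analysis. The points $\theta=r/N_n$ in $Z_n$ lie in a grid, and the relation $E_s(\theta+\varphi)=E_w(\theta)$ identifies $G(\theta,\varphi_n)$ with $E_s(\theta+\varphi_n)-E_s(\theta+\varphi)$. Suppose $E_s$ is nonconstant and this difference is not identically zero. By the same uniform bound, applied to $H(\theta,\delta)=E_s(\theta+\delta)-E_s(\theta)$, the count is controlled unless $\varphi_n-\varphi$ lies in the period group of $E_s$. A nonconstant analytic $1$-periodic function has period group $\frac1q\Z$, which forces $\varphi_n-\varphi\in\frac1q\Z$ for large $n$. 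Since $\varphi_n$ is rational, $\varphi$ is rational; this is essentially the content of Lemma~\ref{lem:non-ergodic_condition}. We may then shift $\varphi$ by a multiple of $1/q$ into $(0,1)$, unless $\varphi\in\frac1q\Z$. In that case $E_s=E_w$, and $\varphi=1/q<1$ works if $q>1$.

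\emph{The additional statement.} The remaining cases are $q=1$, or $E_s$ constant (and then $E_w$ constant too). Suppose $\varphi_n\to0$ while the fraction stays above $c>0$. Then $G(\cdot,\varphi_n)$ has at least $cN_n$ zeros, so the uniform bound forces $G(\cdot,0)\equiv0$, i.e. $E_s\equiv E_w$. Moreover $H(\cdot,\varphi_n)$, with $\varphi_n\neq0$, has $\gtrsim N_n$ zeros. Expand $H(\theta,\delta)=\delta E_s'(\theta)+O(\delta^2)$, divide by $\delta$, and apply the uniform zero bound to $H(\theta,\delta)/\delta$, which is analytic in $(\theta,\delta)$. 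This forces $E_s'\equiv0$, so $E_s\equiv E_w\equiv c$.

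The same argument covers the excluded subcase of (ii)$\Rightarrow$(iii) when $q=1$ and $\varphi_n\to0$. Having $\gg1$ zeros there only gives $E_s'$ with many zeros, so that subcase needs linear growth in the count. We would therefore state (ii)$\Rightarrow$(iii) as passing through (i), handling the small-$\varphi_n$ regime via the additional statement. Flat bands are then simultaneously excluded, or (iii) holds trivially with any rational $\varphi$.
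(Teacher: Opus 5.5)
You follow the paper's architecture: (iii)$\Rightarrow$(i) via $N=nq$, $m=np$; (i)$\Rightarrow$(ii) trivially; (ii)$\Rightarrow$(iii) by a Hurwitz argument giving $E_s(\theta+\varphi)\equiv E_w(\theta)$, then a derivative/Hurwitz argument for rationality and for $\varphi_n\to0$. As written, however, the proposal does not prove the lemma.

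\textbf{Where it breaks.} In the regime $0\neq\delta_n:=\varphi_n-\varphi\to0$ you invoke ``the same uniform bound'' for $H(\theta,\delta)=E_s(\theta+\delta)-E_s(\theta)$. That bound is unavailable there, because $H(\cdot,0)\equiv0$. The correct tool is $K(\theta,\delta)=H(\theta,\delta)/\delta$, which is jointly analytic with $K(\cdot,0)=E_s'$. You then assert that this tool ``needs linear growth in the count'' and retreat to proving (i)$\Rightarrow$(iii) instead. After that retreat, nothing establishes (ii)$\Rightarrow$(i) or (ii)$\Rightarrow$(iii). This is exactly the direction the paper uses: in the proof of Theorem~\ref{thm:1D_l_infty}, \eqref{e:flo} becomes the uniform bound $\#A_m\le C\nu^2$ through ``not (i) $\Rightarrow$ not (ii)''. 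So the loss is not cosmetic.

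\textbf{The obstruction does not exist.} For nonconstant $E_s$, take a complex neighbourhood $\Omega$ of a period interval on which $E_s$ is analytic and $E_s'$ has no zeros on $\partial\Omega$. Let $n_0$ be the number of zeros of $E_s'$ in $\Omega$, counted with multiplicity. By Hurwitz (or Rouch\'e, using $K(\cdot,\delta)\to E_s'$ uniformly), $K(\cdot,\delta)$ has exactly $n_0$ zeros in $\Omega$ for all small $\delta\neq0$. Hence whenever $\delta_n\neq0$ the count is at most $n_0$, a constant independent of $n$. This already contradicts $\#Z_n\to\infty$; no linear growth is needed.

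This is precisely the paper's argument: $f_n=(E(\theta)-E(\theta+h_n))/h_n\to E'$ in Case~1 of Lemma~\ref{lem:non-ergodic_condition}, which Lemma~\ref{lem:1D_subseq_2conditions} then uses to dispose of $\varphi_0=0$. With this bound, your ``excluded subcase'' ($q=1$, $\varphi\in\{0,1\}$) is vacuous. It forces $\varphi_n=\varphi\in\{0,1\}$ for all large $n$, which is impossible for $m_n\in\LL_{N_n}\setminus\{0\}$.

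\textbf{A smaller correction.} Analytic eigenvalue branches are only $k$-periodic for some integer $k$, not necessarily $1$-periodic. An example is $\pm\sqrt{bc}\,\ee^{\pi\ii(\beta-\alpha)\theta}$ in Proposition~\ref{prp:offdia} with $\beta-\alpha$ odd. So you should not identify $\varphi=1$ with $0$, nor assume the period group is $\frac1q\Z$. Only the discreteness of the period group is actually used, and the case $\varphi=1$ is ruled out by the same derivative bound.
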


Now we prove Theorems~\ref{thm:1D_l_infty} and \ref{thm:1D_subseq_general}, assuming Lemmas \ref{lem:non-ergodic_condition} and \ref{lem:1D_subseq_2conditions}.
\begin{proof}[Proof of Theorem \ref{thm:1D_l_infty}]
As we mentioned in Remark~\ref{rem:step3},
   the only place in the proof of Theorem~\ref{thm:pergra} where we needed the observable assumption is in Step 4 (Section \ref{subsec:step4}). 
   Therefore, to prove the theorem, it is sufficient to show that for any $\|a^{(N)}\|_\infty \le 1$,
\begin{equation}\label{eqn:errorterm_1D}
 \frac{1}{N} \sum_{m\neq 0}\sum_{j=1}^\nu \widehat{a}_j(m)e_{m}^{(N)}(v) \sum_{r\in\LL_N}\sum_{s,w=1}^{
\nu} \mathbf{1}_{A_{m}}(r,s,w)P_s\Big(\frac{r+m}{N}\Big)(i,j) P_w\Big(\frac{r}{N}\Big)(j,\ell)\ee^{2\pi\ii r  \frac{v-k'}{N}}\,   
\end{equation}
goes to zero
for all $v,i,\ell$ and $k'$, with $A_m := \{(r,s,w):E_s(\frac{r+m}{N})-E_w(\frac{r}{N})=0\}$.

The assumption of the theorem states that (i) of Lemma \ref{lem:1D_subseq_2conditions} is violated for all $s,w$ and therefore by (ii), there exists a constant $C>0$ such that 
\[
\sup_{\substack{m\neq 0 \\ m \in \LL_{N}}} \#\Big\{r\in \LL_{N}:E_s\Big(\frac{r+m}{N}\Big)-E_w\Big(\frac{r}{N}\Big)=0\Big\} \leq C
\]
for all $s,w$ and $N$. Therefore, we have  $\#A_m \leq C \nu^2$ $\forall m\neq 0$. Since every entry of $P_s$ and $P_w$ is bounded by 1, the absolute value of \eqref{eqn:errorterm_1D} is thus bounded above by $\frac{1}{N} \sum_{j=1}^\nu\sum_{m \neq 0}|\widehat{a}_j(m)e_m^{(N)}(v)|C \nu^2=\frac{ C \nu^2}{N^{3/2}} \sum_{j=1}^\nu\sum_{m \neq 0}|\widehat{a}_j(m)|$.

Applying the Cauchy-Schwarz inequality to $\widehat{a}_j \in \ell^2(\LL_{N})$, we have
\[
\sum_{m \neq 0}|\widehat{a}_j(m)| \leq \sqrt{N}\bigg(\sum_{m \neq 0}|\widehat{a}_j(m)|^2 \bigg)^{1/2}\leq \sqrt{N}\|a_j\|_{2} \leq N\|a_j\|_\infty \leq N\,.
\]
Here, we used that $\|\widehat{a}_j \|_2=\|a_j\|_2$ and $\|a\|_\infty\le 1$.
As a result of the last inequality, we get $\frac{C \nu^2}{N^{3/2}} \sum_{j=1}^\nu\sum_{m \neq 0}|\widehat{a}_j(m)| \to 0$, completing the proof.
\end{proof}

Now, we prove Theorem \ref{thm:1D_subseq_general}.
\begin{proof}[Proof of Theorem \ref{thm:1D_subseq_general}]
   If \eqref{e:flo} holds on the full sequence, then there is nothing to prove.  Therefore, assume on the contrary that there exists $1 \leq s_1,w_1 \leq \nu$ such that (i) of Lemma \ref{lem:1D_subseq_2conditions} is satisfied. Then by (iii) of Lemma \ref{lem:1D_subseq_2conditions},  there exists a rational number $\varphi_1=\frac{p_1}{q_1} \in (0,1)$, where $\gcd(p_1,q_1)=1$, such that $E_{s_1}(\theta+\varphi_1)=E_{w_1}(\theta)$ for all $\theta$. Define
   \begin{equation}\label{e:defnM}
       M=\operatorname{lcm} \Big\{q': E_s\Big(\theta+\frac{p'}{q'}\Big)\equiv E_w(\theta) \text{ for some } s,w \text{ and } p'<q', \gcd(p',q')=1\Big\},
   \end{equation}
   and consider the sequence $N_n=nM+1$. We claim that $N_n$ satisfies \eqref{e:flo_subseq} for all $s,w$. Suppose not, then it follows from Lemma \ref{lem:1D_subseq_2conditions} that there exists $s_2,w_2$ and a rational number $\varphi_2=\frac{p_2}{q_2}$ such that $E_{s_2}(\theta+\varphi_2)\equiv E_{w_2}(\theta)$. 
   
   Furthermore, since $E_{s_2}(\theta)$ is an eigenvalue of a matrix-valued function on the torus, it is well-known that there exists some integer $k$ such that $E_{s_2}(\theta+k) \equiv E_{s_2}(\theta)$ (see for instance, \cite[Chapter 2]{Kato}), and therefore if $\alpha$ is the period of $E_{s_2}$, then $k=r\alpha$ for some $r\in \N$, implying that $\alpha$ is rational.
   Take $\alpha=\frac{p_\alpha}{q_\alpha}$ with $\gcd(p_\alpha,q_\alpha)=1$.
   We divide the proof into the following two possible scenarios.
   
   \noindent\textbf{Case 1: $\alpha \geq  1$.} Since $E_{s_2}(\theta+\varphi_2)\equiv E_{w_2}(\theta)$, by Lemma \ref{lem:non-ergodic_condition}, we get that $\frac{m_n}{N_n}=\varphi_2=\frac{p_2}{q_2}$ for all but finitely many $n$. This implies $q_2m_n=p_2nM+p_2$, which is impossible since only the last term is not divisible by $q_2$. Hence, $(N_n)$ satisfies \eqref{e:flo_subseq} for all $s,w$. 
   
   \noindent\textbf{Case 2: $\alpha < 1$.} Here, using Lemma~\ref{lem:non-ergodic_condition} again, we get $\frac{m_n}{N_n}=\varphi_2+u \alpha$ for some non-negative integer $u<\frac{1}{\alpha}$ for all but finitely many $n$.  Note that here $\alpha=\frac{p_\alpha}{q_\alpha}<1$ and $E_{s_2}(\theta+\alpha) \equiv E_{s_2}(\theta)$. Therefore, $q_\alpha$ belongs to the collection in \eqref{e:defnM}, implying $q_\alpha\mid M$. Similarly, $q_2\mid M$. So we get that $\frac{m_n}{N_n}=\frac{t}{M}$ for some $t<M$, implying $Mm_n=tM+t$, which is impossible. This contradiction shows that $N_n$ satisfies \eqref{e:flo_subseq} for all $s,w$.

   The last assertion is a direct consequence of Theorem \ref{thm:1D_l_infty}. 
\end{proof}

\subsection{Proof of technical lemmas}\label{subsec:tech_lemmas}

In this section, we prove Lemmas~\ref{lem:non-ergodic_condition} and \ref{lem:1D_subseq_2conditions}.

\begin{proof}[Proof of Lemma \ref{lem:non-ergodic_condition}]
Clearly, $E(\theta+u\alpha+\varphi)=E(\theta+\varphi)=F(\theta)$ for all $\theta \in \R$ and $u \in \Z$.
Therefore, when $\frac{m_n}{N_n}$ is of the form $\varphi+u\alpha$ for some $u \in \Z$,
\[
   \#\Big\{r\in \LL_{N_n}:E\Big(\frac{r+m_n}{N_n}\Big)-F\Big(\frac{r}{N_n}\Big)=0\Big\}=N_n\,.
\]

We now prove the converse. Suppose \eqref{e:nonergodic_contra} holds for the pair of sequences $(N_n)$ and $(m_n \in \LL_{N_n})$. We divide the proof into two cases, depending on the value of $\alpha$. 

\begin{case}
    $\alpha \geq 1$.
\end{case}

We make the following claim:
\begin{claim}
    $\frac{m_n}{N_n} \to \varphi$.
\end{claim}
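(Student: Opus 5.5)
The plan is a compactness argument combined with Hurwitz's theorem for analytic functions. Write $x_n := \frac{m_n}{N_n}\in[0,1)$. We may assume $0<\varphi<\alpha$, and read the Claim modulo $\alpha$. Indeed, if $\varphi=\alpha$ then $F=E$, and $x_n\to 0$ is the same as $x_n\to\varphi$ modulo the period. It suffices to show that every subsequential limit $x$ of $(x_n)$ equals $\varphi$. Since $[0,1]$ is compact, we argue by contradiction: we pass to a subsequence with $x_n\to x\neq\varphi$ and derive a bound on the counts in \eqref{e:nonergodic_contra} that is uniform in $n$.

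\textbf{Step 1: the limit function is not identically zero.} Set $H(\theta):=E(\theta+x)-E(\theta+\varphi)$, which is analytic on $\R$. If $H\equiv 0$, then $x-\varphi$ would be a period of $E$, so $x-\varphi\in\alpha\Z$. But $x\in[0,1]$ and $\varphi\in(0,\alpha)$ with $\alpha\ge 1$ give $|x-\varphi|<\alpha$. Hence $x=\varphi$, a contradiction, so $H\not\equiv 0$.

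\textbf{Step 2: convergence on a complex neighbourhood.} Since $E$ is analytic on $\R$, it extends holomorphically to a neighbourhood of $[-1,3]$, and hence to a thin rectangle $R=\{z:\Re z\in[-\delta,1+\delta],\ |\Im z|\le\eta\}$ after the shifts by $x_n,x,\varphi\in[0,2]$. Set $G_n(z):=E(z+x_n)-E(z+\varphi)$. By uniform continuity of $E$ on compact sets, $G_n\to H$ uniformly on $R$.

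The zeros of the nonzero holomorphic function $H$ are isolated, so we can choose $\delta,\eta$ such that $H$ has no zeros on $\partial R$. This choice of contour is the delicate point. One shrinks $\eta$ and perturbs $\delta$ slightly, which is possible because only finitely many zeros lie in a compact set.

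\textbf{Step 3: uniform bound on the zero counts.} By Hurwitz's theorem (or Rouché, using $\min_{\partial R}|H|>0$ and $\sup_R|G_n-H|\to 0$), for all large $n$ the number of zeros of $G_n$ in $R$, counted with multiplicity, equals the number $Z$ of zeros of $H$ in $R$, which is finite.

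The set in \eqref{e:nonergodic_contra} consists of the distinct points $\frac{r}{N_n}\in[0,1)\subset R$ with $G_n(\frac{r}{N_n})=0$. Its cardinality is therefore at most $Z$ for all large $n$. This contradicts the assumption that these counts tend to infinity, so every subsequential limit is $\varphi$ and $\frac{m_n}{N_n}\to\varphi$.

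The main obstacle is Step 3. There one needs a zero count that is uniform in $n$ rather than merely finite for each fixed $n$, and this is exactly where analyticity enters through Hurwitz's theorem.
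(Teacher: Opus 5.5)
Your argument is correct and is essentially the paper's proof. Both take a subsequential limit $\widetilde\varphi\neq\varphi$ of $\frac{m_n}{N_n}$, use Hurwitz's theorem to bound the zero counts uniformly in $n$, and use that the period is at least one to rule out $E(\cdot+\widetilde\varphi)\equiv E(\cdot+\varphi)$. You run these steps in the opposite order, and you rightly get uniform convergence directly from uniform continuity, so the paper's appeal to Arzel\`a--Ascoli is unnecessary.

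Your reading modulo $\alpha$ when $\varphi=\alpha$ is more careful than the paper, whose claim $\widetilde\varphi-\varphi\in(-\alpha,1)$ overlooks the case $\widetilde\varphi=0$, $\varphi=\alpha$ (e.g.\ $m_n=0$). The only slip is that $\varphi$ may exceed $2$ when $\alpha>2$, so extend $E$ to a neighbourhood of $[-1,\alpha+2]$ rather than $[-1,3]$.
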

\begin{proof}[Proof of claim.]
Suppose the claim is not true. Then there exists a subsequence of $(m_n)$, say $(m_{k_n})$, such that $\frac{m_{k_n}}{N_{k_n}} \to \widetilde{\varphi}$ for some $\widetilde{\varphi} \neq \varphi$. 

 Since $E$ and $F$ are analytic on the compact $[0,1]$, there is an open neighborhood $\Omega\supset [0,1]$,  $\Omega \subset \C$ such that $E,F$ have unique analytic continuations to $\overline{\Omega}$.
 
 Consider the analytic functions $g_n,g:\overline{\Omega}\to \C$ defined by 
    \begin{align*}
            g_n(\theta) &= E\Big(\theta+ \frac{m_{k_n}}{N_{k_n}}\Big)-F(\theta) \quad \forall n, \text{ and}\\
    g(\theta) &= E(\theta+\widetilde{\varphi})-F(\theta).
    \end{align*}

   Since $E$ and $F$ are analytic on the compact set $\overline{\Omega}$, the sequence $\{g_n\}$ has a uniform upper bound and is uniformly equicontinuous. Therefore, by the Arzelà–Ascoli theorem, $\{g_n\}$ has a subsequence $\{g_{n_k}\}$ that converges to $g$ uniformly in $\overline{\Omega}$. To keep the notation simple, we take this subsequence as $\{g_n\}$.
   
We claim that $g \equiv 0$ on $\Omega$. Suppose $g \not\equiv 0$, then by the identity theorem, $Z(g)= \{x \in \Omega:g(x)=0\}$ is a finite set. By taking a smaller set if necessary, assume that $g$ has no zeros on $\partial\Omega$. By Hurwitz's theorem \cite[Thm. 6.4.1(c)]{Simon2015}, it follows that for sufficiently large $n$, the number of zeros of $g_n$ in $\Omega$ is the number of zeroes of $g$ (counting multiplicity), which is finite and independent of $n$. This contradicts \eqref{e:nonergodic_contra}.

Hence, $g \equiv 0$, implying $E(\theta+\widetilde{\varphi})= F(\theta)= E(\theta+\varphi)$ for all $\theta \in [0,1]$. This implies $E(\theta) = E(\theta+\widetilde{\varphi}-\varphi)$ for all $\theta \in [\varphi,1+\varphi]$, and by the analyticity of $E$ it follows that $E(\theta) = E(\theta+\widetilde{\varphi}-\varphi)$ for all $\theta \in \R$. Since $\widetilde{\varphi}-\varphi \in (-\alpha,1)$ and the period of $E$ is at least one, we deduce that $\widetilde{\varphi}=\varphi$, a contradiction to our assumption that $\widetilde{\varphi} \neq \varphi$. This shows that for all subsequences $(m_{n_k})$ of $(m_{n})$, the fraction $\frac{m_{n_k}}{N_{n_k}} \to \varphi$, proving our claim.
\end{proof}

We now return to the proof of the lemma. 
Suppose there exists infinitely many $n$ such that $\frac{m_n}{N_n} \neq \varphi$. Moving to a subsequence, we assume without loss of generality that $\frac{m_n}{N_n} \neq \varphi$ for all $n$.
Note that if $E\big(\frac{r+m_n}{N_n}\big)-F\big(\frac{r}{N_n}\big)=0$ for some $r \in \LL_{N_n}$, then
\[
E\Big(\frac{r+m_n}{N_n}\Big)=F\Big(\frac{r}{N_n}\Big)=E\Big(\frac{r}{N_n}+\varphi\Big).
\]

Denoting $y_n=\frac{r+m_n}{N_n}$ and $h_n=\varphi-\frac{m_n}{N_n}$, we have $\frac{r}{N_n}+\varphi= y_n+h_n$ and therefore $E(y_n)=E(y_n+h_n)$. 
Noting that $y_n \in [0,2]$, this implies 
\begin{multline}\label{e:newboun}
 \#\Big\{r\in \LL_{N_n}:E\Big(\frac{r+m_n}{N_n}\Big)-F\Big(\frac{r}{N_n}\Big)=0\Big\} \leq \\   \#\left\{\theta\in [0,2]:E(\theta)-E(\theta+h_n)=0\right\}. 
\end{multline}

We claim that the upper bound of the above inequality is a constant independent of $h_n$. To see this, consider a domain $\Omega \supset [0,2]$ such that $E$ has an analytic continuation to $\overline{\Omega}$ and $E'$ has no zeroes on $\partial\Omega$ (recall that $E$ is non-constant analytic on $\R$). Define the functions $f_n:\Omega \to \C$ by
\[  f_n(\theta)=\frac{E(\theta)-E(\theta+h_n)}{h_n} \,.
\] 
Note that $f_n$ is well-defined as $h_n \neq 0$. Since $\frac{m_n}{N_n} \to \varphi$, it follows that $h_n \to 0$, and therefore $f_n$ converges to $E'$ uniformly on $\overline{\Omega}$. Since $E$ is non-constant analytic, $E'$ has only finitely many zeros in $\Omega$, say $n_0$. 
Therefore, by Hurtwitz's theorem, for $n$ sufficiently large, the number of zeroes of $f_n$ equals the number of zeros $E'$ counting multiplicity. This shows that \eqref{e:newboun} is bounded from above by $n_0$, contradicting \eqref{e:nonergodic_contra}. This completes the proof of this case.

\vskip2pt
\begin{case}
$\alpha \in (0,1)$.
\end{case}

As $\alpha$ is rational, let $\alpha=\frac{p}{q}$ with $\gcd(p,q)=1$. Consider the functions $\widetilde{E},\widetilde{F}:\R \to \C$ defined by
\begin{align*}
   \widetilde{E}(\theta) &=E\left(\frac{p}{q}\theta\right) \qquad  \text{ and } \qquad 
   \widetilde{F}(\theta) =F\left(\frac{p}{q}\theta\right). 
\end{align*}
Note that $\widetilde{E}$ and $\widetilde{F}$ have period 1, and that 
\begin{equation}\label{e:newdp}
\widetilde{E}\Big(\theta+\frac{q}{p} \varphi\Big)=E\Big(\frac{p}{q}\theta+\varphi\Big)=F\Big(\frac{p}{q}\theta\Big)=\widetilde{F}(\theta)
\end{equation} 
for all $\theta \in \R$. Also, observe that $\frac{q}{p}\varphi \leq 1$ since $\varphi \leq \frac{p}{q}$.

Suppose \eqref{e:nonergodic_contra} holds for some sequence $(N_n)$ and $(m_n \in \LL_{N_n} \setminus \{0\})$. Let $\widetilde{N}_n=pN_n$, $r'=qr \mod \widetilde{N}_n$ and $m_n^\prime=qm_n \mod \widetilde{N}_n$. Then $E\big( \frac{r+m_n}{N_n}\big)=\widetilde{E}\big( \frac{q}{p}\big(\frac{r+m_n}{N_n}\big)\big)=\widetilde{E}\big( \frac{r'+m_n^\prime}{\widetilde{N}_n}\big)$ and $F\big(\frac{r}{N_n} \big)=\widetilde{F}\big(\frac{q}{p} \frac{r}{N_n}\big)=\widetilde{F}\big(\frac{r'}{\widetilde{N}_n}\big)$. Let $\mathcal{I}=\{0,1,\dots,\lfloor \frac{q}{p}\rfloor\}$ and consider the map 
\begin{align*}
\LL_{N_n}&\to \mathcal{I}\times \LL_{\widetilde{N}_n} \\
r &\mapsto (u,r')
\end{align*}
where $u$ is the unique integer such that $ \frac{up}{q}\leq \frac{r}{N_n} 
< \frac{(u+1)p}{q}$. Let $r_1,r_2 \in \LL_{N_n}$ such that $ \frac{up}{q}\leq \frac{r_1}{N_n},\frac{r_2}{N_n}  
< \frac{(u+1)p}{q}$ for some integer $u$. Then the condition on $r_1,r_2$ is equivalent to $u \widetilde{N}_n \leq r_1',r_2' < (u+1) \widetilde{N}_n$ and therefore $r_1' \not\equiv r_2' \mod \widetilde{N}_n$ if $r_1 \neq r_2$. This shows that the map $r \mapsto (u,r')$ is one-one and therefore we obtain
    \begin{equation*}
   \sum_{u \in \mathcal{I}}\#\Big\{r'\in \LL_{\widetilde{N}_n} \cap [u\widetilde{N}_n,(u+1)\widetilde{N}_n):\widetilde{E}\Big(\frac{r'+m_n^\prime}{\widetilde{N}_n}\Big)-\widetilde{F}\Big(\frac{r'}{\widetilde{N}_n}\Big)=0\Big\} \geq cN_n=\frac{c}{p}\widetilde{N}_n \,\forall n.
\end{equation*} 
As the sets $\{\LL_{\widetilde{N}_n} \cap [u\widetilde{N}_n,(u+1)\widetilde{N}_n): u \in \mathcal{I}\}$ are disjoint, we get that
    \begin{equation}\label{eqn:non-ergodic_case2}
   \#\Big\{r'\in \LL_{\widetilde{N}_n} :\widetilde{E}\Big(\frac{r'+m_n^\prime}{\widetilde{N}_n}\Big)-\widetilde{F}\Big(\frac{r'}{\widetilde{N}_n}\Big)=0\Big\} \geq \frac{c}{p}\widetilde{N}_n \,\forall n.
\end{equation} 
 Therefore, by Case 1 and \eqref{e:newdp}, \eqref{eqn:non-ergodic_case2} holds if and only if $\frac{m_n^\prime}{\widetilde{N}_n}=\frac{q}{p}\varphi$ for all but finitely many $n$, i.e, $\frac{qm_n \mod \widetilde{N}_n}{\widetilde{N}_n}=\frac{q}{p}\varphi$ which is equivalent to $\frac{m_n}{N_n}=\varphi+u\frac{p}{q}$ for some $u$, for all but finitely many $n$. This completes the proof. 
\end{proof}

\begin{proof}[Proof of Lemma \ref{lem:1D_subseq_2conditions}]

Clearly, (i) implies (ii).  Next, to see that (iii) implies (i), note that if 
$E_s(\theta+\varphi)=E_w(\theta) \text{ for all } \theta \in [0,1)$ for some rational $\varphi=\frac{p}{q}$, then for $m=pN$ and $N_n=qN$, we get $E_s(\frac{r+m}{N_n}) = E_s(\frac{r}{N_n}+\frac{p}{q})=E_w(\frac{r}{N_n})$ for all $r$ and $N_n$, thus \eqref{e:cond_nonergodic} holds.

We now prove that (ii) implies (iii). As $\frac{m_n}{N_n}$ is bounded, we may assume there exist sequences $(m_n), (N_n) $  such that $\frac{m_n}{N_n} \rightarrow \varphi_0 \in [0,1)$ and 
\begin{equation}\label{e:infnu}
   \#\Big\{r\in \LL_{N_n}:E_s\Big(\frac{r+m_n}{N_n}\Big)-E_w\Big(\frac{r}{N_n}\Big)=0\Big\} \to \infty
\end{equation}

We know from \cite[Proposition 5.3]{Rainer2013} that the eigenvalues of an analytic family of normal matrices indexed by $\theta\in\R$, are analytic on $\R$.

 Consider then the analytic functions $g_n,g:\R \to \C$ given by
\begin{align}\label{e:defn_gn}
    g_n(\theta) &= E_s\Big(\theta+ \frac{m_n}{N_n}\Big)-E_w(\theta) \quad \forall n, \\
    g(\theta) &= E_s(\theta+\varphi_0)-E_w(\theta) \nonumber.
\end{align}
Following the argument of Claim 1 in the proof of Lemma~\ref{lem:non-ergodic_condition}, we get that $g \equiv0$ and thus $E_s(\theta+\varphi_0)\equiv E_w(\theta)$.
Furthermore, since $E_s$ and $E_w$ are eigenvalues of a matrix function on the torus, there exists an integer $k$ such that $E_s(\theta+k) \equiv E_s(\theta)$ \cite[Chapter 2]{Kato}. This implies that $E_s$ is either constant, or has a rational period $\alpha$ as before. If $E_s$ is constant, then (iii) follows immediately (take e.g. $\varphi=\frac{1}{2}$). So assume $E_s$ is not flat, so that $\alpha>0$.

Suppose that $\varphi_0=0$. Then $E_s(\theta)\equiv E_w(\theta)$ and $E_s(\theta+\alpha)=E_s(\theta)$. Hence, $E_w(\theta)\equiv E_s(\theta+\alpha)$. If $E_s$ is not flat, then applying Lemma~\ref{lem:non-ergodic_condition} with $\varphi:=\alpha$, we get that $\frac{m_n}{N_n}=\alpha+\alpha u_n\ge \alpha$ for all large $n$. This contradicts that $\frac{m_n}{N_n}\to 0$. Hence, $\frac{m_n}{N_n}\to 0$ implies that $E_s$ is flat.

As $E_s$ is assumed to be non-flat, we have $\varphi_0\neq 0$. We may have $\varphi_0>\alpha$, so we define $\varphi:=\varphi_0-u\alpha$, where the integer $u\ge 0$ is chosen so that $0< \varphi\le \alpha$. Then $E_s(\theta+\varphi)=E_s(\theta+\varphi_0)=E_w(\theta)$. It follows by Lemma~\ref{lem:non-ergodic_condition} that $\frac{m_n}{N_n}=\alpha u_n +\varphi$ for all large $n$. So $\frac{m_n}{N_n}=\alpha (u_n-u) +\varphi_0$ for all large $n$. But $\frac{m_n}{N_n}\to\varphi_0$. Hence, $u_n=u$ and $\frac{m_n}{N_n}=\varphi_0$ for all large $n$. In particular, $\varphi_0$ is rational. This completes the proof of (iii), and we proved the last assertion before.
\end{proof}

We have now proved all the results used in the proof of Theorems~\ref{thm:1D_l_infty}-\ref{thm:1D_subseq_general}.

\subsection{Equidistribution on subsets}
In this section, we assume that $U$ has no flat bands but violates \eqref{e:flo}. We are thus in scenario \eqref{e:cond_nonergodic}, and may define $M$ by \eqref{e:defnM}.  
We will prove that for every $1 \leq k \leq M$, over the sequence $N_n=Mn+k$, the quantum walk gets equidistributed on the subsets $B_i=\Z M+i$, $i=0,1,\ldots, M-1$. In other words,
\[
B_i = \begin{cases} \{i,M+i,\dots,nM+i\}& \text{if } 0\le i\le k-1,\\ \{i,M+i,\dots,(n-1)M+i\}&\text{if } k\le i \le M-1. \end{cases}
\]

The theorem is as follows (it corresponds to Theorem~\ref{thm:main1d} (4)).

\begin{thm}\label{thm:equidistribution}
    Let $U$ be a quantum walk \eqref{e:ugen}-\eqref{eqn:homogen_unitary} on $\Z$ with no flat bands, violating \eqref{e:flo}. Choose $N_n=nM+k$ for fixed $1 \leq k \leq M$. Let $\psi$ be an initial state of compact support with $\|\psi\|_2=1$. Then there exist constants $c_{u,\psi}^{(k)}, u=0,1,2,\ldots, M-1$, independent of $N_n$, such that for all observables $\phi^{(N_n)}$ with $\|\phi^{(N_n)}\|_\infty\le 1$, we have
    \begin{equation}\label{e:eqsub}
\lim_{N_n\to\infty}\Big|\lim_{T\to\infty}\langle \phi \rangle_{T,\psi} \\
- \sum_{u=0}^{M-1}c_{u,\psi}^{(k)}\langle \phi\rangle_{B_u}\Big|=0\,,
\end{equation}
where $\langle \phi\rangle_{B_i} = \frac{1}{|B_i|}\sum_{r\le |B_i|} \phi(rM+i)$ denotes the uniform average on $B_i$.

If $\gcd(M,k)=1$ (e.g. $k=1$), then the subsequence equidistributes,
\begin{equation}\label{e:equidissub}
\lim_{N\to\infty}\Big|\lim_{T\to\infty}\langle \phi \rangle_{T,\psi} \\
- \langle\phi\rangle\Big|=0\,.
\end{equation}
\end{thm}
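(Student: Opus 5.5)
We follow Steps 1–3 of the proof of Theorem~\ref{thm:pergra} with $a_j=\phi$ for all $j$. By Remark~\ref{rem:step3} these steps hold for any bounded observable. They give
\[
\lim_{T\to\infty}\langle\phi\rangle_{T,\psi}=\langle\phi\rangle+\langle\psi,\opn(b')\psi\rangle .
\]
The whole task is therefore to identify the limit of the $b'$ term along $N_n=nM+k$.

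\textbf{Splitting the modes.} By \eqref{e:bdop}, it suffices to study
\[
\frac1{N}\sum_{m\ne0}\widehat\phi(m)e_m^{(N)}(v)\sum_{r}\sum_{s,w}\mathbf 1_{A_m}(r,s,w)\,P_s\Big(\tfrac{r+m}{N}\Big)(i,j)\,P_w\Big(\tfrac rN\Big)(j,\ell)\,\ee^{2\pi\ii r(v-k')/N}.
\]
We split the sum over $m\neq 0$ into two classes.

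\emph{Bad modes.} These are the $m$ for which $\#\{r:E_s(\frac{r+m}{N})=E_w(\frac rN)\}$ is unbounded for some pair $(s,w)$. By Lemmas~\ref{lem:non-ergodic_condition} and \ref{lem:1D_subseq_2conditions}, for large $n$ such $m$ must satisfy $m/N_n=\varphi$ exactly, where $\varphi$ is a rational with $E_s(\theta+\varphi)\equiv E_w(\theta)$. Its denominator divides $M$ by \eqref{e:defnM}.

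\emph{Good modes.} For all other $m$, the counts are uniformly bounded. The Cauchy--Schwarz argument of Theorem~\ref{thm:1D_l_infty} then shows their total contribution is $O(N^{-1/2})$.

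\textbf{Counting the bad modes.} We need $m=\varphi N_n\in\Z$ with $\varphi=p/q$ and $q\mid M$. Since $N_n=nM+k\equiv k \pmod q$, this forces $q\mid k$, i.e.\ $\varphi N_n=p(nM+k)/q$. So only the finitely many fractions $u/d$ with $d\mid\gcd(M,k)$ can occur, and the set is independent of $n$. If $\gcd(M,k)=1$, no bad modes exist, and \eqref{e:equidissub} follows immediately.

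\textbf{Computing the bad-mode contribution.} For each surviving $m=\frac{u}{d}N_n$, the indicator $\mathbf 1_{A_m}$ equals $1$ for every $r$ and for the relevant pairs $(s,w)$, up to $O(1)$ exceptional $r$. Hence the inner $r$-sum is a Riemann sum. It converges to
\[
\int_0^1\sum_{(s,w)}P_s(\theta+\tfrac ud)(i,j)\,P_w(\theta)(j,\ell)\,\ee^{2\pi\ii\theta(v-k')}\,\dd\theta ,
\]
with Riemann integrability justified as in \S\ref{sec:weak}. The prefactor is
\[
\tfrac1N\,\widehat\phi(m)\,e^{(N)}_m(v)=\frac1N\sum_x\phi(x)\,\ee^{-2\pi\ii ux/d}\,\ee^{2\pi\ii uv/d}.
\]
This is a linear combination of the averages of $\phi$ over the residue classes $x\equiv c\pmod d$. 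Because $d\mid M$, each such average is a combination of the $\langle\phi\rangle_{B_i}$, weighted by $|B_i|/N\to 1/M$. Collecting all terms, including the $m=0$ term $\langle\phi\rangle=\sum_i\frac{|B_i|}{N}\langle\phi\rangle_{B_i}$, gives coefficients $c^{(k)}_{u,\psi}$. They depend only on $\psi$, $k$ and the Floquet projectors, and not on $n$ or $\phi$, which proves \eqref{e:eqsub}.

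\textbf{Main obstacle.} The hardest step is making the bad/good dichotomy uniform in $m$. We need a uniform bound $C$ on the counts for all $m$ not exactly of the form $\varphi N_n$, and not merely along sequences. I would argue by contradiction: assume a sequence $m_n$ with unbounded counts, then apply Lemma~\ref{lem:non-ergodic_condition} (via Lemma~\ref{lem:1D_subseq_2conditions}) to force $m_n/N_n=\varphi$ eventually. This is exactly how the bound $C$ was obtained in the proof of Theorem~\ref{thm:1D_l_infty}.
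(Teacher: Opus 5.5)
Your proposal is correct and is essentially the paper's proof: the paper also splits the modes into $A_1(n)=\{\rho N_n/\gcd(M,k)\}$, which your divisibility argument ($q\mid M$, $N_n\equiv k \pmod q$) identifies as the only possible bad modes, and its complement; it obtains the uniform count bound on the complement by the same contradiction argument through Lemmas~\ref{lem:non-ergodic_condition} and \ref{lem:1D_subseq_2conditions}, discards those modes with the Cauchy--Schwarz estimate of Theorem~\ref{thm:1D_l_infty}, and evaluates the $A_1(n)$ modes by writing $\widehat\phi(m)$ in terms of the $\langle\phi\rangle_{B_u}$ and passing to Riemann integrals of $P_s(\theta+\varphi)P_w(\theta)$. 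The differences are cosmetic: you peel off $m=0$ as exactly $\langle\phi\rangle$ and get \eqref{e:equidissub} directly from the absence of bad modes, whereas the paper keeps $m=0$ inside $A_1(n)$ and reads off $c^{(k)}_{u,\psi}=\frac1M$. There is also a small normalization slip: it is $\widehat\phi(m)e_m^{(N)}(v)$, not $\frac1N\widehat\phi(m)e_m^{(N)}(v)$, that equals $\frac1N\sum_x\phi(x)\ee^{-2\pi\ii ux/d}\ee^{2\pi\ii uv/d}$, and the remaining $\frac1N$ belongs to the Riemann sum.
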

The constants $c_{u,\psi}^{(k)}$ are in fact explicit, see \eqref{e:cupsi}.
\begin{proof}
     The proof proceeds along the same lines as the proof of Theorem \ref{thm:pergra} and we use the same notations. Recall that there, the set of nodes was divided into two classes, $m=0$ and $m \neq 0$, and it was proved in Section~\ref{subsec:step4} that the contribution of the second class is asymptotically negligible.
Here too, we divide  the choices of $m \in \LL_{N_n}$ into two sets
\begin{align*}
    A_1(n)&= \Big\{ 0, \frac{N_n}{\gcd(k,M)},\frac{2N_n}{\gcd(k,M)},\dots, \frac{(\gcd(k,M)-1)N_n}{\gcd(k,M)}\Big\} \quad \text{and}\\
    A_2(n)&= \LL_{N_n} \setminus A_1.
\end{align*}
Note that $A_1(n)$ is always a finite set with cardinality $\gcd(k,M)$, independent of $n$. Furthermore, if $\gcd(k,M)=1$, then $A_1(n)=\{0\}$ for all $n$. As in \eqref{e:limpieces}, 
\begin{equation}\label{e:newdeta}
\lim_{T\to\infty} \langle \phi\rangle_{T,\psi}=\langle \psi,\opn(b_1)\psi\rangle+\langle\psi,\opn(b')\psi\rangle\,,
\end{equation}
where $b_1$ and $b'$ now contain the contributions of $m\in A_1(n)$ and $m\in A_2(n)$, respectively.

We make the following claim.

\begin{claim*}
There exists a constant $C<\infty$ such that
\begin{equation}\label{e:a2good}
\sup_n\sup_{m\in A_2(n)} \#\Big\{r \in \LL_{N_n}:E_s\Big(\frac{r+m}{N_n}\Big)-E_w\Big(\frac{r}{N_n}\Big)=0\Big\} \leq C \qquad \forall \,s,w.
\end{equation} 
\end{claim*}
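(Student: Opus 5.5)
We argue by contradiction and reduce to the rigidity statements of Lemmas~\ref{lem:non-ergodic_condition} and \ref{lem:1D_subseq_2conditions}. Since there are only finitely many pairs $(s,w)$, it suffices to fix one pair. Suppose \eqref{e:a2good} fails for this pair. Then there are indices $n_j\to\infty$ and $m_j\in A_2(n_j)$ (in particular $m_j\neq 0$) such that
\[
\#\Big\{r\in\LL_{N_{n_j}}:E_s\Big(\frac{r+m_j}{N_{n_j}}\Big)-E_w\Big(\frac{r}{N_{n_j}}\Big)=0\Big\}\to\infty .
\]
Passing to a further subsequence, we may assume $\frac{m_j}{N_{n_j}}\to\varphi_0\in[0,1]$.

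We then rerun the proof of (ii)$\Rightarrow$(iii) in Lemma~\ref{lem:1D_subseq_2conditions}. First, the Hurwitz argument of Claim 1 gives $E_s(\theta+\varphi_0)\equiv E_w(\theta)$. Next, $\varphi_0=0$ is impossible: by the last assertion of Lemma~\ref{lem:1D_subseq_2conditions}, it would force $E_s$ and $E_w$ to be flat bands, contradicting the hypothesis. For the same reason $\varphi_0=1$ is excluded, after reducing mod $1$ using integer periodicity. Now let $\alpha$ be the (rational, since $E_s$ is non-flat) period of $E_s$, and set $\varphi=\varphi_0-u\alpha\in(0,\alpha]$. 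Lemma~\ref{lem:non-ergodic_condition} then says that for all large $j$, $\frac{m_j}{N_{n_j}}=\varphi+u_j\alpha$. Since this converges to $\varphi_0$ and takes values in a discrete set, exactly as at the end of the proof of Lemma~\ref{lem:1D_subseq_2conditions} we get
\[
\frac{m_j}{N_{n_j}}=\varphi_0 \qquad\text{for all large } j .
\]
In particular $\varphi_0=p/q$ is rational and satisfies $E_s(\theta+p/q)\equiv E_w(\theta)$. Hence $q$ lies in the set defining $M$ in \eqref{e:defnM}, so $q\mid M$. We may therefore write $\varphi_0=t/M$ with $0<t<M$.

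The final step is arithmetic. From $Mm_j=tN_{n_j}=t(n_jM+k)$ we get $M\mid tk$. Put $g=\gcd(k,M)$; then $\frac{M}{g}\mid t\frac{k}{g}$ with $\gcd(\frac Mg,\frac kg)=1$, so $\frac{M}{g}\mid t$. Thus $t=\ell M/g$ for some $0<\ell<g$, and consequently $m_j=\ell N_{n_j}/g\in A_1(n_j)$. This contradicts $m_j\in A_2(n_j)$, which proves the claim with some finite $C$.

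The main obstacle is the middle step: upgrading convergence of $m_j/N_{n_j}$ to exact equality, and excluding the limit $0$, for a sequence that is only known to produce unboundedly many coincidences rather than a positive proportion. We intend to handle it by observing that the proofs of Lemma~\ref{lem:non-ergodic_condition} and of Lemma~\ref{lem:1D_subseq_2conditions} only ever use the hypothesis ``count $\to\infty$''. This is exactly the form of \eqref{e:nonergodic_contra} and \eqref{e:infnu}, so those proofs apply verbatim to our subsequence. The remaining care is to make sure the no-flat-band hypothesis rules out $\varphi_0\in\{0,1\}$ before Lemma~\ref{lem:non-ergodic_condition} is invoked.
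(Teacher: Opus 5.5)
Your proposal is correct and is essentially the paper's own argument: a contradicting sequence $m_j\in A_2(n_j)$, the Hurwitz step from the proof of Lemma~\ref{lem:1D_subseq_2conditions} giving $E_s(\theta+\varphi_0)\equiv E_w(\theta)$, Lemma~\ref{lem:non-ergodic_condition} forcing $m_j/N_{n_j}=\varphi_0=t/M$ eventually, and the divisibility argument $M\mid tk$ that puts $m_j$ into $A_1(n_j)$. You are, if anything, more explicit than the paper in ruling out $\varphi_0\in\{0,1\}$ via the no-flat-band hypothesis, and in noting that the cited lemmas only use divergence of the count rather than a positive proportion.
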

\begin{proof}
Suppose on the contrary that the claim is false. Then  there exists a sequence $m_n \in  A_2(n)$ such that 
\[
\limsup_n   \#\Big\{r\in \LL_{N_n}:E_s\Big(\frac{r+m_n}{N_n}\Big)-E_w\Big(\frac{r}{N_n}\Big)=0\Big\} =\infty,
\]
 Proceeding as in the proof of  Lemma~\ref{lem:1D_subseq_2conditions}, we can find a further subsequence, which we also call $(m_n)$, such that $\frac{m_n}{N_n} \to \varphi$ and $E_s(\theta+\varphi)=E_w(\theta)$ for all $\theta$. Moreover, since $\frac{m_n}{N_n} \to \varphi$ it follows by Lemma~\ref{lem:non-ergodic_condition} that $\frac{m_n}{N_n}=\varphi$ for all but finitely many $n$.
 By the definition of $M$ in \eqref{e:defnM}, $\varphi$ is of the form $\varphi=\frac{t}{M}$ for some positive integer $t<M$. Thus, $\frac{m_n}{N_n}=\frac{t}{M}$, which implies 
$m_nM=nMt+kt$. This happens only if $M$ divides $kt$, i.e. $t\cdot  \frac{k}{\gcd(M,k)}=\tilde{\rho}\cdot \frac{M}{\gcd(M,k)}$ for some integer $\tilde{\rho}$. Since $\gcd \left(\frac{k}{\gcd(M,k)},\frac{M}{\gcd(M,k)} \right)=1$ and $t \in \Z_+$, it follows that $\frac{k}{\gcd(M,k)}$ divides $\tilde{\rho}$. Therefore $t$ is of the form $t=\rho \frac{M}{\gcd(M,k)}$ for a positive integer $\rho$. Substituting this back to $\frac{m_n}{N_n}=\frac{t}{M}$, we get that $m_n=\rho \frac{N_n}{\gcd(M,k)}$, a contradiction, since $m_n \in A_2(n)$. This proves the claim.
\end{proof}

Now by repeating the calculations in the proof of Theorem~\ref{thm:1D_l_infty}, with $a_j=\phi$ $\forall j$, we deduce that the contribution of the terms $m \in A_2(n)$ vanish asymptotically. More precisely 
\begin{equation}\label{e:rema2}
  \lim_{N_n\to\infty} \langle \psi, \opn(b')\psi\rangle = 0 \,.
\end{equation}

Now, we look at the contribution of the terms in $A_1(n)$.

For a fixed $n$, and $m \in A_1(n)$ such that $\frac{m}{N_n}=\frac{\rho}{\gcd(M,k)}$, $\rho\in\N$, we have
\begin{align*} 
   \widehat{\phi}(m) &=\frac{1}{\sqrt{N_n}}\sum_{r \in \LL_{N_n}}\ee^{-2\pi \ii r (m/N_n)}\phi(r)=\frac{1}{\sqrt{N_n}}\sum_{r \in \LL_{N_n}}\ee^{-2\pi \ii r \frac{\rho}{\gcd(M,k)}}\phi(r)\\
   &=\frac{1}{\sqrt{N_n}}\sum_{u=0}^{M-1}\sum_{r\in B_u} \ee^{-2\pi \ii r \frac{\rho}{\gcd(M,k)}}\phi(r) =\frac{1}{\sqrt{N_n}}\sum_{u=0}^{M-1}\ee^{-2\pi \ii u\frac{\rho}{\gcd(M,k)}}\sum_{r \in B_u}\phi(r)\\
   &=\frac{1}{\sqrt{N_n}}\left(\sum_{u=0}^{k-1}\ee^{-2\pi \ii  u\frac{\rho}{\gcd(M,k)}}(n+1)\langle \phi \rangle_{B_u}+\sum_{u=k}^{M-1}\ee^{-2\pi \ii u\frac{\rho}{\gcd(M,k)}}n\langle \phi \rangle_{B_u}\right)\\
   &=\frac{1}{\sqrt{N_n}}\left(\sum_{u=0}^{M-1}\ee^{-2\pi \ii  u\frac{m}{N_n}}n\langle \phi \rangle_{B_u}+\sum_{u=0}^{k-1}\ee^{-2\pi \ii  u\frac{m}{N_n}}\langle \phi \rangle_{B_u}\right),
\end{align*}
where we used that $\ee^{-2\pi \ii r \frac{\rho}{\gcd(M,k)}} = \ee^{-2\pi \ii u \frac{\rho}{\gcd(M,k)}}$ for $r\in B_u$, and the second-to-last equality follows from noting that $B_u$ has $(n+1)$ elements if $0\leq u \leq k-1$ and has $n$ elements otherwise. Here, the convention is that an empty sum is defined as zero. 
Therefore, for $m\in A_1(n)$ and $v \in \LL_{N_n}$, we have
\[
   \widehat{\phi}(m) e_{m}^{(N_n)}(v) e_r^{(N_n)}(v)
   =\frac{1}{N_n}\sum_{u=0}^{M-1}\ee^{\frac{-2\pi \ii  u m}{N_n}}n\langle \phi \rangle_{B_u}e_{m+r}^{(N_n)}(v)+\frac{1}{N_n}\sum_{u=0}^{k-1}\ee^{\frac{-2\pi \ii  u m}{N_n}}\langle \phi \rangle_{B_u}e_{m+r}^{(N_n)}(v),
\]
where we used that $e_{m}^{(N_n)}(v)  e_r^{(N_n)}(v)=\frac{1}{\sqrt{N_n}} e_{m+r}^{(N_n)}(v)$ for all $v \in \LL_{N_n}$. It follows that
\begin{align*}
	&\sum_{v \in \LL_{N_n}} \overline{\psi_i(v)}\widehat{\phi}(m) e_{m}^{(N_n)}(v) e_r^{(N_n)}(v) \\
    &=\frac{n}{N_n}\sum_{v \in \LL_{N_n}}\sum_{u=0}^{M-1}\overline{\psi_i(v)}\ee^{\frac{-2\pi \ii  u m}{N_n}}\langle \phi \rangle_{B_u}e_{m+r}^{(N_n)}(v)+\frac{1}{N_n}\sum_{v \in \LL_{N_n}}\sum_{u=0}^{k-1}\overline{\psi_i(v)}\ee^{\frac{-2\pi \ii  u m}{N_n}}\langle \phi \rangle_{B_u}e_{m+r}^{(N_n)}(v)\\
    &=\frac{n}{N_n}\sum_{u=0}^{M-1}\overline{\widehat{\psi_i}(r+m)}\ee^{\frac{-2\pi \ii u m}{N_n}}\langle \phi \rangle_{B_u}+\frac{1}{N_n}\sum_{u=0}^{k-1}\overline{\widehat{\psi_i}(r+m)}\ee^{\frac{-2\pi \ii  um}{N_n}}\langle \phi \rangle_{B_u},
\end{align*}
where the last equality follows since $\widehat{\psi}(r+m)=\sum_{v \in \LL_N} \overline{e_{r+m}^{(N)}(v)}\psi(v)$.

Now by \cite[Proposition 5.3]{Rainer2013}, the eigenvalues of an analytic family of normal matrices indexed by $\theta\in\R$, are analytic on $\R$. This implies that there exists a simply connected domain $D$ containing $[0,1]$ such that $E_s,E_w$ have unique analytic extension to $D$. It is well-known (see \cite[Chapter II.4]{Kato}) that the multiplicity of an eigenvalue $E_s(\theta)$ is independent of $\theta$ in simply connected domain, and as a consequence, the number of distinct eigenvalues of $\Utheta$ is independent of $\theta$. Using this observation, \eqref{e:opf} and Lemma~\ref{lem:limpse}, we obtain that the contribution of $m \in A_1(n)$ in $\langle \psi,\opn(b_1)\psi\rangle$  (with $a_j=\phi$ $\forall \, j$) is 
\begin{multline*}
    	\sum_{r,v \in \LL_{N_n}} \sum_{i,j,\ell=1}^{\nu} \sum_{s,w=1}^{\nu^\prime} \mathbf{1}_{S_r}(m,s,w) P_{E_s}\Big(\frac{r+m}{N_n}\Big)(i,j)\overline{\psi_i(v)}\widehat{\phi}(m) \\ P_{E_w}\Big(\frac{r}{N_n}\Big)(j,\ell)e_{m}^{(N)}(v)\widehat{\psi_{\ell}}(r) e_r^{(N_n)}(v),
\end{multline*}
where $\nu^\prime$ is the number of distinct eigenvalues of $\Utheta$.

 Substituting the sum over $v$ that we just computed previously and simplifying, we get that the contribution from $m \in A_1(n)$ in $\langle \psi,\opn(b_1)\psi\rangle$ is 
\begin{multline}
    \frac{n}{N_n}\sum_{r \in \LL_{N_n}} \sum_{j=1}^\nu\sum_{s,w=1}^{\nu^\prime}  \mathbf{1}_{S_r}(m,s,w) \Bigg[\overline{P_{E_s}\Big(\frac{r+m}{N_n}\Big)\widehat{\psi}(r+m)}\Bigg]_{j} \Bigg[P_{E_w}\Big(\frac{r}{N_n}\Big)\widehat{\psi}(r)\Bigg]_{j} \sum_{u=0}^{M-1}\ee^{\frac{-2\pi \ii u m}{N_n}} \langle \phi \rangle_{B_u}\\
       +\frac{1}{N_n} \sum_{r \in \LL_{N_n}} \sum_{j=1}^\nu\sum_{s,w=1}^{\nu^\prime}  \mathbf{1}_{S_r}(m,s,w) \Bigg[\overline{P_{E_s}\Big(\frac{r+m}{N_n}\Big)\widehat{\psi}(r+m)}\Bigg]_{j} \Bigg[P_{E_w}\Big(\frac{r}{N_n}\Big)\widehat{\psi}(r)\Bigg]_{j}\sum_{u=0}^{k-1}\ee^{\frac{-2\pi \ii u m}{N_n}} \langle \phi \rangle_{B_u}\label{eqn:contri_u_twoterms}
\end{multline}

 Summarizing, by \eqref{e:newdeta}, \eqref{e:rema2} and \eqref{eqn:contri_u_twoterms}, we have shown so far that 
\begin{equation}\label{e:preresult}
\lim_{N_n\to\infty}\Big|\lim_{T\to\infty} \langle \phi\rangle_{T,\psi} - \sum_{u=0}^{M-1} c_{u,\psi}^{(N_n,k)}\langle \phi\rangle_{B_u}\Big|= 0
\end{equation}
for some complicated coefficients $c_{u,\psi}^{(N_n,k)}$ which we now need to simplify.

We first show that the second term of \eqref{eqn:contri_u_twoterms}  goes to zero, as $n \to \infty$. For this, note that for any initial state $\psi=\sum_{p \in \Lambda} \psi(p)\delta_p$ with compact $\Lambda \subseteq \Z$, we have $|\widehat{\psi}(r)|=|\frac{1}{\sqrt{N_n}}\sum_{k \in \LL_{N_n}}\ee^{-2\pi \ii r\frac{k}{N_n}} \psi(k)| \leq \frac{|\Lambda|}{\sqrt{N_n}}$. Therefore,  $ \Big|\Big[P_{E_w}\big(\frac{r}{N_n}\big)\widehat{\psi}(r)\Big]_{j}\Big|\leq \nu  \|\widehat{\psi}\|_\infty \leq \frac{\nu|\Lambda|}{\sqrt{N_n}}$, and the same upper bound holds for $\Big|\Big[\overline{P_{E_s}\big(\frac{r+m}{N_n}\big)\hat{\psi}(r+m)}\Big]_{j}\Big|$. Furthermore, all the terms inside the summation in \eqref{eqn:contri_u_twoterms} are bounded above. It follows that the second term in \eqref{eqn:contri_u_twoterms} is bounded above by $\frac{1}{N_n} \sum_{r \in \LL_{N_n}} \frac{c}{N_n} \leq \frac{c}{N_n}$ for some constant $c$. As a result, the second term in \eqref{eqn:contri_u_twoterms} goes to zero as $n \to \infty$. 

Now, we look at the first term in \eqref{eqn:contri_u_twoterms}. Note that 
\begin{align}
&\sum_{j=1}^\nu\Bigg[\overline{P_{E_s}\Big(\frac{r+m}{N_n}\Big)\widehat{\psi}(r+m)}\Bigg]_{j} \Bigg[P_{E_w}\Big(\frac{r}{N_n}\Big)\widehat{\psi}(r)\Bigg]_{j} \nonumber \\
    &= \sum_{j=1}^\nu \sum_{i_1,i_2=1}^\nu \overline{P_{E_s}\Big(\frac{r+m}{N_n}\Big)(j,i_1)}\overline{\widehat{\psi}_{i_1}(r+m)} P_{E_w}\Big(\frac{r}{N_n}\Big)(j,i_2)\widehat{\psi}_{i_2}(r) \nonumber\\
    &= \sum_{i_1,i_2=1}^\nu P_{E_s}\Big(\frac{r+m}{N_n}\Big)P_{E_w}\Big(\frac{r}{N_n}\Big)(i_1,i_2)\overline{\widehat{\psi}_{i_1}(r+m)}\widehat{\psi}_{i_2}(r). \label{eqn:productprojection}
\end{align}

Also,
\begin{align}
    \overline{\widehat{\psi}_{i_1}(r+m)}\widehat{\psi}_{i_2}(r)&=\frac{1}{N_n} \sum_{k_1,k_2 \in \LL_{N_n}} \ee^{2\pi \ii k_1(\frac{r+m}{N_n})}\overline{\psi_{i_1}(k_1)}\ee^{-2\pi \ii k_2\frac{r}{N_n}}\psi_{i_2}(k_2) \nonumber\\
    &=\frac{1}{N_n} \sum_{k_1,k_2 \in \Lambda} \ee^{2\pi \ii (k_1-k_2)(\frac{r}{N_n})}\overline{\psi_{i_1}(k_1)}\ee^{2\pi \ii k_1\frac{m}{N_n}}\psi_{i_2}(k_2). \label{eqn:product_psi}
\end{align}

The coefficient of $\langle \phi \rangle_{B_u}$ in the collective contribution of $A_1(n)$ is obtained by summing over all $m \in A_1(n)$. Substituting back \eqref{eqn:productprojection} and \eqref{eqn:product_psi} in \eqref{eqn:contri_u_twoterms}, we have the coefficient of $\langle \phi \rangle_{B_u}$ as 
\begin{align*}
&\frac{n}{N_n}\sum_{m \in A_1(n)}\sum_{r \in \LL_{N_n}} \sum_{j=1}^\nu\sum_{\substack{s,w=1 \\ E_s(\frac{r+m}{N_n})\equiv E_w(\frac{r}{N_n})}}^{\nu^\prime}  \Bigg[\overline{P_{E_s}\Big(\frac{r+m}{N_n}\Big)\widehat{\psi}(r+m) }\Bigg]_{j}\Bigg[P_{E_w}\Big(\frac{r}{N_n}\Big)\widehat{\psi}(r)\Bigg]_{j} \ee^{\frac{-2\pi \ii u m}{N_n}} \\
&=\frac{n}{N_n^2}\sum_{m\in A_1(n)}\sum_{r \in \LL_{N_n}} \sum_{i_1,i_2=1}^\nu\sum_{k_1,k_2 \in \Lambda}\sum_{\substack{s,w=1\\ E_s(\frac{r+m}{N_n})\equiv E_w(\frac{r}{N_n})}}^{\nu^\prime}  P_{E_s}\Big(\frac{r+m}{N_n}\Big)P_{E_w}\Big(\frac{r}{N_n}\Big)(i_1,i_2) \\
&\hspace{30mm}\times \ee^{2\pi \ii (k_1-k_2)(\frac{r}{N_n})}\overline{\psi_{i_1}(k_1)}\psi_{i_2}(k_2) \ee^{-2\pi \ii (u-k_1)m /N_n} \,.
\end{align*}
Now for each $m \in A_1(n)$, we have $\{r: E_s(\frac{r+m}{N_n})\equiv E_w(\frac{r}{N_n})\} = \{r: E_s(\frac{r}{N_n}+\varphi)\equiv E_w(\frac{r}{N_n})\}$, where $\varphi:=\frac{\rho}{\gcd(M,k)}$ is some rational number independent of $N_n$. For fixed $s,w$, if the cardinality of this set is bounded independently of $N_n$, then by the triangle inequality,
\begin{multline*}
\frac{n}{N_n^2}\sum_{r \in \LL_{N_n}} \sum_{i_1,i_2=1}^\nu\sum_{k_1,k_2 \in \Lambda}  \mathbf{1}_{S_r}(m,s,w) P_{E_s}\Big(\frac{r+m}{N_n}\Big)P_{E_w}\Big(\frac{r}{N_n}\Big)(i_1,i_2) \ee^{2\pi \ii (k_1-k_2)(\frac{r}{N_n})}\\
\times \overline{\psi_{i_1}(k_1)}\psi_{i_2}(k_2) \ee^{-2\pi \ii (u-k_1)m /N_n}\to 0.
\end{multline*}
So we focus on the set of $m\in A_1(n)$ such that $\limsup_n\#\{r: E_s(\frac{r}{N_n}+\varphi)\equiv E_w(\frac{r}{N_n})\}= \infty$. As in Lemma~\ref{lem:1D_subseq_2conditions}, this holds if and only if $E_s(x+\varphi) \equiv E_w(x)$. Therefore, we may replace the condition in the summation of $s,w$, $ E_s(\frac{r+m}{N_n})\equiv E_w(\frac{r}{N_n})$ with $ E_s(x+\varphi)\equiv E_w(x)$, which does not depend on $r$. So the only $r$-dependent term is
\[
\frac{1}{N_n}\sum_{r \in \LL_{N_n}}   P_{E_s}\Big(\frac{r}{N_n}+\varphi\Big)P_{E_w}\Big(\frac{r}{N_n}\Big)(i_1,i_2) \ee^{2\pi \ii (k_1-k_2)(\frac{r}{N_n})} \ee^{-2\pi \ii (u-k_1)\varphi},
\] 
which, as a Riemann sum, converges to
\[
\int_{0}^1P_{E_s}(\theta+\varphi)P_{E_w}(\theta)(i_1,i_2) e^{2\pi \ii (k_1-k_2)\theta} \ee^{-2\pi \ii (u-k_1)\varphi} \dd \theta,
\]
for all fixed $\varphi,i_1,i_2,k_1,k_2,s$ and $w$ (for this convergence, one can either use the argument in the later proof of \eqref{e:limterm}, or the fact that eigenprojections in dimension one can be chosen to be globally analytic, in particular continuous, see \cite[Prp. 5.3]{Rainer2013}). Now, noting that as $n \to \infty$, $\frac{n}{N_n} \to \frac{1}{M}$, we get that the coefficient of $\langle \phi \rangle_{B_u}$ converges to
\begin{multline*}
    c_{u,\psi}^{(k)}=  \frac{1}{M} \sum_{k_1,k_2 \in \Lambda}\sum_{i_1,i_2=1}^\nu \overline{\psi_{i_1}(k_1)}\psi_{i_2}(k_2) \sum_{\substack{0 \leq \varphi<1\\ \varphi\gcd(M,k) \in \Z}} \sum_{\substack{s,w=1 \\ E_s(x+\varphi)\equiv E_w(x)}}^{\nu^\prime} \ee^{2\pi \ii (u-k_1)\varphi}\\
    \cdot  \int_{0}^1 P_{E_s}\left(\theta+\varphi\right) P_{E_w}(\theta)(i_1,i_2)  \ee^{2\pi \ii (k_1-k_2)\theta} \dd\theta.\label{eqn:cm_psi}
\end{multline*}
where we recall that $\varphi=\frac{\rho}{\gcd(M,k)}$, so the summation over $0\le \rho\le \gcd(M,k)-1$ transforms to a summation over $\varphi$ with $\varphi \gcd(M,k) \in \Z$ and $0\le \varphi<1$.

Note that when $\varphi=0$, the condition over $s,w$ becomes $E_s(x)\equiv E_w(x)$, implying $P_{E_s}(\theta)P_{E_w}(\theta)=P_{E_s}(\theta)$. Since, $\sum_s P_{E_s}(\theta)=\mathrm{Id}_\nu$, we have $\sum_s P_{E_s}(\theta)(i_1,i_2)=1$ if $i_1=i_2$ and zero otherwise. Also note that $\int_0^1\ee^{2\pi \ii (k_1-k_2)\theta}=0 $ if $k_1\neq k_2$ and the term $e^{2\pi \ii (k_1-u)\varphi}=1$ as $\varphi=0$. As a result, the term corresponding to $\varphi=0$ becomes
\[
\frac{1}{M} \sum_{k_1 \in \Lambda}\sum_{i_1=1}^\nu  \overline{\psi_{i_1}(k_1)}\psi_{i_1}(k_1)  =\frac{1}{M}
\]
since $\|\psi\|_2=1$, and $c_{u,\psi}^{(k)}$ can therefore also be written as
\begin{multline}\label{e:cupsi}
    c_{u,\psi}^{(k)}= \frac{1}{M}+ \frac{1}{M} \sum_{k_1,k_2 \in \Lambda}\sum_{i_1,i_2=1}^\nu  \overline{\psi_{i_1}(k_1)}\psi_{i_2}(k_2) \sum_{\substack{0 < \varphi<1\\ \varphi\gcd(M,k) \in \Z}} \\
     \sum_{\substack{s,w=1 \\ E_s(x+\varphi)\equiv E_w(x)}}^{\nu^\prime}\ee^{2\pi \ii (k_1-u)\varphi} \int_{0}^1 P_{E_s}\left(\theta+\varphi\right) P_{E_w}(\theta)(i_1,i_2)  \ee^{2\pi \ii (k_1-k_2)\theta} \dd\theta.
\end{multline}
Using  \eqref{e:preresult} and  $|\sum_{u=0}^{M-1} c_{u,\psi}^{(N_n,k)}\langle \phi\rangle_{B_u} - \sum_{u=0}^{M-1} c_{u,\psi}^{(k)}\langle \phi\rangle_{B_u}|\to 0$, we obtain \eqref{e:eqsub}. Statement \eqref{e:equidissub} also follows, see Remark~\ref{rem:cupsi}.
\end{proof}

\begin{rem}\label{rem:cupsi}
\begin{enumerate}[(i)]
    \item  For all $k$ and $\psi$, $c_{u, \psi}^{(k)}=c_{u+\gcd(M,k),\psi}^{(k)}$. This is because the only term in \eqref{e:cupsi} that is a function of $u$ is $\ee^{2\pi \ii (k_1-u)\varphi}$ and $\ee^{2\pi \ii (k_1-u)\varphi}=\ee^{2\pi \ii (k_1-u-\gcd(M,k))\varphi}$ since $\varphi \gcd(M,k)\in\Z$.
    \item An immediate consequence of the above observation is that if $\gcd(k,M)=1$, then $c_{u,\psi}$ is the same for all values of $u$ and equal to $\frac{1}{M}$, which means that the walk gets equidistributed for the sequence $N_n=Mn+k$.
    \item For a fixed initial state $\psi$, the semiclassical limits of the walk are completely determined by the coefficients $c_{u,\psi}^{(k)}$. Note that the only term in \eqref{e:cupsi} that contains $k$ is $\gcd(M,k)$.
    As a result, the number of semiclassical limits of the walk is at most the number of possible choices of $\gcd(M,k)$, which is exactly the number of factors of $M$. If the prime factorization of $M$ is $M=p_1^{\alpha_1}p_2^{\alpha_2}\cdots p_\ell^{\alpha_\ell}$, then the number of distinct semiclassical limits of $\mu_{T,\psi}^{N}$ is thus at most $(\alpha_1+1)(\alpha_2+1)\cdots (\alpha_\ell+1)$.
\end{enumerate}   
\end{rem}

\begin{exa}\label{exa:39}
  Consider the quantum walk on $\Z$ given by the unitary matrix:
\[
U=\begin{pmatrix}
    S_1 & 0 \\
    0 & S_{-2}
\end{pmatrix}
\begin{pmatrix}
    1 & 0 \\
    0 & 1
\end{pmatrix}=
\begin{pmatrix}
    S_1 & 0 \\
    0 & S_{-2}
\end{pmatrix}.
\]
This walk violates \eqref{e:mainlim} for bounded observables, since for $N_n=2n$ and $\psi=\delta_0 \otimes \begin{pmatrix}
    0 \\
    1
\end{pmatrix}$
the measure $\mu_{T,\psi}^N$ is supported on even integers for all $T \geq 1$. For example, taking $\phi=\mathbf{1}_{\text{even}}$ yields $\langle\phi\rangle_{T,\psi}=1$ and $\langle\phi\rangle=\frac{1}{2}$. In contrast, if $\widetilde{\psi}=\delta_0 \otimes \begin{pmatrix}
    1 \\
    0
\end{pmatrix}$, then the measure $\mu_{T,\widetilde{\psi}}^N$ is uniform on $\LL_N$. Let us compare these direct observations with what we get from Theorem~\ref{thm:equidistribution}. We compute the values of the coefficients $c_{u,\psi}^{(k)}$ for this quantum walk for a general initial state $\psi$ of compact support. We have
\[
\Utheta=\begin{pmatrix}
    \ee^{-2\pi \ii \theta} &0 \\
    0  &\ee^{4\pi \ii \theta}
\end{pmatrix}
\]
so the eigenvalues of $\Utheta$ are $E_1(\theta)=\ee^{-2\pi \ii \theta}$ and $E_2=\ee^{4\pi \ii \theta}$ with eigenprojections $P_{E_1}(\theta)=\begin{pmatrix}
    1 & 0\\
    0 & 0
\end{pmatrix}$ and $P_{E_2}(\theta)=\begin{pmatrix}
    0 & 0\\
    0 & 1
\end{pmatrix}$, respectively. Note that $E_2\left(\theta+\frac{1}{2}\right)=E_2(\theta)$ for all $\theta$ and there are no further phase shift relations among the eigenvalues of $\Utheta$; this gives $M=2$. We know $c_{u,\psi}^{(1)} = \frac{1}{M}=\frac{1}{2}$. Furthermore, $P_{E_2}(i_1,i_2)$ is non-zero only if $i_1=i_2=2$. Hence, in this case $c_{u, \psi}^{(2)}$ simplifies to (note that $\varphi=\frac{1}{2}$ here)
\begin{align*}
c_{u,\psi}^{(2)}&= \frac{1}{M}+ \frac{1}{M} \sum_{k_1,k_2 \in \Lambda} \overline{\psi_{2}(k_1)}\psi_{2}(k_2)\ee^{2\pi \ii (k_1-u)\left(\frac{1}{2}\right)}\int_{0}^1   \ee^{2\pi \ii (k_1-k_2)\theta} \dd\theta\\
&= \frac{1}{2}+ \frac{1}{2} \sum_{k \in \Lambda} |\psi_{2}(k)|^2\ee^{\pi \ii (k-u)}  
\end{align*}
for $u=0,1$.  In particular, $\widetilde{\psi}=\delta_0\mathop\otimes\begin{pmatrix}1\\0\end{pmatrix}$ equidistributes ($c_{0,\widetilde{\psi}}^{(2)}=c_{1,\widetilde{\psi}}^{(2)}=\frac{1}{2})$ as previously observed, while for $\psi = \delta_0\mathop\otimes\begin{pmatrix} 0\\1\end{pmatrix}$, $\langle\phi\rangle_{T,\psi}$ approaches $1\langle \phi\rangle_{B_0}+0\langle \phi\rangle_{B_1}$, i.e. the semiclassical measure on $\LL_{2n}$ is $\frac{1}{n}\mathbf{1}_{\text{even}}$.

This example also illustrates the difference between subsequence convergence in quantum and classical random walks. The classical version of this walk with initial state $\delta_0\mathop\otimes \binom{f_1}{f_2}$ is a random walk taking either one step to the right with probability $p=|f_1|^2 \in (0,1)$ or two steps to the left. The corresponding classical walk is aperiodic and its stationary distribution is the uniform distribution on $\LL_{N}$ for all values of $N$. In particular, this shows that for every subsequence $N_n$, the total variation distance between the stationary measure of the classical random walk on $\LL_{N_n}$ and the uniform distribution on $\LL_{N_n}$ is zero, unlike the case of quantum walks. 
\end{exa}

\subsection{PQE for regular observables in 1d}
In this section, we prove point (5) in Theorem~\ref{thm:main1d}.

\begin{thm}\label{thm:1D_regular}
 Let $U$ be a quantum walk on $\Z$ as in \eqref{e:ugen}-\eqref{eqn:homogen_unitary} with no flat bands. Let $\psi = \sum_{k\in \Lambda} \psi(k)\delta_k$ be an initial state of compact support.
\begin{enumerate}[\rm(i)]
\item If $\phi^{(N)}(k) = f(k/N)$ for some function $f\in H^s(\T)$, $s>1/2$, then 
\begin{equation*}
\lim_{N\to\infty}\lim_{T\to\infty} \langle \phi \rangle_{T,\psi}=\int_0^1 f(x)\dd x \,. 
\end{equation*}
\item If $\phi^{(N)}$ is the restriction to $\LL_N$ of a function $f\in\ell^1(\Z)$, then
\[
\lim_{N\to\infty}\lim_{T\to\infty} \langle \phi \rangle_{T,\psi}=0 \,.
\]
\end{enumerate}

\end{thm}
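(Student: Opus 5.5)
Since $U$ has no flat bands, if \eqref{e:flo} holds for the full sequence then Theorem~\ref{thm:cri0} gives the claim directly: $\langle\phi\rangle\to\int_0^1 f$ in case (i) by Riemann sums of the continuous $f$, and $\langle\phi\rangle\le \|f\|_1/N\to0$ in case (ii). So assume \eqref{e:flo} fails, define $M$ by \eqref{e:defnM}, and run the decomposition from the proof of Theorem~\ref{thm:pergra} and Theorem~\ref{thm:equidistribution}. Take $a_j=\phi$ for all $j$. The $m=0$ term gives exactly $\langle\phi\rangle$, as in \S\ref{sec:fp}. It then suffices to show that the $m\ne0$ contribution $\langle\psi,\opn(b')\psi\rangle$ tends to $0$ along the full sequence $N$.

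Split the $m\neq0$ modes into two classes. The first class consists of the ``bad'' modes, $m$ with $m/N=\varphi$ for some rational $\varphi=t/M\in(0,1)$ such that $E_s(\cdot+\varphi)\equiv E_w$ for some $s,w$. This can only occur when $M\varphi N\in M\Z$, i.e.\ $N\varphi\in\Z$, so for each $N$ the class has at most $M-1$ elements. The second class is everything else. For the second class, the Claim in the proof of Theorem~\ref{thm:equidistribution}, which relies on Lemma~\ref{lem:non-ergodic_condition} and Lemma~\ref{lem:1D_subseq_2conditions}, gives a uniform bound $\#\{r:E_s(\frac{r+m}{N})=E_w(\frac rN)\}\le C$ for all large $N$. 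The argument there works for any sequence, because a mode escaping the bound must satisfy $m_n/N_n=\varphi$ eventually, and such modes were placed in the first class. Given this bound, the estimate in the proof of Theorem~\ref{thm:1D_l_infty}, or more simply \eqref{e:regood}, shows the second-class contribution is $O(1/N)\cdot C\nu^2 \sum_m|\widehat a_j(m)e_m(v)|\to0$ for regular $\phi$.

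For the first class, regularity should do the work. The contribution of a single bad mode $m$ is bounded via \eqref{e:bdop} by $C\,|\widehat\phi(m)e^{(N)}_m(v)|$ times an average of bounded entries of projections, i.e.\ by $C|\widehat\phi(m)|/\sqrt N$. In case (i), $\widehat\phi(m)=\sqrt N\,\hat f(m)$ up to aliasing, more precisely $\widehat\phi(m)=\sqrt N\sum_{\ell}\hat f(m+\ell N)$. Since $m=\varphi N$ with $\varphi\in(0,1)$ fixed, all indices $m+\ell N$ satisfy $|m+\ell N|\ge \min(\varphi,1-\varphi)N\to\infty$. Hence the bound is $\sum_{|k|\ge cN}|\hat f(k)|\to0$, using $\hat f\in\ell^1$, which follows from $s>1/2$. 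In case (ii), $|\widehat\phi(m)e_m(v)|\le\|f\|_1/N$ directly. Since there are at most $M-1$ bad modes, the total first-class contribution vanishes.

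The main obstacle is the first step on the second class: making sure the uniform bound of the Claim holds along the \emph{full} sequence of $N$, and not only along $N_n=nM+k$. This requires checking that Lemma~\ref{lem:non-ergodic_condition} forces exact equality $m_n/N_n=\varphi$ with $\varphi$ in the finite set of shifts, which is then excluded by construction. The aliasing estimate on the first class is routine by comparison.
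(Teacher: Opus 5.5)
Your argument is correct, but it is organized differently from the paper's.

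The paper never shows that the nonzero modes vanish. It deduces the statement from Theorem~\ref{thm:equidistribution}, which along each residue class $N_n=nM+k$ computes the full limit $\sum_u c^{(k)}_{u,\psi}\langle\phi\rangle_{B_u}$ for all bounded $\phi$. That requires evaluating the resonant modes $m\in A_1(n)$ explicitly, as Riemann sums of $P_{E_s}(\theta+\varphi)P_{E_w}(\theta)$. The paper then uses that, for regular $\phi$, every coset average $\langle\phi\rangle_{B_u}$ has the same limit as $\langle\phi\rangle$, and that $\sum_u c^{(k)}_{u,\psi}=1$. In that picture the resonant modes do contribute, but only by redistributing mass among the cosets $M\Z+u$, which regular observables cannot resolve.

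You instead kill the resonant modes directly.
\begin{itemize}
\item They sit at the finitely many frequencies $m=\varphi N$ with $\varphi\in(0,1)$ a resonant shift.
\item At these frequencies $|\widehat\phi(m)e^{(N)}_m(v)|\le\sum_{|k|\ge cN}|\hat f(k)|$ in case (i), or $\le\|f\|_1/N$ in case (ii). Your aliasing formula is the accurate form of the identity $\widehat a_j(\mathbf{m})=N^{d/2}\hat f_j(\mathbf{m})$ used in Step~4.
\item The uniform counting bound on all other modes follows along the full sequence from the same contradiction argument as the Claim, via Lemmas~\ref{lem:non-ergodic_condition} and \ref{lem:1D_subseq_2conditions}. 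Your first class plays the role of $A_1(n)$.
\end{itemize}
This route is shorter. It avoids residue classes, the coefficients $c^{(k)}_{u,\psi}$, and any continuity of eigenprojections, since only $|P(i,j)|\le 1$ is needed.

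It also isolates the mechanism, which you should state explicitly. The no-flat-band hypothesis does two things: it makes the set of resonant shifts finite, and, through the last assertion of Lemma~\ref{lem:1D_subseq_2conditions}, it excludes escaping modes with $m_n/N_n\to0$. Together these keep every resonant frequency at distance at least $cN$ from $0 \bmod N$, away from where regular observables carry their Fourier mass. This is exactly what breaks down in Proposition~\ref{prp:failhi}, where resonances occur at low frequencies such as $\mathbf{m}=(0,1)$.

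The paper's longer route pays off in a different way: it yields the complete list of semiclassical limits for bounded observables, of which the present statement is a by-product.
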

As in Definition~\ref{def:rego}, in case (i) we choose the version of $\phi$ which is continuous. 
\begin{proof}[Proof of Theorem \ref{thm:1D_regular}]
If \eqref{e:flo} holds, then the theorem directly follows from Theorem~\ref{thm:cri0}, since $\lim_{T\to\infty}\langle\phi\rangle_{T,\psi}=(\lim_{T\to\infty}\langle\phi\rangle_{T,\psi}-\langle\phi\rangle)+\langle\phi\rangle\to 0+\int f$ in case (i) and $\to 0+0$ in case (ii). So we assume \eqref{e:flo} does not hold. We consider the sequence $N_n^{(k)}=nM+k$, where $M$ is as defined in \eqref{e:defnM} and $1 \leq k \leq M$. We shall prove that for all values of $k$, the limit on the subsequence is as stated in (i) and (ii), 
and therefore the theorem holds.

We first consider the observables $\phi$ from the class $\ell^1(\Z)$ and recall the constant $c_{u,\psi}^{(k)}$ from Theorem \ref{thm:equidistribution}. Note that for $\phi \in \ell^1(\Z)$, $\langle \phi \rangle=\frac{1}{N}\sum_{j \in \LL_N} \phi^{(N)}(j)$ converges to zero as $n \to \infty$, and $\langle \phi \rangle_{B_u}= \frac{1}{n}\sum_{j=0}^{n-1} \phi(jM+u)$ or $\frac{1}{n+1}\sum_{j=0}^n\phi(jM+u)$, both of which converge to $0$ for all values of $u$. Therefore, the expression $\sum_{u=0}^{M-1} c_{u,\psi}^{(k)}\langle \phi \rangle_{B_u}$ converges to zero, for all values of $k$. Therefore we have
\begin{align*}
   \Big|\lim_{T\to\infty} \langle \phi \rangle_{T,\psi}\Big| \leq  \Big|\lim_{T\to\infty}\langle \phi \rangle_{T,\psi} 
- \sum_{u=0}^{M-1}c_{u,\psi}^{(k)}\langle \phi\rangle_{B_u}\Big| + \Big|\sum_{u=0}^{M-1}c_{m,\psi}^{(k)}\langle \phi\rangle_{B_u}\Big| \to 0
\end{align*}
as $n\to\infty$, using Theorem~\ref{thm:equidistribution}. Thus, for any $k$,
\[
\lim_{n\to\infty}\lim_{T\to\infty} \langle \phi \rangle_{T,\psi}=0
\]

Next, suppose $\phi$ has the form $\phi(r)=f(r/N)$ for some continuous $f \in H^s(\mathbb{T})$. Then $\langle \phi \rangle_{B_u}=\frac{1}{n}\sum_{j=0}^{n-1} f\big(\frac{jM+u}{N} \big)$ if $k\le u\le M-1$ or $\frac{1}{n+1}\sum_{j=0}^n f\big(\frac{jM+u}{N} \big)$ if $0\le u \leq k-1$. Either way, we consider the subdivision $\{x_i\}_{i=0}^n$ of $[0,1]$ given by $x_i = \frac{iM}{N}$ for $i=0,\dots,n-1$ and $x_n:=1$. Then both sums are Riemann sums up to a multiplicative sequence converging to one, hence $\langle \phi\rangle_{B_u}\to \int f(x)\,\dd x$ for any $u$.
Furthermore, we have $\sum_{u=0}^{M-1}c_{u,\psi}^{(k)}=1$ for all $k$. Hence, it follows that 
\[
\lim_{n\to\infty}\lim_{T\to\infty} \langle \phi \rangle_{T,\psi}=\int_0^1 f(x)\dd x \,,
\]
for all $k$, and this completes the proof.
\end{proof}

\begin{cor}\label{cor:acpqe}
    Let $U$ be a quantum on $\Z$ as in \eqref{e:ugen}-\eqref{eqn:homogen_unitary}. If $U$ has no flat bands, then $U$ satisfies \eqref{e:mainlim} for regular observables.
\end{cor}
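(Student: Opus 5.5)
The corollary should follow directly from Theorem~\ref{thm:1D_regular}, once we compare its two limits with the uniform averages $\langle\phi\rangle$. The plan is to write
\[
\lim_{T\to\infty}\langle\phi\rangle_{T,\psi}-\langle\phi\rangle
=\Big(\lim_{T\to\infty}\langle\phi\rangle_{T,\psi}-L\Big)+\big(L-\langle\phi\rangle\big),
\]
where $L$ is the limit supplied by Theorem~\ref{thm:1D_regular}. Then show that both brackets tend to zero as $N\to\infty$ along the full sequence. The first bracket vanishes by Theorem~\ref{thm:1D_regular}, which applies because $U$ has no flat bands. So everything reduces to identifying $L$ with $\lim_N\langle\phi\rangle$.

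In case (i) of Definition~\ref{def:rego}, $\phi^{(N)}(k)=f(k/N)$ with $f\in H^s(\T)$, $s>1/2$, taken to be the continuous representative. Here $L=\int_0^1 f$. Also $\langle\phi\rangle=\frac1N\sum_{k=0}^{N-1}f(k/N)$ is a Riemann sum of a continuous function on $[0,1]$, so it converges to $\int_0^1 f$.

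In case (ii), $\phi^{(N)}$ is the restriction of some $\phi\in\ell^1(\Z)$, and $L=0$. We have $|\langle\phi\rangle|\le \|\phi\|_{\ell^1}/N\to0$.

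In both cases the second bracket tends to zero, which gives \eqref{e:mainlim}. The only point to check is that Theorem~\ref{thm:1D_regular} holds along the full sequence $N\to\infty$ and not merely along the subsequences $N_n^{(k)}=nM+k$. This is exactly how that theorem was proved: when \eqref{e:flo} fails, the full sequence is the finite union over $1\le k\le M$ of the sequences $nM+k$, each having the same limit. When \eqref{e:flo} holds, the full sequence is handled directly by Theorem~\ref{thm:cri0}. So there is no real obstacle. The step needing the most care is the Riemann-sum convergence in case (i), which relies on choosing the continuous version of $f$ (Sobolev embedding, $s>1/2$).
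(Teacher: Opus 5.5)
Your proposal is correct and matches the paper's own proof: Theorem~\ref{thm:1D_regular}, which holds along the full sequence, gives $\lim_{T\to\infty}\langle\phi\rangle_{T,\psi}\to\int_0^1 f$ (respectively $0$). The uniform average $\langle\phi\rangle$ has the same limit, as a Riemann sum of the continuous representative (respectively by the bound $\|\phi\|_{\ell^1}/N$), so their difference tends to zero.
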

\begin{proof}
Recall that the limit $\lim_{T\to\infty} \langle \phi\rangle_{T,\psi} = \langle \phi\rangle_\psi$ exists, with $\langle \phi\rangle_\psi:=\sum_{k\in \LL_N} \phi(k)\mu_\psi^N(k)$, see \eqref{e:limtonly}. We thus need to show that $|\langle \phi\rangle_\psi-\langle \phi\rangle|\to 0$ as $N\to\infty$. But Theorem~\ref{thm:1D_regular} implies that $\langle \phi\rangle_\psi$ and $\langle \phi\rangle$ both have the same limit (either $\int f$ or $0$) if $\phi$ is a regular observable. The corollary follows.
\end{proof}

The assumption that $\phi$ is regular cannot be dropped: we will see in Proposition~\ref{prp:coidia} that there exist $1d$ walks violating \eqref{e:mainlim} if $\phi$ only satisfies $\|\phi\|_\infty\le 1$. We also saw this in Example~\ref{exa:39}.

This proves point (5) in Theorem~\ref{thm:main1d}. In fact, the converse follows from Proposition~\ref{prp:flanoqe}.

\subsection{Applications to periodic Schr\"odingers}\label{sec:schro}

Even though the main objective of this paper is to study the ergodicity of discrete-time quantum walks, the scope of the previous subsections is much broader. Let us discuss some applications to continuous-time quantum walks (CTQW) and eigenvectors ergodicity. In that framework there is no spin space. We work instead with a $\Z$-periodic graph, and the fundamental domain takes up the role of the spin space. See \cite{BKS} for some figures.

Theorem~\ref{thm:1D_subseq_general} applies directly to that framework and yields the following corollary. Here, the approximation of the $\Z$-periodic graph $\Gamma$ is done by the subgraphs $\Gamma_{N}=\cup_{\br\in\LL_N^d}(V_f+\br)$, where $V_f$ is the fundamental domain of $\Gamma$, which is assumed to be finite, with $\nu$ vertices.

\begin{cor}
Consider a $\Z$-periodic graph with a finite fundamental domain. Suppose the underlying periodic Schr\"odinger operator $H_\Gamma$ has no flat bands. Then there is a subsequence $N_n$ of $N$ such that over $\Gamma_{N_n}$,
\begin{enumerate}[\rm(1)]
    \item Any othonormal eigenbasis of $H_{\Gamma_{N_n}}$ is quantum ergodic in the sense of \cite[Thm. 1.2]{McKSa}.
    \item The CTQW $\ee^{-\ii tH_{\Gamma_{N_n}}}$ is ergodic in the sense of \cite[Thm. 3.1]{BS}
\end{enumerate}
\end{cor}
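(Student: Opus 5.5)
The plan is to transfer Theorem~\ref{thm:1D_subseq_general} verbatim to the periodic Schr\"odinger setting. That theorem never uses unitarity. It only uses a family of normal $\nu\times\nu$ matrices with analytic entries in $\theta$ and without flat bands.

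First I will set up the Floquet reduction. Identify $\ell^2(\Gamma)\cong\ell^2(\Z)\otimes\C^\nu$, where $V_f$ has $\nu$ vertices. Since $\Gamma$ is $\Z$-periodic with finite fundamental domain, $H_\Gamma$ has finite range, so it takes the form \eqref{e:ugen}--\eqref{eqn:homogen_unitary} with $U$ replaced by the self-adjoint $H$. The proof of Lemma~\ref{lem:foufin} uses only the shift structure. It therefore shows that $H_{\Gamma_N}$, with periodic boundary conditions, is unitarily equivalent to $\oplus_{r\in\LL_N}\widehat H(-r/N)$. Here $\widehat H(\theta)$ is Hermitian, hence normal, and its entries are trigonometric polynomials, hence analytic on $\R$. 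Its eigenvalues $E_s(\theta)$ are the band functions. The hypothesis of no flat bands means that no $E_s$ is constant.

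Next I will apply Theorem~\ref{thm:1D_subseq_general} to $\{\widehat H(\theta)\}$. This gives a sequence $N_n=nM+1$, with $M$ defined by \eqref{e:defnM} for the band functions of $\widehat H$. Along this sequence \eqref{e:flo_subseq} holds for all $s,w$. The supporting results, Lemmas~\ref{lem:non-ergodic_condition} and \ref{lem:1D_subseq_2conditions}, also carry over. They rely only on three facts: analyticity of the eigenvalues of analytic normal families \cite[Proposition 5.3]{Rainer2013}, the integer periodicity $E_s(\theta+k)=E_s(\theta)$ from \cite{Kato}, and Hurwitz's theorem.

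Finally I will invoke the cited ergodicity criteria. \cite[Thm. 1.2]{McKSa} gives quantum ergodicity of eigenbases, and \cite[Thm. 3.1]{BS} gives ergodicity of the CTQW. Both are stated under condition \eqref{e:flo} on the band functions, and the paper notes that this condition appeared there first. Applying them along $\Gamma_{N_n}$, where \eqref{e:flo} now holds, yields (1) and (2).

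The main obstacle is this last step. Those theorems assume \eqref{e:flo} for the full sequence $N\to\infty$, so I will have to check that their proofs only use the vanishing of $\sup_{m\ne0}\#A_m/N$ along whatever sequence one passes to the limit on. This parallels Steps 1--4 above, where \eqref{e:flo} enters only through Step~4. I would also confirm that the normalization of $\Gamma_N$ and of the Floquet matrix in \cite{McKSa,BS} matches $\LL_N$ with periodic conditions. If some extra hypothesis appears there, such as a regularity condition on observables, I would inherit it unchanged.
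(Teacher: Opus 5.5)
Your proposal is correct and follows the paper's argument. The paper simply observes that the Floquet condition \eqref{e:flo} in \cite{McKSa,BS} is the same, with $\widehat{U}(\theta)$ replaced by a Hermitian (hence normal) analytic matrix family, so Theorem~\ref{thm:1D_subseq_general} yields the subsequence. Your extra check, that the cited criteria only need \eqref{e:flo} along the sequence on which the limit is taken, is the same implicit step the paper itself uses when it concludes \eqref{eqn:ergodic_general} along $\LL_{N_n}$ in Theorem~\ref{thm:1D_subseq_general}.
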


Indeed, for both frameworks, the same Floquet assumption \eqref{e:flo} arises but with $\widehat{U}(\theta)$ Hermitian. As Theorem~\ref{thm:1D_subseq_general} holds for normal matrices, it applies to (1)-(2) as well.

We also state the following quite unexpected result:

\begin{cor}\label{cor:qestrong}
    Consider a $\Z$-periodic graph with underlying Schr\"odinger operator $H_\Gamma$. If \eqref{e:flo} is satisfied, then for any orthonormal eigenbasis $(\psi_u^{(N)})$ of $H_{\Gamma_N}$, we have $\frac{1}{f(N)}\sum_{u\in\Gamma_N}|\langle \psi_u^{(N)},a\psi_u^{(N)}\rangle-\langle\psi_u^{(N)},\opn(\overline{a})\psi_u^{(N)}\rangle|^2\to 0$, for any $f(N)\to\infty$.
\end{cor}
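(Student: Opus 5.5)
The plan is to show that the quantum variance $\sum_u|\langle \psi_u^{(N)},a\psi_u^{(N)}\rangle-\langle\psi_u^{(N)},\opn(\overline{a})\psi_u^{(N)}\rangle|^2$ is bounded uniformly in $N$, for any bounded observable $\|a\|_\infty\le 1$. Dividing by $f(N)\to\infty$ then gives the claim. The bound comes from the one-dimensional fact, Lemma~\ref{lem:1D_subseq_2conditions}(ii), that under \eqref{e:flo} the coincidence sets $A_m$ have cardinality bounded independently of $N$ and $m\neq0$, rather than merely $o(N)$. Here $\overline{a}$ is understood as the zero-mode symbol, i.e.\ the analogue of $b_0$ from Step 3, built from $\langle a_j\rangle$ and the Floquet projectors of the Hermitian matrix $\widehat{H}(\theta)$.

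\textbf{Step 1: replace $a$ by its time average.} Let $H_N=H_{\Gamma_N}$ and let $P_\lambda$ be its spectral projections. Each $\psi_u=\psi_u^{(N)}$ lies in some eigenspace, so $\langle\psi_u,a\psi_u\rangle=\langle \psi_u,A_\infty\psi_u\rangle$ with $A_\infty=\sum_\lambda P_\lambda aP_\lambda=\lim_{T}\frac1T\int_0^T \ee^{\ii tH_N}a\ee^{-\ii tH_N}\,\dd t$. Running Steps 1--3 of \S\ref{sec:mainproof} with $\widehat{U}$ replaced by $\widehat H$ (Remark~\ref{rem:step3}: no regularity is needed) gives $A_\infty=\opn(b)=\opn(b_0)+\opn(b')$, where $\opn(b_0)=\opn(\overline a)$. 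The coincidence condition $E_s(\frac{r+m}{N})=E_w(\frac rN)$ replaces the unit-modulus one; this is exactly what the limit of the time average selects.

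\textbf{Step 2: Hilbert--Schmidt bound.} Since $(\psi_u)$ is orthonormal,
\[
\sum_u|\langle\psi_u,\opn(b')\psi_u\rangle|^2\le \sum_u\|\opn(b')\psi_u\|^2=\|\opn(b')\|_{HS}^2 .
\]
To compute this, I conjugate by $\mathscr F\otimes\mathrm{id}$. By the computation in Step 1 of \S\ref{sec:mainproof}, the kernel of $\opn(b')$ in the Fourier basis has $(r+m,r)$ block
\[
N^{-1/2}\sum_{j}\sum_{s,w}\mathbf 1_{A_m}(r,s,w)\,P_s\big(\tfrac{r+m}{N}\big)(\cdot,j)\,\widehat a_j(m)\,P_w\big(\tfrac rN\big)(j,\cdot),\qquad m\neq 0.
\]
Its entries are bounded by $C_\nu|\widehat a_j(m)|N^{-1/2}$, and they are nonzero only for the $r$ with $(r,s,w)\in A_m$ for some $s,w$. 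Summing squares over all $r$ and $m\neq0$ gives
\[
\|\opn(b')\|_{HS}^2\le \frac{C_\nu}{N}\sum_{m\neq0}\sum_j|\widehat a_j(m)|^2\,\#A_m\le \frac{C_\nu C}{N}\|a\|_2^2\le C'\|a\|_\infty^2 .
\]
Here I used $\#A_m\le C\nu^2$ from Lemma~\ref{lem:1D_subseq_2conditions}, Parseval, and $\|a\|_2^2\le \nu N\|a\|_\infty^2$. The lemma applies because it is stated for normal, hence Hermitian, analytic families.

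\textbf{Step 3: conclude, and the main obstacle.} Combining the two steps gives $\sum_u|\langle\psi_u,a\psi_u\rangle-\langle\psi_u,\opn(\overline a)\psi_u\rangle|^2\le C'$ uniformly in $N$ and in the choice of eigenbasis, so dividing by $f(N)$ sends it to $0$. The delicate point is Step 1 in the presence of degenerate eigenvalues of $H_N$ and degenerate Floquet bands. I must check that the decomposition of $\sum_\lambda P_\lambda aP_\lambda$ into Fourier blocks really produces the indicator of the set $\{E_s(\frac{r+m}{N})=E_w(\frac rN)\}$. I also must check that the zero-mode term coincides exactly, not just asymptotically, with $\opn(\overline a)$. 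I would verify both by writing $P_\lambda$ in the Floquet basis as $\bigoplus_r\sum_{s:E_s(r/N)=\lambda}P_s(\frac rN)$ and expanding directly.
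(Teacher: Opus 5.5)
Your proposal is correct and is essentially the paper's own argument. The paper's proof just says to follow Step~3 of McKenzie--Sabri, which bounds the quantum variance by the Hilbert--Schmidt norm of the $m\neq 0$ part $\opn(b')$ of the time-averaged observable, and to use the uniform bound $\#A_m\le C$ from Lemma~\ref{lem:1D_subseq_2conditions}; this gives $\|\opn(b')\|_{HS}^2=O(1)$, so one may divide by $f(N)$ instead of $N$, and that is exactly what your Steps 1--3 carry out. The zero-mode check you flagged also goes through exactly: for an eigenvector $\psi_u$, each $\widehat{\psi_u}(r)$ lies in a single eigenspace of the Floquet matrix at $r/N$, so $\langle\psi_u,\opn(b_0)\psi_u\rangle=\sum_q\langle a_q\rangle\,\|\psi_u(\cdot,v_q)\|^2$ however the eigenvalues are degenerate.
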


This follows from Lemma~\ref{lem:1D_subseq_2conditions}. In fact, following Step 3 in \cite{McKSa}, using the fact that $|A_m|\le C$, we can divide by $f(N)$ instead of $N^d$  (here $d=1$). Corollary~\ref{cor:qestrong} is an especially strong statement of quantum ergodicity. One can rephrase this result by saying that the quantum variance decays like $N^{-1}$. Note that quantum \emph{unique} ergodicity fails however, see \cite[\S\,5.1]{McKSa}.

Graph products are a natural example of periodic graphs, and it was proved in \cite{McKSa} that for all finite graphs $G$, the cartesian product $G \mathop\square \Z^d$ and the strong product $G \mathop\boxtimes \Z^d$ obey the Floquet assumption \eqref{e:flo}. This is however not true for the tensor product $G \times \Z^d$ (for an example where \eqref{e:flo} is violated for $G \times \Z$, see Proposition~1.6 of \cite{McKSa}). In the following proposition, we show that even when \eqref{e:flo} is violated, the CTQW on $G \times \Z$ is \eqref{e:mainlim} for all $\ell^\infty$-observables of the form $a(v_q,k):=\phi(k)$, where $v_q \in G$ and $k \in \Z$.
\begin{cor}
    For all finite graphs $G$ with $0 \notin \sigma(G)$, the continuous time-quantum walk on the tensor product $G \times \Z$ satisfies \eqref{e:mainlim} for all $\phi^{(N)}$ with $\|\phi^{(N)}\|_\infty\le 1$ : 
\begin{equation*}
\lim_{N\to\infty}\Big|\lim_{T\to\infty} \langle \phi \rangle_{T,\psi} - \langle \phi\rangle \Big|=0 \, ,
\end{equation*}
where $\langle \phi\rangle_{T,\psi} = \sum_{r\in\LL_N} \phi(r)\mu_{T,\psi}^N(r)$ and $\mu_{T,\psi}^N(r)=\frac{1}{T}\int_0^T\sum_{v_q\in G}|(\ee^{-\ii tH_{\Gamma_N}}\psi)(v_q,r)|^2\,\dd t$.
\end{cor}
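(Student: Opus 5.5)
Since the observable $a(v_q,k)=\phi(k)$ is constant across the fundamental domain $G$, the plan is to run the argument of Theorem~\ref{thm:equidistribution} in the CTQW setting and show that every correction coefficient vanishes. The hypothesis $0\notin\sigma(G)$ will be used for exactly that.

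First identify the Floquet matrix. For $G\times\Z$ with adjacency $A_G\otimes A_\Z$ one gets $\widehat{H}(\theta)=2\cos(2\pi\theta)\,A_G$. Its eigenvalues are $E_s(\theta)=2\lambda_s\cos(2\pi\theta)$, where $\lambda_s$ runs over the eigenvalues of $A_G$, and its eigenprojections $P_{E_s}(\theta)=Q_{\lambda_s}$ are the (constant) spectral projections of $A_G$. This holds wherever $\cos(2\pi\theta)\neq0$. Because $0\notin\sigma(G)$, no band is flat. The CTQW analogue of Steps 1--3 (as in \cite{BS}) replaces the geometric time average by $\frac1T\int_0^T \ee^{\ii t(E_s-E_w)}\dd t$, and it yields the same symbol $b$ with $S_r$ defined by $E_s(\frac{r+m}{N})=E_w(\frac rN)$. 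The \eqref{e:flo} failures are then classified by Lemma~\ref{lem:1D_subseq_2conditions}: they are precisely the identities $E_s(\theta+\varphi)\equiv E_w(\theta)$ with rational $\varphi$. Since $\cos(2\pi(\theta+\varphi))\equiv c\cos(2\pi\theta)$ forces $\varphi\in\{0,\tfrac12\}$, the nontrivial case is $\varphi=\frac12$ with $\lambda_s=-\lambda_w$. Hence $M=2$.

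Next split into $N$ odd and $N$ even. For odd $N$ we have $\gcd(M,N)=1$, so Theorem~\ref{thm:equidistribution} (whose proof is unchanged for Hermitian $\widehat H$ with analytic bands, and whose Step 4 is the $\ell^\infty$ bound from Theorem~\ref{thm:1D_l_infty}) gives PQE directly. For even $N$ the only extra mode is $m=N/2$. The whole task is to show that the $\varphi=\frac12$ correction in \eqref{e:cupsi} vanishes, which amounts to showing that the sum over $s,w$ with $\lambda_s=-\lambda_w$ of the entries of $P_{E_s}(\theta+\tfrac12)P_{E_w}(\theta)$ is zero.

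The point is that, because the observable is constant in the spin (vertex) coordinate, the product appearing in \eqref{eqn:productprojection} is a product of projections. The spin index $j$ is summed against the same weight $\phi$, so the contraction is $\sum_j$ of the two projected vectors, i.e.\ $P_{E_s}P_{E_w}$. Here $P_{E_s}(\theta+\frac12)=Q_{\lambda_s}$ and $P_{E_w}(\theta)=Q_{\lambda_w}=Q_{-\lambda_s}$. Since $\lambda_s\neq-\lambda_s$ (again because $\lambda_s\neq0$, which is where $0\notin\sigma(G)$ enters), these are orthogonal spectral projections of $A_G$, so $Q_{\lambda_s}Q_{-\lambda_s}=0$. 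Consequently every term with $\varphi\neq0$ in \eqref{e:cupsi} vanishes, $c^{(k)}_{u,\psi}=\frac1M$ for all $u$, and $\sum_u\frac12\langle\phi\rangle_{B_u}$ differs from $\langle\phi\rangle$ by $O(1/N)$. This gives \eqref{e:mainlim} for all bounded $\phi$, for both parities.

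I expect the main obstacle to be bookkeeping rather than a real difficulty. One point is to justify carefully that the energy-coincidence set for $m=N/2$ really coincides, up to $O(1)$ exceptional $r$, with the $\lambda_s=-\lambda_w$ pairs, including the points where $\cos=0$ and all bands meet. Another is to check that pairs with $\lambda_s=\lambda_w$ at $\varphi=\frac12$ contribute only boundedly many $r$, which they do since $\cos(2\pi\theta+\pi)=\cos(2\pi\theta)$ holds only at finitely many $\theta$. Both sets are finite in number per $N$ and are handled by the $|A_m|\le C$ estimate as in Theorem~\ref{thm:1D_l_infty}.
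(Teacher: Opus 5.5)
Your proposal is correct and follows essentially the same route as the paper's proof. Both adapt Theorem~\ref{thm:equidistribution} to the CTQW, use that the Floquet eigenprojections of $2\cos(2\pi\theta)A_G$ are the $\theta$-independent spectral projections of $A_G$, and conclude that every $\varphi\neq 0$ term in $c^{(k)}_{u,\psi}$ vanishes: a coincidence requires $\lambda_w=-\lambda_s\neq\lambda_s$, so the projections are orthogonal, and therefore $c^{(k)}_{u,\psi}=\frac1M$. Your version is somewhat more careful than the paper's: you treat general compactly supported $\psi$ rather than a qubit, make $M=2$ and the parity split explicit, and account for the $O(1/N)$ gap between $\frac1M\sum_u\langle\phi\rangle_{B_u}$ and $\langle\phi\rangle$ when the $B_u$ have unequal sizes.
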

\begin{proof}
The proof of Theorem~\ref{thm:equidistribution} adapts to CTQW without difficulty. To maintain consistency with the calculation in Theorem 3.1 of \cite{BS}, we take the initial state as the qubit $\psi=\delta_{k_1}\mathop\otimes\delta_{v_p}$, where $v_p \in G$ and $k_1 \in \Z$. Here, along the subsequence $N_n=nM+k$, the coefficient of $\langle \phi \rangle_{B_u}$ is
\begin{multline*}
        c_{u,\psi}^{(k)}= \frac{1}{M}+ \frac{1}{M} \sum_{\substack{0 < \varphi<1\\ \varphi\gcd(M,k) \in \Z}} \sum_{\substack{s,w=1 \\ E_s(x+\varphi)\equiv E_w(x)}}^{\nu^\prime} \hspace{-3mm}\ee^{2\pi \ii (k_1-u)\varphi}
    \int_{0}^1 P_{E_s}\left(\theta+\varphi\right) P_{E_w}(\theta)(v_p,v_p)   \dd\theta.
\end{multline*}

For the graph $G \times \Z$, the eigenvalues of $\Utheta$ are of the form $E_s(\theta) = 2\mu_s \cos 2\pi \theta$, where $\mu_s$ is an eigenvalue of the adjacency matrix of $G$, and the eigenvectors and therefore the eigenprojections of $\Utheta$ are independent of $\theta$. Therefore for all $\varphi>0$,
    \[
    P_{E_s}(\theta+\varphi) P_{E_w}(\theta)=\begin{cases}
        P_{E_s} &\text{ if } s=w \\
        0 &\text{ if } s\neq w
    \end{cases}
    \]
    and the sum in $c_{u,\psi}^{(k)}$ reduces to $w=s$.
    
Note that $E_s(\theta+\varphi)= 2\mu_s \cos 2\pi (\theta+\varphi) \not \equiv E_s(\theta)$ for any $\varphi \in (0,1)$ as $\mu_s\neq 0$. Therefore for $\varphi \neq 0$, the sum in $c_{u,\psi}^{(k)}$ is empty, i.e. for all $k$ and $u$, $c_{u,\psi}^{(k)}=\frac{1}{M}$. Thus, $\sum_{u=0}^{M-1} c_{u,\psi}^{(k)} \langle \phi \rangle_{B_u}=\frac{1}{M}\sum_{u=0}^{M-1} \langle \phi \rangle_{B_u}=\langle \phi \rangle$. It follows that for all $0 \leq k \leq M-1$, along the subsequence $N_n=nM+k$, $\lim_{n\to\infty}\big|\lim_{T\to\infty} \langle \phi \rangle_{T,\psi} - \langle \phi\rangle \big|=0$. Since the same limit holds along every subsequence, we have the result.
\end{proof}

Finally, we see from the proofs that Theorem~\ref{thm:1D_regular} and Corollary~\ref{cor:acpqe} are quite direct consequences of the important Theorem~\ref{thm:equidistribution}. As we previously mentioned, the proof of Theorem~\ref{thm:equidistribution} adapts without difficulty to continuous-time quantum walks, so we get:

\begin{cor}
Theorem~\ref{thm:1D_regular} and Corollary~\ref{cor:acpqe} hold for CTQW on $\Z$-periodic graphs, with $\mu_{T,\psi}^N(r)=\frac{1}{T}\int_0^T\sum_{v_q\in G}|(\ee^{-\ii tH_{\Gamma_N}}\psi)(v_q,r)|^2\,\dd t$.
\end{cor}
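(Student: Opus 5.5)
The plan is to transfer the statements of Theorem~\ref{thm:1D_regular} and Corollary~\ref{cor:acpqe} to the CTQW setting. Since both were deduced formally from Theorem~\ref{thm:equidistribution}, the first task is to establish the CTQW analogue of that theorem. By the Floquet–Bloch reduction, the Schr\"odinger operator $H_{\Gamma_N}$ on $\Gamma_N$ with periodic boundary conditions is unitarily equivalent to $\mathop\oplus_{r\in\LL_N}\widehat{H}(\frac{r}{N})$, in the same way as Lemma~\ref{lem:foufin}. Here $\widehat{H}(\theta)$ is a Hermitian $\nu\times\nu$ matrix whose entries are trigonometric polynomials, and the fundamental domain plays the role of the spin space.

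Steps 1--3 of the proof of Theorem~\ref{thm:pergra} then go through verbatim, with $U_N^k$ replaced by $\ee^{-\ii tH_{\Gamma_N}}$ and the discrete average $\frac1T\sum_k$ replaced by $\frac1T\int_0^T\dd t$. The oscillating factor becomes $\ee^{\ii t(E_s(\frac{r+m}{N})-E_t(\frac{r}{N}))}$. Its time average tends to $\mathbf{1}$ of the set $\{E_s(\frac{r+m}{N})=E_t(\frac{r}{N})\}$, with error bounded by $\frac{2}{T|\Delta E|}$, so for fixed $N$ the analogue of Lemma~\ref{lem:limpse} holds with the same symbol $b$. The remaining work only involves the structure of the sets $A_m$.

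Next I will check that all the one-dimensional lemmas apply.
\begin{itemize}
\item Lemmas~\ref{lem:non-ergodic_condition} and \ref{lem:1D_subseq_2conditions} are stated for analytic families of normal matrices, and Hermitian matrices are normal.
\item By \cite[Proposition 5.3]{Rainer2013}, the eigenvalues $E_s$ are analytic on $\R$ and $1$-periodic up to permutation, so they are periodic with a rational period.
\item No flat bands guarantees that every $E_s$ is non-constant.
\end{itemize}
Hence $M$ from \eqref{e:defnM} is well defined whenever \eqref{e:flo} fails. The Claim \eqref{e:a2good} then follows by the same argument, and the $A_2(n)$ contribution vanishes as in Theorem~\ref{thm:1D_l_infty}; this uses only $\|\phi\|_\infty\le1$ and Cauchy--Schwarz. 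The $A_1(n)$ computation is purely algebraic in $\widehat\phi(m)$, $\widehat\psi$ and the eigenprojections, so it produces the same coefficients $c^{(k)}_{u,\psi}$ with $\sum_u c^{(k)}_{u,\psi}=1$ (the $\varphi=0$ term gives $\frac1M$ per $u$, and the rest is as in the DTQW case). The required Riemann-sum convergence uses the analyticity of the eigenprojections from \cite{Rainer2013}.

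With the CTQW analogue of Theorem~\ref{thm:equidistribution} in hand, the proof of Theorem~\ref{thm:1D_regular} is copied unchanged. For $\ell^1$ observables, every $\langle\phi\rangle_{B_u}\to0$. For $f(k/N)$ with $f\in H^s$, every $\langle\phi\rangle_{B_u}\to\int f$ as a Riemann sum, and $\sum_u c_{u,\psi}^{(k)}=1$. Running over each residue class $N=nM+k$ covers the full sequence. When \eqref{e:flo} holds, one uses instead the CTQW ergodicity criterion \cite[Thm.~3.1]{BS}. Corollary~\ref{cor:acpqe} then follows exactly as before, because the limits of $\langle\phi\rangle_\psi$ and $\langle\phi\rangle$ agree.

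The main obstacle I expect is the Step 2 analogue. The time-average error $\frac{1}{T|\Delta E|}$ requires a uniform lower bound on the nonzero gaps $|E_s(\frac{r+m}{N})-E_t(\frac rN)|$ at fixed $N$. This is available because the set of such gaps is finite for fixed $N$, so the $T\to\infty$ limit is fine.

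The genuine issue is the existence of the limit $\mu^N_\psi$ in continuous time, which replaces \eqref{e:limtonly}. I would handle it the same way: expand $\ee^{-\ii tH_{\Gamma_N}}\psi$ in eigenprojections and note that the off-diagonal terms average out. This goes through with no further input.
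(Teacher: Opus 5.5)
Your proposal is correct and follows the paper's route. The paper's justification is exactly that the proof of Theorem~\ref{thm:equidistribution} adapts to continuous-time walks: Floquet reduction gives a Hermitian (hence normal) analytic family, and the Ces\`aro sum is replaced by $\frac1T\int_0^T$. Theorem~\ref{thm:1D_regular} and Corollary~\ref{cor:acpqe} then follow verbatim, using $\langle\phi\rangle_{B_u}\to\int f$ or $0$ together with $\sum_u c^{(k)}_{u,\psi}=1$. You additionally spell out details the paper leaves implicit, namely the gap bound for the time average at fixed $N$ and the existence of $\mu^N_\psi$.
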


\section{Focus on two-state quantum walks on the line}\label{sec:2state}

In this section, we consider different models of quantum walks on $\Z$ with two spin states. More specifically, we look at three different models: coined walks of the form $U=\cS(I \otimes \cC)$ defined by a $2\times 2$ coin; the split step quantum walk, where each step of the quantum walk involves two operations of the coin matrix before stepping in each direction; and finally, the arc-reversal quantum walk on $\Z$.  These walks are widely studied in the literature as they are completely solvable. Points (6)-(7) of Theorem~\ref{thm:main1d} will follow along the way, (see Proposition~\ref{prp:offdia}), except for the last point of (7) which we postpone to Proposition~\ref{prp:nrgstro}.

We introduce the following shorthand notations.

\begin{defa}
   We say a quantum walk $U$ on $\Z$ is 
   \begin{enumerate}[(i)]
       \item $\ell^\infty$-\eqref{eqn:ergodic_general} if \eqref{eqn:ergodic_general} holds for all $a^{(N)}\in\ell^2(\LL_N,\C^\nu)$ with $\|a^{(N)}\|_\infty\le 1$.
       \item $\ell^\infty$-\eqref{e:mainlim} if \eqref{e:mainlim} holds for all $\phi^{(N)} \in \ell^2(\LL_N)$ with $\|\phi^{(N)}\|_\infty\le 1$.
   \end{enumerate}
\end{defa}

Note that $\ell^\infty$-\eqref{eqn:ergodic_general} implies $\ell^\infty$-\eqref{e:mainlim}. We recall that in the present setting of one-dimensional walks, Theorem~\ref{thm:1D_l_infty} tells us that \eqref{e:flo} implies $\ell^\infty$-\eqref{eqn:ergodic_general} and the convergence of $\mu_\psi^N$ and $\mu_{\psi,j}^N$ to their respective limits in total variation distance. Also, absence of flat bands guarantees \eqref{e:mainlim} for regular observables (Corollary~\ref{cor:acpqe}) and $\ell^\infty$-\eqref{eqn:ergodic_general} on a subsequence (Theorem~\ref{thm:1D_subseq_general}).

\subsection{Two-state coined walk}
Coined walks on $\Z$ are historically the first models that gave birth to the field. Notwithstanding their simplicity, they nicely illustrate the quantum effects of the walk. In the case of two spins, limit theorems for the walk per unit time were established in \cite{Konno2002,Konno2005}, see also \cite{CJWW} for recent extensions. In \cite{Konno2002,Konno2005}, the step sizes were chosen as one, with nonzero coin entries. We consider a generalized version of this model wherein we allow arbitrary step sizes. We will see that this affects the ergodicity of the quantum walk.

Consider a walk $U$ on $\ell^2(\Z,\C^2)$ of the form 
\[
U=\begin{pmatrix}
	aS_{-\alpha} & bS_{-\alpha} \\
	cS_{\beta}& dS_{\beta}
\end{pmatrix}
\]
where $\begin{pmatrix}
a & b \\
c & d
\end{pmatrix}=:C$ is unitary. 

\subsubsection{Coins with nonzero entries}

We start with the case where the coin $C$ has nonzero entries, i.e. $abcd\neq 0$. This is a common assumption in the literature, see e.g. \cite{Konno2002,Konno2005}, but we will see later that the case $abcd=0$ is also very interesting when $\alpha\neq\beta$.

 \begin{thm}\label{thm:2state_gen}
 	Consider the quantum walk $U=\begin{pmatrix}
 		aS_{-\alpha} & bS_{-\alpha} \\
 		cS_{\beta}& dS_{\beta}
 	\end{pmatrix}$ on $\Z$ with $\alpha,\beta\ge 1$. If $abcd\neq 0$ and $\gcd(\alpha,\beta)=1$, then \eqref{e:flo} is satisfied. As a result, $U$ is $\ell^\infty$-\eqref{eqn:ergodic_general}.
    \end{thm}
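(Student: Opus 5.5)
By Lemma~\ref{lem:1D_subseq_2conditions}, \eqref{e:flo} fails if and only if there are $s,w$ and a rational $0<\varphi<1$ with $E_s(\theta+\varphi)\equiv E_w(\theta)$. The plan is to rule this out from the characteristic polynomial of $\widehat{U}(\theta)$, and then invoke Theorem~\ref{thm:1D_l_infty} to conclude that $U$ is $\ell^\infty$-\eqref{eqn:ergodic_general}.

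First compute the Floquet matrix
\[
\widehat{U}(\theta)=\begin{pmatrix} a z^{\alpha} & b z^{\alpha}\\ c z^{-\beta} & d z^{-\beta}\end{pmatrix},\qquad z=\ee^{2\pi\ii\theta}.
\]
Its trace is $t(\theta)=a z^{\alpha}+d z^{-\beta}$ and its determinant is $\Delta=(ad-bc)z^{\alpha-\beta}$. Write $\delta=ad-bc$, so $|\delta|=1$. The eigenvalues $E_{1,2}(\theta)$ are the roots of $\lambda^2-t\lambda+\Delta$; they are analytic by \cite[Proposition 5.3]{Rainer2013}.

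Suppose $E_s(\theta+\varphi)\equiv E_w(\theta)$ with $\varphi=p/q$ in lowest terms and $0<\varphi<1$.

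\begin{enumerate}[(i)]
\item If $s=w$, the shift by $\varphi$ permutes the unordered pair of eigenvalues identically.
\item If $s\neq w$, then since there are only two eigenvalues, the shift by $\varphi$ swaps $E_1$ and $E_2$.
\end{enumerate}

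In both cases, the symmetric functions are invariant under the shift by $\varphi$ (in case (i), after checking what happens to the other eigenvalue), or at least under the shift by $2\varphi$. Handling case (i) cleanly is the first subtle point. If only $E_s$ is shifted onto itself, then the trace equation $E_1+E_2=t$ together with $E_1E_2=\Delta$ should still force $t$ and $\Delta$ to be invariant. Indeed, $E_w(\theta+\varphi)=\Delta(\theta+\varphi)/E_s(\theta+\varphi)$, and $\Delta$ changes only by the constant factor $\ee^{2\pi\ii(\alpha-\beta)\varphi}$. So the other eigenvalue is multiplied by a constant, and comparing traces gives a linear relation among the monomials $z^\alpha, z^{-\beta}$. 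To avoid this case analysis, I would simply pass to $\psi=2\varphi$ (or $q\varphi$ as needed) and use that the unordered pair is preserved up to this constant. The cleanest route is to compare coefficients of $t(\theta+\psi)$ and $t(\theta)$ in the monomials $z^{\alpha}$ and $z^{-\beta}$, which are distinct because $\alpha,\beta\ge1$. Since $a,d\neq0$, this yields $\ee^{2\pi\ii\alpha\psi}=1$ and $\ee^{-2\pi\ii\beta\psi}=1$. With $\gcd(\alpha,\beta)=1$ this forces $\psi\in\Z$.

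The remaining, and main, obstacle is the swap case $s\neq w$ with $2\varphi\in\Z$, that is, $\varphi=\tfrac12$. Here the aim is to show that $E_1(\theta+\tfrac12)\equiv E_2(\theta)$ is impossible. Using the sum and product, the swap gives $t(\theta+\tfrac12)=t(\theta)$, which requires $(-1)^\alpha=(-1)^\beta=1$. Since $\gcd(\alpha,\beta)=1$, $\alpha$ and $\beta$ cannot both be even, so this is impossible. It also remains to exclude the non-swap subcase in which $t$ is multiplied by a constant $\kappa$ under the shift by $\varphi$. In that subcase, matching the two monomials gives $\ee^{2\pi\ii\alpha\varphi}=\ee^{-2\pi\ii\beta\varphi}=\kappa$, so $\ee^{2\pi\ii(\alpha+\beta)\varphi}=1$. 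Comparing with the determinant constraint $\kappa^2=\ee^{2\pi\ii(\alpha-\beta)\varphi}$ (which comes from the fact that the product of the eigenvalues also shifts consistently) gives $\ee^{2\pi\ii\cdot 2\beta\varphi}=1$ and $\ee^{2\pi\ii \cdot 2\alpha\varphi}=1$, hence $2\varphi\in\Z$, so $\varphi=\tfrac12$ again. At that point $(-1)^\alpha=(-1)^{\beta}$ forces both $\alpha$ and $\beta$ to be odd. This residual parity case needs a direct check: there $t(\theta+\tfrac12)=-t(\theta)$, and the shifted eigenvalues are $-E_{1,2}$. One must verify that $E_s(\theta+\tfrac12)=-E_{s'}(\theta)$ cannot coincide with $E_w(\theta)$ identically. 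This holds because $E_s\equiv -E_{s'}$ would require $t\equiv0$ if $s'\neq s$, contradicting $a,d\neq0$, or $E_s\equiv0$ if $s'=s$, contradicting unitarity. This is where I expect most of the care to be needed.
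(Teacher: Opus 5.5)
Your overall route is legitimate: rule out an identity $E_s(\theta+\varphi)\equiv E_w(\theta)$ with rational $0<\varphi<1$, then invoke Lemma~\ref{lem:1D_subseq_2conditions}(iii). Your final check for $\alpha,\beta$ odd and $\varphi=\tfrac12$ is also correct. The reduction to that case, however, does not hold.

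Let $s',w'$ denote the complementary indices and $\kappa=\ee^{2\pi\ii(\alpha-\beta)\varphi}$. Dividing $E_s(\theta+\varphi)E_{s'}(\theta+\varphi)=\Delta(\theta+\varphi)=\kappa\,\Delta(\theta)$ by $E_s(\theta+\varphi)=E_w(\theta)$ gives $E_{s'}(\theta+\varphi)\equiv\kappa E_{w'}(\theta)$. So for $s\neq w$ the shift is not a swap, and for $s=w$ the trace is not invariant, unless $\kappa=1$.

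Passing to $2\varphi$ does not repair this. In the swap case both eigenvalues get multiplied by $\kappa$, and comparing coefficients in $t(\theta+2\varphi)=\kappa\,t(\theta)$ only yields $\ee^{2\pi\ii(\alpha+\beta)\varphi}=1$. In the case $s=w$, $E_s$ is fixed while $E_{s'}$ is multiplied by $\kappa^2$.

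Your claim that the determinant constraint $\kappa^2=\ee^{2\pi\ii(\alpha-\beta)\varphi}$ then gives $\ee^{4\pi\ii\beta\varphi}=1$ is an algebra slip. That constraint holds automatically once $\ee^{2\pi\ii\alpha\varphi}=\ee^{-2\pi\ii\beta\varphi}=\kappa$. In fact, whenever $(\alpha+\beta)\varphi\in\Z$ one has $\widehat{U}(\theta+\varphi)=\ee^{2\pi\ii\alpha\varphi}\widehat{U}(\theta)$ exactly. So no bookkeeping of how $t$ and $\Delta$ transform can exclude, say, $\alpha=1$, $\beta=2$, $\varphi=\tfrac13$. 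Likewise, in the swap case with $\varphi=\tfrac12$ and $\alpha-\beta$ odd you have $\kappa=-1$, so $t(\theta+\tfrac12)=t(\theta)$ does not follow.

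The missing step is the linear relation you mention and then set aside. Subtracting $t(\theta)=E_w(\theta)+E_{w'}(\theta)$ from $t(\theta+\varphi)=E_w(\theta)+\kappa E_{w'}(\theta)$ gives
\[
(\kappa-1)E_{w'}(\theta)=a\big(\ee^{2\pi\ii\alpha\varphi}-1\big)z^{\alpha}+d\big(\ee^{-2\pi\ii\beta\varphi}-1\big)z^{-\beta},
\]
which is exactly the paper's \eqref{eqn:lambda2_condition}.

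If $\kappa=1$, comparing coefficients (using $a,d\neq0$) gives $\alpha\varphi,\beta\varphi\in\Z$, hence $\varphi\in\Z$ by coprimality.

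If $\kappa\neq1$, then $E_{w'}=Az^\alpha+Dz^{-\beta}$ is a trigonometric polynomial, and either of two arguments finishes:
\begin{enumerate}[(a)]
\item $|E_{w'}|\equiv1$ forces the cross term $2\Re(\overline{A}Dz^{-(\alpha+\beta)})$ to vanish, so $A=0$ or $D=0$. Then $E_{w'}=dz^{-\beta}$ or $E_{w'}=az^{\alpha}$, so $|d|=1$ or $|a|=1$, contradicting $abcd\neq0$.
\item Insert $E_{w'}$ into $E_{w'}(t-E_{w'})=(ad-bc)z^{\alpha-\beta}$ and compare the distinct exponents $2\alpha$, $\alpha-\beta$, $-2\beta$. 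This forces $A\in\{0,a\}$ and $D\in\{0,d\}$, hence $E_{w'}\equiv0$, $E_{w'}\equiv t$, or $bc=0$, all impossible.
\end{enumerate}

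This handles every rational $\varphi$ at once, including $\varphi=\tfrac12$, and makes the parity discussion unnecessary. The paper does the modulus computation in (a) pointwise for each fixed $\varphi$, obtaining at most $2(\alpha+\beta)$ coincidences uniformly in $\varphi$, and concludes via Lemma~\ref{lem:1D_subseq_2conditions}(ii). Your identity-level version through (iii) would be slightly shorter once this step is supplied.
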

 \begin{proof}
 	From \eqref{e:uhatheta}, we see that $\widehat{U}(\theta)$ here is given by 
 	\[
    \widehat{U}(\theta)=\begin{pmatrix}
 		a\ee^{2\pi \ii \alpha \theta} & b\ee^{2\pi \ii \alpha \theta} \\
 		c\ee^{-2\pi \ii \beta \theta} & d\ee^{-2\pi \ii \beta \theta} 
 	\end{pmatrix}.
    \]
 	The characteristic polynomial of $\Utheta$ is
 	\begin{equation}\label{e:charpoco}
    p(\lambda)=\lambda^2-(a\ee^{2\pi \ii \alpha \theta}+d\ee^{-2\pi \ii  \beta\theta} )\lambda+(ad-bc)\ee^{2\pi \ii (\alpha-\beta) \theta}
    \end{equation}
 	Denoting the eigenvalues of $\Utheta$ by $\lambda_1(\theta)$ and $\lambda_2(\theta)$, we get that $\lambda_1(\theta)\lambda_2(\theta)=(ad-bc)\ee^{2\pi \ii (\alpha-\beta) \theta}$, and $\lambda_1(\theta)+\lambda_2(\theta)=a\ee^{2\pi \ii \alpha \theta}+d\ee^{-2\pi \ii  \beta\theta}$. Let $\mu_1(\theta)$ and $\mu_2(\theta)$ denote the eigenvalues of $\widehat{U}(\theta+\varphi)$ for some fixed $\varphi>0$. If $\lambda_{1}(\theta)=\mu_{1}(\theta)$ for $\varphi \in (0,1)$, then 
 	\begin{equation}\label{eqn:lambda_j2}
 		\frac{\mu_2(\theta)}{\lambda_{2}(\theta)}=\frac{\ee^{2\pi \ii (\alpha-\beta) (\theta+\varphi)}}{\ee^{2\pi \ii (\alpha-\beta) \theta}}=\ee^{2\pi \ii (\alpha-\beta)\varphi}.
 	\end{equation}
 	Therefore, $\mu_2(\theta)=\lambda_{2}(\theta)\ee^{2\pi \ii (\alpha-\beta)\varphi}$. By considering the sum of eigenvalues, we have
 	\begin{align*}
 		\lambda_{1}(\theta)+\lambda_{2}(\theta) &=a\ee^{2\pi \ii \alpha \theta}+d\ee^{-2\pi \ii  \beta\theta}, \\
 		\mu_{1}(\theta)+\mu_2(\theta) &=a\ee^{2\pi \ii \alpha (\theta+\varphi)}+d\ee^{-2\pi \ii  \beta(\theta+\varphi)}.
 	\end{align*}
 	Subtracting the first equation from the second, using $\lambda_{1}(\theta)=\mu_{1}(\theta)$ and substituting the expression of $\mu_2(\theta)$ from \eqref{eqn:lambda_j2}, we get that
 	\begin{equation}\label{eqn:lambda2_condition}
 		\lambda_{2}(\theta)\big(\ee^{2\pi \ii (\alpha-\beta)\varphi}-1\big)=a\ee^{2\pi \ii \alpha \theta}\left(\ee^{2\pi \ii \alpha \varphi}-1\right)+d\ee^{-2\pi \ii \beta \theta}\big(\ee^{-2\pi \ii \beta \varphi}-1\big).
 	\end{equation}
 
 	Since $\Utheta$ is unitary, then $|\lambda_{2}(\theta)|=1$ and therefore for \eqref{eqn:lambda2_condition} to hold, we must have
 	\[
    \left|\ee^{2\pi \ii(\alpha-\beta)\varphi}-1\right|=\left|a\ee^{2\pi \ii \alpha \theta}\left(\ee^{2\pi \ii \alpha \varphi}-1\right)+d\ee^{-2\pi \ii \beta \theta}\big(\ee^{-2\pi \ii \beta \varphi}-1\big)\right|.
    \]
 	Note that once we fix $\varphi$, the only variable is $\theta$ and therefore the last equation has a solution at $\theta_0$ only if the equation 
 	 	\begin{equation}\label{eqn:c1c2c3}
 	 		c_3=\left|c_1\ee^{2\pi \ii \alpha \theta}+c_2\ee^{-2\pi \ii \beta \theta}\right|
 	 	\end{equation}	
has a solution at $\theta_0$, for 
 	 	\begin{align*}
 	 		c_1 &= a\left(\ee^{2\pi \ii \alpha \varphi}-1\right),\\
 	 		c_2 &= d\big(\ee^{-2\pi \ii \beta \varphi}-1\big) \text{ and} \\
 	 		c_3 &= \big|\ee^{2\pi \ii (\alpha-\beta)\varphi}-1\big|.
 	 	\end{align*}
 	 	Taking the square modulus in \eqref{eqn:c1c2c3}, we get a necessary condition,
 	 	\begin{equation}\label{eqn:c1c2c3_simplified}
 	 		|c_1|^2+|c_2|^2+2\Re(\overline{c_1}c_2\ee^{-2\pi \ii (\alpha+\beta)\theta})=|c_3|^2
 	 	\end{equation}
 	 Now, we look at the number of possible solutions of \eqref{eqn:c1c2c3_simplified} (in $\theta$) by dividing into different cases based on the values of $c_1,c_2$ and $c_3$. 
 	 	\vskip2pt
 	 	\noindent \textbf{Case I:} $c_1c_2 \neq 0$.\\
 	 	In this case, solving \eqref{eqn:c1c2c3_simplified} reduces to solving $2\Re(\overline{c_1}c_2e^{-2\pi \ii (\alpha+\beta)\theta})=$ constant. Therefore, \eqref{eqn:c1c2c3_simplified} holds for at most $2(\alpha+\beta)$ values of $\theta$. 
 	 	\vskip2pt
 	 	\noindent\textbf{Case II:} $c_1c_2=0$ and $c_3 \neq 0$.\\
 	 	We first consider the subcase $c_1=0$. Since $a \neq 0$, $\varphi=r/\alpha$ for some integer $r$ such that $1 \leq r \leq \alpha-1$. Here, \eqref{eqn:c1c2c3} reduces to $c_3=|c_2\ee^{-2\pi \ii \beta \theta}|=|c_2|$, that is,
 	 	\begin{align*}
 	 		\left|d\big(\ee^{-2\pi \ii \beta \frac{r}{\alpha}}-1\big)\right| &= \left|\ee^{2\pi \ii (\alpha-\beta)\frac{r}{\alpha}}-1\right|\\
 	 		&= \left|\ee^{-2\pi \ii \beta\frac{r}{\alpha}}-1\right|,
 	 	\end{align*}
 	 	which implies that $|d|=1$ since $\gcd(\alpha,\beta)=1$. Since $U$ is unitary, $|d|=1$ implies $|a|=1$ and $|b|=|c|=0$, a contradiction since $abcd \neq 0$. A similar calculation also works when $c_2=0$ and therefore \eqref{eqn:c1c2c3_simplified} has no solutions in this case.
 	 	\vskip2pt
 	 	\noindent\textbf{Case III:} $c_1c_2=0$ and $c_3=0$.\\
 	 	Note that once $c_3=0$ and $c_1c_2=0$, we get from \eqref{eqn:c1c2c3_simplified} that $c_1=c_2=0$. Now $c_1=0$ only if $\varphi=\frac{r}{\alpha}$ for some $1 \leq r \leq \alpha-1$, and $c_2=0$ only if $\varphi=\frac{r'}{\beta}$ for some $1 \leq r' \leq \beta-1$. Since $\gcd(\alpha,\beta)=1$, such $r,r'$ do not exist, so this case cannot occur. 

        We have shown that for all the possible cases, and for each fixed $\varphi \in (0,1)$, the number of $\theta$ such that $E_s(\theta+\varphi)=E_w(\theta)$ for some $s,w$ is bounded by a constant independent of $\varphi$. By Lemma~\ref{lem:1D_subseq_2conditions}, it follows that $U$ obeys \eqref{e:flo} and consequently $U$ is $\ell^\infty$-\eqref{eqn:ergodic_general}.
 \end{proof}
 
The following is an immediate corollary of Theorem \ref{thm:2state_gen}.

 \begin{cor}\label{cor:had}
		The Hadamard walk is $\ell^\infty$-\eqref{eqn:ergodic_general} and $\delta(\mu_\psi^N,\mu^N)\to 0$.
	\end{cor}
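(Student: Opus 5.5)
The plan is to recognize the Hadamard walk as a special case of Theorem~\ref{thm:2state_gen} and then convert $\ell^\infty$-\eqref{eqn:ergodic_general} into total variation convergence via Proposition~\ref{prp:tvd}.

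\textbf{Step 1: identify the Hadamard walk.} By \eqref{e:uhagro},
\[
U_{\mathrm{Had}}=\frac{1}{\sqrt2}\begin{pmatrix} S_{-1}&S_{-1}\\ S_1&-S_1\end{pmatrix}.
\]
This is exactly the walk $\begin{pmatrix} aS_{-\alpha}&bS_{-\alpha}\\ cS_\beta&dS_\beta\end{pmatrix}$ of Theorem~\ref{thm:2state_gen} with $\alpha=\beta=1$ and $a=b=c=\frac{1}{\sqrt2}$, $d=-\frac{1}{\sqrt2}$. The coin $\frac{1}{\sqrt2}\begin{pmatrix}1&1\\1&-1\end{pmatrix}$ is unitary. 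Moreover $abcd=-\frac14\neq0$ and $\gcd(1,1)=1$, so the hypotheses of Theorem~\ref{thm:2state_gen} hold.

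\textbf{Step 2: apply Theorem~\ref{thm:2state_gen}.} It gives \eqref{e:flo}, and hence, through Theorem~\ref{thm:1D_l_infty}, $\ell^\infty$-\eqref{eqn:ergodic_general}.

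\textbf{Step 3: pass to PQE.} Specializing $a_j=\phi$ for all $j$, as in \S\ref{sec:fp}, gives \eqref{e:mainlim} for all $\|\phi^{(N)}\|_\infty\le 1$. This uses $\sum_s P_{E_s}=\mathrm{Id}$ together with Parseval, which reduce $\langle a\rangle_\psi$ to $\langle\phi\rangle$.

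\textbf{Step 4: total variation.} Proposition~\ref{prp:tvd} then yields $\delta(\mu_\psi^N,\mu^N)\to0$ for every compactly supported normalized $\psi$.

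No genuine obstacle arises. The only care needed is to confirm that the sign conventions of \eqref{e:uhagro} match the template of Theorem~\ref{thm:2state_gen}, namely that the shifts are $S_{-\alpha}$ on the first row and $S_\beta$ on the second with $\alpha=\beta=1$. One should also check that Case II of that theorem's proof, where $\varphi=r/\alpha$ with $1\le r\le\alpha-1$, is vacuous when $\alpha=\beta=1$, which is consistent with its conclusion.
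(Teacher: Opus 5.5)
Your proof is correct and follows the paper's route exactly. The Hadamard walk is the case $\alpha=\beta=1$, $abcd=-\frac14\neq 0$ of Theorem~\ref{thm:2state_gen}, which gives \eqref{e:flo} and hence $\ell^\infty$-\eqref{eqn:ergodic_general}; specializing to $a_j=\phi$ and then applying Proposition~\ref{prp:tvd} yields $\delta(\mu_\psi^N,\mu^N)\to 0$. One small precision for Step 3: the identity $\sum_s\|P_{E_s}v\|^2=\|v\|^2$ uses that the eigenprojections of the unitary $\widehat{U}(\theta)$ are mutually orthogonal, not merely that they sum to the identity.
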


\begin{rem}\label{rem:hadcyc}
As we mentioned in the introduction, the Hadamard walk of Corollary~\ref{cor:had} is the only concrete example whose ergodicity has previously been established in the quantum walk literature \cite{Aha,Bednarska2003}, using explicit computations of $\mu_\psi^N(v) =\lim_{T\to\infty}\mu_{T,\psi}^N(v)$. The paper \cite{NayakVishwanath2000} asked the same question but had a different approach. There, the authors work directly on $\Z$ and observe that the distribution of $P(n,t):= \|(U^t \psi)(n)\|_{\C^2}^2$ tends to be close to uniform for vertices $n\in [\![\frac{-t}{\sqrt{2}},\frac{t}{\sqrt{2}}]\!]$ as time grows large (note that $P(\cdot,t)$ is a probability measure on $[\![-t,t]\!]$). The way they justify this is by computing the asymptotics of $P(n,t)$ and effectively looking at the pushforward measure $h_\ast P(\cdot,t)$ on $[-1,1]$ by the map $h:[\![-t,t]\!]\to [-1,1]$, $h(n)=n/t$ and computing some of its moments. Actually, the convergence of $h_\ast P(\cdot ,t)$ was rigorously established by Konno later \cite{Konno2005} among his limit theorems, the limiting measure is indeed supported on $[\frac{-1}{\sqrt{2}},\frac{1}{\sqrt{2}}]$, but is not really uniform; it has a density with respect to the Lebesgue measure, which is ``somehow'' flat-like near zero; the phenomenon which is observed in \cite{NayakVishwanath2000}.
\end{rem}

In Theorem~\ref{thm:2state_gen}, besides the assumption that $abcd\neq 0$, we assumed that $\gcd(\alpha,\beta)=1$. If $\gcd(\alpha,\beta)>1$, we know that the classical random walks on the cycle 
$\LL_{N_n}$ with $N_n=n \operatorname{lcm}(\alpha,\beta)$ and step sizes $\alpha$ to the left  and $\beta$ to the right, are periodic, and their stationary distribution is not the uniform distribution on $\LL_{N_n}$. In the following example, we show that this is true for discrete-time quantum walks as well.

 \begin{exa}
 	Consider the quantum walk defined by $U=\frac{1}{\sqrt{2}}\begin{bmatrix}
 		S_{-2} & S_{-2} \\
 		S_{2}& -S_{2}
 	\end{bmatrix}$. This a coined walk with a Hadamard coin, but with step size two. Here,
 	$\widehat{U}(\theta)=\frac{1}{\sqrt{2}}\begin{pmatrix}
 		\ee^{4\pi \ii \theta} & e^{4\pi \ii \theta} \\
 		\ee^{-4\pi \ii \theta} & -\ee^{-4\pi \ii \theta} 
 	\end{pmatrix},$ implying that $\Utheta=\widehat{U}(\theta+1/2)$ for all $\theta$. In particular, $E_s\big(\frac{r+N/2}{N}\big)=E_s\big(\frac{r}{N}\big)$ for all $r$ and even $N$, so \eqref{e:flo} is violated. $\ell^\infty$-\eqref{e:mainlim} is also violated on a subsequence. To see this, choose $N_n=n \operatorname{lcm}(2,2)=2n$ and consider the initial state $\psi=\delta_0 \otimes f$. As the step size is 2, the positions that $U^{k}$ takes are $P_k=\{2s\mod N_n: 0 \leq s \leq k\}$. Since $N_n$ is even, $P_k$ is a subset of even natural numbers.
    Therefore, by choosing 
    $\phi = \mathbf{1}_{\text{odd in }\LL_{N_n}}$, we get that \eqref{e:mainlim} is violated.

    Here, choosing $N_n$ as even is important, because if $N_n$ is odd, the quantum walk can go to the position $N-2$ at the first step, which is odd, and eventually reach every position in at most in $\frac{N-1}{2}$ steps. This is in harmony with Theorem~\ref{thm:main1d} which says that some subsequence in fact satisfies \eqref{eqn:ergodic_general}. As there are no flat bands, the same theorem tells us that \eqref{e:mainlim} is still satisfied for the full sequence for \emph{regular} observables.
 \end{exa}

 \subsubsection{Diagonal and anti-diagonal coin matrices}
We now study the quantum walks whose coin matrices have zero entries. Since the coin matrix $C=\begin{pmatrix}
 		a & b \\
 		c &d
 	\end{pmatrix}$ is unitary, then $a=0$ if and only if $d=0$, and similarly $b=0$ if and only if $c=0$. Therefore, the coin matrices with zero entries are either diagonal or anti-diagonal matrices. We treat these cases separately in Propositions~\ref{prp:coidia} and \ref{prp:offdia}.

  \begin{prp}\label{prp:coidia}
      For the coined walk on $\Z$ given by $U=\begin{pmatrix}
 		aS_{-\alpha} & 0 \\
 		0 &d S_{\beta}
 	\end{pmatrix}$, $|a|=|d|=1$ and $\alpha,\beta \in \Z_+$, if $\alpha=\beta=1$, then the quantum walk satisfies \eqref{e:flo} and is $\ell^\infty$-\eqref{eqn:ergodic_general}. If $\alpha\neq 1$ or $\beta\neq 1$, it is not $\ell^\infty$-\eqref{e:mainlim}.
  \end{prp}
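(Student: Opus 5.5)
For the diagonal walk the Floquet matrix is already diagonal:
\[
\widehat{U}(\theta)=\begin{pmatrix} a\ee^{2\pi\ii\alpha\theta}&0\\0&d\ee^{-2\pi\ii\beta\theta}\end{pmatrix}.
\]
So $E_1(\theta)=a\ee^{2\pi\ii\alpha\theta}$ and $E_2(\theta)=d\ee^{-2\pi\ii\beta\theta}$, with constant projections $P_{E_1}=\mathrm{diag}(1,0)$ and $P_{E_2}=\mathrm{diag}(0,1)$. Both statements will then be handled directly, without the heavier machinery.

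\emph{Case $\alpha=\beta=1$.} I will verify condition (iii) of Lemma~\ref{lem:1D_subseq_2conditions} fails for every pair $(s,w)$.
\begin{itemize}
\item For $s=w=1$: $E_1(\theta+\varphi)\equiv E_1(\theta)$ forces $\ee^{2\pi\ii\varphi}=1$, which is impossible for $0<\varphi<1$. The pair $s=w=2$ is identical.
\item For $s\neq w$: $a\ee^{2\pi\ii(\theta+\varphi)}\equiv d\ee^{-2\pi\ii\theta}$ is impossible, since the two sides are distinct characters in $\theta$ and the coefficients are nonzero.
\end{itemize}
Hence \eqref{e:flo} holds by Lemma~\ref{lem:1D_subseq_2conditions}. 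Theorem~\ref{thm:1D_l_infty} then gives $\ell^\infty$-\eqref{eqn:ergodic_general}. A more hands-on alternative is to count the solutions of $E_s(\frac{r+m}{N})=E_w(\frac r N)$ explicitly: there are at most $2$ in each case.

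\emph{Case $\alpha\neq1$ or $\beta\neq1$.} By symmetry assume $\alpha\ge2$. The walk preserves spin and moves spin-$1$ mass by exactly $-\alpha$ per step, generalizing Example~\ref{exa:39}.
\begin{itemize}
\item Take $\psi=\delta_0\otimes\binom10$ and $N_n=\alpha n$. Then $U_N^k\psi=a^k\delta_{-k\alpha \bmod N}\otimes\binom10$, so $\mu_{T,\psi}^N$ is supported on the multiples of $\alpha$ for all $T$.
\item Let $\phi=\mathbf 1_{\alpha\Z\cap\LL_N}$, which satisfies $\|\phi\|_\infty\le1$. Then $\langle\phi\rangle_{T,\psi}=1$, while $\langle\phi\rangle=1/\alpha\le 1/2$.
\end{itemize}
So the difference does not tend to zero along $N_n$, and \eqref{e:mainlim} fails for bounded observables. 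If instead $\beta\ge2$, use spin $2$ and $N_n=\beta n$ in the same way. Only a subsequence is needed, since failure of the limit statement along one subsequence suffices.

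The only point needing care is the $s\neq w$ comparison in the first case, together with the remark that constant (but not zero) projections cause no issue. I therefore expect no real obstacle; the argument is essentially bookkeeping.
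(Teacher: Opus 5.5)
Your proposal is correct and follows the paper's proof essentially step for step. It uses the same diagonal Floquet matrix, gets NRG via Lemma~\ref{lem:1D_subseq_2conditions} and then Theorem~\ref{thm:1D_l_infty} when $\alpha=\beta=1$, and otherwise uses the same counterexample $\psi=\delta_0\otimes\binom10$, $N_n=\alpha n$, $\phi=\mathbf 1_{\alpha\Z\cap\LL_{N_n}}$. Your explicit check that $E_1(\cdot+\varphi)\not\equiv E_2$ for the pairs $s\neq w$ is a useful addition: the paper leaves this implicit, remarking only that both eigenvalues have period $1$.
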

\begin{proof}
Here $\Utheta=\begin{pmatrix}
 	a\ee^{2 \pi \ii \alpha \theta} & 0 \\
 	0 & d \ee^{-2 \pi \ii \beta \theta}
 \end{pmatrix}$, and its eigenvalues are $a\ee^{2 \pi \ii \alpha \theta}$ and $d \ee^{-2 \pi \ii \beta \theta}$, with periods $1/\alpha$ and $1/\beta$, respectively.

If $\alpha=\beta=1$, both eigenvalues have period 1 and \eqref{e:flo} is satisfied. Hence as a consequence of Theorem~\ref{thm:main1d}, $U$ is $\ell^\infty$-\eqref{eqn:ergodic_general}.

If either one of $\alpha, \beta>1$, \eqref{e:flo} is violated. Let us construct a sequence $(N_n) \to \infty$ along which \eqref{e:mainlim} is violated. Suppose $\alpha>1$ (the case $\beta>1$ is similar), and consider the sequence $N_n=\alpha n$ for $n \geq 1$ and the initial state $\psi=\delta_0 \otimes \begin{pmatrix}
     1 \\ 0
 \end{pmatrix}=\begin{pmatrix}
     \delta_0 \\ 0
 \end{pmatrix}$.
Then $U^k=\begin{pmatrix}
     a^k S_{-\alpha k} &0 \\ 
     0 &d^k S_{\beta k}
 \end{pmatrix}$ and consequently
 $U_{N_n}^k\psi=\delta_{(-\alpha k) \ \, \mathrm{mod} N_n} \otimes \begin{pmatrix}
     a^k \\ 0
 \end{pmatrix}$ for all $k$ and $n$.  Since $N_n=\alpha n$, it follows that the set of positions that $U_{N_n}$ reaches is $\{(-\alpha k)\mod N_n: k \in \N\}$, which is precisely the set $\cP_n=\{\alpha i: 0\leq i < n \}$. Therefore, the measure $\mu_{T,\psi}^{N_n}$ is supported on $\cP_n$ and in fact $\lim_{T \to \infty} \mu_{T,\psi}^{N_n}(r)=\frac{\mathbf{1}_{\cP_n}(r)}{n}$, where $\mathbf{1}_{\cP_n}$ is the indicator function of $\cP_n$. Hence, if $\phi^{(N_n)}=\mathbf{1}_{\cP_n}$, then $\lim_{T\to\infty}\langle \phi\rangle_{T,\psi}=1$, while $\langle \phi\rangle = \frac{1}{\alpha}$. This violates $\ell^\infty$-\eqref{e:mainlim}.
  \end{proof}

 \begin{rem}
Note that for any $\alpha,\beta\ge 1$, the diagonal walk of Proposition~\ref{prp:coidia} has no flat bands, and thus satisfies \eqref{e:mainlim} for regular observables.

For $\alpha=\beta=1$, the walk takes $\delta_0 \otimes \begin{pmatrix}
     f_1 \\
     f_2
 \end{pmatrix}\mapsto \delta_{N-1} \otimes \begin{pmatrix}
     f_1 \\
     0
 \end{pmatrix}+\delta_1 \otimes \begin{pmatrix}
     0\\
     f_2
 \end{pmatrix}\mapsto \delta_{N-2} \otimes \begin{pmatrix}
     f_1 \\
     0
 \end{pmatrix}+\delta_2 \otimes \begin{pmatrix}
     0\\
     f_2
 \end{pmatrix}\mapsto \dots$. After $N$ steps, $U_N^N\psi = \psi$. 
Here periodic boundary conditions mean that we are effectively working on a cycle.

Also note that $U_N^n\psi$ itself does not converge, a time averaging is necessary.

It is also clear that there is no equidistribution in spin space, e.g. if $f_2=0$, the spin remains ``up'' at all times.
 \end{rem}
 The dynamics of the case $a=d=0$ is much more interesting. 
 
  \begin{prp}\label{prp:offdia}
      For a two-state coined walk on $\Z$ given by $U=\begin{pmatrix}
 		0 & bS_{-\alpha} \\
 		cS_{\beta} &0
 	\end{pmatrix}$, $|b|=|c|=1$ and $\alpha,\beta \in \Z_+$, the walk $U$
    \begin{enumerate}[\rm (i)]
    \item has a flat band and is not $\ell^\infty$-\eqref{e:mainlim} if $\alpha=\beta$.
    \item is $\ell^\infty$-\eqref{eqn:ergodic_general} if $|\alpha-\beta|=1$.
    \item is not $\ell^\infty$-\eqref{e:mainlim} if \emph{($|\alpha-\beta|>2$)} or \emph{(}$|\alpha-\beta|=2$ with $\alpha,\beta$ even\emph{)}.
    \item is $\ell^\infty$-\eqref{e:mainlim} but not $\ell^\infty$-\eqref{eqn:ergodic_general} if $|\alpha-\beta|=2$ with $\alpha,\beta$ odd.
    \end{enumerate}
    Furthermore, there are no flat bands except in case \emph{(i)}, and \eqref{e:flo} holds only in \emph{(ii)}.
  \end{prp}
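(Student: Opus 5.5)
The plan is to read everything off the Floquet matrix, which here is completely explicit. By \eqref{e:uhatheta},
\[
\widehat{U}(\theta)=\begin{pmatrix} 0 & b\,\ee^{2\pi\ii\alpha\theta}\\ c\,\ee^{-2\pi\ii\beta\theta} & 0\end{pmatrix}.
\]
Its eigenvalues satisfy $\lambda^2=bc\,\ee^{2\pi\ii\delta\theta}$, where $\delta:=\alpha-\beta$. Fix a square root $\gamma$ of $bc$. Then
\[
E_\pm(\theta)=\pm\gamma\,\ee^{\pi\ii\delta\theta}, \qquad v_\pm(\theta)=\tfrac{1}{\sqrt2}\big(b\,\ee^{2\pi\ii\alpha\theta},\,\pm\gamma\,\ee^{\pi\ii\delta\theta}\big)^T .
\]
Several conclusions follow at once.

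\textbf{Flat bands and NRG.} The bands are flat iff $\delta=0$. In case (i), Proposition~\ref{prp:flanoqe} then gives the failure of \eqref{e:mainlim} for a regular observable, which is in particular bounded. In every other case $\delta\neq0$ and there is no flat band.

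Next I check the shift relations $E_s(\theta+\varphi)\equiv E_w(\theta)$ with $0<\varphi<1$. These hold iff $\ee^{\pi\ii\delta\varphi}=\pm1$, i.e. iff $\delta\varphi\in\Z$.
\begin{itemize}
\item If $|\delta|=1$, no admissible $\varphi$ exists. By Lemma~\ref{lem:1D_subseq_2conditions}, \eqref{e:flo} holds, and Theorem~\ref{thm:1D_l_infty} yields $\ell^\infty$-\eqref{eqn:ergodic_general}. This proves (ii).
\item If $|\delta|\ge2$, the choice $\varphi=1/|\delta|$ works, so \eqref{e:flo} fails. This settles the final sentence of the proposition.
\end{itemize}

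\textbf{Dynamics for (iii) and (iv).} Compute $U^2=bc\,\mathrm{diag}(S_{-\delta},S_{-\delta})$. Take the qubit $\psi=\delta_0\otimes e_1$. Then:
\begin{itemize}
\item $U^{2k}\psi$ is supported at $-k\delta$ with spin $1$;
\item $U^{2k+1}\psi$ is supported at $\beta-k\delta$ with spin $2$.
\end{itemize}
Work along $N_n=|\delta|n$. The support of $\mu^{N_n}_{T,\psi}$ then lies in the residue classes $\{0,\beta\}$ mod $|\delta|$.

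For (iii) with $|\delta|\ge3$, these are at most $2$ of the $|\delta|$ classes. Taking $\phi$ to be the indicator of the remaining classes gives $\langle\phi\rangle_{T,\psi}=0$ while $\langle\phi\rangle\ge 1/3$. For $|\delta|=2$ with $\alpha,\beta$ even, $\beta\equiv0$ mod $2$, so the walk stays on even sites; $\phi=\mathbf 1_{\text{odd}}$ gives $0$ versus $1/2$.

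\textbf{Case (iv): positive part.} Here $|\delta|=2$ and $\alpha,\beta$ are odd. I would invoke Theorem~\ref{thm:equidistribution} with $M=2$.
\begin{itemize}
\item Along odd $N$, $\gcd(M,k)=1$ gives equidistribution.
\item Along even $N$, the only extra term in \eqref{e:cupsi} has $\varphi=1/2$ and involves the relations $E_\pm(\theta+\tfrac12)=E_\mp(\theta)$.
\end{itemize}
The key check is that $v_+(\theta+\tfrac12)=-v_+(\theta)$, because $\ee^{\pi\ii\alpha}=-1$ for $\alpha$ odd and $\ee^{\pi\ii\delta/2}=-1$ for $|\delta|=2$. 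Hence $P_{E_\pm}(\theta+\tfrac12)=P_{E_\pm}(\theta)$ and $P_{E_\pm}(\theta+\tfrac12)P_{E_\mp}(\theta)=0$. So $c^{(2)}_{u,\psi}=\tfrac12$ for $u=0,1$, and $\sum_u\tfrac12\langle\phi\rangle_{B_u}=\langle\phi\rangle$. Both subsequences give $\ell^\infty$-\eqref{e:mainlim}, hence the full sequence does.

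\textbf{Case (iv): failure of FQE.} Along even $N$, take $a=(\mathbf 1_{\text{odd}},0)$. Spin $1$ only ever sits at positions $-k\delta$, which are even, so the left side of \eqref{eqn:ergodic_general} is $0$. The right side is $\tfrac12 c_{N,\psi,1}$. I expect $c_{N,\psi,1}\to c_{\psi,1}=\int_0^1\sum_\pm|[P_{E_\pm}(\theta)e_1]_1|^2\,\dd\theta=\tfrac12>0$, since $|[v_\pm(\theta)]_1|^2=\tfrac12$ for each sign. This gives a limit of $\tfrac14\neq0$, a contradiction.

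The main obstacle is the bookkeeping of the eigenprojection products in \eqref{e:cupsi} for case (iv), in particular the sign/parity verification $v_\pm(\theta+\tfrac12)\propto v_\pm(\theta)$, which is exactly where the parity of $\alpha,\beta$ enters.
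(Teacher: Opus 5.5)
Your proof is correct, and for parts (ii), (iii) and the failure of \eqref{eqn:ergodic_general} in (iv) it follows essentially the paper's route. That route is: explicit band functions combined with Lemma~\ref{lem:1D_subseq_2conditions}, and the explicit orbit of $\delta_0\otimes e_1$ along $N_n=|\alpha-\beta|n$. Your observable $(\mathbf{1}_{\text{odd}},0)$ yields $\tfrac14$ where the paper's $(\mathbf{1}_{\text{odd}},\mathbf{1}_{\text{even}})$ yields $\tfrac12$. Also, no limit is needed there: $c_{N,\psi,1}=\tfrac12$ exactly for every $N$, since $|[v_\pm(\theta)]_1|^2\equiv\tfrac12$.

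The genuine difference is in the positive half of (iv). A milder difference is in (i), where you quote Proposition~\ref{prp:flanoqe} instead of exhibiting the two-point orbit $\{0,\beta\}$.

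\textbf{The paper's route for (iv).} It computes $U_N^{2k}\psi$ and $U_N^{2k+1}\psi$ for an arbitrary compactly supported $\psi$. It then checks by hand that, because $\alpha$ is odd, the even and odd time steps fill complementary parity classes, so $\mu^N_\psi$ is exactly uniform. This is elementary and self-contained, and gives exact uniformity at finite $N$. However, it is written out only along $N=2n$; the odd-$N$ case is an equally easy check but is left implicit.

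\textbf{Your route for (iv).} You feed $M=2$ into Theorem~\ref{thm:equidistribution}. Odd $N$ is covered by $\gcd(M,k)=1$. For even $N$, the only NRG-violating mode is $m=N/2$, and its contribution in \eqref{e:cupsi} carries the factor $P_{E_\pm}(\theta+\tfrac12)P_{E_\mp}(\theta)$. This factor vanishes identically because $v_\pm(\theta+\tfrac12)=-v_\pm(\theta)$ when $\alpha$ is odd.

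\textbf{What each buys.} Your argument treats both parities of $N$ uniformly and isolates the actual mechanism: it is exactly the projector-product alternative of Theorem~\ref{thm:fqenrg}. It also explains the even sub-case of (iii). There $v_+(\theta+\tfrac12)=v_-(\theta)$, so the product equals $P_{E_-}(\theta)\neq0$ and the mode survives; for $\psi=\delta_0\otimes e_1$ one gets $c^{(2)}_{u,\psi}=\tfrac12+\tfrac12(-1)^u$, i.e.\ the walk lives on even sites. The price is reliance on the heavier machinery of Theorem~\ref{thm:equidistribution}, whereas the paper's count is completely elementary.

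One small omission: for the final sentence of the proposition you should also say that \eqref{e:flo} fails in case (i). This is immediate, since constant bands satisfy $E_s(\theta+\varphi)\equiv E_s(\theta)$ for every $\varphi$.
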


  \begin{figure}[h!]
    \centering
    \includegraphics[width=140mm,height=70mm]{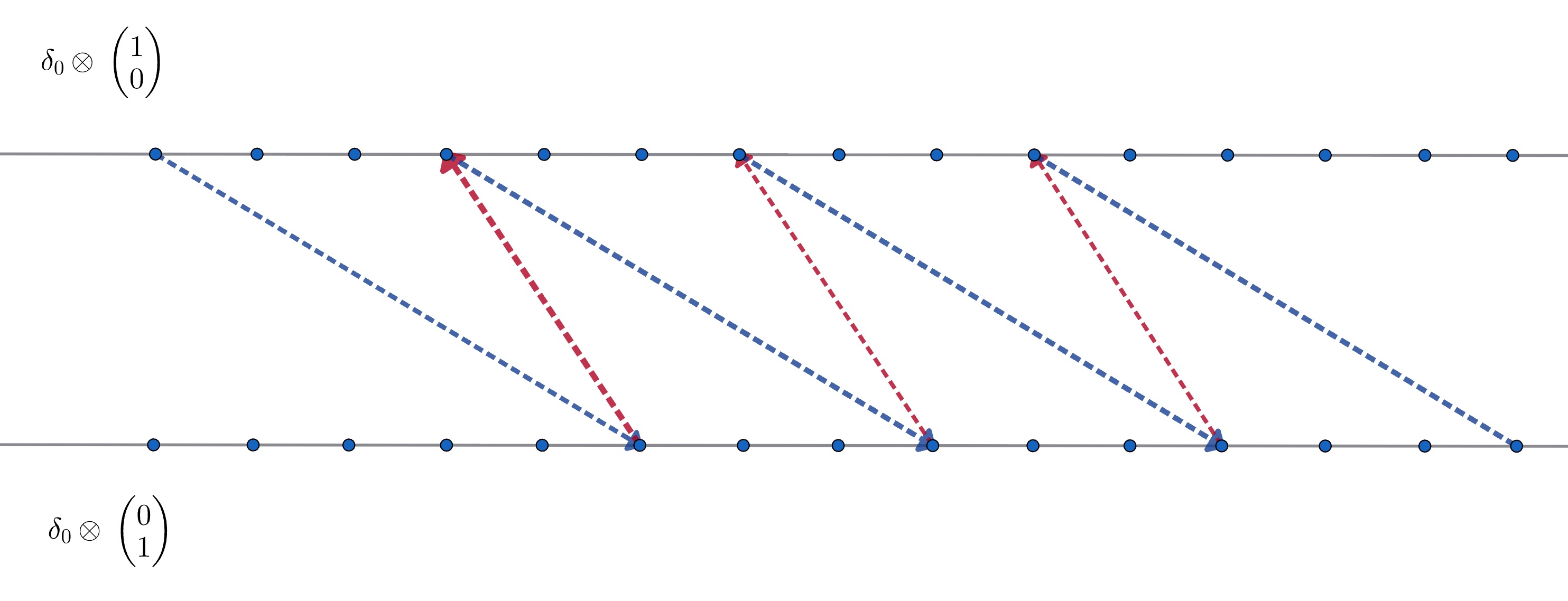}
    \caption{Time evolution of the $1d$ quantum walk with coin matrix $\begin{pmatrix}
        0& 1\\ 1  &0
    \end{pmatrix}$
    and step sizes $\alpha=2$ and $\beta=5$, and initial state $\psi=\delta_0 \otimes \begin{pmatrix}
        1 \\ 0
    \end{pmatrix}$.}
    \label{fig:2state_example}
\end{figure}

\begin{proof}
    We first look at the time evolution of the quantum walk for the initial state $\psi=\delta_0\otimes \begin{pmatrix}
     1 \\ 0
 \end{pmatrix}$. Notice that 
 $U^{2}=\begin{pmatrix}
     bcS_{\beta-\alpha} &0 \\ 
     0 & bcS_{\beta-\alpha}
 \end{pmatrix}$ and 
\begin{equation}\label{eqn:U_evol_anti-diagonal}
    U\left( \delta_0 \otimes \begin{pmatrix}
     1 \\0 
 \end{pmatrix}
 \right)= \delta_\beta \otimes \begin{pmatrix}
     0 \\ c
 \end{pmatrix} \qquad
 \text{ and } \qquad
  U\left( \delta_0 \otimes \begin{pmatrix}
     0 \\1 
 \end{pmatrix}
 \right)= \delta_{-\alpha} \otimes \begin{pmatrix}
     b \\ 0
 \end{pmatrix}. 
\end{equation}

Therefore,
\begin{align}\label{e:upooff}
U_{N_n}^{2k}\psi&=\begin{pmatrix}
     (bc)^k S_{\beta-\alpha}^k &0 \\ 
     0 & (bc)^k S_{\beta-\alpha}^k
 \end{pmatrix}\left(\delta_{0}\otimes \begin{pmatrix}
     1 \\ 0
 \end{pmatrix}\right)=\delta_{k \beta-k\alpha \ \, \mathrm{mod} N_n}\otimes\begin{pmatrix}
     (bc)^k \\ 0
 \end{pmatrix}, \nonumber \\
U_{N_n}^{2k+1}\psi&=\begin{pmatrix}
     0 & bS_{-\alpha} \\ 
     cS_{\beta} & 0
 \end{pmatrix}U_{N_n}^{2k}\psi= \delta_{(k+1)\beta-k\alpha \, \ \mathrm{mod} N_n} \begin{pmatrix}
       0 \\ c(bc)^k
 \end{pmatrix}.
\end{align}

Hence, the set of points in the position space $\LL_{N_n}$ that the quantum walk reaches starting from the initial state $\psi=\delta_0 \otimes \begin{pmatrix}
    1 \\ 0
\end{pmatrix}$ is
\[
\mathcal{P}_n=\{(k \beta-k \alpha)\mod N_n,\ (k\beta-k\alpha+\beta)\mod N_n: k \in \N\}.
\]

Now, we look at the eigenvalues of $\Utheta$. We have
    $\Utheta=\begin{pmatrix}
 	0 &be^{-2 \pi \ii \alpha \theta}\\
 	ce^{2 \pi \ii \beta \theta} &0
 \end{pmatrix}$
 and therefore the eigenvalues of $\Utheta$ are 
 \begin{equation}\label{e:eigenoff}
E_1(\theta)=\sqrt{bc}\,\ee^{\pi \ii (\beta-\alpha)\theta} \qquad \text{ and } \qquad E_2(\theta)=-\sqrt{bc}\,\ee^{\pi \ii (\beta-\alpha)\theta}.
 \end{equation}
 
 Now, we prove each of the cases separately.
 
 (i) For $\alpha=\beta$, the eigenvalues of $\Utheta$ are $E_1(\theta)=\sqrt{bc}$ and $E_2(\theta)=-\sqrt{bc}$.   Also,  
\[
U^{2k}\psi=\delta_{0}\otimes\begin{pmatrix}
     (bc)^k \\ 0
 \end{pmatrix} \quad \text{ and } \quad 
U^{2k+1}\psi= \delta_\beta \otimes \begin{pmatrix}
       0 \\ (bc)^k
 \end{pmatrix} \quad  \text{ for all } k \in \N.
\]
Therefore, the walk concentrates on two points $0$ and $\beta$, in clear violation of \eqref{e:mainlim}. In fact, $\mu_{T,\psi}^N=\frac{\delta_0+\delta_\beta}{2}$ if $T$ is even and $\mu_{T,\psi}^N=(\frac{1}{2}+\frac{1}{2T})\delta_0+(\frac{1}{2}-\frac{1}{2T})\delta_\beta$ if $T$ is odd, so taking e.g. $\phi=\delta_0+\delta_\beta$, $\phi$ is regular, satisfies $\langle \phi\rangle_{T,\psi}=1$ but $\langle\phi\rangle=\frac{2}{N}$. This proves (i).

(ii) Back to \eqref{e:eigenoff}, if $|\alpha-\beta|=1$, then  \eqref{e:flo} is satisfied, so $U$ is $\ell^\infty$-\eqref{eqn:ergodic_general}.
 
(iii) For $|\alpha - \beta|>1$,
 \[
 E_2(\theta)=\ee^{\ii \pi} E_1(\theta)=\sqrt{bc}\,\ee^{\ii \pi (\beta-\alpha)\left(\theta+\frac{1}{\beta-\alpha}\right)}=E_1\Big(\theta+\frac{1}{\beta-\alpha}\Big).
 \]
 So \eqref{e:flo} is violated on the subsequence of $N_n=n|\beta-\alpha|$ at $m=n$.

 To show that $U$ is not $\ell^\infty$-\eqref{e:mainlim} along $N_n$, we first consider $\alpha,\beta$ such that $|\alpha-\beta|>2$. Recall \eqref{e:upooff}.
Note that the quantum walk on $\LL_{N_n}$ reaches back the initial position after $2n$ steps, and then retraces its steps. Since $N_n=n|\beta-\alpha|$, we have $N_n>2n$ and therefore the set of positions that the quantum walk reaches, $\cP_n$ is a strict subset of $\LL_{N_n}$. Consider the observable $\phi=\mathbf{1}_{\cP_n^c} \in \ell^\infty(\Z)$. 
Since $N_n=n|\alpha-\beta|$, we get $\supp(\phi) \cap \cP_n = \emptyset$ for all $n$, therefore $\langle \phi\rangle_{T,\psi}=0$ for all $n$. On the other hand, 
\[
\langle \phi \rangle= \frac{|\cP_n^c|}{N_n}=\frac{N_n-2n}{N_n}=
1-\frac{2}{|\beta-\alpha|} \nto 0.
\]

Similarly, when $|\alpha-\beta|=2 $ with $\alpha,\beta$ even, the set $\cP_n$ consists only of even integers. Hence, choosing $\phi$ as the indicator function of odd integers, we get a violation of \eqref{e:mainlim}. 

The time evolution of the quantum walk in this case is depicted in Figure \ref{fig:2state_example}.

(iv) Consider $\alpha,\beta$ odd, with $|\alpha-\beta|=2$. We shall first show that $U$ is $\ell^\infty$-\eqref{e:mainlim}. For this, we show that the measure $\mu_{T,\psi}^{N}$ converges in total variation distance to the uniform distribution for every normalized initial state $\psi$, and it then follows by Proposition~\ref{prp:tvd} that $U$ is $\ell^\infty$-\eqref{e:mainlim}. Note from \eqref{eqn:U_evol_anti-diagonal} and linearity of $U$ that for $\psi=\sum_{p \in \Lambda} \psi(p)\delta_p$, with $\psi(p)=\binom{\psi_1(p)}{\psi_2(p)}$, we have for $k\in\N$,
\[
U_{N_n}^{2k}\psi=\begin{pmatrix}
     (bc)^k S_{k(\beta-\alpha)} &0 \\ 
     0 & (bc)^k S_{k(\beta-\alpha)}
 \end{pmatrix} \psi=\sum_{p \in \Lambda}(bc)^k
      \psi(p)\delta_{p+k \beta-k\alpha \ \,\mathrm{mod} N_n} 
 \]
 \[
U_{N_n}^{2k+1}\psi=\begin{pmatrix}
     0 & bS_{-\alpha} \\ 
     cS_{\beta} & 0
 \end{pmatrix}U_{N_n}^{2k}\psi=\sum_{p \in \Lambda} (bc)^k \binom{b\psi_2(p)\delta_{p+k\beta-(k+1)\alpha \ \,\mathrm{mod} N_n}}{c\psi_1(p)\delta_{p+(k+1)\beta-k\alpha \ \,\mathrm{mod} N_n}} .
\]

Assume $\beta-\alpha=2$. Then $(U_{N_n}^{2k}\psi)(r)=(bc)^k\psi(r-2k)$, so recalling $|b|=|c|=1$,
\[
\sum_{k=0}^{n-1} \|(U_{N_n}^{2k}\psi)(r)\|^2 = \begin{cases} \sum_{m\text{ even in }\LL_N}\|\psi(m)\|^2 &\text{if } r \text{ is  even},\\\sum_{m\text{ odd in }\LL_N}\|\psi(m)\|^2 &\text{if } r \text{ is  odd}. \end{cases}
\]
Next $(U_{N_n}^{2k+1}\psi)(r)=(bc)^k\binom{b \psi_2(r-2k+\alpha)}{c\psi_1(r-2k-\beta)}$, so $\|(U_{N_n}^{2k+1}\psi)(r)\|^2 = |\psi_2(r-2k+\alpha)|^2+|\psi_1(r-2k-\alpha-2)|^2$, as $\beta=\alpha+2$. Since $\alpha$ is odd, it follows that
\[
\sum_{k=0}^{n-1} \|(U_{N_n}^{2k+1}\psi)(r)\|^2 = \begin{cases} \sum_{m\text{ odd in }\LL_N}\|\psi(m)\|^2 &\text{if } r \text{ is  even},\\\sum_{m\text{ even in }\LL_N}\|\psi(m)\|^2 &\text{if } r \text{ is  odd}, \end{cases}
\]
Combining both equations, we get that
\[
\mu_{2n,\psi}^{N_n}(r)=\frac{1}{2n}\sum_{k=0}^{2n-1} \|(U_{N_n}^{k}\psi)(r)\|^2= \frac{1}{2n}\sum_{m\in\LL_N}\|\psi(m)\|^2=\frac{1}{2n}=\mu^{N_n}(r),
\]
for all $r$, where $\mu^{N_n}$ is the uniform measure on $\LL_{N_n}$.  More generally, $\mu_{2nq,\psi}^{N_n}(r)=\mu^{N_n}(r)$ for any integer $q$, since the values of $U_{N_n}^{2k}\psi$ and $U_{N_n}^{2k+1}\psi$ are periodic of period $2n$. Finally, letting $T=2nq+t$ with $0\le t<2n$, we have $\mu_{T,\psi}^{N_n}(r)=\frac{1}{2nq+t}(q+c(\psi,t))$, with $c(\psi,t)\le \|\psi\|^2\le 1$. Thus, $\mu_\psi^{N_n}(r) = \lim_{q\to\infty} \mu_{2nq+t,\psi}^{N_n}(r)=\frac{1}{2n}=\mu^{N_n}(r)$ for any $t$. In particular, $\delta(\mu_{\psi}^{N_n},\mu^{N_n})=0$, so $U$ satisfies $\ell^\infty$-\eqref{e:mainlim} by Proposition~\ref{prp:tvd}.

But this walk is not $\ell^\infty$-\eqref{eqn:ergodic_general}. To see this, consider $N_n=|\beta-\alpha| n=2n$ and note from \eqref{eqn:U_evol_anti-diagonal} that for the initial state $\psi=\delta_0 \otimes \begin{pmatrix}
     1 \\
   0
 \end{pmatrix}$, the quantum walk only reaches positions of the form $\delta_{2k} \otimes \begin{pmatrix} c' \\0 \end{pmatrix}$ or $\delta_{2k+1} \otimes \begin{pmatrix} 0 \\c' \end{pmatrix}$, where $|c'|=1$. Consider the observable $a=\begin{pmatrix}
     a_1\\a_2
 \end{pmatrix} \in \ell^\infty(\Z,\C^2)$, where $a_1$ is the indicator function of odd integers and $a_2$ is the indicator function of even integers.

 Then $a_1(2r)(U_N^{2k}\psi)_1(2r)=a_2(2r+1)(U_N^{2k+1}\psi)_2(2r+1)=0$ because $a$ vanishes there, and $a_2(2r)(U_N^{2k}\psi)_2(2r)=a_1(2r+1)(U_N^{2k+1}\psi)_1(2r+1)=0$ because $U^n\psi$ vanishes there.
Therefore, $\langle U_N^k \psi, a U_N^k \psi \rangle=0$. 

Now, we calculate $ \langle a \rangle_{\psi}$. Recall from \eqref{e:eigenoff} that the eigenvalues of $\Utheta$ are $E_1(\theta)=\sqrt{bc}\,\ee^{2\pi \ii \theta }$ and $E_2(\theta)=-\sqrt{bc}\,\ee^{2\pi \ii \theta }$, with unit eigenvectors $u_1(\theta)=\left(\frac{\sqrt{b}}{\sqrt{|b|+|c|}},\frac{\sqrt{c}}{\sqrt{|b|+|c|}}\ee^{\pi \ii (\alpha+\beta) \theta}\right)$ and $u_2(\theta)=\left(\frac{\sqrt{b}}{\sqrt{|b|+|c|}},-\frac{\sqrt{c}}{\sqrt{|b|+|c|}}\ee^{\pi \ii(\alpha+\beta) \theta}\right)$, respectively.
 Therefore, we get the projection matrices as
 \begin{align*}
P_{E_1}(\theta) &= \frac{1}{|b|+|c|}
\begin{pmatrix}
b & \sqrt{bc}\, \ee^{-\pi i (\alpha+\beta)\theta} \\
\sqrt{bc}\, \ee^{\pi i (\alpha+\beta)\theta} & c
\end{pmatrix} \text{ and } \\
P_{E_2}(\theta) &= \frac{1}{|b|+|c|}
\begin{pmatrix}
b & -\sqrt{bc}\, \ee^{-\pi i (\alpha+\beta)\theta} \\
-\sqrt{bc}\, \ee^{\pi i (\alpha+\beta)\theta} & c
\end{pmatrix} 
 \end{align*}

For $\psi=\delta_0 \otimes \begin{pmatrix} 1 \\ 0\end{pmatrix}$, we have $\widehat{\psi}(r)=\frac{1}{\sqrt{N}}\begin{pmatrix}1\\0\end{pmatrix}$ for all $r$. Since
\[
P_{E_1}(\theta)\begin{pmatrix}1\\0\end{pmatrix} = \frac{1}{|b|+|c|}\begin{pmatrix}b\\\sqrt{bc}\,\ee^{\pi\ii(\alpha+\beta)\theta}\end{pmatrix} \quad \text{and}\quad P_{E_2}(\theta)\begin{pmatrix}1\\0\end{pmatrix}=\frac{1}{|b|+|c|}\begin{pmatrix}b\\-\sqrt{bc}\,\ee^{\pi\ii(\alpha+\beta)\theta}\end{pmatrix}
\]
then \eqref{e:avp} gives, noting that $\langle a_1\rangle=\langle a_2\rangle=\frac{1}{2}$,
\begin{align*}
\langle a\rangle_\psi &= \frac{\langle a_1\rangle}{N} \sum_{r\in \LL_N} \frac{2|b|^2}{(|b|+|c|)^2} + \frac{\langle a_2\rangle}{N}\sum_{r\in \LL_N} \frac{2|bc|}{(|b|+|c|)^2} 
   \\
   &=\frac{|b|^2+|bc|}{(|b|+|c|)^2} =\frac{1}{2}\nto 0 .
\end{align*}
 Therefore, the quantum walk violates \eqref{eqn:ergodic_general} for this $a=a^{(N)}$.
\end{proof}

\subsection{Arc-reversal quantum walks on cycles}
Homogeneous arc-reversal quantum walks on cycles fall into our framework and are given by \eqref{eqn:unitary_arcreversal}. 
\begin{thm}
    Let $U$ be a homogeneous arc-reversal quantum walk with coin matrix $C=\begin{pmatrix}
        a &b \\
        c &d
    \end{pmatrix}$ such that $abcd \neq 0$. Then $U$ obeys \eqref{e:flo}.
\end{thm}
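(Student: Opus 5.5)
The plan is to reduce the arc-reversal walk to the generalized two-state coined walk already treated in Theorem~\ref{thm:2state_gen}, rather than redo a spectral analysis from scratch. Concretely, I would first write out the operator \eqref{eqn:unitary_arcreversal} in the form \eqref{e:ugen}--\eqref{eqn:homogen_unitary}. In the arc-reversal (flip-flop) model, the shift sends the arc $v\to v+1$ to the reversed arc $v+1\to v$, and likewise for the other orientation. I therefore expect $U=\cS(I\otimes C)$ to take the form
\[
U=\begin{pmatrix} cS_{-1} & dS_{-1}\\ aS_{1} & bS_{1}\end{pmatrix}
\]
or a variant differing only by which row carries which shift. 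In other words, $U$ should be a coined walk with step sizes $\alpha=\beta=1$ and effective coin $C'=\begin{pmatrix}0&1\\1&0\end{pmatrix}C=\begin{pmatrix}c&d\\a&b\end{pmatrix}$. The first step is to check this identity carefully against the appendix conventions, including the sign convention in \eqref{e:uhatheta}.

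Once the identification is made, the argument is short. The matrix $C'$ is unitary, being a product of unitaries, and its four entries are $c,d,a,b$, so their product is $abcd\neq 0$. Moreover, $\gcd(\alpha,\beta)=\gcd(1,1)=1$. Theorem~\ref{thm:2state_gen} then applies verbatim and gives \eqref{e:flo}. As a by-product, Theorem~\ref{thm:1D_l_infty} gives $\ell^\infty$-\eqref{eqn:ergodic_general}.

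The main obstacle I expect is purely one of conventions: the arc-reversal walk in the appendix might place the coin after the shift, or use a permuted spin ordering. In that case $U$ would be conjugate to, or a cyclic rearrangement of, the form above, rather than literally equal to it. I would handle this in two ways. First, cyclic reordering $SC$ versus $CS$ is a unitary conjugation of $U$, which leaves the Floquet eigenvalues unchanged, and \eqref{e:flo} depends only on the eigenvalues. Second, if the Floquet matrix still does not fit the template exactly, I would rerun the proof of Theorem~\ref{thm:2state_gen} directly:
\begin{itemize}
\item the determinant of $\widehat U(\theta)$ is $\pm(ad-bc)$ times a unimodular exponential;
\item the trace is a combination $x\,\ee^{2\pi\ii\theta}+y\,\ee^{-2\pi\ii\theta}$ with $x,y$ entries of $C$, hence nonzero.
\end{itemize}
With these two facts, the same three-case analysis of \eqref{eqn:c1c2c3_simplified} shows that for each fixed $\varphi\in(0,1)$ the equation $E_s(\theta+\varphi)=E_w(\theta)$ has at most $4$ solutions in $\theta$. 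The cases $c_1c_2=0$ cannot occur, since with step size $1$ we have $\ee^{\pm 2\pi\ii\varphi}\neq 1$. Lemma~\ref{lem:1D_subseq_2conditions} then concludes \eqref{e:flo}.
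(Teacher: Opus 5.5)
Your proposal is essentially the paper's argument: the paper writes $U=\begin{pmatrix} cS_1 & dS_1\\ aS_{-1} & bS_{-1}\end{pmatrix}$ (shifts opposite to your first guess), notes that its characteristic polynomial $\lambda^2-(c\,\ee^{-2\pi\ii\theta}+b\,\ee^{2\pi\ii\theta})\lambda+(bc-ad)$ has the structure of \eqref{e:charpoco}, and says the proof of Theorem~\ref{thm:2state_gen} carries over, which is your second fallback. Your cyclic-reordering remark also yields a verbatim reduction: conjugating by the constant spin flip gives $\begin{pmatrix} bS_{-1} & aS_{-1}\\ dS_{1} & cS_{1}\end{pmatrix}$, which has the form of Theorem~\ref{thm:2state_gen} with $\alpha=\beta=1$ and unitary coin $\begin{pmatrix} b&a\\ d&c\end{pmatrix}$, all of whose entries are nonzero.
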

For an arc-reversal quantum walk $U$, its unitary matrix is given by $\begin{pmatrix}
    cS_1 & dS_1 \\
    aS_{-1} &bS_{-1}
\end{pmatrix}$ and the characteristic polynomial of $\Utheta$ is 
\[
p(\lambda)=\lambda^2-(c\,\ee^{-2\pi \ii \theta}+b\,\ee^{2\pi \ii  \theta} )\lambda+(bc-ad) \,.
\]
Comparing with \eqref{e:charpoco}, the proof clearly follows the same lines as the proof of Theorem~\ref{thm:2state_gen}.

\subsection{Split-step quantum walks of general step size}
In this section, we look at a generalization of split-step quantum walk in which the jump sizes $\alpha,\beta$ are fixed arbitrary natural numbers. See Section~\ref{sec:spli} for background. For $\alpha=\beta=1$, this model has appeared in several works \cite{BHJ,HJS,CRWAGW,AbSt,ACSW} with some variations. We generalize the version in \cite{AbSt}, cf. eq. (13) there. The unitary matrix in our case is given by
\begin{equation}\label{e:split-step_general}
  U=\begin{pmatrix}
    S_\alpha &0 \\
    0 & 1
\end{pmatrix}
\begin{pmatrix}
    -t &r \\
    r & t
\end{pmatrix}\begin{pmatrix}
    1 &0 \\
    0 & S_{-\beta}
\end{pmatrix}\begin{pmatrix}
    -t&r \\
    r & t
\end{pmatrix}=\begin{pmatrix}
    r^2 S_{\alpha-\beta}+t^2S_\alpha &rtS_{\alpha-\beta}-rtS_\alpha \\
    rtS_{-\beta}-rt &r^2+t^2S_{-\beta}
\end{pmatrix},  
\end{equation}
with $r^2+t^2=1$. For $\alpha=\beta=1$, the characteristic polynomial of $\widehat{U}(\theta)$ reduces to $ \lambda^2-2\lambda(r^2+t^2\cos 2\pi\theta) +1$. The eigenvalues can be explicitly computed as
\begin{equation}\label{e:rtsimple}
\lambda_{\pm}(\theta)=r^2+t^2\cos 2\pi\theta\pm \ii t\sqrt{t^2\sin^2 2\pi\theta + 2r^2(1-\cos 2\pi\theta)}
\end{equation}
and one sees that they satisfy \eqref{e:flo}. More generally:
 \begin{thm}\label{thm:2state_gen_ss}
 Consider the split-step quantum walk $U$ on $\Z$ given in \eqref{e:split-step_general}, with $r,t \in \R\setminus\{0\}$, $r^2+t^2=1$ and $\alpha,\beta\ge 1$. If $\gcd(\alpha,\beta)=1$, then $U$ satisfies \eqref{e:flo}.
 \end{thm}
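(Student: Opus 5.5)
The plan is to follow the strategy of Theorem~\ref{thm:2state_gen}. By Lemma~\ref{lem:1D_subseq_2conditions}, \eqref{e:flo} fails iff there are $s,w$ and a rational $0<\varphi<1$ with $E_s(\theta+\varphi)\equiv E_w(\theta)$. So it suffices to fix $\varphi\in(0,1)$ and show that the set of $\theta$ with $\lambda_j(\theta+\varphi)=\lambda_i(\theta)$ is finite (hence not all of $\R$).

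First I compute the Floquet matrix from \eqref{e:uhatheta}:
\[
\widehat U(\theta)=\begin{pmatrix} r^2\ee^{-2\pi\ii(\alpha-\beta)\theta}+t^2\ee^{-2\pi\ii\alpha\theta} & rt(\ee^{-2\pi\ii(\alpha-\beta)\theta}-\ee^{-2\pi\ii\alpha\theta})\\ rt(\ee^{2\pi\ii\beta\theta}-1) & r^2+t^2\ee^{2\pi\ii\beta\theta}\end{pmatrix}.
\]
Its determinant is the product of the determinants of the four factors, namely $\ee^{-2\pi\ii\alpha\theta}\ee^{2\pi\ii\beta\theta}$ times $(-t^2-r^2)^2=1$. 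Hence
\[
\det\widehat U(\theta)=\ee^{-2\pi\ii(\alpha-\beta)\theta}.
\]
The trace is
\[
\tau(\theta)=r^2\ee^{-2\pi\ii(\alpha-\beta)\theta}+t^2\ee^{-2\pi\ii\alpha\theta}+r^2+t^2\ee^{2\pi\ii\beta\theta}.
\]
So the characteristic polynomial has the same shape as \eqref{e:charpoco}, with a pure-phase determinant and a trigonometric-polynomial trace.

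Next I repeat the trick of Theorem~\ref{thm:2state_gen}. Suppose $\lambda_1(\theta)=\mu_1(\theta)$, where $\mu_i$ are the eigenvalues of $\widehat U(\theta+\varphi)$. The determinant relation gives $\mu_2=\lambda_2\ee^{-2\pi\ii(\alpha-\beta)\varphi}$. Subtracting the two traces then gives
\[
\lambda_2(\theta)\big(\ee^{-2\pi\ii(\alpha-\beta)\varphi}-1\big)=\tau(\theta+\varphi)-\tau(\theta)=:D_\varphi(\theta).
\]
Taking moduli and using $|\lambda_2|=1$ yields the necessary condition
\[
|D_\varphi(\theta)|^2=c_3^2,\qquad c_3=\big|\ee^{-2\pi\ii(\alpha-\beta)\varphi}-1\big|.
\]
Here
\[
D_\varphi(\theta)=c_1\ee^{-2\pi\ii(\alpha-\beta)\theta}+c_2\ee^{-2\pi\ii\alpha\theta}+c_4\ee^{2\pi\ii\beta\theta},
\]
with $c_1=r^2(\ee^{-2\pi\ii(\alpha-\beta)\varphi}-1)$, $c_2=t^2(\ee^{-2\pi\ii\alpha\varphi}-1)$ and $c_4=t^2(\ee^{2\pi\ii\beta\varphi}-1)$. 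The function $|D_\varphi|^2-c_3^2$ is a trigonometric polynomial in $\theta$ of degree at most $2(\alpha+\beta)+|\alpha-\beta|$. Unless it vanishes identically, it has finitely many zeros in a period, with a bound independent of $\varphi$. The same argument covers the case $\lambda_1=\mu_2$ and the other index pairings.

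The main obstacle is excluding that this trigonometric polynomial vanishes identically for some $\varphi\in(0,1)$. Expanding, the coefficient of $\ee^{2\pi\ii(\alpha+\beta)\theta}$ is $\overline{c_2}c_4$, because the frequencies $-\alpha$ and $\beta$ give the extreme difference $\alpha+\beta$. Other cross terms could collide with it only if $|\alpha-\beta|$ coincides with some other frequency difference. I will treat these coincidences (such as $\alpha=2\beta$) separately by collecting coefficients.

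So identical vanishing forces $c_2c_4=0$, which means $\alpha\varphi\in\Z$ or $\beta\varphi\in\Z$. Say $c_2=0$. Since $\gcd(\alpha,\beta)=1$, we get $c_4\neq0$, and $\alpha\ge2$ holds since $\varphi\notin\Z$. Then
\[
|D_\varphi|^2=|c_1|^2+|c_4|^2+2\Re\big(\overline{c_1}c_4\ee^{2\pi\ii\alpha\theta}\big).
\]
Identical vanishing then needs $c_1=0$, i.e.\ $(\alpha-\beta)\varphi\in\Z$, and this together with $\alpha\varphi\in\Z$ gives $\beta\varphi\in\Z$, which contradicts $c_4\neq0$. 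The case $c_4=0$ is symmetric. When $\alpha=\beta$, which is allowed only for $\alpha=\beta=1$, the determinant is constant and the relation $\mu_2=\lambda_2$ degenerates. There I instead argue directly with the explicit formula \eqref{e:rtsimple}, using $r,t\neq0$.

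Finally, Lemma~\ref{lem:1D_subseq_2conditions} gives \eqref{e:flo}.
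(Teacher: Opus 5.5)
Your proposal is correct and follows essentially the same route as the paper: the determinant/trace trick giving $\lambda_2(\theta)\big(\ee^{-2\pi\ii(\alpha-\beta)\varphi}-1\big)=\tau(\theta+\varphi)-\tau(\theta)$, then the squared-modulus necessary condition, then a short case analysis on which coefficients vanish, which you organize more economically around the top coefficient $\overline{c_2}c_4$ at frequency $\alpha+\beta$. Two of your worries are unfounded. First, the frequency differences in $|D_\varphi|^2$ are only $\pm\alpha,\pm\beta,\pm(\alpha+\beta)$, so nothing can collide with $\alpha+\beta$. Second, for $\alpha=\beta=1$ nothing degenerates: $c_3=0$ merely turns the condition into $|D_\varphi|^2=0$, while $c_2c_4\neq 0$ for every $\varphi\in(0,1)$, so your general argument already covers this case.
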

\begin{proof}
For this quantum walk $U$, we have
\[ \Utheta=\begin{pmatrix}
    r^2 \ee^{2\pi \ii (\beta-\alpha)\theta}+ t^2 \ee^{-2\pi \ii \alpha\theta} & rt \ee^{2\pi \ii (\beta-\alpha)\theta} - rt \ee^{-2\pi \ii \alpha\theta}\\
     rt \ee^{2\pi \ii\beta\theta}-rt & r^2+ t^2\ee^{2\pi \ii \beta\theta}
\end{pmatrix}.\]
The characteristic polynomial of $\Utheta$ has the first-order coefficient $r^2\ee^{2\pi \ii (\beta-\alpha)\theta}+r^2+t^2\ee^{-2\pi \ii \alpha\theta}+t^2\ee^{2\pi \ii \beta\theta}$ and constant coefficient $(r^2+t^2)^2\ee^{2\pi \ii (\beta-\alpha)\theta}=\ee^{2\pi \ii (\beta-\alpha)\theta}$. Denote the eigenvalues of $\Utheta$ as $\lambda_{1}(\theta)$ and $\lambda_{2}(\theta)$ and that of $\widehat{U}(\theta+\varphi)$ as $\mu_{1}(\theta)$ and $\mu_{2}(\theta)$. Noting that the constant coefficient equals the product of eigenvalues, we have $\lambda_1(\theta)\lambda_2(\theta)=\ee^{2\pi\ii(\beta-\alpha)\theta}$ and $\mu_1(\theta)\mu_2(\theta)=\ee^{2\pi\ii(\beta-\alpha)(\theta+\varphi)}$. If $\lambda_{1}(\theta)=\mu_{1}(\theta)$, we get 
\[
\mu_{2}(\theta)=\ee^{2\pi \ii (\beta-\alpha)\varphi}\lambda_{2}(\theta).
\]
Looking at the first-degree coefficients, we have
\begin{align*}
   \lambda_{1}(\theta)+\lambda_{2}(\theta) &=r^2\ee^{2\pi \ii (\beta-\alpha)\theta}+r^2+t^2\ee^{-2\pi \ii \alpha\theta}+t^2\ee^{2\pi \ii \beta\theta} \\
\mu_{1}(\theta)+\mu_{2}(\theta) &=r^2\ee^{2\pi \ii (\beta-\alpha)(\theta+\varphi)}+r^2+t^2\ee^{-2\pi \ii \alpha(\theta+\varphi)}+t^2\ee^{2\pi \ii \beta(\theta+\varphi)} 
\end{align*}
Subtracting the first equation from the second, using $\lambda_1(\theta)=\mu_1(\theta)$, and substituting the expressions of $\mu_{1}(\theta)$ and $\mu_{2}(\theta)$, we get
\begin{equation}\label{e:rt-case_c1-4}
    c_4\lambda_{2}(\theta)=c_1\ee^{2\pi \ii (\beta-\alpha)\theta}+c_2\ee^{-2\pi \ii \alpha\theta}+c_3\ee^{2\pi \ii\beta \theta}
\end{equation}
where 
\begin{align*}
    c_1 &= r^2(\ee^{2\pi \ii (\beta-\alpha)\varphi}-1) \\
    c_2 &= t^2 (\ee^{-2\pi \ii \alpha \varphi}-1) \\
    c_3 &=t^2(\ee^{2\pi \ii \beta \varphi}-1) \text{ and }\\
    c_4 &= \ee^{2\pi \ii (\beta-\alpha)\varphi}-1.
\end{align*}
 Taking the square modulus in \eqref{e:rt-case_c1-4}, we get 
\begin{equation}\label{e:rt-case_c1-4_simplified}
    |c_4|^2=|c_1|^2+|c_2|^2+|c_3|^2+2 \Re(\overline{c_1}c_3\ee^{2 \pi \ii \alpha \theta}+\overline{c_2}c_3\ee^{2 \pi \ii (\alpha+\beta) \theta}+\overline{c_2}c_1\ee^{2 \pi \ii \beta \theta}) \,.
\end{equation}

Now we consider different subcases based on the values of $c_1,c_2,c_3$ and $c_4$. 

\noindent\textbf{Case I:} Exactly two entries from $\{c_1,c_2,c_3\}$ are zero.\\
In this case, the last term in the right-hand side of \eqref{e:rt-case_c1-4_simplified} is equal to zero. Therefore \eqref{e:rt-case_c1-4_simplified} becomes $|c_4|^2=|c_1|^2+|c_2|^2+|c_3|^2$. If $c_1=0$, then $c_4=0$, and the equation becomes $0=|c_2|^2+|c_3|^2$, implying that both $c_2$ and $c_3$ are zero, a contradiction. If $c_1 \neq 0$, then $c_2=c_3=0$ by assumption, so we get $|c_4|=|c_1|$. As $|c_1|=r^2|c_4|$, this gives $r=1$, so $t=0$, a contradiction. Therefore, \eqref{e:rt-case_c1-4_simplified} has no solution in this case.
\vskip2pt
\noindent \textbf{Case II:}  Exactly one entry from $\{c_1,c_2,c_3\}$ is zero. \\
Suppose $c_1$ is the entry that is zero, then \eqref{e:rt-case_c1-4_simplified} becomes 
\begin{align*}
   0 &=|c_2|^2+|c_3|^2+2\Re (\overline{c_2}c_3\ee^{2 \pi \ii (\alpha+\beta) \theta}) .   
\end{align*}
Note that once $\varphi$ is fixed, $c_2,$ and $c_3$ are constants and as a result, the above equation has at most $2(\alpha+\beta)$ solutions. Therefore, the theorem holds in this case. 

Now, suppose $c_1 \neq 0$ and $c_2=0$, then \eqref{e:rt-case_c1-4_simplified} simplifies to $|c_4|^2=|c_1|^2+|c_3|^2+2 \Re (\overline{c_1}c_3\ee^{2 \pi \ii \alpha \theta})$. Here too, this equation has only finitely many solutions, independent of $\phi$, and therefore the theorem follows.

The case $c_1\neq 0$ and $c_3=0$ is similar, and therefore \eqref{e:rt-case_c1-4} has only finitely many solutions in this case.
\vskip2pt
\noindent \textbf{Case III:} $c_1,c_2,c_3 \neq 0$. \\
In this case, \eqref{e:rt-case_c1-4_simplified} becomes
\begin{equation*}
    \Re(\overline{c_1}c_3\ee^{2 \pi \ii \alpha \theta}+\overline{c_2}c_3\ee^{2 \pi \ii (\alpha+\beta) \theta}+\overline{c_2}c_1\ee^{2 \pi \ii \beta \theta})=\text{constant}.
\end{equation*}
To prove that the above equation has only finitely many solutions, we consider the complex polynomial $p(z)=\overline{c_1}c_3z^\alpha+\overline{c_2}c_3z^{\alpha+\beta}+\overline{c_2}c_1z^\beta$. Then $f(z)=2\Re(p(z))$ is a Laurent polynomial $\overline{c_3}c_2z^{-(\alpha+\beta)}+\overline{c_1}c_2z^{-\beta}+\overline{c_3}c_1z^{-\alpha}+\overline{c_1}c_3z^\alpha+\overline{c_2}c_3z^{\alpha+\beta}+\overline{c_2}c_1z^\beta$. Multiplying by $z^{\alpha+\beta}$, $f(z)$ becomes a polynomial, and this implies that $f(z)$ has at most $2(\alpha+\beta)$ zeros. Hence, the theorem is true in this case as well.
Now, we are left with the final case.
\vskip2pt
\noindent \textbf{Case IV:} $c_1,c_2,c_3 = 0$. \\
Note that $c_2=0$ only if $m=\frac{r}{\alpha}$ for some $0<r<\alpha-1$ and $c_3=0$ only if $m=\frac{r'}{\beta}$ for some $0<r'<\beta-1$. Such a choice of $r,r'$ is not possible since $\gcd(\alpha,\beta)=1$, and therefore this case does not occur.

Therefore, we have shown that for all $\varphi>0$ and $s,w$, the number of $\theta$ for which $E_s(\theta+\varphi)=E_w(\theta)$ is bounded by a constant independent of $\varphi$. The theorem follows.
\end{proof}

\section{Higher dimensional walks}\label{sec:higher_dim_QW}
In this section, we look at several models of higher-dimensional quantum walks. We start with a Fourier coined walk on $\Z^2$ in \S\,\ref{sec:fou}. This is an example of a so-called PUTO walk, see Section~\ref{sec:puto} for background. The PUTO model exhibits entanglement, and the spin space is of low dimension. We prove that the Fourier coined walk satisfies \eqref{e:flo}. We next proceed in \S\,\ref{subsec:QW_non-entangle} with non-entangled walks; these are well-defined on $\Z^d$ and have a spin space of high dimension. We give classes of ergodic and non-ergodic walks. We conclude with a discussion of other models in \S\,\ref{sec:other}, through which we prove Proposition~\ref{prp:failhi}.

\subsection{Fourier Coin in 2d}\label{sec:fou}

We start our investigation of higher dimensional walks with a generalization of the Hadamard walk to dimension two. Namely, we consider a coined walk on $\Z^2$ defined through the Fourier coin $F_4=\frac{1}{2}\left(\ii^{(k-1)(\ell-1)}\right)_{k,\ell=1}^4$. The corresponding walk was studied in some detail in \cite{KomatsuTate2019}, who notably proved the absence of flat bands.

For the Fourier coined walk on $\Z^2$, the Floquet matrix takes the following form, for $\theta=(\theta_1,\theta_2)$,
\[
\Utheta=\frac{1}{2}\begin{pmatrix}
	\ee^{-2 \pi \ii \theta_1} & \ee^{-2\pi \ii \theta_1} & 	\ee^{-2 \pi \ii \theta_1} & \ee^{-2\pi \ii \theta_1} \\
	\ee^{2 \pi \ii \theta_1} & \ii \ee^{2\pi \ii \theta_1} & -	\ee^{2 \pi \ii \theta_1} & -\ii \ee^{2\pi \ii \theta_1} \\
	\ee^{-2 \pi \ii \theta_2} & -\ee^{-2\pi \ii \theta_2} & 	\ee^{-2 \pi \ii \theta_2} & -\ee^{-2\pi \ii \theta_2} \\
	\ee^{2 \pi \ii\theta_2} & -\ii \ee^{2\pi \ii \theta_2} & 	-\ee^{2 \pi \ii \theta_2} & \ii \ee^{2\pi \ii \theta_2} \\
\end{pmatrix},
\]
see \eqref{e:putosta}. Our aim is to show that $\widehat{U}(\theta)$ satisfies \eqref{e:flo} and so $U$ satisfies \eqref{eqn:ergodic_general}. Computing the eigenvalues is  difficult, so we will get around this using the theory of irreducibility of Bloch varieties, which may be applicable to broader settings.

Let $z_i := \ee^{2\pi \ii\theta_i}$. The characteristic polynomial $\det(2\widehat{U}(z) - x I)$ takes the form
\[
-16\ii +4x(z_1^{-1}+\ii z_1+z_2^{-1}+\ii z_2) - x^2(1-\ii)(z_1z_2^{-1}+z_1^{-1}z_2+2) -x^3(z_1^{-1}+\ii z_1+z_2^{-1}+\ii z_2)+x^4\\
\]

\begin{prp}
The characteristic polynomial $p(z,\lambda)=\det(\widehat{U}(z)-\lambda I)$ is irreducible as a Laurent polynomial in both $z=(z_1,z_2)$ and $\lambda$.
\end{prp}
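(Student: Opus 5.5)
Set $x=2\lambda$, which is a harmless rescaling, and work with
\[
P(z_1,z_2,x)=x^4-x^3 s-(1-\ii)x^2 t+4xs-16\ii,
\]
where $s=z_1^{-1}+\ii z_1+z_2^{-1}+\ii z_2$ and $t=z_1z_2^{-1}+z_1^{-1}z_2+2$.

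Irreducibility in the Laurent ring $\C[z_1^{\pm1},z_2^{\pm1},x]$ means the following: any factorization $P=AB$ must have a unit (a monomial) as one of its factors. Since $P$ is monic of degree $4$ in $x$, we may take $A,B$ monic in $x$, with coefficients in $\C[z^{\pm1}]$. The constant term is $-16\ii$, a nonzero constant, so $A(0)B(0)=-16\ii$ forces $A(0)$ and $B(0)$ to be monomials $c\,z^{\mathbf a}$ and $c'z^{-\mathbf a}$. We then split into the cases $\deg_x A=1$ and $\deg_x A=2$; the case of degree $3$ is symmetric to degree $1$.

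\textbf{Degree one.} Here $A=x-\mu$ with $\mu=c\,z^{\mathbf a}$ a monomial root, so $P(z,\mu)\equiv 0$. We substitute and compare Newton polytopes and monomials. The term $x^4$ contributes $\mu^4=c^4z^{4\mathbf a}$. If $\mathbf a\neq 0$, this monomial cannot be cancelled by the terms $\mu^3 s$, $\mu^2 t$, $\mu s$: the maximal exponent in direction $\mathbf a$ comes only from $\mu^4$, because $s$ has exponents in $\{\pm e_1,\pm e_2\}$ and $t$ in $\{0,\pm(e_1-e_2)\}$, and these shift $\mathbf a$-degree by at most one per factor $|\mathbf a|_\infty$. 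A short check on an extreme vertex then gives a contradiction. If $\mathbf a=0$, then $\mu$ is constant and the coefficient of $z_1$ gives $\ii\mu(-\mu^2+4)=0$, while the $z_1z_2^{-1}$ coefficient gives $\mu^2=0$. Together with the constant term this is a contradiction.

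\textbf{Degree two.} Write $A=x^2+ux+cz^{\mathbf a}$ and $B=x^2+vx+c'z^{-\mathbf a}$ with Laurent polynomials $u,v$. Matching coefficients gives
\[
u+v=-s,\qquad uv+cz^{\mathbf a}+c'z^{-\mathbf a}=-(1-\ii)t,\qquad u\,c'z^{-\mathbf a}+v\,cz^{\mathbf a}=4s,
\]
together with $cc'=-16\ii$.

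First, the Newton polytopes force $\mathbf a=0$. Indeed, $A(0)$ and $B(0)$ are roots-products, and on the unit torus all roots of $\det(\widehat U-\lambda)$ have modulus one, so $|A(0)|=|c|\cdot|z^{\mathbf a}|=4$ identically on $|z_i|=1$. Using this together with a degree comparison in $z_1$ on the equations above, an extreme monomial survives in the third equation unless $\mathbf a=0$. With $\mathbf a=0$, the first and third equations give $c'u+cv=4s$ and $u+v=-s$. Hence $u$ and $v$ are scalar multiples of $s$, say $u=\kappa s$ and $v=-(1+\kappa)s$. The second equation then becomes $-\kappa(1+\kappa)s^2+c+c'=-(1-\ii)t$. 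But $s^2$ contains the monomial $z_1^{-2}$, which does not appear in $t$, so $\kappa(1+\kappa)=0$. This leaves $u$ or $v$ equal to $0$, and the third equation then pins down $c$ or $c'$ as $-4$. In that case we need $c+c'=-(1-\ii)t$, which is a contradiction since $t$ is nonconstant. So no degree-two factorization exists.

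The main obstacle is the degree-two case with $\mathbf a\neq 0$, where the monomial bookkeeping is fiddly. I would handle it as follows: restrict to the specialization $z_2=1$, track the degrees in $z_1$ of $u$ and $v$ via the first equation, and compare the top-degree terms in the third equation. As a fallback, one can invoke the known irreducibility criteria for Bloch varieties, e.g.\ those of Liu used in \cite{Wen1}, though I would try the direct argument first.
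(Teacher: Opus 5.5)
Your overall setup is sound. Normalizing $A,B$ to be monic in $x$ over $\C[z_1^{\pm1},z_2^{\pm1}]$, reading off $A(0)=cz^{\mathbf a}$ and $B(0)=c'z^{-\mathbf a}$ with $cc'=-16\ii$, and the degree-one and degree-two/$\mathbf a=0$ cases all work, with two small repairs. First, for $\mathbf a=\pm e_i$ the monomial $z^{4\mathbf a}$ is also produced by $\mu^3s$, so the extra vertex check really is needed (e.g.\ the coefficient $-\ii c^3$ of $z_1^3z_2$ when $\mathbf a=e_1$). Second, for $\mathbf a=0$ and $c=c'$ your linear system does not determine $u,v$; it does, however, force $c=-4$, whence $cc'=16$.

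The genuine gap is the degree-two case with $\mathbf a\neq0$. This is the heart of the matter, you leave it open, and neither justification you sketch can close it. The modulus argument only gives $|c|=|c'|=4$, since $|z^{\mathbf a}|=1$ on the torus for every $\mathbf a$.

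The claim that an extreme monomial survives in the $x^1$-equation unless $\mathbf a=0$ is false. Take $\mathbf a=(1,1)$, $u=-\ii(z_1+z_2)$, $v=-(z_1^{-1}+z_2^{-1})$, $c=-4\ii$, $c'=4\ii$. Then both $u+v=-s$ and $c'z^{-\mathbf a}u+cz^{\mathbf a}v=4s$ hold identically. So no comparison of monomials in these two equations, with or without specializing $z_2=1$, can exclude $\mathbf a\neq0$. The contradiction can only come from $cc'=-16\ii$ (here $cc'=16$) or from the $x^2$-equation. The underlying reason is that $z_1z_2\,s=(z_1+z_2)(1+\ii z_1z_2)$ splits into binomials in the directions $e_1\mp e_2$.

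The missing step is a divisibility argument that uses the constant term. Eliminating $v$ gives
\[
u\,z^{-\mathbf a}\bigl(c'-cz^{2\mathbf a}\bigr)=s\,\bigl(4+cz^{\mathbf a}\bigr).
\]
Write $\mathbf a=g\mathbf a'$ with $\mathbf a'$ primitive. In the UFD $\C[z_1^{\pm1},z_2^{\pm1}]$, the factor $c'-cz^{2\mathbf a}$ is a product of $2g\ge2$ pairwise non-associate irreducibles $z^{\mathbf a'}-\zeta$ with $\zeta^{2g}=c'/c$. Each of these must divide the right-hand side.

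None of them divides $4+cz^{\mathbf a}$. Its irreducible factors are $z^{\mathbf a'}-\eta$ with $\eta^{2g}=16/c^2$, and $16/c^2\neq-16\ii/c^2=c'/c$.

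At most one of them divides $s$. The only irreducible factors of $s$ are $z_1+z_2$ and $1+\ii z_1z_2$, which are binomials in the two distinct directions $e_1-e_2$ and $e_1+e_2$. So at most one factor of $s$ can share the direction $\pm\mathbf a'$, and it can match at most one $\zeta$.

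Since $2g\ge 2$, this is a contradiction, which settles $\mathbf a\neq0$. With this step your argument becomes a complete proof organized by degree in $x$. It differs from the paper's, which clears denominators and factors by total degree in $z$ over $\C[\lambda]$. Your route has the advantage of not needing the absence of flat bands to exclude factors depending on $\lambda$ alone.
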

The proposition means that the only way to factor $p(z,\lambda)=p_1(z,\lambda)p_2(z,\lambda)$ with $p_i$ a nontrivial polynomial in $\lambda$ and a Laurent polynomial in $z$ is to take $p_1$ or $p_2$ to be a monomial in $z$. Such a trivial factorization always exists, e.g. $p(z,\lambda)=z_1(z_1^{-1}p(z,\lambda))$.
\begin{proof}
    Note that $p(z,\lambda) = \frac{1}{16} \det( 2\widehat{U}(z)-2 \lambda I)$. Taking $x=2\lambda$ above yields
\begin{multline*}
(-16\ii-4\lambda^2(2-2\ii)+16\lambda^4)+z_1^{-1}(8\lambda-8\lambda^3) + z_1(8\ii\lambda-8\ii\lambda^3)\\
+z_2^{-1}(8\lambda-8\lambda^3) +z_2(8\ii\lambda-8\ii\lambda^3)-4z_1z_2^{-1}(1-\ii)\lambda^2 - 4z_1^{-1}z_2(1-\ii)\lambda^2\,.
\end{multline*}
Let us multiply this Laurent polynomial by $z_1z_2$. This gives
\begin{multline*}
P(z,\lambda)=(-16\ii-8\lambda^2(1-\ii)+16\lambda^4)z_1z_2+z_2(8\lambda-8\lambda^3) + z_1^2z_2(8\ii\lambda-8\ii\lambda^3)\\
+z_1(8\lambda-8\lambda^3) +z_1z_2^2(8\ii\lambda-8\ii\lambda^3)-4z_1^2(1-\ii)\lambda^2 - 4z_2^2(1-\ii)\lambda^2\,.
\end{multline*}

We will show that $P(z,\lambda)$ is irreducible as a traditional polynomial in $(z,\lambda)$, i.e. the only way to factorize $P(z,\lambda) = R(z,\lambda) Q(z,\lambda)$ as a product of two polynomials $R,Q$ in $z$ and $\lambda$ is to have $R(z,\lambda)=A$ or $Q(z,\lambda)=E$ for some constants $A,E\in \C$. 

For this, suppose towards a contradiction that we have a nontrivial factorization $P(z,\lambda) = R(z,\lambda) Q(z,\lambda)$. By looking at the degree of $P(z,\lambda)$, we see that $R(z,\lambda) = \sum_{0\le n,m\le 2} a_{nm} z_1^nz_2^m$ and $Q(z,\lambda)=\sum_{0\le n,m\le 2} b_{nm} z_1^n z_2^m$, and $a_{22}=b_{22}=0$, since a factor of $P$ cannot be of degree higher than $P$. Similarly, if $a_{12}\neq 0$, then since $3=\deg P = \deg R + \deg Q$, then we must have $\deg Q=0$, which leads to $Q(z,\lambda)=b_{00}$. Either $b_{00}\in \C$ is a constant, which would contradict our hypothesis that the factorization is nontrivial, or $Q(z,\lambda)=b_{00}(\lambda)$ is a nontrivial polynomial $\lambda$, with roots $\lambda_i$ independent of $z$, which would give rise to flat bands for $U$, a contradiction since $U$ has no eigenvalues, see \cite{KomatsuTate2019}. So we must have $a_{12}=0$. Similarly  $a_{21}=b_{12}=b_{21}=0$. So any nontrivial factorization of $P$ must have the form
\begin{equation}\label{e:fac}
(A+Bz_1+Cz_2+Dz_1z_2)(E+Fz_1+Gz_2+Hz_1z_2)\,.
\end{equation}
Since $P$ has no $z_1^2z_2^2$ term, we must have $DH=0$. Without loss, we assume $H=0$. 

Comparing with the coefficients of $z_1^nz_2^m$, we arrive at the following system of equations:
\[
\begin{cases} AE=0& (1),\\BE+AF=8\lambda-8\lambda^3 = AG+CE & (z_1,z_2),\\ BF=4(\ii-1)\lambda^2 = CG& (z_1^2,z_2^2),\\ DE + BG+CF=-16\ii-8\lambda^2(1-\ii)+16\lambda^4& (z_1z_2),\\ DF=8\ii \lambda - 8\ii\lambda^3 = DG& (z_1^2z_2,z_1z_2^2).\end{cases}
\]

Recall that $A,B,\dots,G$ need to be polynomials in $\lambda$.

The last equation gives $D(F-G)=0$. Since $DF=8\ii\lambda-8\ii\lambda^3$, then $D$ cannot be the zero polynomial in $\lambda$, so we must have $F=G$.

From the $3$rd equation we deduce $F(B-C)=0$, implying as before that $B=C$.

Consequently the 4th equation reads $DE+2BF=-16\ii+8\lambda^2(\ii-1)+16\lambda^4$. Using the 3rd equation, this gives $DE = -16\ii+16\lambda^4$.

Next, $AE=0$. Since $DE=-16\ii+16\lambda^4$, then $E$ cannot be the zero polynomial, so we must have $A=0$.

We thus get the equations 
\[
\begin{cases}
BE=8\lambda-8\lambda^3,\\ BF=4(\ii-1)\lambda^2,\\ DE=-16\ii+16\lambda^4,\\ DF=8\ii\lambda-8\ii \lambda^3.
\end{cases}
\]
Now $BE=8\lambda-8\lambda^3 = 8\lambda(1-\lambda)(1+\lambda)$. If $B$ contained a factor $(1\pm \lambda)$, it would not divide $4(\ii-1)\lambda^2$, contradicting the equation of $BF$. So $B=c$ or $B=c\lambda$, for some $c\in\C$. 

If $B=c$, then $F=\frac{4(\ii-1)}{c}\lambda^2$, contradicting the equation of $DF$, since $\lambda^2$ does not divide $8\ii\lambda-8\ii\lambda^3$. Thus, $B=c\lambda$ and $F=\frac{4(\ii-1)\lambda}{c}=:d\lambda$. We deduce from the first and last equations that $E=\frac{8}{c}(1-\lambda^2)$ and $D=\frac{8\ii}{d}(1-\lambda^2)$. Hence, $DE=\frac{64\ii}{cd}(1-\lambda^2)^2$. This contradicts the above equation for $DE$.

We proved that the system of equations is inconsistent, so $P(z,\lambda)$ is irreducible. 

Finally, $p(z,\lambda) =\frac{1}{16}z_1^{-1}z_2^{-1}P(z,\lambda)$. If $p(z,\lambda)=r(z,\lambda)q(z,\lambda)$ is a factorization as Laurent polynomials in $z,\lambda$, 
then for some $k_i,j_i$ large enough, $\tilde{r}(z,\lambda)=z_1^{k_1}z_2^{k_2}r(z,\lambda)$ and $\tilde{q}(z,\lambda)=z_1^{j_1}z_2^{j_2}q(z,\lambda)$ are polynomials in $(z,\lambda)$. Thus, $\frac{1}{16}z_1^{k_1+j_1-1}z_2^{k_2+j_2-1}P(z,\lambda)=\tilde{r}(z,\lambda)\tilde{q}(z,\lambda)$ is a polynomial factorization. But $P(z,\lambda)$ is irreducible, so the only factors of the LHS are $z_1^w$, $z_2^v$ and $P(z,\lambda)$. Hence, either $\tilde{r}$ or $\tilde{q}$ contains $P(z,\lambda)$, say $\tilde{q}$. Then $\tilde{r}$ must be a monomial $cz_1^nz_2^m$. So $r$ is a Laurent monomial $cz_1^pz_2^q$, $p,q\in \Z$.
\end{proof}

\begin{cor}\label{cor:founrg}
  The Fourier walk satisfies \eqref{e:flo} and is thus \eqref{eqn:ergodic_general}.  
\end{cor}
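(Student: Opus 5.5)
The plan is to derive \eqref{e:flo} from the irreducibility just established, using the equivalence sketched in \S\,\ref{rem:higher_dim_equiv}, or directly by a counting argument. Fix $s,w$ and suppose \eqref{e:flo} fails. Then there are $\epsilon>0$, $N_n\to\infty$ and $\bm_n\neq\mathbf 0$ with
\[
\#\Big\{\br\in\LL_{N_n}^2: E_s\Big(\tfrac{\br+\bm_n}{N_n}\Big)=E_w\Big(\tfrac{\br}{N_n}\Big)\Big\}\ge \epsilon N_n^2 .
\]
We pass to a subsequence with $\bm_n/N_n\to\balpha\in[0,1]^2$ and consider the algebraic set
\[
V_{\balpha}=\{(z,\lambda): p(z,\lambda)=0,\ p(\zeta z,\lambda)=0\},\qquad \zeta=(\ee^{2\pi\ii\alpha_1},\ee^{2\pi\ii\alpha_2}).
\]
A common eigenvalue at $\btheta$ and $\btheta+\balpha$ forces $\mathrm{Res}_\lambda\big(p(z,\lambda),p(\zeta z,\lambda)\big)=0$ at $z=\ee^{2\pi\ii\btheta}$. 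This resultant is a Laurent polynomial in $z$.

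\textbf{Step 1 (identically vanishing resultant).} If the resultant $R_{\balpha}(z)$ is not identically zero, its zero set on $\T^2$ is a real-analytic set of dimension at most one. A uniform counting bound then shows that the number of grid points $\br/N$ within $O(|\bm_n/N_n-\balpha|)$ of it is $o(N^2)$; we would use Bezout-type bounds uniform in nearby shifts, with the degrees fixed. This contradicts the $\epsilon N^2$ lower bound. Hence $R_{\balpha}\equiv0$, so $p(z,\lambda)$ and $p(\zeta z,\lambda)$ share a common factor in $\lambda$ over $\C(z)$. Since $p$ is irreducible by the Proposition, so is $p(\zeta z,\lambda)$, and both are monic of degree 4 in $\lambda$. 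It follows that
\[
p(\zeta z,\lambda)\equiv p(z,\lambda).
\]

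\textbf{Step 2 (the shift is trivial).} Comparing coefficients in the explicit formula, the $\lambda^3$ coefficient $z_1^{-1}+\ii z_1+z_2^{-1}+\ii z_2$ must be invariant under $z\mapsto\zeta z$. This forces $\zeta_1=\zeta_2=1$, so $\balpha\equiv\mathbf0$ modulo $\Z^2$.

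\textbf{Step 3 (small shifts, the main obstacle).} It remains to handle $\bm_n/N_n\to\mathbf 0$ while $\bm_n\neq\mathbf0$. Here the approach is to mimic Case~1 of Lemma~\ref{lem:non-ergodic_condition}, dividing by the shift: the condition $E_s(\btheta+\bm/N)=E_w(\btheta)$ with a small shift is governed by $s=w$ together with the vanishing of a directional derivative $\nabla E_s\cdot\bm$, or by a crossing with $s\neq w$. Both are codimension-one conditions, because there is no flat band (irreducibility excludes constant $\lambda$-roots) and crossings lie on the discriminant curve, which is not identically zero since $p$ is irreducible and hence squarefree in $\lambda$. Making the count uniform over all directions $\bm/|\bm|$ is the delicate part. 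We would first try to bound the count by a real-algebraic complexity estimate (Milnor--Thom) applied to
\[
\{\btheta: p(\ee^{2\pi\ii\btheta},\lambda)=p(\ee^{2\pi\ii(\btheta+\mathbf h)},\lambda)=0\},
\]
uniformly in $\mathbf h$, showing that each line in a fixed direction meets this set in $O(1)$ points after dividing by $|\mathbf h|$.

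Once \eqref{e:flo} holds, \eqref{eqn:ergodic_general} for regular observables follows from Theorem~\ref{thm:pergra}.
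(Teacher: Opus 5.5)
Your Steps~1--2 are sound in substance. They correspond to the case of a non-trivial limiting shift in Liu's criterion. The paper simply invokes that criterion (\cite[Cor.~1.4]{Wen1}, adapted to the unitary setting in Remark~\ref{rem:uniher}) and checks only that $p(\zeta z,\lambda)\not\equiv p(z,\lambda)$ for $\zeta\neq(1,1)$, exactly as in your Step~2. One correction to Step~1: the claim that perturbed zeros stay within $O(|\bm_n/N_n-\balpha|)$ of $\{R_{\balpha}=0\}$ is false near critical points. The qualitative version is enough, though: by compactness and uniform convergence, the zeros eventually lie in any prescribed neighbourhood of the null set $\{R_{\balpha}=0\}\cap\T^2$.

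The gap is Step~3, which is both incorrectly justified and unfinished.

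\emph{The justification.} You say that $\nabla E_s\cdot\bm=0$ is a codimension-one condition ``because there is no flat band''. In dimension two this implication is false. The direct-sum walk behind Proposition~\ref{prp:failhi} has no flat bands, yet its band $E_s(\btheta)=\lambda_1(\theta_1)$ satisfies $\nabla E_s\cdot e_2\equiv 0$. Indeed \eqref{e:flo} fails there precisely through the small shift $\bm=(0,1)$. What rules this out for the Fourier walk is irreducibility combined with your Step~2:
\begin{itemize}
\item If $\nabla E_s\cdot T\equiv 0$ on an analyticity component, then $E_s(\btheta+cT)=E_s(\btheta)$ on an open set for all small $c>0$.
\item So the resultant of $p(z,\lambda)$ and $p(\ee^{2\pi\ii cT}z,\lambda)$ vanishes on an open subset of $\T^2$, hence identically.
\item Irreducibility then forces $p(\ee^{2\pi\ii cT}z,\lambda)\equiv p(z,\lambda)$, contradicting Step~2 since $cT\notin\Z^2$.
\end{itemize}
This is the mechanism of Case~(ii) in the proof of Lemma~\ref{lem:wencaierr}.

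\emph{The missing argument.} You propose to handle uniformity over directions with Milnor--Thom, but this is unnecessary. The failure of \eqref{e:flo} gives you a specific sequence, so pass to a subsequence with $\bm_n/|\bm_n|\to T$. Take representatives of $\bm_n$ near $\mathbf 0$, since the limit is only $\mathbf 0$ modulo $\Z^2$. Work on compact subsets of the components where the bands are analytic, i.e.\ the $\varepsilon$-shrunk complement of the null discriminant set. There the difference quotients $\big(E_s(\btheta+\bm_n/N_n)-E_s(\btheta)\big)N_n/|\bm_n|$ converge uniformly to $\nabla E_s\cdot T$. Hence all solutions eventually lie in an arbitrarily small neighbourhood of the null set $\{\nabla E_s\cdot T=0\}$, which gives an $o(N_n^2)$ count. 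This route also avoids Rolle's theorem, which does not apply verbatim to unimodular bands.

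With these two additions your outline becomes a self-contained proof of the relevant special case of Liu's criterion. As written, Step~3 is not a proof.
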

\begin{proof}
    We apply the criterion in \cite[Cor. 1.4]{Wen1}. Namely, fix $\zeta=(\zeta_1,\zeta_2)\in \C^2\setminus\{(1,1)\}$ with $|\zeta_j|=1$ and compute $p(\zeta z, \lambda)$, where $\zeta z:= (\zeta_1 z_1, \zeta_2 z_2)$. This gives
\begin{multline*}
16 p(\zeta z,\lambda) = (-16\ii-4\lambda^2(2-2\ii)+16\lambda^4)+z_1^{-1}\zeta_1^{-1}(8\lambda-8\lambda^3) + z_1\zeta_1(8\ii\lambda-8\ii\lambda^3)\\
+z_2^{-1}\zeta_2^{-1}(8\lambda-8\lambda^3) +z_2\zeta_2(8\ii\lambda-8\ii\lambda^3)-4z_1z_2^{-1}\zeta_1\zeta_2^{-1}(1-\ii)\lambda^2 - 4z_1^{-1}z_2\zeta_1^{-1}\zeta_2(1-\ii)\lambda^2\,.
\end{multline*}
If $p(\zeta z,\lambda)\equiv p(z,\lambda)$ as polynomials, they are equal at any $\lambda$. Considering them 
as Laurent polynomials in $z$ and comparing the coefficients of $z_1z_2^{-1}$, we get $\zeta_1\zeta_2^{-1}=1$, hence $\zeta_1=\zeta_2$. 
Similarly, comparing the coefficients of $z_1^{-1}$, we get $\zeta_1^{-1}=1$. Thus, $\zeta_1=\zeta_2=1$, a contradiction. This shows that $p(\zeta z,\lambda)\not \equiv p(z,\lambda)$. By \cite[Cor. 1.4]{Wen1}, it follows that \eqref{e:flo} is satisfied.
\end{proof}

To better understand this criterion, suppose we had a characteristic polynomial of the form $p(z,\lambda)=\lambda^2+(z_1^2+z_2^2)\lambda+1$ for example. In that case $p(\zeta z,\lambda)\equiv p(z,\lambda)$ with $\zeta_1=\zeta_2=-1$, so the criterion would have failed.

\begin{rem}\label{rem:uniher}
In Corollary~\ref{cor:founrg} we used \cite[Cor. 1.4]{Wen1}. Although that paper considers Schr\"odinger operators and so the Floquet matrix $\mathcal{A}(z)$ there is Hermitian, the proof goes on almost without change to our unitary setting, as the matrix elements are similar (trigonometric polynomials, see \eqref{e:uhatheta}). The only worthwhile modification is in \cite[erratum]{Wen1}, where 
Rolle's theorem is used for the real-valued eigenvalue function. Let $g(t)=E_s(\frac{r}{N}+t\frac{m_N}{N})$ and $\frac{m_N}{N}\to 0$. If $g(1)-g(0)=0$, then noting that $|g(t)|=1$ in our case, we have two scenarios: either over the segment $[0,1]$, the function $g(t)$ has winded one or more times over the unit circle, or the function moved in a specific direction from $0$ to $t_0$ (say counterclockwise), then moved back in the opposite direction. 
Because $\frac{m}{N}$ is very small and $E_s$ is continuous, it must be that the second alternative occurs.

We claim that the derivative of $g$ at $t_0$ is zero. For this, simply study $\frac{g(t)-g(t_0)}{t-t_0}$. Consider the real part of this quotient. We see it has a fixed sign if $t<t_0$, and has the opposite sign if $t>t_0$. Hence, $\Re g'(t_0)\le 0$ and $\Re g'(t_0)\ge 0$ implying $\Re g'(t_0)=0$. Similarly, $\Im g'(t_0)=0$. It follows that $g'(t_0)=0$. 
The rest of the proof in \cite{Wen1} carries over. We also refer to \cite[\S\,1.2]{Chirka} for background on complex analytic sets.
\end{rem}

\subsection{Quantum walk without entanglement}\label{subsec:QW_non-entangle}
This model, introduced by Mackay et al. \cite{Mackay}, is one of the first models of quantum walks for dimensions two or higher. It was presented as a generalization of the 1-D Hadamard walk; however, the same idea applies to any collection of $d$ 1-D quantum walks. Namely, if $\{U^{(i)}=S^{(i)}(I \otimes C_i): 1\leq i \leq d\}$ is a collection of 1-D coined quantum walks, where $S^{(i)}$ is the shift matrix in $U^{(i)}$ and $C_i$ is an $m \times m$ coin matrix, then we define the coined walk $U$ on $\Z^d$ by 
\begin{equation}\label{e:higherdimQW_model1}
  U= \otimes_{i=1}^d U^{(i)} : \ell^2(\Z^d) \otimes \C^{m^d} \to \ell^2(\Z^d) \otimes \C^{m^d}.
\end{equation}
The coin matrix of $U$ is $C_d=\otimes_{j=1}^d C$ and the shift matrix is $S=S^{(1)} \otimes S^{(2)} \otimes \cdots \otimes S^{(d)}$. 
We used the commutative property of the tensor product of vector spaces, which implies $(\ell^2(\Z) \otimes \C^{m})^{\otimes d} \cong \ell^2(\Z^d) \otimes \C^{m^d}$. For an initial state $\psi=\psi_1 \otimes \psi_2 \otimes \cdots \otimes \psi_d$, where $\psi_i \in \ell^2(\Z) \otimes \C^m$ for $1 \leq i \leq d$, we have $U\psi=\otimes_{i=1}^d U^{(i)} \psi_i$. This implies that the walk along the $i$-th component is completely determined by the qubit $\psi_i$ independently of other $\psi_j$. In other words, the qubits are non-interacting, which is why this quantum walk is also known as the separable quantum walk on $\Z^d$.

In Proposition~\ref{prp:nom_ergodic_nonentangle}, we consider walks such that each $U^{(i)}$ is of the form \eqref{eqn:homogen_unitary} with $U^{(i)}_{kl}=\sum_{p=\pm 1} U_{kl}(p)S_p$, i.e., the set of possible jumps is $\{+1,-1\}$. Since each entry in the matrix form of $U$ is a product of entries of $U^{(i)}$, the shift operators appearing in this matrix belong to the set 
\begin{equation}\label{eqn:shift_highdim}
  \{S_{x_1e_1+x_2e_2+\cdots +x_de_d}=S_{\mathbf{x}\cdot \mathbf{e}}: x_i \in \{+1,-1\} \quad \forall \ i \}\,,
\end{equation}
where $\{e_i\}$ is the standard basis of $\Z^d$. With this, we may prove the following proposition.

\begin{prp}\label{prp:nom_ergodic_nonentangle}
    Let $d \geq 2$ and $U=\mathop\otimes_{i=1}^d U^{(i)}$ on $\Z^d$, where $U_{kl}^{(i)}=\sum_{p=\pm 1} U_{kl}(p)S_p$. Then
    \begin{enumerate}[\rm(i)]
        \item The eigenvalues of $\widehat{U}(\btheta)$ do not obey the Floquet assumption \eqref{e:flo}.
        \item $U$ is not $\ell^\infty$-\eqref{e:mainlim}.
    \end{enumerate}
\end{prp}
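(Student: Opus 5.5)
Both parts follow from a parity obstruction: every jump of the separable walk lies in $\{\pm1\}^d$, so all coordinates change parity simultaneously.

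\textbf{Part (i).} By \eqref{eqn:shift_highdim}, every entry of $\widehat{U}(\btheta)$ is a linear combination of exponentials $\ee^{-2\pi\ii \btheta\cdot \mathbf{x}}$ with $\mathbf{x}\in\{\pm1\}^d$. Set $\balpha=(\frac12,\dots,\frac12)$. Since $d\ge2$, we have $\balpha\neq\mathbf{0}$ modulo $\Z^d$. Shifting $\btheta\mapsto\btheta+\balpha$ multiplies each such exponential by $\ee^{-\pi\ii\sum_i x_i}=(-1)^d$, because $\sum_i x_i\equiv d \pmod 2$. Hence $\widehat{U}(\btheta+\balpha)=(-1)^d\widehat{U}(\btheta)$.

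\emph{Even $d$.} The spectra at $\btheta$ and $\btheta+\balpha$ coincide. Take $N$ even and $\bm=(N/2,\dots,N/2)\neq\mathbf{0}$. For every $\br$ and every $w$ there is an $s$ with $E_s(\frac{\br+\bm}{N})=E_w(\frac{\br}{N})$.

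\emph{Odd $d$.} The spectrum at $\btheta+\balpha$ is the negative of the spectrum at $\btheta$. Use instead $\balpha=(\frac12,\frac12,0,\dots,0)$, which needs $d\ge2$. Now $x_1+x_2\in\{-2,0,2\}$, so the shift multiplies every exponential by $\ee^{-\pi\ii(x_1+x_2)}=1$. This gives $\widehat{U}(\btheta+\balpha)=\widehat{U}(\btheta)$, which in fact works for every $d\ge 2$; I will use this choice of $\balpha$ throughout.

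The labeling of the $E_s$ is an issue, since we only know the multiset of eigenvalues is preserved. I handle it by pigeonhole. For each $\br$ some pair $(s,w)$ satisfies $E_s(\frac{\br+\bm}{N})=E_w(\frac{\br}{N})$. Since there are only $\nu^2$ pairs, one pair $(s,w)$ works for at least $N^d/\nu^2$ of the $\br$. The choice of this pair may depend on $N$. Passing to a subsequence, we can fix a single pair, so the \eqref{e:flo} ratio stays $\ge \nu^{-2}$ and does not tend to $0$.

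\textbf{Part (ii).} I would argue directly from the dynamics, as in the simple example of \S\,\ref{sec:simpexa}. Take $N$ even, so parity on $\LL_N$ is well defined under periodic boundary conditions. Take $\psi=\delta_{\mathbf{0}}\otimes f$.

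Each step adds a vector in $\{\pm1\}^d$, so $x_1+x_2$ changes by an even number. Therefore $U_N^k\psi$ is supported on $\{\br: r_1+r_2\text{ even}\}$ for all $k$.

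Choose the bounded observable $\phi=\mathbf{1}_{\{r_1+r_2\text{ even}\}}$. Then $\langle\phi\rangle_{T,\psi}=1$ for every $T$, while $\langle\phi\rangle=\frac12$. Hence \eqref{e:mainlim} fails along even $N$.

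\textbf{Main obstacle.} The delicate point is the labeling issue in (i): the relation only holds as equality of spectra, and the pigeonhole and subsequence step resolves it. The sublattice restriction in (ii) also explains why regular observables may still see equidistribution.
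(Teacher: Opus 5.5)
Your proof is correct and follows the paper's argument. The paper uses the same shift $\boldsymbol{\varphi}=(\frac12,\frac12,0,\dots,0)$ with $\bm=(\frac{N_n}{2},\frac{N_n}{2},0,\dots,0)$ for (i), and the same parity obstruction on even $N_n$ for (ii). The differences are minor: your pigeonhole step makes explicit the eigenvalue-labeling issue that the paper passes over, and your invariant ($r_1+r_2$ even, discrepancy $\frac12$) is coarser than the paper's (all coordinates of equal parity, with $\phi=\mathbf{1}_{\cP^c}$ giving discrepancy $\frac{2^d-2}{2^d}$), which also spares you the reverse inclusion that the paper proves but does not need.
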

Depending on the coefficients $U_{kl}(p)$, this walk may or may not have flat bands. For example, it has no flat bands if $U^{(i)}$ are all standard Hadamard walks.
\begin{proof}
 As observed, the shift operators appearing in $U$ are from the set \eqref{eqn:shift_highdim}. Consequently, all entries in $\widehat{U}(\btheta)$ containing $\btheta$ come from the set $\{\ee^{2\pi \ii \mathbf{x}\cdot \btheta}: x_i \in \{+1,-1\} \,\forall \ i \}$. For $\ee^{2\pi\ii \bx\cdot \btheta}$ in this set, if $\boldsymbol{\varphi}=(\frac{1}{2},\frac{1}{2},0,\ldots,0)$, we have $\ee^{2\pi \ii \mathbf{x}\cdot \btheta}=\ee^{2\pi \ii \mathbf{x}\cdot (\btheta+\boldsymbol{\varphi})}$ since $\bx\cdot \boldsymbol{\varphi}=\frac{x_1+x_2}{2} \in \{\pm 1,0\}$. Hence, $\widehat{U}(\btheta+\boldsymbol{\varphi})=\widehat{U}(\btheta)$ for all $\btheta$ and \eqref{e:flo} is violated for $N_n$ even and $\bm = (\frac{N_n}{2},\frac{N_n}{2},0,\dots,0)$.

    Now, we prove (ii). For that, we consider the sequence $N_n=2n$ and show that the quantum walk concentrates on a strict subset $\cP \subset \Z^d$ with the property $\# \cP \cap \LL_{N_n}^d < cN_n^d$ for all $n$, for some constant $c<1$. Let $\psi=\delta_{\mathbf{0}} \otimes f$, for some $\|f\|_2=1$. Since the possible shift operators are from the set \eqref{eqn:shift_highdim}, the set of positions in $\LL_{N_n}^d$ that $U_{N_n}$ reaches in $k$ steps is 
    \[
    \cP_k^{N_n}=\{((r_1-s_1)\mod N_n,\ldots, (r_d-s_d)\mod N_n): r_i,s_i \in \Z_{\geq 0}, r_i+s_i=k \, \forall i\}.
    \]
    Therefore, the set of all possible positions that the quantum walk $U_{N_n}$ reaches is
    \begin{multline*}
        \mathcal{P}_{N_n}=\cup_k \cP_k^{N_n}=\{\left((r_1-s_1)\mod N_n, \ldots, (r_d-s_d)\mod N_n\right):\\ r_i,s_i \in \Z_{\geq0}, r_i+s_i=r_j+s_j \, \forall 1\leq  i,j \leq d\}.
    \end{multline*}
     In the above expression, $r_i+s_i$ is the number of steps the quantum walk has taken to reach $(r_i-s_i) \mod N_n$, with $r_i$ being the number of steps in the $+e_i$ direction and $s_i$ the number of steps in the $-e_i$ direction. We make the following claim 

\begin{figure}[h!]
    \centering
    \includegraphics[width=0.5\linewidth]{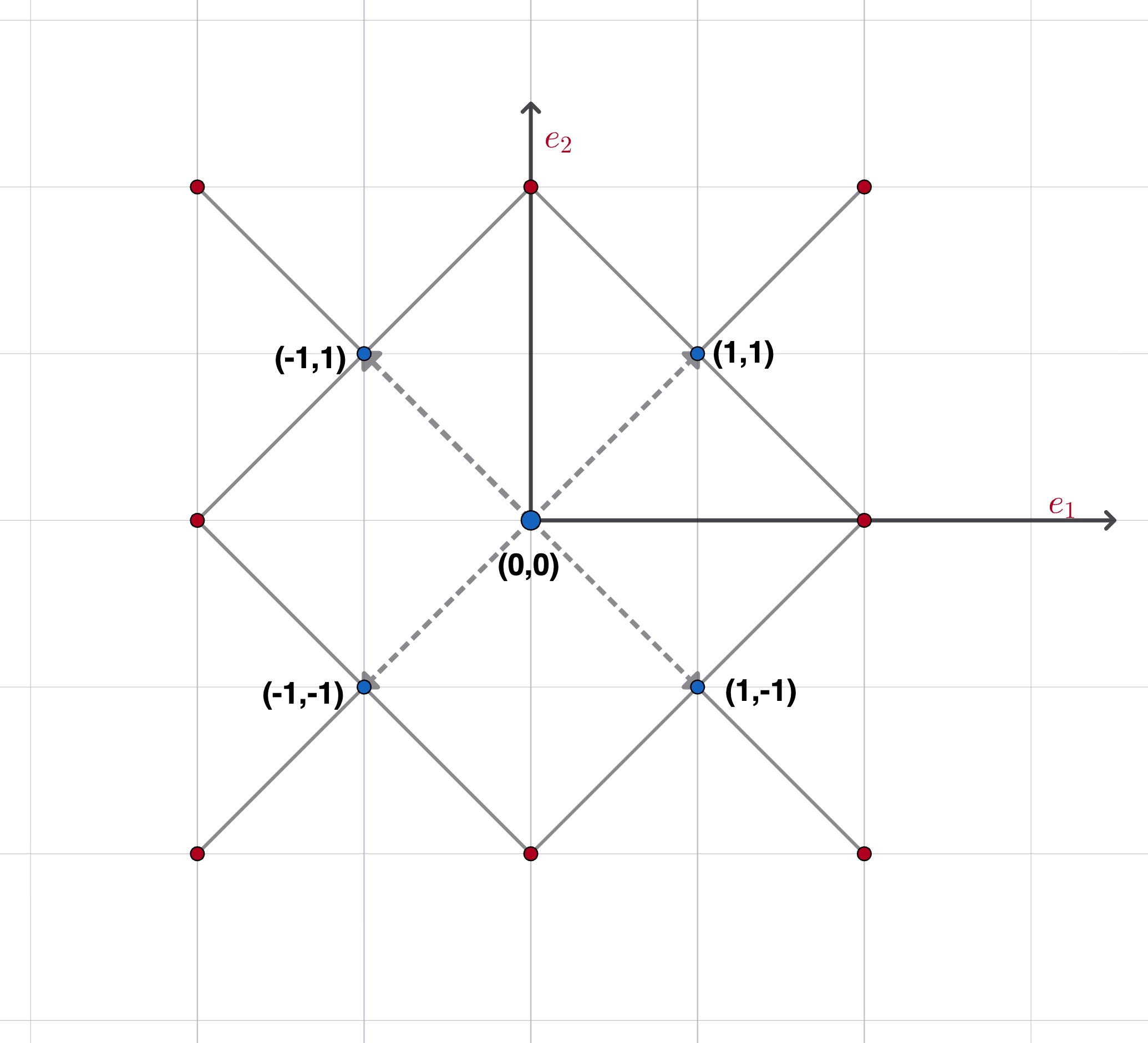}
    \caption{Illustration of the possible positions that a quantum walk without entanglement on a 2-dimensional integer lattice reaches, with dotted arrows depicting the four vectors $\pm e_1 \pm e_2$. The blue vertices are the ones that can be reached from $(0,0)$ in one step and the red vertices are the ones that can be reached in two steps.}
    \label{fig:2dqw}
\end{figure}

\begin{claim*}
 $\mathcal{P}_{N_n}=\{(w_1,w_2,\ldots,w_d) \in \LL_{N_n}^d: (-1)^{w_i}=(-1)^{w_j}, \, \forall \, 1 \leq i,j \leq d\}$.
    \end{claim*}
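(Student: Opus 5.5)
We prove the two inclusions separately. Throughout, $N_n=2n$ is even, so the parity of a residue mod $N_n$ is well defined; this is what makes the parity condition meaningful on $\LL_{N_n}^d$.

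\emph{Inclusion $\subseteq$.} Take a point $((r_1-s_1)\bmod N_n,\dots,(r_d-s_d)\bmod N_n)$ with $r_i+s_i=k$ for all $i$. We have $r_i-s_i\equiv r_i+s_i=k \pmod 2$. Since $N_n$ is even, reducing mod $N_n$ preserves parity, so every coordinate $w_i$ has the parity of $k$. Hence $(-1)^{w_i}=(-1)^{w_j}$ for all $i,j$.

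\emph{Inclusion $\supseteq$.} Let $(w_1,\dots,w_d)\in\LL_{N_n}^d$ have all coordinates of the same parity $\epsilon\in\{0,1\}$. We want a common step count $k$ and integers $r_i,s_i\ge 0$ with $r_i+s_i=k$ and $r_i-s_i=w_i$ (already exactly, not just mod $N_n$). Choose any $k\ge \max_i w_i$ with $k\equiv\epsilon \pmod 2$, for instance $k=\max_i w_i$. Set $r_i=(k+w_i)/2$ and $s_i=(k-w_i)/2$. These are integers because $k\equiv w_i \pmod 2$, and they are nonnegative because $0\le w_i\le k$. Then $r_i+s_i=k$ and $r_i-s_i=w_i$, so the point lies in $\cP_k^{N_n}\subset\cP_{N_n}$.

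\emph{Remaining step for (ii).} With the claim in hand, we count $\#\cP_{N_n}$. Each coordinate has $n$ even and $n$ odd residues, so
\[
\#\cP_{N_n}=2n^d=2^{1-d}N_n^d,
\]
and $c=2^{1-d}<1$ for $d\ge2$. The time-averaged measure $\mu_{T,\psi}^{N_n}$ is supported on $\cP_{N_n}$. Taking $\phi=\mathbf{1}_{\cP_{N_n}}$ gives $\langle\phi\rangle_{T,\psi}=1$ while $\langle\phi\rangle=2^{1-d}$, which violates $\ell^\infty$-\eqref{e:mainlim}.

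There is no real obstacle here. The only point needing care is to keep the common step count $k$ and the evenness of $N_n$ explicit, since both are used to transfer parity through the mod $N_n$ reduction.
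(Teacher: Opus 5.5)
Your proof is correct and essentially matches the paper's: for $\subseteq$ you note $r_i-s_i\equiv k \pmod 2$ and that reduction mod the even $N_n$ preserves parity, and for $\supseteq$ you take the common step count $k=\max_i w_i$ with $r_i=(k+w_i)/2$, $s_i=(k-w_i)/2$. Your counting step for (ii) tests $\mathbf{1}_{\cP_{N_n}}$ (getting $1$ versus $2^{1-d}$), whereas the paper tests $\mathbf{1}_{\cP^c}$ (getting $0$ versus $(2^d-2)/2^d$); the two are equivalent.
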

    First, consider an element of $\mathcal{P}_{N_n}$ for which $r_i+s_i=k$ is even.
    Then for each $i$, $r_i$ and $s_i$ must have the same parity. Furthermore, since $N_n$ is even, then $(r_i-s_i)\mod N_n$ is even for all $ 1\leq i \leq d$. Similarly, if $r_i+s_i$ is odd for all $i$, then $r_i-s_i \mod N_n$ is also odd for all $i$. This proves the inclusion of $\mathcal{P}_N$ in the other set. To prove the reverse inclusion, consider $(w_1,w_2,\ldots,w_d) \in \LL_N^d$ such that $(-1)^{w_i}=(-1)^{w_j}$ for all $ 1 \leq i,j \leq d$. Without loss of generality, suppose that $w_1 \leq w_2 \leq \cdots \leq w_d$. Since $w_d$ has the same parity as all other $w_j$'s, $\frac{w_d-w_i}{2}$ is a non-negative integer for all $i$. For each $i$, we choose $r_i=\frac{w_d+w_i}{2}$ and $s_i=\frac{w_d-w_i}{2}$. Note that here $r_i+s_i=w_d$ for all $i$, and that $r_i-s_i=w_i$. This implies that $(w_1,\ldots , w_d) \in \cP_{N_n}$. This completes the proof of the claim. 

    Finally, define $\cP=\cup_n \cP_{N_n}=\{(w_1,\ldots,w_d) \in \Z_{\geq 0}^d: (-1)^{w_i}=(-1)^{w_j}, \, \forall \, 1 \leq i,j \leq d\}$. Note that for each $n$, $\cP^c \cap \LL_{N_n}^d$ contains $N_n^d-2\big(\frac{N_n}{2}\big)^d=(2n)^d-2n^d$ elements. Therefore, if $\phi=\mathbf{1}_{\cP^c} \in \ell^\infty(\Z^d)$, then the average of $\phi$ restricted to $\LL_{N_n}^d$, $\langle \phi \rangle_{\LL_{N_n}^d}=\frac{2^d-2}{2^d}$. And since the quantum walk never reaches $\cP^c$, $\langle \phi\rangle_{T,\psi}=0$. This shows that $U$ is not $\ell^\infty$-\eqref{e:mainlim}.   
\end{proof}

We now introduce a strengthening of \eqref{e:flo} on constituent $1d$ quantum walks, which ensures that their tensor product satisfies \eqref{e:flo}. 

\begin{prp}\label{prp:higherQW_reduction}
Let $V$ be a $1d$ discrete-time quantum walk with $\{E_i(\theta):1 \leq i \leq \dim(V)\}$ as set of eigenvalues of $\widehat{V}(\theta)$. Suppose $E_i(\theta+\varphi) \not\equiv \ee^{2\pi\ii\xi}E_j(\theta)$ for all $i,j$ and $0<\varphi,\xi<1$. Then $U= V^{\otimes d}$ satisfies \eqref{e:flo}, and \eqref{eqn:ergodic_general} holds for all regular observables.
\end{prp}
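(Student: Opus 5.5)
The plan is to first identify the Floquet eigenvalues of $U=V^{\otimes d}$. Since $U$ acts on $\ell^2(\Z^d)\otimes \C^{m^d}$ as a tensor product of copies of $V$, each acting in its own coordinate, the computation of Lemma~\ref{lem:foufin} gives $\widehat{U}(\btheta)=\widehat{V}(\theta_1)\otimes\cdots\otimes\widehat{V}(\theta_d)$. Hence the eigenvalues of $\widehat{U}(\btheta)$ are the products
\[
E_{\mathbf{i}}(\btheta)=\prod_{k=1}^d E_{i_k}(\theta_k),\qquad \mathbf{i}=(i_1,\dots,i_d).
\]
By \cite[Proposition 5.3]{Rainer2013}, each $E_{i_k}$ can be taken analytic in $\theta_k$. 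We must show that for all $\mathbf{i},\mathbf{j}$,
\[
\sup_{\bm\neq \mathbf{0}} N^{-d}\,\#\Big\{\br\in\LL_N^d: E_{\mathbf{i}}\Big(\frac{\br+\bm}{N}\Big)=E_{\mathbf{j}}\Big(\frac{\br}{N}\Big)\Big\}\to 0 .
\]
Once this is established, Theorem~\ref{thm:pergra} gives \eqref{eqn:ergodic_general} for regular observables.

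Fix $\bm\neq\mathbf{0}$ and choose a coordinate $k_0$ with $m_{k_0}\neq 0$. Freeze the remaining coordinates $r_k$, $k\ne k_0$. The equation then reads
\[
E_{i_{k_0}}\Big(\frac{r_{k_0}+m_{k_0}}{N}\Big)=c\,E_{j_{k_0}}\Big(\frac{r_{k_0}}{N}\Big),
\]
where $c=\prod_{k\neq k_0}E_{j_k}(r_k/N)/E_{i_k}((r_k+m_k)/N)$ is a unimodular constant, since all eigenvalues lie on the unit circle. The key intermediate claim is a uniform one-dimensional count: there exists $C$ independent of $N$, of $m\in\LL_N\setminus\{0\}$ and of $c\in S^1$ such that
\[
\#\Big\{r\in\LL_N: E_i\Big(\frac{r+m}{N}\Big)=c\,E_j\Big(\frac{r}{N}\Big)\Big\}\le C \qquad\text{for all } i,j .
\]
Granting this, summing over the $N^{d-1}$ frozen coordinates bounds the count by $C\nu^{2}N^{d-1}$, which is $o(N^d)$ uniformly in $\bm$.

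To prove the uniform count, I will argue by contradiction using the Hurwitz compactness argument of Lemma~\ref{lem:1D_subseq_2conditions}. Suppose there are sequences $N_n$, $m_n$, $c_n$ along which the counts tend to infinity. Pass to a subsequence with $m_n/N_n\to\varphi_0\in[0,1]$ and $c_n\to c_0=\ee^{2\pi\ii\xi_0}$. Consider the analytic functions
\[
g_n(\theta)=E_i\Big(\theta+\frac{m_n}{N_n}\Big)-c_nE_j(\theta),
\]
which converge uniformly on a complex neighbourhood of $[0,1]$ to $g(\theta)=E_i(\theta+\varphi_0)-c_0E_j(\theta)$. If $g\not\equiv0$, Hurwitz's theorem bounds the zeros of $g_n$, which is a contradiction. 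So $g\equiv0$, that is, $E_i(\theta+\varphi_0)\equiv \ee^{2\pi\ii\xi_0}E_j(\theta)$. The hypothesis forbids this whenever $\varphi_0,\xi_0\in(0,1)$, so there are boundary cases to handle:

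\begin{itemize}
\item If $\varphi_0\in(0,1)$ and $\xi_0=0$, then $E_i(\theta+\varphi_0)\equiv E_j(\theta)$. I expect the hypothesis should be read as excluding this too, and otherwise I would perturb $\xi$ slightly and use continuity; I flag this as a point to check.
\item If $\varphi_0\in\{0,1\}$, reduce to the limit $m_n/N_n\to0$ (the case $1$ by $1$-periodicity, replacing $m_n$ by $m_n-N_n$). Then $E_i\equiv c_0E_j$. Here I will mimic Case~1 of Lemma~\ref{lem:non-ergodic_condition} and study the difference quotients
\[
f_n(\theta)=\frac{E_i(\theta+h_n)-c_nE_i(\theta)/c_0}{h_n},\qquad h_n=\frac{m_n}{N_n}.
\]
These are controlled as long as $(c_n/c_0-1)/h_n$ stays bounded. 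If that ratio is unbounded, then $g_n$ is, after dividing by $c_0-c_n$, close to a nonzero multiple of $E_j$, which has no zeros because it is unimodular. So the counts stay bounded in that case too.
\end{itemize}

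The main obstacle is this degenerate regime $m_n/N_n\to0$ with $c_n\to c_0$ at comparable rates. Its limit is $E_i'(\theta)-\lambda E_i(\theta)$ for some $\lambda$. This can vanish identically only if $E_i(\theta)=\ee^{\lambda\theta}\cdot$const, and unimodularity plus periodicity then force $E_i(\theta)=a\ee^{2\pi\ii q\theta}$ with $q\in\Z$. When $q\neq0$ this is excluded by the hypothesis with $\varphi=1/(2|q|)$ or a similar choice; when $q=0$ it is a flat band, which the hypothesis also rules out since $E_i(\theta+\varphi)\equiv E_i(\theta)$ would not be $\ee^{2\pi\ii\xi}E_i$ with $\xi\in(0,1)$, and I will treat it separately. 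I will try this case first, as it is where uniformity in $c$ genuinely bites.
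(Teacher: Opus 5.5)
Your route is the paper's: write the Floquet eigenvalues of $V^{\otimes d}$ as products, freeze every coordinate except one where $\bm\neq\mathbf{0}$, absorb the frozen factors into a unimodular constant, and use the Hurwitz compactness argument to force an identity $E_i(\theta+\varphi_0)\equiv \ee^{2\pi\ii\xi_0}E_j(\theta)$. You are more careful than the paper. Its proof runs along a single violating sequence and concludes directly that the limiting identity contradicts the hypothesis. It does not check that the limits $\varphi,\xi$ lie in $(0,1)$, and it does not treat $m_n/N_n\to 0$. Your uniform bound, which even gives an $O(N^{d-1})$ count, is correct and fills that hole. So is your difference-quotient analysis of the regime $h_n=m_n/N_n\to 0$, with the dichotomy on $(c_n/c_0-1)/h_n$; there the unbounded case has the zero-free limit $E_j$.

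The genuine gap is the boundary value $\xi_0=0$. This covers both the case $\varphi_0\in(0,1)$ with $c_0=1$, i.e.\ $E_i(\theta+\varphi_0)\equiv E_j(\theta)$, and the flat-band case $q=0$ at the end. Under the hypothesis as literally stated ($0<\xi<1$) neither case is excluded. No perturbation or continuity argument in $\xi$ can help, because the statement is false there. Here are two counterexamples.
\begin{enumerate}[(a)]
\item Take $V=V_1\oplus 1$, where $V_1$ is the split-step walk of the theorem following the proposition (shown there to satisfy the hypothesis) and the extra spin is inert. The constant eigenvalue $1$ only produces relations with $\xi=0$, so $V$ satisfies the hypothesis as written. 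Yet $V^{\otimes d}$ has the flat band $1$ and violates \eqref{e:flo}.
\item Take the same split-step walk with step sizes $\alpha=\beta=2$ in \eqref{e:split-step_general}. It has the $\tfrac12$-periodic Floquet matrix $\widehat{V}_1(2\theta)$, and its only relations $E_i(\theta+\varphi)\equiv \ee^{2\pi\ii\xi}E_j(\theta)$ have $\xi=0$. But then $\widehat{U}(\btheta+(\tfrac12,0,\dots,0))=\widehat{U}(\btheta)$, so \eqref{e:flo} fails.
\end{enumerate}
In particular, your sentence that the hypothesis rules out a flat band is wrong as written: a flat band gives exactly the relation with $\xi=0$.

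The right fix is to read the hypothesis with $0\le \xi<1$; this is what the paper's verification for the split-step example actually proves. With that reading your case analysis closes. The case $\varphi_0\in(0,1)$ is excluded for every $\xi_0$. For $\varphi_0=0$, the degenerate limit $E_i=a\,\ee^{2\pi\ii q\theta}$ is excluded by some $\xi\in(0,1)$ when $q\neq 0$, and by $\xi=0$ when $q=0$.

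A minor correction: analytic branches in $1d$ are periodic only with some integer period $k$, so $q\in\frac{1}{k}\Z$. Also, reducing $\varphi_0=1$ to $\varphi_0=0$ requires relabelling the branch, $E_i(\cdot+1)=E_{i'}$. Neither affects the argument.
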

From Proposition~\ref{prp:nom_ergodic_nonentangle}, we know that Hadamard constituent walks should violate this assumption. Indeed, the eigenvalues $E_{\pm}(\theta)=\frac{-\ii\sin2\pi \theta\pm \sqrt{\cos^22\pi\theta+1}}{\sqrt{2}}$ of this walk satisfy $E_+(\theta+\frac{1}{2})\equiv \ee^{\pi \ii} E_-(\theta)$. Also excluded are walks $V$ with an eigenvalue $E_j(\theta)=\ee^{2\pi\ii \alpha_j\theta}$.
\begin{proof}
    Suppose \eqref{e:flo} does not hold for $U$. We prove that $E_i(\theta+\varphi) \equiv \ee^{2\pi \ii \xi} E_j(\theta)$ for some $i,j$ and some $0<\varphi,\xi<1$.  
    
    By assumption, there exist $N_n \to \infty$, $( \mathbf{m}_n \in \LL_{N_n}^d \setminus \{\mathbf{0}\})$ and some $c>0$ such that 
 \begin{equation}\label{e:flo_higherdim}
  \# \Big\{\br \in \LL_{N_n}^d:  E_s\Big( \frac{\br+\bm_n}{N_n}\Big)-E_w\Big(\frac{\br}{N_n}\Big)=0\Big\} \geq cN_n^d.   
 \end{equation}

We represent the vectors $\br$ and $\bm_n$ by $\br=(r^{(1)},r^{(2)},\ldots, r^{(d)})$ and $\bm_n=(m_n^{(1)},m_n^{(2)},\ldots, m_n^{(d)})$, respectively. Since $\bm_n \neq 0$ for all $n$, there is a subsequence $\bm_{n_k}$ and $1 \leq \ell \leq d$ such that $m_{n_k}^{(\ell)} \neq 0$ for all $n$ and furthermore $\frac{m_{n_k}^{(\ell)}}{N_n} \to \varphi$ for some constant $\varphi \in [0,1]$. Without loss of generality, assume $\ell=1$. As a consequence of \eqref{e:flo_higherdim}, there exists a sequence of vectors $\br_n^\prime=(r_n^{(2)},r_n^{(3)},\ldots,r_n^{(d)}) \in \LL_{N_n}^{d-1}$ such that 
\begin{equation}\label{e:non-ergodic_reduction1D}
  \# \Big\{r \in \LL_{N_n}:  E_s\Big( \frac{(r,\br_n^\prime)+\bm_n}{N_n}\Big)-E_w\Big(\frac{(r,\br_n^\prime)}{N_n}\Big)=0\Big\} \geq cN_n.    
\end{equation}

Now the eigenvalues of $\widehat{U}(\btheta)$ have the form $E_{i_1}(\theta_1)E_{i_2}(\theta_2)\cdots E_{i_d}(\theta_d)$, where $E_{i_r}(\theta_r)$ are the eigenvalues of $\widehat{V}(\theta_r)$. Consequently, for $\br \in \LL_N^d$, $E_s\Big( \frac{\br+\bm_n}{N_n}\Big)=E_w\Big(\frac{\br}{N_n}\Big)$ if and only if 
$E_{i_1}\big(\frac{r_n^{(1)}+m_n^{(1)}}{N_n}\big)E_{i_2}\big( \frac{r_n^{(2)}+m_n^{(2)}}{N_n}\big)\cdots E_{i_d}\big( \frac{r_n^{(d)}+m_n^{(d)}}{N_n}\big)=E_{j_1}\big(\frac{r_n^{(1)}}{N_n}\big)E_{j_2}\big(\frac{r_n^{(2)}}{N_n}\big)\cdots E_{j_d}\big(\frac{r_n^{(d)}}{N_n}\big)$. Choosing $\br_n^\prime$ satisfying \eqref{e:non-ergodic_reduction1D} and noting that $\widehat{V}(\theta)$ is unitary, we have 
\[
E_{i_1}\Big(\frac{r_n^{(1)}+m_n^{(1)}}{N_n}\Big)= \prod_{u=2}^d \frac{E_{j_u}\big(\frac{r_n^{(u)}}{N_n}\big)}{E_{i_u}\big( \frac{r_n^{(u)}+m_n^{(u)}}{N_n}\big)}E_{j_1}\Big(\frac{r_n^{(1)}}{N_n}\Big)=:\ee^{2\pi\ii \xi_n}E_{j_1}\Big(\frac{r_n^{(1)}}{N_n}\Big) \,.
\]

If required, by taking a subsequence, assume that $\xi_n \to \xi \in [0,1]$. 
Define the functions $h_n,h:[0,1] \to \C$ by
\begin{align*}
    h_n(\theta) &=E_{i_1}(\theta)-\ee^{2 \pi\ii \xi_n}E_{j_1}(\theta), \\
    h(\theta) &=E_{i_1}(\theta)-\ee^{2 \pi\ii \xi}E_{j_1}(\theta).
\end{align*}

We claim that $h \equiv 0$. Suppose not; then proceeding as in the proof of Lemma~\ref{lem:non-ergodic_condition}, by application of Hurwitz's theorem, it follows that the number of zeros of $h_n$ is bounded uniformly by a constant, in contradiction to \eqref{e:non-ergodic_reduction1D}.
 
Therefore, we conclude that $h \equiv 0$, i.e., $E_{i_1}(\theta)=\ee^{2 \pi\ii \xi}E_{j_1}(\theta)$ for all $\theta \in [0,1]$, proving the first assertion of the theorem. The second assertion now follows from Theorem \ref{thm:pergra}. 
\end{proof}

Using the previous proposition, we provide an example of a higher-dimensional quantum walk in any dimension $d\ge 2$ which is \eqref{eqn:ergodic_general} for regular observables.
\begin{thm}
    For every $d \geq 2$, consider the $d$-dimensional walk $U=\otimes_{i=1}^d V(e_i)$, where $V=\frac{1}{2}\begin{pmatrix} I +  S_1 & -S_1 +I\\ S_{-1} -I& I+S_{-1}\end{pmatrix}$. Then $U$  is \eqref{eqn:ergodic_general} for regular observables.
\end{thm}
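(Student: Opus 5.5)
The plan is to reduce everything to Proposition~\ref{prp:higherQW_reduction}. That proposition gives \eqref{e:flo} for $U=V^{\otimes d}$, and hence \eqref{eqn:ergodic_general} for regular observables via Theorem~\ref{thm:pergra}. So it suffices to check its hypothesis for the single one-dimensional walk $V$: no eigenvalue of $\widehat V$ is a shifted and rotated copy of another, i.e. $E_i(\theta+\varphi)\not\equiv \ee^{2\pi\ii\xi}E_j(\theta)$ for $0<\varphi,\xi<1$. Unitarity of $V$ is a quick check on the columns of $\widehat V(\theta)$, which I would record first.

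First I compute the Floquet matrix. With $z=\ee^{-2\pi\ii\theta}$ we get
\[
\widehat V(\theta)=\frac12\begin{pmatrix}1+z&1-z\\ \bar z-1&1+\bar z\end{pmatrix},
\]
whose trace is $1+\cos2\pi\theta=2\cos^2\pi\theta$. Its determinant is $\frac14\big[(1+z)(1+\bar z)+(1-z)(1-\bar z)\big]=1$. Hence the eigenvalues are $E_\pm(\theta)=\ee^{\pm\ii\gamma(\theta)}$, where $\gamma(\theta)\in[0,\pi/2]$ is defined by $\cos\gamma(\theta)=\cos^2\pi\theta$. The function $\gamma$ is continuous, even, $1$-periodic, and strictly monotone on $[0,\frac12]$. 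Its values include $\gamma(0)=0$, $\gamma(\frac12)=\frac\pi2$ and $\gamma(\frac14)=\frac\pi3$.

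Next I suppose $\ee^{\pm\ii\gamma(\theta+\varphi)}\equiv \ee^{2\pi\ii\xi}\ee^{\pm'\ii\gamma(\theta)}$ and split according to the relative signs.

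\emph{Same sign.} Here $\gamma(\theta+\varphi)-\gamma(\theta)\equiv 2\pi\xi \pmod{2\pi}$. The left side is continuous with values in $[-\frac\pi2,\frac\pi2]$, so it equals a constant $K$. Integrating over a period gives $K=0$, which forces $\xi\in\Z$, a contradiction.

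\emph{Opposite signs.} Here $\gamma(\theta+\varphi)+\gamma(\theta)$ is continuous with values in $[0,\pi]$ and constant modulo $2\pi$, hence equal to a constant $K$. Then $\gamma(\theta+2\varphi)=K-\gamma(\theta+\varphi)=\gamma(\theta)$, so $2\varphi$ is a period of $\gamma$. Since the minimal period of $\gamma$ is $1$, this forces $\varphi=\frac12$. But then evaluating the sum at $\theta=0$ gives $0+\frac\pi2=\frac\pi2$, while at $\theta=\frac14$ it gives $\frac\pi3+\frac\pi3=\frac{2\pi}3$. This contradicts constancy.

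Both cases are excluded, so the hypothesis of Proposition~\ref{prp:higherQW_reduction} holds and the theorem follows. The main obstacle is the passage from an identity modulo $2\pi$ to a genuine functional identity for $\gamma$; this is handled by the continuity and range arguments above. Once the phases are written explicitly as $\pm\gamma$, everything else is elementary.
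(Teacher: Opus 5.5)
Your proof is correct and follows the paper's own route: reduce to Proposition~\ref{prp:higherQW_reduction}, then exclude $E_i(\theta+\varphi)\equiv \ee^{2\pi\ii\xi}E_j(\theta)$ for the explicit eigenvalues \eqref{e:spli2} (these are exactly your $\ee^{\pm\ii\gamma(\theta)}$), treating the equal-index and opposite-index cases separately. The difference is only in the elementary details. You lift to the continuous phase $\gamma$ and use averaging over a period together with the minimal period of $\gamma$, whereas the paper evaluates at $\theta=\pm\varphi,0,\frac12$ and compares imaginary parts. Your version is a little cleaner; in particular it avoids the paper's step where the opposite-index identity is written as $\lambda_+(\theta+\frac12)\equiv-\ii\lambda_+(\theta)$, when it should be $\ii\lambda_-(\theta)$.
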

Note that $V$ is the $1d$ split-step quantum walk defined in \eqref{e:split-step_general} for $t=r=\frac{1}{\sqrt{2}}$ and $\alpha=\beta=1$.
One reason why $U$ is ergodic, while the $U$ of Proposition~\ref{prp:nom_ergodic_nonentangle} is not, is that here in each constituent walk $V$, at each step, part of the mass stays at rest and part hops at distance one.

\begin{proof}
    By Proposition~\ref{prp:higherQW_reduction}, it suffices to prove that the eigenvalues of $\widehat{V}(\theta)$ do not satisfy $E_{i}(\theta+\varphi) \equiv \ee^{2\pi \ii \xi}E_{j}(\theta)$ for any $1 \leq i,j \leq 2$ and $0<\varphi,\xi<1$. Since $V$ is a split-step quantum walk, the eigenvalues of $\widehat{V}(\theta)$ are given by 
    \begin{equation}\label{e:spli2}
    \lambda_{\pm}(\theta)=\frac{1}{2}+\frac{1}{2}\cos 2\pi\theta \pm  \frac{\ii}{\sqrt{2}} \sqrt{\frac{\sin^2(2\pi\theta)}{2}  +1-\cos 2\pi\theta}\,,
    \end{equation}
    see \eqref{e:rtsimple}. Let $E_i$ be either $\lambda_+$ or $\lambda_-$. If $E_i(\theta+\varphi)\equiv \ee^{2\pi\ii \xi}E_i(\theta)$ for some $\varphi,\xi$, then $E_i(\varphi)=\ee^{2\pi\ii\xi}E_i(0)=\ee^{2\pi\ii\xi}$ and $E_i(-\varphi)=\frac{1}{\ee^{2\pi\ii\xi}}E_i(0)=\ee^{-2\pi\ii\xi}$. This yields $E_i(\varphi)E_i(-\varphi)=1$, which is a contradiction since by \eqref{e:spli2}, we have $E_i(\varphi)E_i(-\varphi)=E_i(\varphi)^2\neq 1$ for $0<\varphi<1$.

    Thus, if $E_i(\theta+\varphi)\equiv \ee^{2\pi\ii\xi}E_j(\theta)$, then we must have $i\neq j$. Take $E_i=\lambda_+$ and $E_j=\lambda_-$. Then $E_i(\theta+\varphi)\equiv \ee^{2\pi\ii\xi}E_j(\theta)$ implies $\lambda_+(\varphi)=\ee^{2\pi\ii\xi}\lambda_-(0)=\ee^{2\pi\ii\xi}$ and $\lambda_+(\varphi+\frac{1}{2})=\ee^{2\pi\ii\xi}\lambda_-(\frac{1}{2})=-\ii \ee^{2\pi\ii\xi}=-\ii\lambda_+(\varphi)$. Considering the imaginary part of this equation, we get from \eqref{e:spli2},
     \begin{align*}
         -\left(\frac{1}{2}+\frac{1}{2}\cos 2\pi\varphi\right) &=  \frac{1}{\sqrt{2}} \sqrt{\frac{\sin^2(2\pi\varphi)}{2}  +1+\cos 2\pi\varphi}
     \end{align*}
     The RHS of this equation is non-negative for all $\varphi$ and the LHS is non-negative only if $\varphi=\frac{1}{2}$. Thus, $\varphi=\frac{1}{2}$. So $\lambda_+(\theta+\frac{1}{2}) \equiv -\ii \lambda_+(\theta)$. Comparing the imaginary parts of this equation, we get
     \begin{align*}
         -\left(\frac{1}{2}+\frac{1}{2}\cos 2\pi\theta\right) &=  \frac{1}{\sqrt{2}} \sqrt{\frac{\sin^2(2\pi\theta)}{2}  +1+\cos 2\pi\theta} \quad \text{ for all } 0 \leq \theta<1.
     \end{align*}
     As this is clearly not true, we must have $\lambda_+(\theta+\varphi) \not\equiv \ee^{2\pi\ii\xi}\lambda_-(\theta)$ for all $\varphi,\xi$. Similarly $\lambda_-(\theta+\varphi) \not\equiv \ee^{2\pi\ii\xi}\lambda_+(\theta)$. Therefore, by Proposition~\ref{prp:higherQW_reduction}, $U$ is \eqref{eqn:ergodic_general}.
\end{proof}

We finally consider walks with constituent coins of different sizes. Recall the Hadamard walk $U_H$ and Grover walk $U_G$ on $\Z$ given in \eqref{e:uhagro}.

\begin{prp}[Failure of Theorem~\ref{thm:1D_subseq_general} if $d>1$]\label{prp:wasrem}
Let $U$ be the $2d$ quantum walk $U=U_H\mathop\otimes U_G$. Then $U$ has purely absolutely continuous spectrum, but violates \eqref{e:mainlim} for bounded observables, and violates \eqref{e:flo} on all subsequences.
\end{prp}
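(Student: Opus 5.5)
The plan is to exploit the tensor-product structure of the Floquet matrix. Under the identification $\ell^2(\Z^2)\otimes\C^6\cong(\ell^2(\Z)\otimes\C^2)\otimes(\ell^2(\Z)\otimes\C^3)$, I expect $\widehat{U}(\btheta)=\widehat{U}_H(\theta_1)\otimes\widehat{U}_G(\theta_2)$. This should follow from the computation in Lemma~\ref{lem:foufin}, since each shift $S_{(p_1,p_2)}=S_{p_1}\otimes S_{p_2}$ contributes the factor $\ee^{-2\pi\ii(p_1\theta_1+p_2\theta_2)}$. Hence the Floquet eigenvalues are the products $E^H_i(\theta_1)E^G_j(\theta_2)$, with $i\le 2$ and $j\le 3$. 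Similarly, for $N$ large the periodic restriction factorizes as $U_N=(U_H)_N\otimes(U_G)_N$.

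The key input is that the Grover walk has a flat band. I would verify this directly from the Floquet matrix $\widehat{U}_{\mathrm{Gro}}(\theta)$ displayed after Lemma~\ref{lem:foufin}: compute $\det(\widehat{U}_{\mathrm{Gro}}(\theta)-\lambda_0 I)$ and check that it vanishes identically in $\theta$ for some constant $\lambda_0$. I expect $\lambda_0=1$; this is also consistent with the localization in Lemma~\ref{lem:inui}. Once the flat band is established, Lemma~\ref{lem:flaco} provides an eigenvector $\varphi\in\ell^2(\Z)\otimes\C^3$ of compact support $\Lambda$, with $U_G\varphi=\lambda_0\varphi$. The Hadamard walk has no flat band: its eigenvalues $E_\pm$ (recalled after Proposition~\ref{prp:higherQW_reduction}) are nonconstant. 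Therefore no product $E^H_i(\theta_1)E^G_j(\theta_2)$ is constant, since such a product depends nontrivially on $\theta_1$. By the criterion of \cite{Tate2019} (no flat bands $\iff$ purely absolutely continuous spectrum in this periodic finite-range setting), $U$ has purely absolutely continuous spectrum.

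For the failure of \eqref{e:flo}, consider the band $E(\btheta)=E^H_+(\theta_1)\lambda_0$, which does not depend on $\theta_2$. For every $N$, take $\bm=(0,1)\neq\mathbf 0$. Then
\[
E\Big(\frac{\br+\bm}{N}\Big)=E\Big(\frac{\br}{N}\Big)\qquad \text{for all }\br\in\LL_N^2,
\]
so the fraction in \eqref{e:flo} equals $1$ for this choice of $s=w$, for every $N$. Hence \eqref{e:flo} fails along every subsequence.

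For the failure of \eqref{e:mainlim} with bounded observables, take the initial state $\psi=(\delta_0\otimes f)\otimes\varphi$ with $\|f\|=1$, normalized; it has compact support. Then
\[
U_N^k\psi=\big((U_H)_N^k(\delta_0\otimes f)\big)\otimes\lambda_0^k\varphi,
\]
so $\mu^N_{T,\psi}$ is supported on $\LL_N\times\Lambda$ for all $T$. Choose $\phi^{(N)}=\mathbf 1_{\LL_N\times\Lambda}$, which satisfies $\|\phi^{(N)}\|_\infty\le1$. Then $\langle\phi\rangle_{T,\psi}=1$ for all $T$, while $\langle\phi\rangle=|\Lambda|/N\to0$, so \eqref{e:mainlim} fails.

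I expect the main obstacle to be the bookkeeping in the first step: checking that the paper's matrix convention \eqref{e:ugen} for a tensor product of walks really yields $\widehat{U}_H\otimes\widehat{U}_G$ (with the $\ee^{-2\pi\ii\btheta\cdot\bp}$ sign convention), and that the Grover determinant vanishes identically. If $\lambda_0=1$ fails in the determinant check, I would instead solve the cubic characteristic polynomial and look for a root independent of $\theta$ (possibly $-1$).
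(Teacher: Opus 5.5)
Your proposal is correct. The spectral part and the failure of \eqref{e:flo} follow the paper's own argument: the factorization $\widehat{U}(\btheta)=\widehat{U}_H(\theta_1)\otimes\widehat{U}_G(\theta_2)$, the Grover flat band at $1$, and the band $E^H_+(\theta_1)$ with $\bm=(0,1)$. Your check of $\lambda_0=1$ does succeed: $(\ee^{2\pi\ii\theta},\frac{1+\ee^{2\pi\ii\theta}}{2},1)^T$ is a $1$-eigenvector of $\widehat{U}_{\mathrm{Gro}}(\theta)$ for every $\theta$.

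You differ from the paper in how you violate \eqref{e:mainlim}. The paper takes a product state whose Grover factor is a qubit $\delta_0\otimes f$, with $f$ off the exceptional line. It shows that the $y$-marginal of $\mu^N_{T,\psi}$ is exactly the time-averaged distribution of the Grover walk alone, and tests against $\tilde\phi(x,y)=\delta_0(y)$. The nonzero discrepancy is then imported from Lemma~\ref{lem:inui}, i.e.\ from the explicit computation in \cite{IKS}.

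You instead put a compactly supported Grover eigenvector from Lemma~\ref{lem:flaco} in the second factor. One can take explicitly $\varphi=\delta_{-1}\otimes(1,\frac12,0)+\delta_0\otimes(0,\frac12,1)$, which satisfies $(U_G)_N\varphi=\varphi$ for all $N\ge 2$. This is the mechanism of Proposition~\ref{prp:flanoqe} run in the $y$-direction.

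Your route is self-contained and gives a maximal discrepancy ($1$ versus $|\Lambda|/N$). It also gives more than the statement claims. For the regular observable $\phi(x,y)=b(y/N)$ with $b(\theta)=\frac12(1+\cos 2\pi\theta)$, you get $\langle\phi\rangle_{T,\psi}=\sum_{r_2\in\Lambda}b(r_2/N)\|\varphi(r_2)\|^2\to b(0)=1$, while $\langle\phi\rangle\to\frac12$. So $U_H\otimes U_G$ itself already fails \eqref{e:mainlim} for regular observables and could serve for Proposition~\ref{prp:failhi}.

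The paper's route buys genericity instead. It shows the failure occurs for physically natural single-site initial states with almost any spin, not only for a specially engineered eigenstate. The price is reliance on an external explicit computation.
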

\begin{proof}   
The Floquet matrix of $U$ is 
    \[
      \widehat{U}(\theta_1,\theta_2)=\widehat{U}_H(\theta_1) \otimes \widehat{U}_G(\theta_2).
    \]
    Therefore, the eigenvalues of $\widehat{U}(\btheta)$ are of the form $\lambda(\theta_1)\mu(\theta_2)$, where $\lambda(\theta_1)$ and $\mu(\theta_2)$ are the eigenvalues of $\widehat{U}_H(\theta_1)$ and $\widehat{U}_G(\theta_2)$, respectively. Since $U_H$ has no flat bands and $U_G$ does not have a zero flat band, then $U$ has no flat bands. However, the Grover walk has $1$ as an eigenvalue (flat band). So  $\widehat{U}(\btheta)$ has an eigenvalue of the form $E_s(\theta_1,\theta_2)=\lambda(\theta_1)$. It follows that
    \begin{equation}\label{e:nosub}
         \frac{\#\{\br\in \LL_{N}^2:E_s(\frac{\br+(0,1)}{N_n})-E_s(\frac{\mathbf{r}}{N_n})=0\}}{N_n^2}= 1. 
    \end{equation}
    This implies that \eqref{e:flo} is not obeyed along any subsequence $(N_n)$.
    
 Moreover, $U$ is not $\ell^\infty$-\eqref{e:mainlim}. To see this, consider an initial state $\psi=\psi_1 \otimes \psi_2$, with $\psi_1 \in \ell^2(\Z)^2$, $\psi_2\in \ell^2(\Z)^3$ a qubit, $\|\psi_i\|_2=1$ and $\psi_2 \neq \delta_0\otimes  \alpha\big(\frac{1}{\sqrt{6}},\frac{-2}{\sqrt{6}},\frac{1}{\sqrt{6}} \big)$ for any $\alpha \in \C$. To simplify the notations, we drop the subscript $N$ from the unitary operators. The marginal distribution of $\mu_{T,\psi}^N$ with respect to the second coordinate is  $\big(\mu_{T,\psi}^N\big)_{y}(r_2)=\sum\limits_{r_1} \mu_{T,\psi}^N(r_1,r_2)=\sum\limits_{r_1}\frac{1}{T}\sum\limits_{k=0}^{T-1}\mathbb{P}(U^k\psi=(r_1,r_2))=\sum\limits_{r_1}\frac{1}{T}\sum\limits_{k=0}^{T-1}\mathbb{P}(U_H^k\psi_1=r_1)\mathbb{P}(U_G^k\psi_2=r_2)=\frac{1}{T}\sum\limits_{k=0}^{T-1} \big(\sum\limits_{r_1}\mathbb{P}(U_H^k\psi_1=r_1)\big)\mathbb{P}(U_G^k\psi_2=r_2)=\frac{1}{T}\sum\limits_{k=0}^{T-1}\mathbb{P}(U_G^k\psi_2=r_2) =\mu_{T,\psi_2}^N(U_G)(r_2)$, where $\mu_{T,\psi_2}^N(U_G)$ is the time-averaged probability distribution of the Grover walk. Now, consider the observable $\tilde{\phi} \in \ell^2(\Z^2)$ defined by $\tilde{\phi}(x,y)= \phi(y)$, where $\phi=\delta_0$ is the observable from the proof of Lemma~\ref{lem:inui}. Then 
 \[
 \langle \tilde{\phi} \rangle_{T,\psi} = \sum_{r_1,r_2 \in \LL_N} \tilde{\phi}(r_1,r_2) \mu_{T,\psi}^N(r_1,r_2)= \sum_{r_2 \in \LL_N} \phi(r_2) \big(\mu_{T,\psi}^N\big)_{y}(r_2)= \langle \phi \rangle_{T,\psi_2},
 \]
 and
 \[
  \langle \tilde{\phi} \rangle= \frac{1}{N^2}\sum_{r_1,r_2 \in \LL_N} \tilde{\phi}(r_1,r_2)= \frac{1}{N}\sum_{r_2 \in \LL_N} \phi(r_2)=\langle \phi \rangle.
 \]
 Since we know from the proof of Lemma~\ref{lem:inui} that $\lim_{N\to\infty}|\lim_{T\to\infty} \langle \phi\rangle_{T,\psi_2}-\langle \phi\rangle|>0$, it  follows that $U_H \otimes U_G$ violates \eqref{e:mainlim} for bounded observables, along every subsequence.
\end{proof}

\subsection{Other models}\label{sec:other} We briefly conclude with two more models of higher dimension.
\subsubsection{Direct sums}
Let $U^{(i)}$ be 1D quantum walks acting on the coordinate directions $\pm e_i$. For definiteness, let $U^{(i)}$ be a $2\times 2$ coined walk. We take $U=\mathop\oplus_{i=1}^d U^{(i)}$. More precisely, $U$ is defined by
\[
U=\begin{pmatrix}
    U^{(1)} & 0 &0&\cdots &0\\
    0  &U^{(2)} &0 &\cdots &0\\
    \vdots &\vdots &\ddots & &\vdots \\
    0  &0 &0 &\cdots &U^{(d)}
\end{pmatrix} \qquad
\text{ and } \qquad
U\psi= \begin{pmatrix}   U^{(1)}\begin{pmatrix}
    \psi_1 \\
    \psi_2
\end{pmatrix} \\
    U^{(2)}\begin{pmatrix}
    \psi_3 \\
    \psi_4
\end{pmatrix} \\
 
   \vdots \\
   U^{(d)} \begin{pmatrix}
    \psi_{2d-1} \\
    \psi_{2d}
\end{pmatrix}
    \end{pmatrix}.
\]
In this model, the components $\psi_{2i-1}$ and $\psi_{2i}$ completely determine the dynamics of the walk along the direction $e_i$, and $\|U^{(i)}\binom{\psi_{2i-1}}{\psi_{2i}}\|=\|\binom{\psi_{2i-1}}{\psi_{2i}}\|$ for all $i$.

We will use this model to prove Proposition~\ref{prp:failhi}.

The eigenvalues of $\widehat{U}(\btheta)$ are the eigenvalues of the individual blocks of $\widehat{U}(\btheta)$, and therefore the set of eigenvalues is given by
\[
\sigma(\widehat{U}(\btheta))=\{E_s(\theta_i): E_s(\theta_i) \in \sigma(\widehat{U}_i(\theta_i)), 1\leq i \leq d, 1 \leq s \leq 2\}.
\]
This model violates \eqref{e:flo} on every subsequence as in \eqref{e:nosub}, by taking $E_s(\btheta):=\lambda_1(\theta_1)$. 
Furthermore, if we start with the initial state $\psi=\delta_{\mathbf{0}} \otimes \delta_1 \in \ell^2(\Z^d) \otimes \C^{2d}$, then the walk concentrates on the set $\{(x,0,0,\ldots,0): x \in \LL_N\}$ implying that the walk violates \eqref{e:mainlim} for regular observables. To see this, take for simplicity $d=2$ and consider $\phi(x,y)=b(y/N)$, where 
$b(\theta)=\frac{1}{2}\big(1+\cos(2\pi\theta)\big)$. Then $\langle \phi\rangle = \frac{1}{N^2}\sum_{(x,y)}\phi(x,y)=\frac{1}{N}\sum_{y=0}^{N-1}b(y/N)\to \int_0^1 b(y)\,\dd y =\frac{1}{2}$. On the other hand, $\langle \phi\rangle_{T,\psi}=\sum_{(x,y)}b(y/N)\mu_{T,\psi}^N(x,y)=b(0)\sum_{(x,0)}\mu_{T,\psi}^N(x,0)=b(0)=1$. This shows that \eqref{e:mainlim} is violated. Moreover, $\phi$ is regular since 
for $f(x,y)=b(y)$, we have $\widehat{f}(\bk)=\int_{\T^2}\ee^{-2\pi\ii \bk\cdot (x,y)}b(y)\,\dd x\dd y=\delta_0(k_1)\widehat{b}(k_2)=\delta_0(k_1)(\frac{\mathbf{1}_{\{0,1,-1\}}(k_2)}{2})$.

This model has no flat bands if each $U^{(i)}$ has no flat bands. In particular, choosing each $U^{(i)}$ to be a standard Hadamard walk proves Proposition~\ref{prp:failhi}.

\subsubsection{A model of Di Franco-McGettrick-Busch} This is a $2d$ model introduced in \cite{Franco_1qubit_higherdim}, as an attempt to reduce the resources necessary for feasible experimental realization of two-dimensional quantum walks. Each time-step of this walk consists of two one-dimensional shift operators, $S^{(x)}$ acting on the $x$-axis and $S^{(y)}$ acting on the $y$-axis, both with step size one in each direction. The quantum walk is then defined by
\[
U=S^{(y)} (C \otimes I) S^{(x)} (C \otimes I),
\]
where $C$ is a $2 \times 2$ unitary coin. An initial state $\delta_{(p_1,p_2)} \otimes f \in \ell^2(\Z^2) \otimes \C^2$ thus reaches $(p_1+1,p_2+1),(p_1+1,p_2-1),(p_1-1,p_2+1)$ and $(p_1-1,p_2-1)$ in a single step, by first moving to $(p_1\pm 1,p_2)$ according to its spin, then moving to $(p_1\pm 1,p_2\pm 1)$. 

Arguing similarly to Proposition \ref{prp:nom_ergodic_nonentangle}, one sees that the walk is not $\ell^\infty$-\eqref{e:mainlim}. Furthermore, since the composition of the shift operators of $U$ yield shifts of the form $S_{\pm e_1\pm e_2}$, then all the terms in $\widehat{U}(\btheta)$ that are functions of $\btheta$ are of the form $\ee^{2\pi \ii (\pm \theta_1 \pm \theta_2)}$. This implies that $\widehat{U}(\theta_1,\theta_2)=\widehat{U}(\theta_1+\frac{1}{2},\theta_2+\frac{1}{2})$ for all $\btheta$, so the walk violates \eqref{e:flo}.

\section{Conclusion: around condition (NRG)}\label{sec:conclu}

We have shown in this paper that \eqref{e:flo} implies strong dynamical ergodicity for quantum walks (Theorem~\ref{thm:pergra}). We investigated \eqref{e:flo} further in Section~\ref{sec:proof1d} in dimension one, and obtained in particular that \eqref{e:flo} is violated for $s$ and $w$ iff there is some rational $0<\varphi<1$ such that $E_s(\theta+\varphi)=E_w(\theta)$ for all $\theta\in\R$, see Lemma~\ref{lem:1D_subseq_2conditions}. We also illustrated the validity of \eqref{e:flo} at length in models of dimension one and higher.

In this section we address three questions:
\begin{itemize}
    \item Is there a similar spectral equivalence between \eqref{e:flo} and the Lebesgue measure of the set $\Omega_{\balpha}(s,w)=\{\btheta\in\T^d:E_s(\btheta+\balpha)=E_w(\btheta)\}$ in higher dimension ?
    \item Is \eqref{e:flo} necessary for \eqref{eqn:ergodic_general} ?
    \item If not, can we characterize \eqref{eqn:ergodic_general} through a condition weaker than \eqref{e:flo} ?
\end{itemize}

We give a negative answer to the second question
(Proposition~\ref{prp:nrgstro}) and partial answers to the first and third ones. For the latter, the condition we get is a bit technical, and it would be desirable to have a subset of walks for which \eqref{e:flo} characterizes \eqref{eqn:ergodic_general}.

\subsection{(NRG) and the Lebesgue measure of band functions' level sets}\label{rem:higher_dim_equiv}
In this section, we restrict ourselves to quantum walks, and therefore the eigenvalue functions $E_s,E_w$ either take values on the real line (for continuous-time random walks) or on the unit circle (for discrete-time quantum walks). This is unlike the case $d=1$, where we considered $E_s$ and $E_w$ to be eigenvalues of a family of normal matrices more generally.

Fix $s,w$ and consider the zero set
\[
\Omega_{\balpha}(s,w) = \{\btheta\in\T^d:E_s(\btheta+\balpha)-E_w(\btheta)=0\} \,.
\]

We introduce two conditions \textbf{(A)} and \textbf{(A$'$)} as follows:

\smallskip

\textbf{(A)} There is no non-zero $ \balpha \in [0,1)^d \cap \Q^d$ such that $\Omega_{\balpha}(s,w)$ has Lebesgue measure greater than zero.

\medskip

\textbf{(A$'$)} There is no non-zero $\balpha \in [0,1)^d$ such that $\Omega_{\balpha}(s,w)$ has Lebesgue measure greater than zero. 

\smallskip

For quantum walks on $\Z^d, d \geq 2$, we conjecture that \eqref{e:flo}$\iff$\textbf{(A)} $\forall \,s,w$. We will prove here that \textbf{(A$'$)}$\implies$\eqref{e:flo}$\implies$\textbf{(A)}.

Before we discuss the equivalence of \eqref{e:flo} and \textbf{(A)}, let us recall that there is a subvariety $S$ such that all eigenvalues $E_u$ are holomorphic on $\T^d\setminus S = \cup_{a=1}^{\mathfrak{f}} \cO_a$, which is a finite union of domains \cite{Wilcox,Tate2019}. The eigenvalue multiplicity is constant on each domain. In particular, $|S|=0$. If $\widetilde{S}=S\cup (S-\balpha)$, then $|\widetilde{S}|=0$ and $\T^d\setminus \widetilde{S} = \cup_{a=1}^{\mathfrak{f}'}\widetilde{\cO}_a$ is a finite union of domains on which both $E_u$ and $E_u(\cdot+\balpha)$ are holomorphic.

\begin{lem}
\eqref{e:flo}$\implies$\emph{\textbf{(A)}} $\forall\,s,w$.
\end{lem}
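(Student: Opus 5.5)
We argue by contraposition: assuming \textbf{(A)} fails for some $s,w$, we show that \eqref{e:flo} fails. So suppose there is a nonzero rational $\balpha=(p_1/q,\dots,p_d/q)\in[0,1)^d\cap\Q^d$, with common denominator $q$, such that $|\Omega_{\balpha}(s,w)|>0$. The natural candidates are the subsequence $N=nq$ and the shift $\bm=n q\balpha=(np_1,\dots,np_d)\in\LL_N^d\setminus\{\mathbf 0\}$, for which $\frac{\br+\bm}{N}=\frac{\br}{N}+\balpha$ exactly. Then
\[
\#\Big\{\br\in\LL_N^d:E_s\Big(\tfrac{\br+\bm}{N}\Big)=E_w\Big(\tfrac{\br}{N}\Big)\Big\}=\#\Big\{\br\in\LL_N^d:\tfrac{\br}{N}\in\Omega_{\balpha}(s,w)\Big\}.
\]
It therefore suffices to show that the fraction of grid points $\br/N$ lying in $\Omega_{\balpha}(s,w)$ stays bounded below along $N=nq$. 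Since \eqref{e:flo} is required along the full sequence, failure along this subsequence already contradicts it.

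To get that lower bound we use the structure recalled just before the lemma. Off the null set $\widetilde S$, the torus decomposes into finitely many domains $\widetilde{\cO}_a$ on which $g:=E_s(\cdot+\balpha)-E_w$ is holomorphic (real-analytic in $\btheta$). Since $|\Omega_{\balpha}|>0$ and $|\widetilde S|=0$, some $\widetilde{\cO}_a$ meets $\Omega_{\balpha}$ in positive measure. A real-analytic function on a connected domain that vanishes on a set of positive measure vanishes identically, so $g\equiv0$ on the whole domain $\widetilde{\cO}_a$. This is the key upgrade: the zero set now contains a nonempty \emph{open} set, and in particular a small closed cube $Q\subset\widetilde{\cO}_a$ of side $\delta>0$.

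Counting lattice points in the cube $Q$ gives at least $(\delta N-1)^d\ge c N^d$ points $\br/N\in Q$ for large $N$. Hence the sup in \eqref{e:flo} is at least $c>0$ along $N=nq$, and \eqref{e:flo} fails.

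The step that needs care is the analytic-continuation argument. The labelling $E_s$ and $E_w$ has to be consistent on $\widetilde{\cO}_a$, so we should work with the branches holomorphic there, as guaranteed by \cite{Wilcox,Tate2019}. We also have to make sure the identity theorem applies to a set of positive measure. It does, because the zero set of a nonzero real-analytic function on a connected open set has measure zero.

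If one prefers to avoid analyticity, an alternative route is to use Lebesgue density points of $\Omega_{\balpha}$ together with Riemann-sum convergence of $\frac1{N^d}\#\{\br:\br/N\in\Omega\}$. However, $\Omega_{\balpha}$ need not be Jordan measurable a priori, so the analytic route seems cleaner and is the one we would write down.
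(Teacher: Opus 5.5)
Your proof is correct and follows the paper's argument. You argue by contraposition along the subsequence $N=nq$ with $\bm=nq\balpha$, use the identity theorem on one of the domains $\widetilde{\cO}_a$ to upgrade positive measure of $\Omega_{\balpha}(s,w)$ to $E_s(\cdot+\balpha)\equiv E_w$ on that whole domain, and finish with a lattice-point count. The only difference is cosmetic: the paper counts all grid points in $\widetilde{\cO}_a$, using its Jordan measurability since $\partial\widetilde{\cO}_a\subset\widetilde{S}$, to get the limit $|\widetilde{\cO}_a|>0$, whereas you count points in a small closed cube inside it, which avoids that measurability remark.
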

\begin{proof}
Suppose there exists an $\balpha \in [0,1)^d \cap \Q^d$ such that $\Omega_{\balpha}(s,w)$ has Lebesgue measure greater than zero.
Then by the identity theorem, it follows that $E_s(\btheta+\balpha) \equiv E_w(\btheta)$ over $\widetilde{\cO}_a$ for some $a$.

If $\balpha = (\frac{p_1}{q_1},\dots,\frac{p_d}{q_d})$, $N_n=n\prod_i q_i$ and $m_i = n p_i \prod_{j\neq i} q_i$, this yields $\balpha=\frac{\bm}{N_n}$ on this subsequence of $N$, and restricting ourselves to $\widetilde{\cO}_a$, we get that 
\[
\frac{\#\{\br\in \LL_{N_n}^d: \frac{\br}{N} \in \widetilde{\cO}_a,\, E_s(\frac{\br+\bm}{N_n})-E_w(\frac{\br}{N_n})=0\}}{N^d_n} = \frac{\#\{\br:\frac{\br}{N_n}\in \widetilde{\cO}_a\}}{N_n^d}\to |\widetilde{\cO}_a|> 0\,,
\]    
which clearly violates \eqref{e:flo}. Here we used that $\widetilde{\cO}_a$ is Jordan measurable since its boundary lies in $\widetilde{S}$.
\end{proof}

\begin{lem}\label{lem:wencaierr}
\emph{\textbf{(A$'$)}} $\forall \,s,w\implies$\eqref{e:flo}. 
\end{lem}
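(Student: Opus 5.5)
We argue by contraposition. Suppose \eqref{e:flo} fails for some pair $s,w$. Then there are $c>0$, a sequence $N_n\to\infty$ and $\bm_n\in\LL_{N_n}^d\setminus\{\mathbf 0\}$ with
\[
\#\Big\{\br\in\LL_{N_n}^d: E_s\Big(\tfrac{\br+\bm_n}{N_n}\Big)=E_w\Big(\tfrac{\br}{N_n}\Big)\Big\}\ge cN_n^d .
\]
Passing to a subsequence, $\bm_n/N_n\to\balpha\in[0,1]^d$, which we read on the torus. The aim is to show that $\Omega_{\balpha}(s,w)$ has positive measure, handling $\balpha\neq\mathbf 0$ and $\balpha=\mathbf 0$ separately. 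The mechanism mirrors the one-dimensional Hurwitz argument of Lemma~\ref{lem:non-ergodic_condition}, with the Lebesgue density of the lattice points replacing ``infinitely many zeros''.

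\textbf{Case $\balpha\neq\mathbf 0$.} Let $S$ be the analytic subvariety off which all $E_u$ are holomorphic, and set $\widetilde S=S\cup(S-\balpha)$, so that $\T^d\setminus\widetilde S=\cup_a\widetilde\cO_a$. Remove a small neighbourhood $V_\epsilon$ of $\widetilde S$. Since $\widetilde S$ is a closed null set with Jordan-measurable surroundings, the number of lattice points $\br/N_n$ in $V_\epsilon$ is at most $\tfrac{c}{4}N_n^d$ for large $n$. Hence a set $Z_n\subset K_\epsilon:=\T^d\setminus V_\epsilon$ of at least $\tfrac{c}{2}N_n^d$ lattice points satisfies $E_s(\btheta+\bm_n/N_n)=E_w(\btheta)$.

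The functions $g_n(\btheta)=E_s(\btheta+\bm_n/N_n)-E_w(\btheta)$ are holomorphic on a complex neighbourhood of $K_\epsilon$, at least for large $n$, because $\bm_n/N_n\to\balpha$ and the singular set of $E_s(\cdot+\balpha)$ lies inside $\widetilde S$. They converge uniformly to $g(\btheta)=E_s(\btheta+\balpha)-E_w(\btheta)$. If $|\Omega_{\balpha}(s,w)|=0$, then $g\not\equiv0$ on each component $\widetilde\cO_a$. We then need a quantitative statement: for $\delta>0$, the set $\{|g|<\delta\}\cap K_\epsilon$ has measure $o(1)$ as $\delta\to0$, and it is Jordan measurable up to a fattening. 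A Łojasiewicz-type estimate, or simply continuity of the measure of sublevel sets of a nonzero real-analytic function, gives this.

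For large $n$ we have $|g|\le|g-g_n|<\delta$ on $Z_n$. So at least $\tfrac c2N_n^d$ lattice points lie in $\{|g|<\delta\}$. On the other hand, the Riemann-sum count of that set is about $N_n^d\,|\{|g|<2\delta\}|$, which is smaller than $\tfrac c2 N_n^d$ once $\delta$ is small. This is a contradiction, so $|\Omega_{\balpha}(s,w)|>0$, which violates \textbf{(A$'$)}.

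\textbf{Case $\balpha=\mathbf 0$.} Here $g\equiv E_s-E_w$, and the argument above gives nothing. This is the subtle case handled in \cite[erratum]{Wen1}, adapted to the unitary setting in Remark~\ref{rem:uniher}; it is the main obstacle. We will follow that route. Write $\mathbf h_n=\bm_n/N_n\to\mathbf 0$, with $\mathbf h_n\neq\mathbf 0$.

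\emph{Subcase $s=w$.} Along each segment $t\mapsto\btheta+t\mathbf h_n$, the modified Rolle argument of Remark~\ref{rem:uniher} produces a point where $\mathbf h_n\cdot\nabla E_s=0$. Normalize $\mathbf u_n=\mathbf h_n/|\mathbf h_n|\to\mathbf u$. Then, on a set of lattice points of density at least $\tfrac c2$ (shifted by $O(|\mathbf h_n|)$), the quantity $|\mathbf u\cdot\nabla E_s|$ is small. The sublevel-set argument then forces $\mathbf u\cdot\nabla E_s\equiv0$ on some $\cO_a$, so $E_s$ is constant along the direction $\mathbf u$ there.

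To conclude we need some nonzero $\balpha$ with $E_s(\cdot+\balpha)=E_s$ on a set of positive measure. If $\mathbf u$ has a rational direction, its orbit closes up on the torus and we take a suitable small multiple. If not, analyticity together with the trigonometric-polynomial structure of the characteristic polynomial should still give translation invariance along a subtorus containing rational points.

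\emph{Subcase $s\neq w$.} If $E_s\equiv E_w$ on some domain, this is an allowed multiplicity and must be excluded by bookkeeping, as in the paper's convention that distinct eigenvalues are labelled once. Otherwise $g=E_s-E_w\not\equiv0$, and the first case's argument applies verbatim with $\balpha=\mathbf 0$: $g_n\to g$ uniformly on $K_\epsilon$.

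Producing a genuine nonzero shift in the $s=w$ subcase is where I expect the difficulty. If the directional reduction fails, I would instead pass to the complexified Bloch variety and invoke irreducibility arguments as in \cite[Cor.~1.4]{Wen1}.
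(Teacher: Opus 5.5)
Your architecture matches the paper's, which follows the erratum of \cite{Wen1}. You pass to $\bm_n/N_n\to\balpha$, discard a neighbourhood of $\widetilde S$, and split into three cases: $\balpha\neq\mathbf 0$; $\balpha=\mathbf 0$ with $s=w$ (Rolle plus a directional derivative); and $\balpha=\mathbf 0$ with $s\neq w$. Your sublevel-set counting is a fine substitute for the paper's covering of the null zero set by small balls.

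But the proof does not close, because you misidentify what must be produced at the end of the $s=w$ subcase. Condition \textbf{(A$'$)} forbids \emph{every} nonzero $\balpha\in[0,1)^d$, rational or not, for which $\Omega_{\balpha}(s,w)$ merely has positive measure. It asks for neither a rational shift, nor a global period, nor invariance along a subtorus; you are conflating it with \textbf{(A)} or with the notion of a period. Once you have $\mathbf u\cdot\nabla E_s\equiv 0$ on the open set $\cO_a$, pick a ball $B$ and a small $c>0$ with $B+t\mathbf u\subset\cO_a$ for all $t\in[0,c]$. Integrating the directional derivative along these segments gives $E_s(\btheta+c\mathbf u)=E_s(\btheta)$ on $B$. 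Hence $|\Omega_{c\mathbf u}(s,s)|\ge|B|>0$ with $c\mathbf u\neq\mathbf 0$ mod $\Z^d$, contradicting \textbf{(A$'$)}. This is exactly how the paper ends its Case (ii), with $\balpha=cT$. The rational-direction dichotomy and the appeal to the complexified Bloch variety are unnecessary.

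The second gap is in the subcase $\balpha=\mathbf 0$, $s\neq w$: you cannot dispose of $E_s\equiv E_w$ on a component ``by bookkeeping''. Condition \eqref{e:flo} is imposed for all labels $s,w\le\nu$ counted with multiplicity. The paper stresses that $S_1\oplus S_1$ satisfies \eqref{e:flo} even though $E_1\equiv E_2$, so such a pair must still be shown to satisfy the density condition. The fix, as in the paper's Case (iii), is to note that on such a component the counting set for $(s,w)$ coincides with the one for $(s,s)$. Positive density there therefore puts you back in the $s=w$ subcase, now closed by the argument above.

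A minor point: if $\bm_n/N_n$ tends to a nonzero integer vector (which is $\mathbf 0$ on the torus), replace $\bm_n$ by its representative in $(-N_n/2,N_n/2]^d$ before writing $\mathbf h_n=\bm_n/N_n\to\mathbf 0$. This is harmless by periodicity of the $E_u$.
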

Lemma~\ref{lem:wencaierr} strengthens Corollary 1.3 in \cite[erratum]{Wen1}, as assumption \textbf{(A$'$)} is weaker than having no nontrivial periods. Our proof will follow a similar scheme.
\begin{proof} 
Suppose \eqref{e:flo} does not hold. Then for some $s,w$, there exist sequences $(N_n)$ and $(\bm_n \in \LL_{N_n}^d-\{\mathbf{0}\})$, $\frac{\bm_n}{N_n} \to \balpha \in [0,1]^d$ such that
\begin{equation}\label{e:rem35}
  \frac{\# \Big\{\br  \in \LL_{N_n}^d:   E_s\big( \frac{\br+\bm_n}{N_n}\big)-E_w\big( \frac{\br}{N_n}\big)=0\Big\}}{N_n^d}\nto 0 \,. 
\end{equation}
Recall the negligible set $\widetilde{S}$ defined above, with $\T^d\setminus \widetilde{S}=\cup_{a=1}^{\mathfrak{f}'} \widetilde{\cO}_a$. We let $\widetilde{\cO}_a^\varepsilon\subset \widetilde{\cO}_a$ be an $\varepsilon$ shrinking and let $\widetilde{S}^\varepsilon = \T^d\setminus \cup_{a=1}^{\mathfrak{f}'} \widetilde{\cO}_a^\varepsilon$. Then
\[
\frac{\#\{\br\in \LL_{N_n}^d:\frac{\br}{N_n}\in \widetilde{S}^\varepsilon,\,E_s(\frac{\br+\bm_n}{N_n})=E_w(\frac{\br}{N_n})\}}{N_n^d} \le \frac{\#\{\frac{\br}{N_n}\in \widetilde{S}^\varepsilon\}}{N_n}\xrightarrow{n\to\infty} |\widetilde{S}^\varepsilon| \xrightarrow{\varepsilon\downarrow 0} 0\,.
\]
Consequently, \eqref{e:rem35} implies (up to taking a subsequence) that there is some $a$ such that
\begin{equation}\label{e:rem351}
\lim_{\varepsilon\downarrow 0}\lim_{n\to\infty} \frac{\#\{\br\in \LL_{N_n}^d:\frac{\br}{N_n}\in \widetilde{\cO}_a^\varepsilon,\,E_s(\frac{\br+\bm_n}{N_n})=E_w(\frac{\br}{N_n})\}}{N_n^d} =\kappa>0\,.
\end{equation}

Clearly $|\widetilde{\cO}_a|>0$, otherwise the previous argument would have given a zero limit.

As in \cite[erratum]{Wen1}, we consider three cases.

\smallskip

\noindent \textbf{Case (i).} $\balpha\neq \mathbf{0}$. (This is Case 2 in \cite[erratum]{Wen1}).

Let $S_a = \{\btheta\in \widetilde{\cO}_a:E_s(\btheta+\balpha)=E_w(\btheta)\}$. If $|S_a|=0$, then $S_a\cap \overline{\widetilde{\cO_a^\varepsilon}}$ can be covered by finitely many small balls whose total Lebesgue measure tends to zero. If $E_s(\frac{\br+\bm}{N})=E_w(\frac{\br}{N})$, then because $\frac{\bm}{N}\to \balpha$ and $E_s$ is continuous, we get that $\frac{\br}{N}$ must be in these balls covering $S_a$, see \cite{Wen1}. Following (6) and (7) in \cite[erratum]{Wen1}, this implies that the LHS of \eqref{e:rem351} tends to zero. Consequently, \eqref{e:rem351} implies that $|S_a|>0$. Using the identity theorem, we get $E_s(\btheta+\balpha)\equiv E_w(\btheta)$ on $\widetilde{\cO}_a$, which has positive measure. So we showed that \textbf{(A$'$)} is violated, except that $\balpha\in [0,1]^d$ here, instead of $[0,1)^d$. But this can be immediately fixed: if $\alpha_1=1$ for example, then $E_s(\alpha_1-1,\alpha_2,\dots,\alpha_d)=E_s(\alpha_1,\alpha_2,\dots,\alpha_d)$ since $E_s$ is a function on the torus. This completes the proof in this case.

\smallskip

\noindent \textbf{Case (ii).} $\balpha=\mathbf{0}$ and $s=w$. (This is case $1_2$ in \cite[erratum]{Wen1}. Here $\cO_a=\widetilde{\cO}_a$).

It is shown in \cite[erratum]{Wen1} that there exists $0\neq T\in [0,1]^d$ such that if $\widetilde{S}_a = \{\btheta\in \cO_a : \nabla E_s(\btheta)\cdot T=0\}$ and $|\widetilde{S}_a|=0$, then the LHS of \eqref{e:rem351} converges to zero. This uses Rolle's theorem for the real-valued $E_s$, but as explained in Remark~\ref{rem:uniher}, it extends to our unitary setting. Consequently, \eqref{e:rem351} implies that $|\widetilde{S}_a|>0$. By the identity theorem, this implies that $\nabla E_s(\btheta)\cdot T=0$ on $\cO_a$.

 Note that $\nabla E^s(\btheta)\cdot T\equiv 0$ implies $E_s(\btheta+cT)=E_s(\btheta)$ for all $\btheta,\btheta+cT \in \mathcal{O}_a$, where $c\in (0,1)$. Choosing $c$ small enough we find a non-zero $\balpha=cT \in [0,1)^d$ violating \textbf{(A$'$)}.

\smallskip

\noindent \textbf{Case (iii).} $\balpha=\mathbf{0}$ and $s\neq w$. (This is case $1_1$ in \cite[erratum]{Wen1}. Here $\cO_a=\widetilde{\cO}_a$).

Let $S_a=\{\btheta\in \cO_a:E_s(\btheta)=E_w(\btheta)\}$. As before, \cite[erratum]{Wen1}, \eqref{e:rem351} and the identity theorem imply that $|S_a|>0$ and $E_s(\btheta)\equiv E_w(\btheta)$ on $\cO_a$.

In this case, substituting $E_s$ in place of $E_w$ in \eqref{e:rem351}, we get that
\[
  \frac{\# \Big\{\br  \in \LL_n^d: \frac{\br}{N_n} \in\mathcal{O}_{a}^\varepsilon,  E_s\big( \frac{\br+\bm_n}{N_n}\big)-E_s\big( \frac{\br}{N_n}\big)=0\Big\}}{N_n^d}\to \kappa>0\,. 
\]
and therefore, we are back in case (ii), which is already proved. 
\end{proof}

\subsection{On the necessity of (NRG)}
In this subsection we consider a walk which is physically not very interesting (just moving in one direction). The purpose is to show that \eqref{e:flo} is in general not necessary for ergodicity.

\begin{prp}\label{prp:nrgstro}
Consider the walk $U = \begin{pmatrix} S_{-1}&0\\0&-S_{-1}\end{pmatrix}$. Then
\begin{enumerate}[\rm(1)]
\item $U$ violates \eqref{e:flo}.
\item $U$ satisfies $\ell^\infty$-\eqref{eqn:ergodic_general}.
\item $ \lim_{T\to\infty} \mu_{T,\psi}^N(r) = \frac{1}{N}$ is the uniform measure for any $N$, yet the eigenvalues of $U_N$ are not distinct for even $N$.
\end{enumerate}
\end{prp}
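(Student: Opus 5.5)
The walk is diagonal, so I will compute everything explicitly and verify each of the three claims directly.

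\emph{Floquet data.} By \eqref{e:uhatheta}, $\widehat{U}(\theta)=\mathrm{diag}(\ee^{2\pi\ii\theta},-\ee^{2\pi\ii\theta})$. The eigenvalues are $E_1(\theta)=\ee^{2\pi\ii\theta}$ and $E_2(\theta)=-\ee^{2\pi\ii\theta}$, with constant projections $P_1=\mathrm{diag}(1,0)$ and $P_2=\mathrm{diag}(0,1)$.

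\emph{Claim (1).} Note that $E_1(\theta+\tfrac12)=E_2(\theta)$ for all $\theta$. By Lemma~\ref{lem:1D_subseq_2conditions}(iii)$\Rightarrow$(i), \eqref{e:flo} fails. Concretely, for even $N$ and $m=N/2$ the zero set in \eqref{e:flo} is all of $\LL_N$.

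\emph{Claim (2).} I will compute the dynamics directly. For $\psi=\binom{\psi_1}{\psi_2}$ we have $U_N^k\psi=\binom{S_{-1}^k\psi_1}{(-1)^kS_{-1}^k\psi_2}$, so each spin component is simply translated by $-k$ mod $N$ with unimodular phases. Hence
\[
\frac1T\sum_{k<T}|[U_N^k\psi]_j(r)|^2\xrightarrow{T\to\infty}\frac1N\sum_{v}|\psi_j(v)|^2=\frac{\|\psi_j\|^2}{N},
\]
since over a period of length $N$ every site is visited exactly once. Therefore $\lim_T\frac1T\sum_k\langle U_N^k\psi,aU_N^k\psi\rangle=\sum_j\langle a_j\rangle\|\psi_j\|^2$ exactly, for every $N$ and every $a$.

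It remains to check that this equals $\langle a\rangle_\psi$ from \eqref{e:avp}. Because the projections are the coordinate projections, that formula gives $\sum_j\langle a_j\rangle\sum_r|\widehat\psi_j(r)|^2=\sum_j\langle a_j\rangle\|\psi_j\|^2$ by Plancherel. The difference in \eqref{eqn:ergodic_general} is therefore identically zero, so $\ell^\infty$-\eqref{eqn:ergodic_general} holds.

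The one point needing care is that the proof of Theorem~\ref{thm:pergra} uses $\nu'(\theta)$ distinct eigenvalues. Here $E_1\neq E_2$ everywhere, so $\nu'=2$ and there is no subtlety. The apparent obstacle, that the $m=N/2$ term $b'$ in Step 4 does not vanish, disappears for a simple reason: $P_1P_2=0$. So the products of projections in $b'$ at $m=N/2$ vanish, which is consistent with the direct computation. This also illustrates Theorem~\ref{thm:fqenrg}.

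\emph{Claim (3).} Summing the formula above over $j$ gives $\mu_\psi^N(r)=\|\psi\|^2/N=1/N$.

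For the spectrum, Lemma~\ref{lem:foufin} shows that the eigenvalues of $U_N$ are $E_s(-r/N)$, namely $\ee^{-2\pi\ii r/N}$ and $-\ee^{-2\pi\ii r/N}$ for $r\in\LL_N$. When $N$ is even, $-\ee^{-2\pi\ii r/N}=\ee^{-2\pi\ii(r+N/2)/N}$, so every eigenvalue occurs twice. The eigenvalues are thus not distinct, contrasting with the hypothesis of \cite{Aha}.
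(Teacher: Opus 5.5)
Your proposal is correct and follows essentially the same route as the paper. It uses the relation $E_1(\theta+\tfrac12)=E_2(\theta)$ for (1), a direct computation of the translated dynamics matched against $\langle a\rangle_\psi$ via the constant diagonal projections and Plancherel for (2), and Lemma~\ref{lem:foufin} for the doubled eigenvalues in (3), and your side remark that the $m=N/2$ term of $b'$ vanishes because the diagonal projections are mutually orthogonal is exactly the observation the paper makes in the remark following its proof.
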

\begin{proof}
(1) is clear since the eigenvalues are $E_1(\theta) = \ee^{2\pi\ii\theta}$ and $E_2(\theta)=-\ee^{2\pi\ii\theta} = \ee^{2\pi\ii(\theta+\frac{1}{2})} = E_1(\theta+\frac{1}{2})$ for all $\theta$. So \eqref{e:flo} is violated for even $N$ if $m=\frac{N}{2}$.

Next, we have $(U_N\psi)(r) = \begin{pmatrix} (S_{-1}\psi_1)(r)\\-(S_{-1}\psi_2)(r)\end{pmatrix} = \begin{pmatrix} \psi_1(r+1)\\ -\psi_2(r+1)\end{pmatrix}$. More generally, $(U^k_N\psi)(r) = \begin{pmatrix} \psi_1(r+k)\\ (-1)^k \psi_2(r+k)\end{pmatrix}$. So $|(U^k_N\psi)_i(r)|^2 = |\psi_i(r+k)|^2$. Thus,
\[
\frac{1}{T}\sum_{k=0}^{T-1}\langle U^k_N\psi, a U^k_N\psi \rangle = \sum_{r\in\LL_N} \sum_{i=1}^2 a_i(r) \frac{1}{T}\sum_{k=0}^{T-1} |\psi_i(r+k)|^2 \,.
\]
For $T=N$, we have $\sum_{k=0}^{T-1}|\psi_i(r+k)|^2=\|\psi_i\|^2$. More generally, for $T=nN$ this gives $n\|\psi_i\|^2$, and if $T=nN+t$, $t<N$, then $ \frac{1}{T}\sum_{k=0}^{T-1} |\psi_i(r+k)|^2 = \frac{1}{nN+t}(n\|\psi_i\|^2+c_{\psi_i}(r,t))$ for some $|c_{\psi_i}(r,t)|\le \|\psi_i\|^2\le 1$, so $\frac{1}{T}\sum_{k=0}^{T-1} |\psi_i(r+k)|^2  \to \frac{1}{N}\|\psi_i\|^2$ as $n\to\infty$. Thus,
\[
\lim_{T\to\infty} \frac{1}{T}\sum_{k=0}^{T-1} \langle U^k_N \psi, aU^k_N\psi\rangle = \|\psi_1\|^2\langle a_1\rangle + \|\psi_2\|^2\langle a_2\rangle \,.
\]

On the other hand, here $\widehat{U}(\theta)$ has eigenvectors $\begin{pmatrix}1\\0\end{pmatrix},\begin{pmatrix}0\\1\end{pmatrix}$, so $P_{E_1}(\theta) = \begin{pmatrix}1&0\\0&0\end{pmatrix}$ and $P_{E_2}(\theta)=\begin{pmatrix} 0&0\\0&1\end{pmatrix}$. Thus, $P_{E_1}(\theta)\widehat{\psi}(r) = \begin{pmatrix} \widehat{\psi}_1(r)\\0\end{pmatrix}$ and $P_{E_2}(\theta)\widehat{\psi}(r) = \begin{pmatrix} 0\\ \widehat{\psi}_2(r)\end{pmatrix}$. Hence,
\[
\sum_{r\in\LL_N}\Big(\Big|\Big[P_{E_1}\Big(\frac{r}{N}\Big)\widehat{\psi}(r)\Big]_1\Big|^2 + \Big|\Big[P_{E_2}\Big(\frac{r}{N}\Big)\widehat{\psi}(r)\Big]_1\Big|^2\Big) = \sum_{r\in\LL_N} |\widehat{\psi}_1(r)|^2 = \|\widehat{\psi}_1\|^2 = \|\psi_1\|^2
\]
and analogously if the subscript $1$ above is replaced by $2$. Thus,
\[
\langle a\rangle_\psi = \|\psi_1\|^2\langle a_1\rangle + \|\psi_2\|^2\langle a_2\rangle \,.
\]
In particular, \eqref{eqn:ergodic_general} holds (without needing $N\to\infty$).

Finally, we showed that $\frac{1}{T}\sum_{k=0}^{T-1}|(U_N^k\psi)_i(r)|^2 \to \frac{1}{N}\|\psi_i\|^2$, so $\mu_{T,\psi}^N(r) \to \frac{1}{N}\sum_{i=1}^2\|\psi_i\|^2 = \frac{1}{N}$ as stated. Moreover, by Lemma 2.1, the eigenvalues of $U_N$ are $E_s(\frac{-r}{N})$, $s=1,2$, $r\in \LL_N$. In particular, $E_2(\frac{-r}{N}) = E_1(\frac{-r}{N}+\frac{1}{2})=E_1(\frac{N-2r}{2N})= E_1(\frac{-r'}{N})$ for $r'=r-\frac{N}{2}$ and $r\ge \frac{N}{2}$ showing the eigenvalues are not distinct.
\end{proof}

It may be tempting from Proposition~\ref{prp:nrgstro} to think that the ``right'' condition is \eqref{e:flo} for all $s=w$. But this is not true.

\begin{prp}
\begin{enumerate}[\rm(1)]
\item \eqref{e:flo} for all $s=w$ does not imply $\ell^\infty$-\eqref{e:mainlim}.
\item \eqref{e:flo} for all $s\neq w$ does not imply $\ell^\infty$-\eqref{e:mainlim}.
\end{enumerate}
\end{prp}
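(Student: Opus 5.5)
Both parts are counterexamples, so the plan is to exhibit two explicit one-dimensional walks. In each, the Floquet eigenvalues have the required partial form of \eqref{e:flo}, yet the dynamics visibly fails to equidistribute. The natural source is diagonal and anti-diagonal coined walks, whose dynamics were computed exactly in Propositions~\ref{prp:coidia} and \ref{prp:offdia}. By Lemma~\ref{lem:1D_subseq_2conditions}, \eqref{e:flo} fails for the pair $(s,w)$ iff $E_s(\theta+\varphi)\equiv E_w(\theta)$ for some rational $0<\varphi<1$. This reduces both spectral verifications to checking such shift identities.

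For (1), I will use the anti-diagonal walk $U=\begin{pmatrix}0&S_{-\alpha}\\ S_{\beta}&0\end{pmatrix}$ with $\beta-\alpha=4$ (say $\alpha=1,\beta=5$). By \eqref{e:eigenoff}, $E_{1,2}(\theta)=\pm\ee^{2\pi\ii\cdot 2\theta}$. This is the problem: $E_1$ has period $\frac12$, so \eqref{e:flo} fails even for $s=w$. I therefore switch to $|\beta-\alpha|=2$ with $\alpha,\beta$ even, say $\alpha=2,\beta=4$. Then $E_{1,2}(\theta)=\pm\ee^{2\pi\ii\theta}$, and each has minimal period $1$, so no rational $0<\varphi<1$ gives $E_s(\theta+\varphi)\equiv E_s(\theta)$. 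Hence \eqref{e:flo} holds for $s=w$. Proposition~\ref{prp:offdia}(iii) already shows this walk is not $\ell^\infty$-\eqref{e:mainlim}: on $N_n=2n$ the walk from $\delta_0\otimes\binom10$ only visits even sites, so $\phi=\mathbf 1_{\mathrm{odd}}$ gives $\lim_T\langle\phi\rangle_{T,\psi}=0$ while $\langle\phi\rangle=\frac12$. Here \eqref{e:flo} fails only for $s\ne w$, via $E_2(\theta)=E_1(\theta+\frac12)$.

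For (2), I need a walk whose eigenvalues are never shifts of one another but where one eigenvalue has a short period. I take the diagonal walk of Proposition~\ref{prp:coidia} with $\alpha=2,\beta=1$, $a=d=1$. Its eigenvalues are $E_1=\ee^{4\pi\ii\theta}$ and $E_2=\ee^{-2\pi\ii\theta}$. Suppose $E_1(\theta+\varphi)\equiv E_2(\theta)$ or $E_2(\theta+\varphi)\equiv E_1(\theta)$. Comparing frequencies ($2$ versus $-1$) shows this is impossible for any $\varphi$, so \eqref{e:flo} holds for all $s\ne w$. Proposition~\ref{prp:coidia} (the case $\alpha\neq1$) shows it is not $\ell^\infty$-\eqref{e:mainlim}: on $N_n=2n$ the state $\delta_0\otimes\binom10$ stays on even sites. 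This is also Example~\ref{exa:39} up to relabelling.

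The only genuine care needed is in (1): the separation $|\beta-\alpha|$ must be exactly $2$ so that the single eigenvalue $\pm\ee^{\pi\ii(\beta-\alpha)\theta}$ has full period. Evenness of $\alpha,\beta$ is what still traps the walk on a sublattice. The frequency comparison in (2) is routine: two exponentials with different frequencies cannot agree identically up to a shift.
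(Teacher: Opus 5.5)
Your proposal is correct and is essentially the paper's proof. For (1) the paper uses exactly the anti-diagonal walk $\begin{pmatrix}0&S_{-2}\\ S_4&0\end{pmatrix}$ and appeals to Proposition~\ref{prp:offdia}(iii). For (2) it uses $\begin{pmatrix}S_{-1}&0\\0&S_{2}\end{pmatrix}$, a reflected, spin-swapped version of your diagonal walk, with the same even-sublattice trapping; the only cosmetic difference is that there the paper checks the $s\neq w$ condition by noting that $E_1(\theta+\varphi)=E_2(\theta)$ has at most three solutions for each $\varphi$, rather than invoking Lemma~\ref{lem:1D_subseq_2conditions} as you do.
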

\begin{proof}
For (1), take $U = \begin{pmatrix} 0&S_{-2}\\S_4&0\end{pmatrix}$ and use Proposition~\ref{prp:offdia}. For (2), take e.g. $U = \begin{pmatrix} S_{-1}&0\\0&S_{2}\end{pmatrix}$. Then $E_1(\theta+\alpha)=E_2(\theta)$ implies $\ee^{2\pi\ii(\theta+\alpha)}=\ee^{-4\pi\ii\theta}$, so $\ee^{2\pi\ii\alpha} = \ee^{-6\pi\ii\theta}$, which has only finitely many solutions $\theta=\frac{3-\alpha}{3},\frac{2-\alpha}{3},\frac{1-\alpha}{3}$ for $0<\alpha<1$. So \eqref{e:flo} is satisfied for $s\neq w$. However, if $N$ is even, the initial state $\psi = \delta_0\mathop\otimes\begin{pmatrix} 0\\1\end{pmatrix}$ satisfies that $U^k \psi$ is supported only on even integers in $\LL_N$, so taking $\phi = \mathbf{1}_{\text{even}}$ violates \eqref{e:mainlim}. 
\end{proof}

\begin{rem}
Looking further into the example of Proposition~\ref{prp:nrgstro}, we see that the only violation of \eqref{e:flo} comes from $E_1(\theta+\frac{1}{2})=E_2(\theta)$, i.e. $N$ even, $m=\frac{N}{2}$ and $s\neq w$. More precisely, if we look back at the proof of Theorem~\ref{thm:pergra}, in the presence of \eqref{e:flo}, all higher modes of the limiting symbol $b$ vanish and we are left with $b_0$, which is responsible for the uniform average. Here on the other hand, if $N$ is even and $S_r=\{(m,s,t):E_s(\frac{r+m}{N})=E_t(\frac{r}{N})\}$, then $S_r=\big\{(0,1,1),(0,2,2),\big(\frac{N}{2},1,2\big),\big(\frac{N}{2},2,1\big)\big\}$. So the symbol $b$ takes the form $b=b_0+b_{N/2}$. The reason why we don't see $b_{N/2}$ asymptotically in this example is due to a vanishing of eigenprojections, which is quite subtle. In fact, $b_{N/2}(v;i,\ell)=\sum_{j=1}^2P_{E_1}(i,j)\widehat{a}_j(\frac{N}{2})P_{E_2}(j,\ell)e_{N/2}^{(N)}(v) + \sum_{j=1}^2P_{E_2}(i,j)\widehat{a}_j(\frac{N}{2})P_{E_1}(j,\ell)e_{N/2}^{(N)}(v)$. By definition of $P_{E_i}$ here, each of the previous sums contain $\delta_{i=j=1}\delta_{j=\ell=2}$ and $\delta_{i=j=2}\delta_{j=\ell=1}$. Each of these products is zero. Hence, $b_{N/2}=0$.

We will see in Theorem~\ref{thm:fqenrg} that this happens more generally.
\end{rem}

\subsection{A converse to (FQE)}
Our aim here is to show that \eqref{eqn:ergodic_general} implies either \eqref{e:flo}, or a condition on the Floquet spectral projections. More precisely,

\begin{thm}\label{thm:fqenrg}
Suppose $U$ on $\Z^d$ satisfies \eqref{eqn:ergodic_general} for regular observables. Let $\Omega_{\balpha}(s,w) = \{\btheta\in \T^d:E_s(\btheta+\balpha)=E_w(\btheta)\}$. Then for each $s,w$ and each nonzero $\balpha\in\Q^d$, either
\begin{enumerate}[\rm(1)]
\item $\Omega_{\balpha}(s,w)$ has Lebesgue measure zero,
\item or $\Omega_{\balpha}(s,w)$ has positive Lebesgue measure, and $P_s(\btheta+\balpha)(i,j)P_w(\btheta)(j,k)=0$ for a.e. $\btheta\in \Omega_{\balpha}(s,w)$ and all $i,j,k$. In particular, $P_s(\btheta+\balpha)P_w(\btheta)=0$ is the zero matrix for a.e. $\btheta\in \Omega_{\balpha}(s,w)$.
\end{enumerate}
\end{thm}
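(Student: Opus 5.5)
We argue by contraposition. Suppose that for some $s,w$ and some nonzero $\balpha=(\frac{p_1}{q_1},\dots,\frac{p_d}{q_d})$ the set $\Omega_{\balpha}(s,w)$ has positive measure. Suppose also that the product $P_s(\btheta+\balpha)(i,j)P_w(\btheta)(j,k)$ is nonzero on a positive-measure subset of $\Omega_{\balpha}(s,w)$ for some $i,j,k$. We then build a regular observable $a$ and an initial state $\psi$ of compact support for which \eqref{eqn:ergodic_general} fails along $N_n=n\prod_i q_i$.

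Along this subsequence we take $\bm_n:=N_n\balpha\in\LL_{N_n}^d$. By the identity theorem on the domains $\widetilde{\cO}_a$ (as in the proof that \eqref{e:flo} implies \textbf{(A)}), we have $E_s(\btheta+\balpha)\equiv E_w(\btheta)$ on some $\widetilde{\cO}_a$ of positive measure. Hence $A_{\bm_n}$ contains a positive proportion of all $\br$.

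We return to the exact identity \eqref{e:limpieces}. Steps 1--3 hold for any observable, so
\[
\lim_T\frac1T\sum_k\langle U_N^k\psi,aU_N^k\psi\rangle-\langle a\rangle_\psi=\langle\psi,\opn(b')\psi\rangle .
\]
We choose the single-mode regular observable $a_j(\bk)=\ee^{2\pi\ii\balpha\cdot\bk}$ in coordinate $j$ and $a_{j'}=0$ otherwise. This is the case $f(\bx)=\ee^{2\pi\ii N\balpha\cdot\bx}$; to keep $f$ fixed, we note that $\bm_n/N_n=\balpha$ means $a(\bk)=f_0(\bk)$ is $\prod q_i$-periodic. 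It is therefore a bounded, \emph{fixed} function of $\bk$. If the paper's regularity class is judged too narrow for this, we instead use a smooth profile times this character, or a linear combination of two such terms. Then $\widehat a_j$ is supported on the single mode $\bm_n$, with $\widehat a_j(\bm_n)=N^{d/2}$.

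The term $\langle\psi,\opn(b')\psi\rangle$ then reduces, exactly as in the computation of $c_{u,\psi}^{(k)}$ in Theorem~\ref{thm:equidistribution}, to a Riemann sum. It converges to
\[
\sum_{s',w'}\sum_{i_1,i_2}\sum_{k_1,k_2\in\Lambda}\overline{\psi_{i_1}(k_1)}\psi_{i_2}(k_2)\,c\int_{\Omega_{\balpha}(s',w')}P_{s'}(\btheta+\balpha)(i_1,j)\,P_{w'}(\btheta)(j,i_2)\,\ee^{2\pi\ii(\bk_1-\bk_2)\cdot\btheta}\,\dd\btheta ,
\]
where $c$ is a unimodular phase. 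The pairs $(s',w')$ with $|\Omega_{\balpha}(s',w')|=0$ contribute nothing, by the same Jordan-measurability argument. By \eqref{eqn:ergodic_general}, this limit must vanish for every finitely supported $\psi$.

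Taking $\psi=\sum_{\bk}\delta_{\bk}\otimes\psi(\bk)$, the expression is $\langle\widehat\psi,K\widehat\psi\rangle$ for the matrix-valued function
\[
K(\btheta)=\sum_{s',w'}\mathbf 1_{\Omega_{\balpha}(s',w')}(\btheta)\,P_{s'}(\btheta+\balpha)\,e_je_j^{T}\,P_{w'}(\btheta),
\]
tested against trigonometric polynomials $\widehat\psi$. Vanishing for all such $\psi$ gives, by polarization and density of trigonometric polynomials in $L^2$, that $K=0$ a.e. Expanding $\mathbf{1}_{\Omega}=\sum$ over the distinct $E$-levels and multiplying on the left by $P_s(\btheta+\balpha)$ and on the right by $P_w(\btheta)$ isolates the single term. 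Orthogonality of the projections kills the other pairs. This yields $P_s(\btheta+\balpha)(i,j)P_w(\btheta)(j,k)=0$ a.e. on $\Omega_{\balpha}(s,w)$, for all $i,k$. Since $j$ was arbitrary, this covers all $j$ as well, which contradicts our assumption and proves (2).

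The main obstacle is the bookkeeping in this projection step. Multiplicity changes happen only on the null set $\widetilde S$, so the step has to be carried out on each $\widetilde{\cO}_a$. Two level pairs may coincide there, and one must check that the left and right multiplications really separate $(s,w)$ from all other pairs. A second, lesser obstacle is the regularity of $a$. Its Fourier support is exactly the mode $\bm_n$, which moves with $n$, so the bound \eqref{e:regood} is trivially $\sum|\widehat a_j e_{\bm}|=1$. This is all that Step 4's error control needs; it is not the rest of Step 4's argument that requires care here.
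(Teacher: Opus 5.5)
Your setup is the same as the paper's. You use the single-mode observable $a_j(\bk)=c_j\ee^{2\pi\ii\balpha\cdot\bk}$ along $N=nq$ with $\bm_0=N\balpha$, apply Steps 1--3 to reduce \eqref{eqn:ergodic_general} to $\langle\psi,\opn(b')\psi\rangle\to0$, and take the Riemann-sum limit over $\Omega_{\balpha}(s,w)$ using the identity theorem and the null set $\widetilde S$.

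You genuinely differ in how you extract pointwise information from that limit. The paper uses only qubits $\psi=\delta_{\mathbf 0}\otimes f$. Then $\widehat\psi$ is constant and one only learns that certain integrals vanish. So the paper takes $f=c=\delta_i$ and uses $P_s(\btheta)(i,i)\ge0$ to see that each nonnegative integrand $P_s(\btheta+\balpha)(i,i)P_w(\btheta)(i,i)$ vanishes a.e.\ on $\Omega_{\balpha}(s,w)$. It then passes to off-diagonal entries via $|P(i,j)|\le\sqrt{P(i,i)P(j,j)}$.

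You instead let $\psi$ range over all finitely supported states. The limit is then a sesquilinear form in trigonometric polynomials $\Psi$, and polarization plus density give the matrix identity $K=0$ a.e. Sandwiching between $P_s(\btheta+\balpha)$ and $P_w(\btheta)$ then isolates the pair $(s,w)$. This is correct, avoids positivity and Cauchy--Schwarz, and gives the matrix identity directly. The paper's version needs less input, namely qubits at the origin only.

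Two small points on your version:
\begin{itemize}
\item The limit is really $\int_{\T^d}\langle\Psi(\btheta+\balpha),K(\btheta)\Psi(\btheta)\rangle\,\dd\btheta$; your ``unimodular phase'' is $\ee^{2\pi\ii\bk_1\cdot\balpha}$ and depends on $\bk_1$. This is harmless, because polarization treats the two slots separately and translates of trigonometric polynomials are again trigonometric polynomials.
\item The ``bookkeeping obstacle'' you anticipate does not arise. For an orthonormal eigenbasis, $P_s(\btheta)P_{s'}(\btheta)=\delta_{ss'}P_s(\btheta)$ holds at every $\btheta$.
\end{itemize}

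The real issue is regularity, and your proposed repair does not work. The character $\ee^{2\pi\ii\balpha\cdot\bk}$ is neither of the form $f(\bk/N)$ with a fixed $f\in H^s(\T^d)$ nor in $\ell^1(\Z^d)$. So it is not regular in the sense of Definition~\ref{def:rego}. A smooth profile times it, or a combination of such terms, is equally high-frequency and equally non-regular. The bound \eqref{e:regood} is beside the point, since you are invoking the hypothesis, not proving \eqref{eqn:ergodic_general}.

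The paper's proof uses the very same observable, so this is not a defect relative to the paper. But it cannot be fixed, because the literal statement fails. Take $U=\mathrm{diag}(S_{-2},S_1)$ from Proposition~\ref{prp:coidia}. Here $\Omega_{1/2}(1,1)=\T$ and $P_1(\theta+\frac12)P_1(\theta)=\mathrm{diag}(1,0)\neq0$, so neither alternative of the theorem holds. Yet a direct computation gives, for even $N$,
\[
\lim_{T\to\infty}\frac1T\sum_{k=0}^{T-1}\langle U_N^k\psi,aU_N^k\psi\rangle-\langle a\rangle_\psi=\Big(\frac1N\sum_{r\in\LL_N}(-1)^r a_1(r)\Big)\Big(\sum_{m\ \mathrm{even}}|\psi_1(m)|^2-\sum_{m\ \mathrm{odd}}|\psi_1(m)|^2\Big),
\]
and the difference is $0$ for odd $N$. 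The first factor tends to $0$ for every regular $a_1$: it equals $\sum_{k\in\frac N2+N\Z}\hat f(k)$ in case (i) and is $O(\|a_1\|_1/N)$ in case (ii). So \eqref{eqn:ergodic_general} holds for regular observables while the conclusion fails.

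Your argument and the paper's both prove the theorem under \eqref{eqn:ergodic_general} for bounded observables, and in fact only for these characters. That is the hypothesis that should be stated.
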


If \eqref{e:flo} is satisfied, it is case (1) which occurs for all $s$ and $w$ by \S\,\ref{rem:higher_dim_equiv}. For the walk of Proposition~\ref{prp:nrgstro}, $\Omega_\alpha(1,1)=\Omega_\alpha(2,2)=\emptyset$, while $\Omega_{1/2}(1,2)=\Omega_{1/2}(2,1)=\T$, and $P_1P_2=0$. To prove the theorem, we will only use that \eqref{eqn:ergodic_general} holds for qubit initial states $\psi = \delta_{\mathbf{0}}\mathop\otimes f$ for $f\in \C^\nu$.

\begin{proof}
We first note that if $\psi = \delta_{\mathbf{0}}\mathop\otimes f$ for $f\in \C^\nu$ then $\widehat{\psi} = (f_1 \widehat{\delta_{\mathbf{0}}},\dots,f_\nu \widehat{\delta_{\mathbf{0}}})^T = N^{-d/2}(f_1,\dots,f_\nu)^T$, so $\widehat{\psi}_\ell = N^{-d/2}f_\ell$. Also, $\langle \psi, \opn(b')\psi\rangle = \sum_{i=1}^\nu f_i [\opn(b')\psi]_i(\mathbf{0}) = \sum_{i=1}^\nu f_i \sum_{\br}\sum_\ell b'(\mathbf{0},\br;i,\ell)\widehat{\psi}_\ell(\br)e_{\br}^{(N)}(\mathbf{0})$. Thus, 
\begin{multline*}
\langle \psi, \opn(b')\psi\rangle = \frac{1}{N^d}\sum_{i,\ell=1}^\nu f_i \sum_{\br\in\LL_N^d} b'(\mathbf{0},\br;i,\ell)f_\ell \\
= \frac{1}{N^d}\sum_{i,\ell=1}^\nu f_i \sum_{\br\in\LL_N^d} \sum_{\bm\neq 0}\sum_j \sum_{s,t} \mathbf{1}_{A_{\bm}}(\br,s,t)P_{E_s}\Big(\frac{\br+\bm}{N}\Big)(i,j)\widehat{a}_j(\bm)P_{E_t}\Big(\frac{\br}{N}\Big)(j,\ell)e_{\bm}^{(N)}(\mathbf{0})f_\ell.
\end{multline*}
If $a_j(\bk) = c_j \ee^{2\pi\ii \bk \cdot \bm_0/N}$ for some $c_j\in \C$, then $\widehat{a}_j(\bm) = \frac{1}{N^{d/2}}\sum_{\bk} \ee^{-2\pi\ii \bk\cdot \bm/N} a_j(\bk) = c_j N^{d/2} \delta_{\bm_0}(\bm)$. Denoting $c= (c_1,\dots,c_\nu)^T$, we get that $\langle \psi, \opn(b')\psi\rangle$ equals
\begin{equation}\label{e:someterm}
\frac{1}{N^{d}}\sum_{i,\ell=1}^\nu f_i \sum_{\br\in\LL_N^d} \sum_{j=1}^\nu\sum_{s,t}  \mathbf{1}_{A_{\bm_0}}(\br,s,w)P_{E_s}\Big(\frac{\br+\bm_0}{N}\Big)(i,j)c_jP_{E_t}\Big(\frac{\br}{N}\Big)(j,\ell)f_\ell\,,
\end{equation}
where the sums over $s,t$ run up to $\nu'(\frac{\br+\bm_0}{N})$ and $\nu'(\frac{\br}{N})$, respectively, with $\nu'(\btheta)$ the number of distinct eigenvalues of $\widehat{U}(\btheta)$.

\eqref{eqn:ergodic_general} implies that $\lim\limits_{N\to\infty}\langle\psi,\opn(b')\psi\rangle=0$, so \eqref{e:someterm} has a zero limit as $N\to\infty$. Now let $\mathbf{0}\neq \balpha\in \Q^d$. Then $\balpha=(\frac{p_1'}{q_1},\dots,\frac{p_d'}{q_d}) = (\frac{p_1}{q},\dots,\frac{p_d}{q})=\frac{\mathbf{p}}{q}$ for $q=\prod_{i=1}^d q_i$ and $p_i = p_i'\prod_{j\neq i} q_i$. Moreover, \eqref{e:someterm} tends to zero in particular along the subsequence $N = qn$. Let $\bm_0 = (p_1n,\dots,p_dn)\neq\mathbf{0}$. Then $A_{\bm_0} = \{(\br,s,w):E_s(\frac{\br}{N} + \frac{\mathbf{p}}{q})=E_w(\frac{\br}{N})\}$. So $(\br,s,w)\in A_{\bm_0}\iff \br/N\in \Omega_{\balpha}(s,w)$, where $\Omega_{\balpha}(s,w)=\{\btheta\in \T^d: E_s(\btheta+\balpha)=E_w(\btheta)\}$. In other words, $\mathbf{1}_{A_{\bm_0}}(\br,s,w) = \mathbf{1}_{\Omega_{\balpha}(s,w)}(\frac{\br}{N})$. Note that $\Omega_{\balpha}(s,w)$ is a closed set since the eigenvalue functions are continuous.  

We claim that the limit of \eqref{e:someterm} as $N=nq\to\infty$, for this $\mathbf{m}_0=n\mathbf{p}$, is
\begin{equation}\label{e:limterm}
\sum_{i,\ell=1}^\nu f_i \sum_{j,s,w=1}^\nu \int_{\Omega_{\balpha}(s,w)} P_s(\btheta+\balpha)(i,j)c_jP_w(\btheta)(j,\ell)f_\ell\,\dd\btheta \,.
\end{equation} 

To see this, first suppose that $\Omega_{\balpha}(s,w)$ has measure zero for some $s,w$. As it is compact in $\T^d$, we may cover it by a finite union of balls $\cup_{i=1}^f B_i$ of measure $<\delta$. Then using $|P_s(i,j)|\le 1$, the term in \eqref{e:someterm} is bounded by $\frac{c \cdot \# \{\br\in\LL_N^d:\frac{\br}{N}\in \Omega_{\balpha}(s,w)\}}{N^d}\le \frac{c \cdot \# \{\br\in\LL_N^d:\frac{\br}{N}\in \cup B_i\}}{N^d} \to c\, |\cup B_i|<c\delta$. As $\delta$ is arbitrary, this shows that this term in \eqref{e:someterm} converges to zero. But if $|\Omega_{\balpha}(s,w)|=0$, then the corresponding integral in \eqref{e:limterm} is also equal to zero. This shows the claim in this case. 

So suppose that $\Omega_{\balpha}(s,w)$ has positive Lebesgue measure. Recall the subvariety $S\subset\T^d$ (of zero measure) outside of which all eigenvalues are analytic, with $\T^d\setminus S$ a finite union of connected components $\cO_a$, $a\le \mathfrak{f}$. Let $\widetilde{S} = S\cup (S-\balpha)$. Then $|\widetilde{S}|=0$. Moreover, if $\frac{\br}{N}\in\T^d\setminus \widetilde{S}$, then $\frac{\br}{N}\in \cO_a$ and $\frac{\br}{N}+\balpha\in \cO_b$ for some $a,b$. 

Going back to \eqref{e:someterm}, the contribution of $\br$ with $\frac{\br}{N}\in \widetilde{S}$ goes to zero as before. The remaining contribution can be written as
\begin{multline*}
\frac{1}{N^d}\sum_{i,\ell=1}^\nu f_i \sum_{a,b=1}^{\mathfrak{f}}\sum_{\br\in\LL_N^d} \mathbf{1}_{\cO_a}\Big(\frac{\br}{N}\Big) \mathbf{1}_{\cO_b}\Big(\frac{\br}{N}+\balpha\Big)\sum_{j=1}^\nu \sum_{s=1}^{\nu_b'}\sum_{t=1}^{\nu_a'}\mathbf{1}_{\Omega_{\balpha}(s,w)}\Big(\frac{\br}{N}\Big)\\
\cdot P_{E_s}\Big(\frac{\br}{N}+\balpha\Big)(i,j)c_jP_{E_t}\Big(\frac{\br}{N}\Big)(j,\ell)f_\ell
\end{multline*}
where we used that $\nu'$ is constant over each connected component $\cO_a$.

Note that $\partial (\cO_a \cap (\cO_b-\balpha))\subseteq \partial \cO_a \cup \partial (\cO_b-\balpha)$, implying $|\partial (\cO_a \cap (\cO_b-\balpha))| = 0$. On the other hand, by the identity theorem, if $\Omega_{\balpha}(s,w)$ has a positive measure, then the eigenvalues coincide on a full connected component. Consequently, if $\Omega_{\balpha}(s,w)$ is not negligible, then the boundary of $\Omega_{\balpha}(s,w)$ is contained in $\widetilde{S}$, which has measure zero. This and the previous remark show that $ \mathbf{1}_{\cO_a}\big(\frac{\br}{N}\big) \mathbf{1}_{\cO_b}\big(\frac{\br}{N}+\balpha\big)\mathbf{1}_{\Omega_{\balpha}(s,w)}\big(\frac{\br}{N}\big)$ is Riemann integrable. On the other hand, the spectral projections $P_{E_u}(\btheta)$ are continuous in $\btheta\in \cO_c$ (and even holomorphic, see \cite{Tate2019}). Therefore, the previous Riemann sum converges to
\[
\sum_{i,\ell=1}^\nu f_i \sum_{a,b=1}^{\mathfrak{f}}\int_{\T^d} \mathbf{1}_{\cO_a}(\btheta) \mathbf{1}_{\cO_b}(\btheta+\balpha)\sum_{j=1}^\nu \sum_{s=1}^{\nu_b'}\sum_{t=1}^{\nu_a'}\mathbf{1}_{\Omega_{\balpha}(s,w)}(\btheta) P_{E_s}(\btheta+\balpha)(i,j)c_jP_{E_t}(\btheta)(j,\ell)f_\ell\,\dd\btheta.
\]
We finally expand $P_{E_u} = \sum_w P_w$, where the sum runs over $E_w=E_u$. So the sums over $s,t$ now run up to $\nu$. Then we use that $\int_{\T^d} = \int_{\T^d\setminus \widetilde{S}}$ and $\sum_{a,b}\mathbf{1}_{\cO_a}(\btheta) \mathbf{1}_{\cO_b}(\btheta+\balpha)=1$. After rearranging, the integral reduces to \eqref{e:limterm}.

So in summary, \eqref{eqn:ergodic_general} implies that \eqref{e:limterm} equals zero for any $f,c\in \C^\nu$ and any nonzero $\balpha\in \Q^d$.

If $f = \delta_i$ and $c=\delta_j$, this gives
\[
\sum_{s,w=1}^\nu \int_{\Omega_{\balpha}(s,w)} P_s(\btheta+\balpha)(i,j)P_w(\btheta)(j,i)\,\dd\btheta = 0
\]
for any $i,j\le \nu$ and nonzero $\balpha\in\Q^d$. 

Moreover, since $P_s(\boldsymbol{\varphi})(i,j) = (P_s(\boldsymbol{\varphi})\delta_j)(i)=\langle  g^s_{\boldsymbol{\varphi}},\delta_j\rangle g^s_{\boldsymbol{\varphi}}(i) = \overline{g^s_{\boldsymbol{\varphi}}(j)}g^s_{\boldsymbol{\varphi}}(i)$, we have $P_s(\boldsymbol{\varphi})(i,i)\ge 0$ for any $i$, so choosing $j=i$, we get a sum of positive terms vanishing. Hence,
\[
\int_{\Omega_{\balpha}(s,w)} P_s(\btheta+\balpha)(i,i)P_w(\btheta)(i,i)\,\dd\btheta=0
\]
for each $s,w,i$ and nonzero $\balpha\in\Q^d$. And since the integrand is $\ge 0$, we must have for each $s,w$ and nonzero $\balpha\in\Q^d$,
\begin{enumerate}
\item Either $|\Omega_{\balpha}(s,w)|=0$,
\item or $|\Omega_{\balpha}(s,w)|>0$ and $P_s(\btheta+\balpha)(i,i)P_w(\btheta)(i,i)=0$ for a.e. $\btheta\in \Omega_{\balpha}(s,w)$ and all $i$.
\end{enumerate}

Note that for a projector $P$, using $P=P^2$, $P^\ast=P$ and Cauchy-Schwarz, we have $|P(i,j)| = |\langle \delta_i, P\delta_j\rangle| \le \|P\delta_i\|\|P\delta_j\| = \sqrt{P(i,i)P(j,j)}$. This implies that $|P_1(i,j)P_2(j,\ell)|\le \sqrt{P_1(i,i)P_1(j,j)P_2(j,j)P_2(\ell,\ell)}$. It thus follows in case (2) that $P_s(\btheta+\balpha)(i,j)P_w(\btheta)(j,k)=0$ for a.e. $\btheta\in \Omega_{\balpha}(s,w)$ and all $i,j,k$. This in turn implies that $P_s(\btheta+\balpha)P_w(\btheta)=0$ is the zero matrix for a.e. $\btheta\in \Omega_{\balpha}(s,w)$.
\end{proof}

\appendix

\section{Notations and models in the literature}\label{app}

\subsection{Other notations}
The following table gives a dictionary between the different notations found in the literature.

\begin{table}[h!]
\scalebox{0.8}{ 
\begin{tabular}{|p{3cm}|c|c|c|}
Notation&Bra-ket& Tensor& Vector\\
\hline
Hilbert space $\mathscr{H}$ & $\mathcal{H}_C\mathop\otimes\mathcal{H}_P$& $\ell^2(\Z^d)\mathop\otimes \C^\nu$& $(\ell^2(\Z^d))^\nu = \ell^2(\Z^d,\C^\nu)$\\
Basis element& $\mid j,\bk\rangle$ & $\delta_{\bk}\otimes\delta_j$& $\delta_{j,\bk}$ \\
Vector $\psi\in \mathscr{H}$& $\mid \psi\rangle = \sum\limits_{j=1}^\nu\sum\limits_{\bk\in \Z^d} \psi_{j,\bk} \mid j,\bk\rangle$& $\psi = \sum\limits_{j=1}^\nu\sum\limits_{\bk\in\Z^d} \psi_j(\bk)\delta_{\mathbf{k}}\otimes\delta_j$& $\psi = \begin{pmatrix} \psi_1\\ \vdots\\ \psi_\nu\end{pmatrix}$, $\psi_i\in \ell^2(\Z^d)$\\
Coin operator $\mathcal{C}=C\otimes I$, $C=(c_{i,j})$ &$\mathcal{C}\mid\psi\rangle = \sum\limits_{i,j=1}^\nu\sum\limits_{\bk\in\Z^d} c_{i,j}\psi_{j,\bk}\mid i,\mathbf{k}\rangle$&$\mathcal{C}\psi= \sum\limits_{i,j=1}^\nu\sum\limits_{\bk\in\Z^d} c_{i,j}\psi_{j}(\bk)\delta_{\bk}\mathop\otimes\delta_i$& $\mathcal{C}\psi = \begin{pmatrix} \sum_{j=1}^\nu c_{1,j}\psi_j\\ \vdots\\ \sum_{j=1}^\nu c_{\nu,j}\psi_j\end{pmatrix}$\\
Shift operators, $d=1$, $\nu=2$&$\mathcal{S} = \sum\limits_{j=0}^1\sum\limits_{k\in \Z} \mid j,k+(-1)^j\rangle\langle j,k\mid $& $\mathcal{S}\psi = \sum\limits_{k\in \Z}\delta_k\mathop\otimes \begin{pmatrix} \psi_1(k-1)\\ \psi_2(k+1)\end{pmatrix}$& $\mathcal{S}\psi = \begin{pmatrix} S_1\psi_1\\ S_{-1}\psi_2\end{pmatrix}$\\
& $\cS_+=\mid +\rangle \langle +\mid \mathop\otimes S_1 +  \mid -\rangle \langle -\mid\mathop\otimes \mathbf{1}_{\Z}$ & $\cS_+\psi = \sum_{k\in\Z}\delta_k\mathop\otimes \begin{pmatrix} \psi_1(k-1)\\\psi_2(k)\end{pmatrix}$& $\cS_+\psi = \begin{pmatrix} S_1\psi_1\\ \psi_2\end{pmatrix}$\\
& $\cS_-=\mid -\rangle \langle -\mid \mathop\otimes S_{-1} +  \mid +\rangle \langle +\mid\mathop\otimes \mathbf{1}_{\Z}$ & $\cS_-\psi = \sum_{k\in\Z} \delta_k\mathop\otimes\begin{pmatrix} \psi_1(k)\\\psi_2(k+1)\end{pmatrix}$& $\cS_-\psi = \begin{pmatrix} \psi_1\\ S_{-1}\psi_2\end{pmatrix}$\\
\hline 
\end{tabular}
}
\caption{In this paper we mostly follow the convention in the last column.}
\end{table}

In the table, the bra-ket convention is coin-position $\mathcal{H}_C\mathop\otimes \mathcal{H}_P$ as in \cite{Aha,Port}. Some use $\mid j \rangle \mid \bk\rangle :=\, \mid j,\bk\rangle$. Also common is the position-coin $\mathcal{H}_P\mathop\otimes \mathcal{H}_C$ as in \cite{ACSW}. We follow that order with the tensor notation (middle column).

In row 2, $\delta_{j,\bk}:=\begin{pmatrix}0&\cdots&0&\delta_{\bk}&0&\cdots&0 \end{pmatrix}^T$ has $\delta_{\bk}\in \ell^2(\Z^d)$ in the $j$th coordinate.

To clarify the notation further, let us verify two relations. For example, 
\[
\cS_+(\alpha_k \mid 0,k\rangle + \beta_k\mid 1,k\rangle) = \cS_+ \delta_k\mathop\otimes \binom{\alpha_k}{\beta_k} =\delta_{k+1} \mathop\otimes \binom{\alpha_k}{0}+\delta_k\mathop\otimes\binom{0}{\beta_k}
\]
so $\cS_+\psi = \sum_k \cS_+\delta_k\mathop\otimes \binom{\psi_1(k)}{\psi_2(k)}  = \sum_k(\delta_{k+1}\mathop\otimes\binom{\psi_1(k)}{0}+\delta_k\mathop\otimes\binom{0}{\psi_2(k)})=\sum_k\delta_k \mathop\otimes\binom{\psi_1(k-1)}{\psi_2(k)}$.

Similarly, $\mathcal{S} = \sum_k (|0,k+1\rangle \langle 0,k| + |1,k-1\rangle\langle 1,k|)$ satisfies $\mathcal{S} |0,k\rangle = |0,k+1\rangle$ and $\mathcal{S}|1,k\rangle=|1,k-1\rangle$, from which the relation follows.

Note that $\mathcal{S} = \cS_+\cS_-$.

\subsection{Split-step and coined walks}\label{sec:spli}
A \emph{split-step quantum walk} is a walk of the form $U = \cS_+\cC_1\cS_-\cC_2$ for some $2\times 2$ coins $C_i$, see \cite{ACSW}.  If $C_1=(c_{ij})$ and $C_2=(d_{ij})$, this gives
\begin{align*}
    U\psi &= \cS_+\cC_1\cS_-\binom{d_{11}\psi_1+d_{12}\psi_2}{d_{21}\psi_1+d_{22}\psi_2}=\cS_+\cC_1\binom{d_{11}\psi_1+d_{12}\psi_2}{d_{21}S_{-1}\psi_1+d_{22}S_{-1}\psi_2}\\
    &= \binom{c_{11}(d_{11}S_1\psi_1+d_{12}S_1\psi_2)+c_{12}(d_{21}\psi_1+d_{22}\psi_2)}{c_{21}(d_{11}\psi_1+d_{12}\psi_2)+c_{22}(d_{21}S_{-1}\psi_1+d_{22}S_{-1}\psi_2)} \\
    &= \begin{pmatrix} c_{11}d_{11}S_1+c_{12}d_{21}&c_{11}d_{12}S_1+c_{12}d_{22}\\\ c_{21}d_{11}+c_{22}d_{21}S_{-1}&c_{21}d_{12}+c_{22}d_{22}S_{-1}\end{pmatrix}\begin{pmatrix}\psi_1\\\psi_2\end{pmatrix}
\end{align*}
This clearly is a special case of our setting. In particular, $U_{11}= c_{11}d_{11}S_1+c_{12}d_{21}S_0$.

If $C_1=I$ and $C_2=C$, this gives $\cS_+\cS_-\cC=\cS\cC$, which is a \emph{coined walk}. Here,
\[
U\psi = \cS \binom{c_{11}\psi_1+c_{12}\psi_2}{c_{21}\psi_1+c_{22}\psi_2}=\binom{c_{11}S_1\psi_1+c_{12}S_1\psi_2}{c_{21}S_{-1}\psi_1+c_{22}S_{-1}\psi_2}=\begin{pmatrix}c_{11}S_1&c_{12}S_1\\c_{21}S_{-1}&c_{22}S_{-1}\end{pmatrix}\begin{pmatrix}\psi_1\\\psi_2\end{pmatrix}
\]

A popular coined walk is the \emph{Hadamard walk} defined by $C = \frac{1}{\sqrt{2}}\begin{pmatrix}1&1\\1&-1\end{pmatrix}$. There are some slight variations among references, with several authors \cite{NayakVishwanath2000,Konno2002,Konno} substituting the roles of $S_1$ and $S_{-1}$. We do the same in \eqref{e:uhagro}; this changes nothing to the results.

\subsection{Shunt decomposition model} 
 This model is a generalization of coined quantum walks for graphs, introduced in \cite{Aha}. We follow the definition in \cite[Chapter 7]{GodsilZhan2023}. 
 
Let $G$ be a $\nu$-regular graph with adjacency matrix $A$. 
A shunt decomposition of $G$ is a collection of permutation matrices $P_1, P_2,\ldots, P_\nu$ such that
\[
A = P_1 + P_2+ \cdots + P_\nu.
\]
The shift matrix of the walk is defined as
\[
\cS =
\begin{pmatrix}
P_1^{-1} & & & \\
& P_2^{-1} & & \\
& & \ddots & \\
& & & P_\nu^{-1}
\end{pmatrix},
\]
and the unitary matrix of the shunt decomposition walk is
\[
U = \cS (C \otimes I),
\]
where $C$ is a $\nu \times \nu$ coin matrix. Here $U$ is a unitary operator on the space $\ell^2(G) \otimes \C^\nu$.

Let us specialize to shunt decompositions of the Cayley graphs $\Cay(\Z_n,\cX)$, where the set of generators $\cX$ is independent of $n$. Examples of such graphs include circulant graphs, which were the main topic of interest of \cite[Chapter 7]{GodsilZhan2023}. For a Cayley graph $\Cay(\Z_n,\cX)$, with generator set $\cX=\{a_1,a_2,\ldots, a_\nu\}$, the natural choice of permutation matrices are those given by the permutation maps $\pi_i(x)= x+a_i$ on $n$ points. In this case, $(P_i)_{xy}=\delta_y(x+a_i)$; consequently, $(P_i^{-1})_{xy}=\delta_x(y+a_i)$, and the shift operator is
\[(\cS \psi (k))_j= \psi_j(k-a_j) 
\]
Note that $\cS\psi = \sum_{j,k}\psi_j(k-a_j)\delta_k\mathop\otimes\delta_j =\begin{pmatrix} S_{a_1}\psi_1\\ \vdots\\ S_{a_\nu}\psi_\nu\end{pmatrix}$. This generalizes the Hadamard walk where $\nu=2$, $a_j\in \{-1,1\}$ and the \emph{Grover walk} where $\nu=3$, $a_1=-1$, $a_2=0$ and $a_3=1$ with a Grover coin.

\subsection{Arc reversal model}\label{sec:arc}
This model was first introduced by Watrous \cite{Watrous'01} for regular graphs, and later generalized by Kendon \cite{Kendon'06}. 

Consider an undirected graph $G$, and associate to each edge two arcs. The quantum walk takes place in the Hilbert space spanned by the characteristic vectors $\delta_{u,v}$ of the arcs $(u,v)$. Let $R$ be the permutation that reverses the arcs, i.e., $R\delta_{u,v}=\delta_{v,u}$. A second unitary matrix is obtained by considering orderings $f_u: \{1,2,\ldots, \deg(u)\} \rightarrow \{v:v \sim u\}$, where $f_u(j)$ is the $j$-th neighbour of the vertex $u$. Now, define a unitary operator $C_u$ acting on all the outgoing arcs of $u$, and define the unitary matrix $C$ as the block diagonal matrix with diagonal blocks $C_j$. The transition matrix for the arc-reversal quantum walk is defined as
\[
U=RC\,.
\]
Roughly speaking, each vertex $u$ has $\deg(u)$ spins in this walk.

We concentrate on arc-reversal quantum walks on $\Z^d$. In this case, all the vertices have degree $2d$ and we look at the case where $C_j$ is the same for all $j$ and $f_u(j)=u\pm e_j$. Identifying $\delta_{u,f_{u}(j)}$ with $\delta_u \mathop\otimes \delta_j$, the Hilbert space for the quantum walk can be identified  as $\ell^2(\Z^d)\mathop\otimes \C^{2d}$ and $R\delta_u\otimes\delta_j=\delta_{f_u(j)}\otimes\delta_k$, where $k$ is chosen so that $u$ is the $k$th neighbor of $f_u(j)$. Now, taking the shift matrix as in the shunt-decomposition model, we have $R=PS$, where $S$ is the shift matrix in the shunt-decomposition model and $P$ is an appropriate $2d \times 2d$ permutation matrix. In the particular case of $d=1$, the arc-reversal quantum walk can be expressed as 
\begin{equation}\label{eqn:unitary_arcreversal}
 U=\begin{pmatrix}
	0 & 1 \\
	1 &0
\end{pmatrix}
\begin{pmatrix}
	S_{-1} & 0 \\
	0 & S_1
\end{pmatrix}
C,   
\end{equation}
where $C$ is the homogeneous coin matrix and $\begin{pmatrix}
	0 & 1 \\
	1 &0
\end{pmatrix}$ flips the spin from $\delta_j$ to $\delta_k$ in the previous notation $R\delta_u\otimes\delta_j=\delta_{f_u(j)}\otimes\delta_k$.

\subsection{PUTO Model}\label{sec:puto}
The periodic unitary transition operators (PUTO) is a model of higher-dimensional quantum walks on $\mathscr{H}=\ell^2(\Z^d,\C^\nu)$ introduced in \cite{Tate2019,KomatsuTate2019}. The walk takes the form
\[
U = \sum_{\balpha\in F} \tau^{\balpha} P^{\balpha} \cC \,,
\]
where $F\subset \Z^d$ is finite, the $(P^{\balpha})_{\balpha\in F}$ form a resolution of the identity of $\C^\nu$ and $\tau^{\balpha}$ is defined by $(\tau^{\balpha} \psi)(\bk)=\psi(\bk-\balpha)$ for $\psi\in \mathscr{H}$.

Our framework encompasses such walks. In fact, \[
\tau^{\balpha} P^{\balpha} \cC\psi = \tau^{\balpha} P^{\balpha} \begin{pmatrix} \sum_j c_{1,j}\psi_j\\ \vdots\\ \sum_j c_{\nu,j} \psi_j\end{pmatrix} = \tau^{\balpha} \begin{pmatrix} \sum_k p_{1,k}^{\balpha} \sum_j c_{kj}\psi_j \\ \vdots\\ \sum_k p_{\nu,k}^{\balpha} \sum_j c_{k,j}\psi_j\end{pmatrix} = \begin{pmatrix} \sum_{k,j} c_{k,j}p_{1,k}^{\balpha} S_{\balpha} \psi_j\\ \vdots\\ \sum_{k,j} p_{\nu,k}^{\balpha} c_{k,j}S_{\balpha} \psi_j \end{pmatrix}.
\]
This shows that $U\psi = \sum_{\balpha\in F}\tau^{\balpha} P^{\balpha} \cC$ has the form \eqref{e:ugen} with 
\begin{equation}\label{e:puto}
U_{i,j} = \sum_{k=1}^\nu \sum_{\balpha\in F} c_{k,j} p_{i,k}^{\balpha} S_{\balpha} \,.
\end{equation}
The papers \cite{Tate2019,KomatsuTate2019} derive several properties about these operators, particularly the spectrum, the presence of flat bands, and its relation to the time evolution of $U$ on $\mathscr{H}$. The Floquet matrix $\widehat{U}(\btheta)$ here takes the form $\widehat{U}(\btheta)=V(\btheta)C$, where $V(\btheta)=\sum_{\balpha} \ee^{2\pi \ii\btheta\cdot \balpha}P^{\balpha}$. The main results in \cite{Tate2019} however focus on more specific operators $(P^{\balpha})$ defined as follows:
\begin{itemize}
\item $\nu = 2d$ and $F = F_{std} = \{\pm e_1,\dots,\pm e_d\}$, where $(e_j)$ is the standard basis of $\Z^d$, and $P^{e_j}:= P_{2j-1}$, $P^{-e_j}:= P_{2j}$, with $P_j \phi = \phi_j \delta_j$, the orthogonal projection in $\C^\nu$ onto $\delta_j$. The matrix $P_r=E_{rr}$ has all entries zero except at $rr$, where it is one. That is, $p_{i,k}^{e_p}=1$ if $i=k= 2p-1$, $p_{i,k}^{e_p}=0$ otherwise, $p_{i,k}^{-e_p}=1$ if $i=k=2p$ and $p_{i,k}^{-e_p}=0$ otherwise. So \eqref{e:puto} simplifies to
\begin{equation}\label{e:putosta}
U_{i,j} = \sum_{\balpha} c_{i,j}p_{i,i}^{\balpha} S_{\balpha} = \begin{cases} c_{i,j} S_{e_{\frac{i+1}{2}}}& \text{if } i \text{ is odd,}\\ c_{i,j}S_{-e_{\frac{i}{2}}}&\text{if } i \text{ if even.} \end{cases}
\end{equation}
For example, if $d=1$, then $U = \begin{pmatrix} c_{11}S_{e_1}&c_{12} S_{e_1}\\ c_{21}S_{-e_1}& c_{22} S_{-e_1}\end{pmatrix}$. Here $e_1=1$, so we get the usual coined walk. If $d=2$, $U$ is $4\times 4$.
\item $\nu=2d+1$ and $F= F_{lazy}= F_{std} \cup \{\mathbf{0}\}$ and $P^{e_j}=P_j$, $P^{\mathbf{0}}=P_{d+1}$ and $P^{-e_j} = P_{d+1+j}$. This gives 
\[
U_{i,j} = \sum_{\balpha} c_{i,j}p_{i,i}^{\balpha} S_{\balpha} = \begin{cases} c_{i,j} S_{e_i}& \text{if } i\le d\\ c_{i,j}& \text{if } i=d+1,\\ c_{i,j}S_{-e_{i-d-1}}&\text{if } i>d+1 \end{cases}
\]
\end{itemize}
A popular coin $C=(c_{i,j})$ that works for both cases is the \emph{Grover coin} on $\C^\nu$ given by
\[
C_{G}=\frac{2}{\nu} J-I\,,
\]
where $J$ is the matrix with all $1$. The most common choice for this coin is $d=1$ and $\nu=3$ with $F_{lazy}$. This coin features localization (the operator $U$ has an eigenvalue) and is thus less relevant for our study, but is otherwise interesting. The limiting distribution of the two-dimensional Grover walk with shifts along the $x-$ and $y-$axes was studied in \cite{Watabe2dgroverlimit}.

A more interesting coin for us is the \emph{Fourier coin} 
\[
C_F = \frac{1}{\sqrt{\nu}} \left(\omega^{(p-1)(q-1)}\right)_{p,q=1}^\nu
\]
for $\omega = \ee^{2\pi \ii /\nu}$. We study the case $d=2$ and $\nu=4$ in detail in Section~\ref{sec:fou}.

\section{RAGE theorem for quantum walks}\label{app:rage}

We adapt here the RAGE theorem to homogeneous discrete-time quantum walks. This result is not used in the text and is given here for completeness.

We recall that the homogeneous quantum walks $U$ of finite range studied here, just like periodic Schr\"odinger operators on crystals, exhibit no singularly continuous spectrum \cite{Tate2019}, and that $U$ can only have finitely many flat bands (eigenvalues). Let us denote these by $\lambda_1,\dots,\lambda_f$ and let $P_{\lambda_i}$ be the corresponding eigenprojectors. Then we let 
\[
\mathscr{H}_{pp} = \mathrm{Ran} P \qquad \text{and} \qquad \mathscr{H}_c = \mathscr{H}_{pp}^\bot \,,
\]
where $P=\sum_{i=1}^f P_{\lambda_i}$. Then we have:

\begin{thm}\label{thm:rage}
Let $U$ be a quantum walk \eqref{e:ugen}-\eqref{eqn:homogen_unitary}. Let $[\chi_{\Lambda}(\br)]_i := \mathbf{1}_{\Lambda}(\br)$ for all $i\le \nu$.
\begin{enumerate}[\rm (1)]
    \item $\psi\in\mathscr{H}_c$ iff for any finite $\Lambda\subset \Z^d$, we have $\|\chi_\Lambda U^n\psi\|\to 0$.
    \item $\psi\in \mathscr{H}_{pp}$ iff for any $\varepsilon>0$ there is a finite $\Lambda\subset \Z^d$ such that $\sup_n \|\chi_{\Lambda^c}U^n\psi\|<\varepsilon$.
\end{enumerate}
\end{thm}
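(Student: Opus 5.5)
We plan to follow the standard RAGE argument, transplanted from self-adjoint to unitary operators via Wiener's theorem. The setup is the spectral measure $\mu_\psi$ of $U$ on the unit circle, $\langle \psi, U^n\psi\rangle = \int_{\mathbb{S}^1} z^n\,\dd\mu_\psi(z)$. Wiener's theorem gives
\[
\frac{1}{T}\sum_{n=0}^{T-1}|\langle \varphi, U^n\psi\rangle|^2 \to \sum_{z}|\mu_{\varphi,\psi}(\{z\})|^2 .
\]
Since there is no singular continuous spectrum \cite{Tate2019}, $\mathscr{H}_c$ coincides with the absolutely continuous subspace. Moreover, by Lemma~\ref{lem:foufin} applied to the full operator, $U$ is unitarily equivalent to multiplication by $\widehat{U}(-\btheta)$ on $L^2(\T^d)^\nu$. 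This equivalence lets us obtain genuine limits rather than only Cesàro limits in (1).

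\textbf{Part (1), direct implication.} Let $\psi\in\mathscr{H}_c$ and fix a finite $\Lambda$. Then
\[
\|\chi_\Lambda U^n\psi\|^2=\sum_{\bk\in\Lambda}\sum_{i}|\langle \delta_{\bk}\otimes\delta_i, U^n\psi\rangle|^2 ,
\]
a finite sum. It therefore suffices to show $\langle\varphi,U^n\psi\rangle\to0$ for each fixed $\varphi$. Because $\psi$ lies in the absolutely continuous subspace, the complex measure $\mu_{\varphi,\psi}$ is absolutely continuous with respect to arc length. Its Fourier coefficients $\int z^n\,\dd\mu_{\varphi,\psi}$ then tend to zero by the Riemann--Lebesgue lemma. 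This gives the full limit, not merely a Cesàro limit.

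\textbf{Part (1), converse.} Decompose $\psi=P\psi+(1-P)\psi$. By the direct implication the continuous part vanishes locally. If $P\psi\neq0$, then the local mass of $U^nP\psi$ does not decay. Indeed, $U^nP\psi=\sum_i\lambda_i^nP_{\lambda_i}\psi$ is a finite combination of eigenvectors. The Cesàro mean of $\|\chi_\Lambda U^nP\psi\|^2$ converges to $\sum_i\|\chi_\Lambda P_{\lambda_i}\psi\|^2$, because the cross terms are averaged geometric sums. This limit is positive for $\Lambda$ large, so $\|\chi_\Lambda U^n\psi\|\not\to0$ along some subsequence. The cross terms between $P\psi$ and $(1-P)\psi$ vanish by Cauchy--Schwarz and the direct implication.

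\textbf{Part (2).} Let $\psi\in\mathscr{H}_{pp}$ and write $\psi=\sum_{i\le f}P_{\lambda_i}\psi$, which is a finite sum. Then $\{U^n\psi\}\subset\{\sum_i c_iP_{\lambda_i}\psi:|c_i|=1\}$, which is compact, and tails are uniformly small on compact sets. Concretely, choose $\Lambda$ with $\|\chi_{\Lambda^c}P_{\lambda_i}\psi\|<\varepsilon/f$ for each $i$, and apply the triangle inequality. For the converse, write $\psi=\psi_{pp}+\psi_c$ with $\psi_c\neq0$. The triangle inequality gives $\|\chi_\Lambda U^n\psi_c\|\ge\|\psi_c\|-\|\chi_{\Lambda^c}U^n\psi_c\|$. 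We have $\|\chi_{\Lambda^c}U^n\psi_c\|\le\|\chi_{\Lambda^c}U^n\psi\|+\|\chi_{\Lambda^c}U^n\psi_{pp}\|$, and both terms are small for $\Lambda$ large by assumption and the previous step. This contradicts (1).

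\textbf{Main obstacle.} The one step needing care is the converse in (1). There one must make sure the eigenvector contribution cannot cancel locally for all large $n$. The Cesàro-averaging argument handles this cleanly, since the diagonal terms $\|\chi_\Lambda P_{\lambda_i}\psi\|^2$ survive while the off-diagonal terms average out.
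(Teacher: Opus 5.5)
Your proof is correct and follows the paper's argument. The forward direction of (1) uses Riemann--Lebesgue for the absolutely continuous measure $\mu_{\varphi,\psi}$, and (2) uses the finite flat-band decomposition, the triangle inequality and part (1); the only deviation is in the converse of (1), where you compute the Ces\`aro limit $\sum_i\|\chi_\Lambda P_{\lambda_i}\psi\|^2$ by hand (killing cross terms via the forward direction) instead of citing the abstract unitary RAGE theorem, which amounts to a self-contained version of the same step.
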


Thus, a state is in the pure point space iff at all times, most of its mass lies within a fixed compact set, and it is in the continuous space iff it escapes from any compact set, after sufficient time has passed. 
Theorem~\ref{thm:rage} strengthens some results in \cite{Tate2019}.

\begin{proof}
By the RAGE theorem for general unitary operators $U$ on a Hilbert space, we have for any compact operator $K$, $\frac{1}{n}\sum_{m=0}^{n-1}\|KU^m\varphi\|^2\to \|KP\varphi\|^2$ for any $\varphi\in \mathscr{H}$, see \cite[Thm. 5.5.6]{Simon4}. 

 Take $K=\chi_{\Lambda}$ and let $\varphi=\varphi_1+\varphi_2$, where $\varphi_1\in \mathscr{H}_{pp}$ and $\varphi_2\in \mathscr{H}_c$. Then this limit reduces to 
 $\|\chi_\Lambda \varphi_1\|^2$. 
  This is zero for all finite $\Lambda$ iff $\varphi_1=0$, i.e. iff $\varphi\in \mathscr{H}_c$.

    So far we showed that $\psi\in \mathscr{H}_c \iff \frac{1}{n}\sum_{m=0}^{n-1} \|\chi_{\Lambda} U^n \psi\|^2 \to 0$. In particular, if $\|\chi_{\Lambda} U^n \psi\|\to 0$, then it vanishes in Ces\`aro sense, so $\psi \in \mathscr{H}_c$. 

    By the spectral theorem for unitary operators (see e.g. \cite[Thm. 5.1] {BSU}), for any $\varphi,\psi\in \mathscr{H}$, there exists a complex measure $\mu_{\varphi,\psi}$ such that $\langle \psi, U^n\varphi\rangle = \int_0^{2\pi} \ee^{\ii n\theta}\,\dd\mu_{\varphi,\psi}(\theta)$. If $\psi\in \mathscr{H}_{c}=\mathscr{H}_{ac}$ here, the measure $\mu_{\psi,\psi}$ is absolutely continuous with respect to the Lebesgue, hence so is $\mu_{\varphi,\psi}$ for any $\varphi\in\mathscr{H}$ by the Cauchy-Schwarz inequality $|\mu_{\varphi,\psi}(J)|^2\le \mu_{\varphi,\varphi}(J)\mu_{\psi,\psi}(J)$ for Borel $J$. In other words, $\langle \varphi, U^n\psi\rangle =\int\ee^{\ii n\theta}g_{\varphi,\psi}(\theta)\,\dd\theta$ with $g_{\varphi,\psi}\in\ell^1$. By the Riemann-Lebesgue lemma, this goes to zero as $n\to\infty$. Specializing to $\varphi=\delta_{\br}\mathop\otimes\delta_i$, we deduce that $|[U^n\psi]_i(\br)|\to 0$ for any $i,\br$, hence $\|\chi_\Lambda U^n\psi\|\to 0$.

    We have proved (1). For (2), if $\psi\in \mathscr{H}_{pp}$, then $U^n \psi = \sum_{i=1}^f \lambda_i^n P_{\lambda_i} \psi$, so $\|\chi_{\Lambda^c} U^n\psi\|\le \sum_{i=1}^f \|\chi_{\Lambda^c} P_{\lambda_i}\psi\|$. Because each $P_{\lambda_i}\psi$ is $\ell^2$, we may find finite sets $\Lambda_i$ such that $\|\chi_{\Lambda^c_i} U^n\psi\|<\varepsilon/f$, so the conclusion of (2) follows by taking $\Lambda = \cup_{i=1}^f\Lambda_i$. Conversely, if the conclusion of (2) holds, write $\psi=\psi_1+\psi_2$ with $\psi_1\in \mathscr{H}_{pp}$ and $\psi_2\in \mathscr{H}_c$. By hypothesis, $\exists \Lambda$ with $\|\chi_{\Lambda^c}U^n\psi\|<\varepsilon$. And by the $(\implies)$ part we proved, $\exists \Lambda_1$ such that $\|\chi_{\Lambda_1^c}U^n\psi_1\|<\varepsilon$. By the triangle inequality, if $\Lambda_2=\Lambda\cup\Lambda_1$, this gives $\|\chi_{\Lambda_2^c}U^n\psi_2\|<2\varepsilon$ for all $n$. Therefore, $\|\psi_2\|^2 = \|\chi_{\Lambda_2}U^n\psi_2\|^2+\|\chi_{\Lambda_2^c}U^n\psi_2\|^2<3\varepsilon$ by taking $n$ large enough and using part (1). As $\varepsilon$ is arbitrary, this shows that $\psi_2=0$, hence $\psi\in \mathscr{H}_{pp}$.
\end{proof}

\begin{rem}
    Theorem~\ref{thm:rage} holds for general quantum walks (not necessarily homogeneous) if one replaces $\|\chi_{\Lambda}U^n\psi\|$ by $\frac{1}{n}\sum_{m=0}^{n-1}\|\chi_{\Lambda}U^m\psi\|^2$ in (1), as can be seen from the previous proof.
\end{rem}

\providecommand{\bysame}{\leavevmode\hbox to3em{\hrulefill}\thinspace}
\providecommand{\MR}{\relax\ifhmode\unskip\space\fi MR }
\providecommand{\MRhref}[2]{%
}
\providecommand{\href}[2]{#2}

\end{document}